\newcommand{\uu}{u}
\newcommand{\poly}{\mathrm{poly}\xspace}
\newcommand{\eps}{\varepsilon}
\newcommand{\act}{\mathbf{\colon}\quad}
\newcommand{\E}{\mathbb E}
\newcommand{\Var}{\mathrm {Var}}
\newcommand{\R}{\mathbb R}
\newcommand{\Z}{\mathbb Z}
\newcommand{\many}{\text{$^\#$}\!\!}
\newcommand{\modd}{\,\mathrm{mod}\,}
\newcommand{\const}{\mathrm {const}}
\newcommand{\todo}[1]{}
\newcommand{\z}{\bar{z}}
\newcommand{\lno}{\ln^\circ}
\newcommand{\on}[1]{\|#1\|_1}
\newtheorem{theorem}{Theorem}
\newtheorem{lemma}{Lemma}
\newtheorem{proposition}{Proposition}
\newtheorem{observation}{Observation}
\begin{document}

\title{\bf Universal Protocols for Information Dissemination\\ Using Emergent Signals%
}

\author[1]{Bart\l{}omiej Dudek}
\author[2]{Adrian Kosowski\footnote{Corresponding Author. Email: adrian.kosowski@inria.fr}}

\affil[1]{University of Wroc\l{}aw, Poland}
\affil[2]{Inria Paris, France}

\date{}

\maketitle

\begin{abstract}
We consider a population of $n$ agents which communicate with each other in a decentralized manner, through random pairwise interactions. One or more agents in the population may act as authoritative sources of information, and the objective of the remaining agents is to obtain information from or about these source agents. We study two basic tasks: \emph{broadcasting}, in which the agents are to learn the bit-state of an authoritative source which is present in the population, and \emph{source detection}, in which the agents are required to decide if at least one source agent is present in the population or not.

We focus on designing protocols which meet two natural conditions: (1) universality, i.e., independence of population size, and (2) rapid convergence to a correct global state after a reconfiguration, such as a change in the state of a source agent. Our main positive result is to show that both of these constraints can be met. For both the broadcasting problem and the source detection problem, we obtain solutions with a convergence time of $O(\log^2 n)$ rounds, w.h.p., from any starting configuration. The solution to broadcasting is exact, which means that all agents reach the state broadcast by the source, while the solution to source detection admits one-sided error on a $\eps$-fraction of the population (which is unavoidable for this problem). Both protocols are easy to implement in practice and have a compact formulation.

Our protocols exploit the properties of self-organizing oscillatory dynamics. On the hardness side, our main structural insight is to prove that \emph{any} protocol which meets the constraints of universality and of rapid convergence after reconfiguration must display a form of non-stationary behavior (of which oscillatory dynamics are an example). We also observe that the periodicity of the oscillatory behavior of the protocol, when present, must necessarily depend on the number $\many X$ of source agents present in the population. For instance, our protocols inherently rely on the emergence of a signal passing through the population, whose period is $\Theta(\log \frac{n}{\many X})$ rounds for most starting configurations. The design of clocks with tunable frequency may be of independent interest, notably in modeling biological networks.
\end{abstract}

\thispagestyle{empty}
 \vspace{1.5cm}
 \noindent
 \textbf{Key words:} Gossiping, Epidemic processes, Oscillatory dynamics, Emergent phenomena,\\ Population protocols, Broadcasting, Distributed clock synchronization.
 \vspace{1cm}


\section{Introduction}

Information-spreading protocols, and more broadly epidemic processes, appear in nature, social interactions between humans, as well as in man-made technology, such as computer networks. For some protocols we have a reasonable understanding of the extent to which the information has already spread, i.e., we can identify where the information is located at a given step of the process: we can intuitively say which nodes (or agents) are ``informed'' and which nodes are ``uninformed''. This is the case for usual protocols in which uninformed agents become informed upon meeting a previously informed agent, cf. e.g.\ mechanisms of rumor spreading and opinion spreading models studied in the theory community~\cite{rspushpull,gnw}). Arguably, most man-made networking protocols for information dissemination also belong to this category.

By contrast, there exists a broad category of complex systems for which it is impossible to locate which agents have acquired some knowledge, and which are as yet devoid of it. In fact, the question of ``where the information learned by the system is located'' becomes somewhat fuzzy, as in the case of both biological and synthetic neural networks. In such a perspective, information (or knowledge) becomes a global property of the entire system, whereas the state of an individual agent represents in principle its \emph{activation}, rather than whether it is informed or not. As such, knowledge has to be treated as an emergent property of the system, i.e., a global property not resulting directly from the local states of its agents. The convergence from an uninformed population to an informed population over time is far from monotonous. Even so, once some form of ``signal'' representing global knowledge has emerged, agents may try to read and copy this signal into their local state, thus each of them eventually also becomes informed. At a very informal conceptual level, we refer to this category of information-dissemination protocols as protocols with \emph{emergent behavior}. At a more technical level, emergent protocols essentially need to rely on non-linear dynamical effects, which typically include oscillatory behavior, chaotic effects, or a combination of both. (This can be contrasted with simple epidemic protocols for information-spreading, in which nodes do not become deactivated.)

This work exhibits a simple yet fundamental information-spreading scenario which can only be addressed efficiently using emergent protocols. Both the efficient operation of the designed protocols, and the need for non-stationary dynamical effects in any efficient protocol for the considered problems, can be formalized through rigorous theoretical analysis. Our goal in doing this is twofold: to better understand the need for emergent behavior in real-world information spreading, and to display the applicability of such protocols in man-made information spreading designs. For the latter, we describe an interpretation of information as a (quasi-)\emph{periodic signal}, which can be both decoded from states of individual nodes, and encoded into them.

\subsection{Problems and Model}

We consider a population of $n$ identical agents, each of which may be in a \emph{constant} number of possible states. Interactions between agents are pairwise and random. A fair scheduler picks a pair of interacting agents independently and uniformly at random in each step. The protocol definition is provided through a finite sequence of state transition rules, following the precise conventions of the randomized Population Protocol model~\cite{AADFP06,AAER07} or (equivalently) of Chemical Reaction Networks~\cite{pplb1}.\footnote{The activation model is thus asynchronous. The same protocols may be deployed in a synchronous setting, with scheduler activations following, e.g., the independent random matching model (with only minor changes to the analysis) or the \cal{PULL} model~\cite{rspushpull} (at the cost of significantly complicating details of the protocol formulation).}

The \emph{input} to the problem is given by fixing the state of some subset of agents, to some state of the protocol, which is not available to any of the other agents. Intuitively, the agents whose state has been fixed are to be interpreted as authoritative sources of information, which is to be detected and disseminated through the network (i.e., as the rumor source node, broadcasting station, etc.). For example, the problem of spreading a bit of information through the system is formally defined below.
\begin{description}
\itemsep0em
\item[Problem \textsc{BitBroadcast}]
\item[Input States:] $X_1, X_2$.
\item[Promise:] The population contains a non-zero number of agents in exactly one of the two input states $\{X_1, X_2\}$.
\item[Question:] Decide if the input state present in the population is $X_1$ or $X_2$.
\end{description}
We can, e.g., consider that the transmitting station (or stations) choose whether to be in state $X_1$ or $X_2$ in a way external to the protocol, and thus transmit the ``bit'' value $1$ or $2$, respectively, through the network. Broadcasting a bit is one of the most fundamental networking primitives.

The definition of the population protocol includes a partition of the set of states of the protocol into those corresponding to the possible answers to the problem. When the protocol is executed on the population, the \emph{output} of each agent may be read at every step by checking, for each agent, whether its state belongs to the subset of an output state with a given answer (in this case, the answer of the agent will be the ``bit'' it has learned, i.e., 1 or 2). We will call a protocol \emph{exact} if it eventually converges to a configuration, such that starting from this configuration all agents always provide the correct answer. We will say it operates \emph{with $\eps$-error}, for a given constant $\eps>0$, if starting from some step, at any given step of the protocol, at most an $\eps$-fraction of the population holds the incorrect answer, with probability $1 - O(1/n)$.

Time is measured in \emph{steps} of the scheduler, with $n$ time steps called a \emph{round}, with the expected number of activations of each agent per round being a constant. Our objective is to design protocols which converge to the desired outcome rapidly. Specifically, a protocol is expected to converge in $O(\poly \log n)$ rounds (i.e., in $O(n \poly \log n)$ steps), with probability $1 - O(1/n)$, starting from \emph{any possible starting configuration} of states in the population, conforming to the promise of the problem.\footnote{We adhere to this strong requirement for self-stabilizing (or self-organizing) behavior from \emph{any} initial configuration in the design of our protocols. The presented impossibility results still hold under significantly weaker assumptions.}

Motivated by both applications and also a need for a better understanding of the broadcasting problem, we also consider a variant of the broadcasting problem in which no promise on the presence of the source is given. This problem, called \textsc{Detection}, is formally defined below.

\begin{description}
\itemsep0em
\item[Problem \textsc{Detection}]
\item[Input State:] $X$.
\item[Question:] Decide if at least one agent in state $X$ is present in the population.
\end{description}

Detection of the presence of a source is a task which is not easier than broadcasting a bit. Indeed, any detection protocol is readily converted into a broadcasting protocol for states $\{X_1, X_2\}$ by identifying $X = X_1$ and treating $X_2$ as a dummy state which does not enter into any interactions (i.e., is effectively not visible in the network). Intuitively, the detection task in the considered setting is much harder: a source $X$ may disappear from the network at any time, forcing other agents to spontaneously ``unlearn'' the outdated information about the presence of the source. This property is inherently linked to the application of the \textsc{Detection} problem in suppressing false rumors or outdated information in social interactions. Specifically, it may happen that a certain part of a population find themselves in an informed state before the original rumor source is identified as a source of false information, a false rumor may be propagated accidentally because of an agent which previously changed state from ``uninformed'' to ``informed'' due to a fault or miscommunication, or the rumor may contain information which is no longer true. Similar challenges with outdated information and/or false-positive activations are faced in Chemical Reaction Networks, e.g. in DNA strand displacement models~\cite{dna}. In that context, the detection problem has the intuitive interpretation of detecting if a given type of chemical or biological agent (e.g., a contaminant, cancer cell, or hormonal signal) is present in the population, and spreading this information among all agents.

\subsection{Our Results}

In Section~\ref{sec:protocols}, we show that both the~\textsc{BitBroadcast}, and the~\textsc{Detection} problem can be solved with protocols which converge in $O(\log^2 n)$ rounds to an outcome, with probability $1-O(1/n)$, starting from any configuration of the system. The solution to \textsc{BitBroadcast} guarantees a correct output. The solution to \textsc{Detection} admits one-sided $\eps$-error: in the absence of a source, all agents correctly identify it as absent, whereas when the source is present, at any moment of time after convergence the probability that at least $(1-\eps)n$ agents correctly identify the source as present is at least $1 - O(1/n)$.\footnote{The existence of one-sided error is inherent to the \textsc{Detection} problem in the asynchronous setting: indeed, if no agent of the population has not made any communication with the source over an extended period of time, it is impossible to tell for sure if the source has completely disappeared from the network, or if it is not being selected for interaction by the random scheduler.} Here, $\eps>0$ is a constant influencing the protocol design, which can be made arbitrarily small.

The designed protocols rely on the same basic building block, namely, a protocol realizing oscillatory dynamics at a rate controlled by the number of present source states in the population. Thus, these protocols display non-stationary behavior. In Section~\ref{sec:impossibility}, we show that such behavior is a necessary property in the following sense. We prove that in any protocol which solves \textsc{Detection} in sub-polynomial time in $n$ and which uses a constant number of states, the number of agents occupying some state has to undergo large changes: by a polynomially large \emph{factor} in $n$ during a time window of length proportional to the convergence time of the protocol. For the \textsc{BitBroadcast} problem, we show that similar volatile behavior must appear in a synthetic setting in which a unique source is transmitting its bit as random noise (i.e., selecting its input state $\{X_1,X_2\}$ uniformly at random in subsequent activations).

We note that, informally speaking, our protocols rely on the emergence of a ``signal'' passing through the population, whose period is $\Theta(\log \frac{n}{\many X})$ rounds when the number of source agents in state $X$ is $\many X$. In Section~\ref{sec:signal}, we then discuss how the behavior of \emph{any} oscillatory-type protocol controlled by the existence of $\many X$ has to depend on both $n$ and $\many X$. We prove that for any such protocol with rapid convergence, the cases of subpolynomial $\many X$ and $\many X = \Theta(n)$ can be separated by looking at the portion of the configuration space regularly visited by the protocol. This, in particular, suggests the nature of the dependence of the oscillation period on the precise value of $\many X$, and that the protocols we design with period $\Theta(\log \frac n {\many X})$ are among the most natural solutions to the considered problems.

The proofs of all theorems are deferred to the closing sections of the paper.

\begin{table}
\small
\begin{tabular}{p{5.5cm}p{4.5cm}p{4.5cm}}
\toprule
\textsc{Problem:} & \textsc{BitBroadcast} & \textsc{Detection}\\
\midrule
\textbf{Non-stationarity property:}\newline (Applies to all fast $O(1)$-state protocols) & no fixed points while source transmits random bits & no fixed points while source is present\\

\midrule
\textbf{New protocols with emergent signal:} & universal, 74 states & universal, 55 states\\
Convergence time: \\
-- No error (exact output) & $O(\log^2 n)$ & impossible\\
-- One-sided $\eps$-error &&  $O(\log^2 n)$ \\

\midrule
\textbf{Other protocols with $\omega(1)$-states:} & Clock-Sync (in synchronized round model)~\cite{bocz} & Time-to-Live~\cite{dna}\\
\bottomrule
\end{tabular}
\end{table}

\subsection{Comparison to the State-of-the-Art}

Our work fits into lines of research on rumor spreading, opinion spreading, population protocols and other interaction models, and emergent systems. We provide a more comprehensive literature overview of some of these topics in Subsection~\ref{related}.

\paragraph{Other work on the problems.} The \textsc{BitBroadcast} problem has been previously considered by Boczkowski, Korman, and Natale in~\cite{bocz}, in a self-stabilizing but synchronous (round-based) setting. A protocol solving their problem was presented, giving a stabilized solution in $\tilde O (\log n)$ time, using a number of states of agents which \emph{depends on population size}, but exchanging messages of bit size $O(1)$ (this assumption can be modeled in the population protocol framework as a restriction on the permitted rule set). In this sense, our result can be seen as providing improved results with respect to their approach, since it is applicable in an asynchronous setting and reduces the number of states to constant (the latter question was open~\cite{bocz}). We remark that their protocol has a more general application to the problem of deciding which of the sources $X_1$ or $X_2$ is represented by a larger number of agents, provided these two numbers are separated by a multiplicative constants. (Our approach could also be used in such a setting, but the required separation of agent numbers to ensure a correct output would have to be much larger: we can compare values if their \emph{logarithms} are separated by a multiplicative constant.) The protocol of~\cite{bocz} involves a routine which allows the population to create a synchronized modulo clock, working in a synchronous setting. The period of this clock is independent of the input states of the protocol, which should be contrasted with the oscillators we work on in this paper, which encode the input into a signal (with a period depending on the number of agents in a given input state).

The \textsc{Detect} problem was introduced in a work complementary to this paper~\cite{dna}. Therein we look at applications of the confirmed rumor spreading problem in DNA computing, focusing on performance on protocols based on a time-to-live principle and on issues of fault tolerance in a real-world model with leaks. The protocols designed there require $O(\log n)$ states, and while self-stabilizing, do not display emergent behavior (in particular, agents can be categorized as ``informed'' and ``uninformed'', the number of correctly informed agents tends to increase over time, while the corresponding continuous dynamical system stabilizes to a fixed point attractor).

\paragraph{Originality of methods.} The oscillatory dynamics we apply rely on an \emph{input-parameter-controlled} oscillator. The \emph{uncontrolled} version of the oscillator which we consider here is the length-$3$ cyclic oscillator of the cyclic type, known in population dynamics under the name of rock-paper-scissors (or RPS). This has been studied intensively in the physics and evolutionary dynamics literature (cf.e.g.~\cite{SMJSRP14} for a survey), while algorithmic studies are relatively scarce~\cite{ICALP15}. We remark that the uncontrolled cyclic oscillator with a longer (but $O(1)$ length cycle) has been applied for clock/phase synchronization in self-stabilizing settings and very recently in the population protocol setting when resolving the leader election problem~\cite{grzes}. (The connection to oscillatory dynamics is not made explicit, and the longer cycle provides for a neater analysis, although it does not seem to be applicable to our parameter-controllable setting.)
Whereas we are not aware of any studies of parameter-controlled oscillators in a protocol design setting (nor for that matter, of rigorous studies in other fields), we should note that such oscillators have frequently appeared in models of biological systems, most notably in biological networks and neuroendocrinology (\cite{bion} for a survey). Indeed, some hormone release and control mechanisms (e.g., for controlling GnRH surges in vertebrates) appear to be following a similar pattern. To the best of our knowledge, no \emph{computational} (i.e., interaction-protocol-based) explanation for these mechanisms has yet been proposed, and we hope that our work may provide, specifically on the \textsc{Detect} problem, may provide some insights in this direction.

In terms of lower bounds, we rely on rather tedious coupling techniques for protocols allowing randomization, and many of the details are significantly different from lower-bound techniques found in the population protocols literature. We remark that a recent line of work in this area~\cite{pplb1,pplb2} provides a powerful set of tools for proving lower bounds on the number of states (typically $\Omega(\log \log n)$ states) for fast (typically polylogarithmic) population protocols for different problems, especially for the case of deterministic protocols. We were unable to leverage these results to prove our lower bound for the randomized scenario studied here, and believe our coupling analysis is complementary to their results.

\subsection{Other Related Work}\label{related}

Our work fits into the line of research on rumor spreading, population protocols, and related interaction models.
Our work also touches on the issue of how distributed systems may spontaneously achieve some form of coordination with minimum agent capabilities. The basic work in this direction, starting with the seminal paper~\cite{L}, focuses on synchronizing timers through asynchronous interprocess communication to allow processes to construct a total ordering of events. A separate interesting question concerns local clocks which, on their own, have some drift, and which need to synchronize in a network environment (cf. e.g.~\cite{LL84,LLW}, or~\cite{LLSW} for a survey of open problems).

\paragraph{Rumor spreading.} Rumor spreading protocols are frequently studied in a synchronous setting. In a synchronous protocol, in each parallel round, each vertex independently at random activates a local rule, which allows it either to spread the rumor (if it is already informed), or possibly also to receive it (if it has not yet been informed, as is the case in the push-pull model). The standard push rumor spreading model assumes that each informed neighbor calls exactly one uninformed neighbor. In the basic scenario, corresponding to the complete interaction network, the number of parallel rounds for a single rumor source to inform all other nodes is given as $\log_2 n + \ln n + o(\log n)$, with high probability~\cite{rs,rs2}. More general graph scenarios have been studied in~\cite{rsnetworks} in the context of applications in broadcasting information in a network. Graph classes studied for the graph model include hypercubes~\cite{rsnetworks}, expanders~\cite{rssauerwald}, and other models of random graphs~\cite{fou}. The push-pull model of rumor spreading is an important variation: whereas for complete networks the speedup due to the pull process is in the order of a multiplicative constant~\cite{rspushpull}, the speed up turns out to be asymptotic, e.g., on preferential attachment graphs, where the rumor spreading time is reduced from $\Theta(\log n)$ rounds in the push model to $\Theta(\log n / \log \log n)$ rounds in the push-pull model~\cite{rsgraphs}, as well as on other graphs with a non-uniform degree distribution. The push-pull model often also proves more amenable to theoretical analysis. We note that asynchronous rumor spreading on graphs, in models closer to our random scheduler, has also been considered in recent work~\cite{PS,gnw}, with~\cite{gnw} pointing out the tight connections between the synchronous (particularly push-pull) and asynchronous models in general networks.

\paragraph{Population protocols.} Population protocols are a model which captures the way in which the complex behavior of systems (biological, sensor nets, etc.) emerges from the underlying local interactions of agents.
The original model of Angluin et al.\ \cite{AADFP06,angluin2008} was motivated by applications in sensor mobility. Despite the limited computational capabilities of individual sensors, such protocols permit at least (depending on available extensions to the model) the computation of two important classes of functions: threshold predicates, which decide if the weighted average of types appearing in the population exceeds a certain value, and modulo remainders of similar weighted averages. The majority function, which belongs to the class of threshold functions, was shown to be stably computable for the complete interaction graph~\cite{AADFP06}; further results in the area of majority computation can be found in~\cite{angluin2008,AR09,Mertzios2016,BCNPST14}. A survey of applications and models of population protocols is provided in~\cite{AR09,MCS11}. An interesting line of research is related to studies of the algorithmic properties of dynamics of chemical reaction networks~\cite{pplb1}. These are as powerful as population protocols, though some extensions of the chemical reaction model also allow the population size to change in time. Two very recent results in the population protocol model are worthy of special attention. Alistarh, Aspnes, and Gelashvili~\cite{AAG18} have resolved the question of the number of states required to solve the Majority problem on a complete network in polylogarithmic time as $\Theta(\log n)$. For the equally notable task of Leader Election, the papers of Gasieniec and Stachowiak~\cite{grzes} (for the upper bound) together with the work of Alistarh, Aspnes, Eisenstat, Gelashvili, and Rivest~\cite{pplb2} (for the lower bound) put the number of states required to resolve this question in polylogarithmic time as $\Theta(\log \log n)$. Both of these results rely on a notion of a self-organizing phase clock.

\paragraph{Nonlinearity in interaction protocols.}

Linear dynamical systems, as well as many nonlinear protocols subjected to rigorous analytical study, have a relatively simple structure of point attractors and repellers in the phase space. The underlying continuous dynamics (in the limit of $n \to +\infty$) of many interaction protocols defined for complete graphs would fit into this category: basic models of randomized rumor spreading~\cite{rs}; models of opinion propagation (e.g.~\cite{voting,AD15}); population protocols for problems such as majority and thresholds~\cite{AADFP06,angluin2008}; all reducible Markov chain processes, such as random walks and randomized iterative load balancing schemes.

Nonlinear dynamics with non-trivial limit orbits are fundamental to many areas of systems science, including the study of physical, chemical and biological systems, and to applications in control science. In general, population dynamics with interactions between pairs of agents are non-linear (representable as a set of quadratic difference equations) and have potentially complicated structure if the number of states is $3$ or more. For example, the simple continuous Lotka-Volterra dynamics~\cite{L1910} gives rise to a number of discrete models, for example one representing interactions of the form $A + B \to A + A$, over some pairs $A, B$ of states in a population (cf.~\cite{SMJSRP14} for further generalizations of the framework or~\cite{ICALP15} for a rigorous analysis in the random scheduler model). The model describes transient stability in a setting in which several species are in a cyclic predator-prey relation. Cyclic protocols of the type have been consequently identified as a potential mechanism for describing and maintaining biodiversity, e.g., in bacterial colonies~\cite{KRFB02,KR04}. Cycles of length 3, in which type $A_2$ attacks type $A_1$, type $A_3$ attacks type $A_2$, and type $A_1$ attacks type $A_3$, form the basis of the basic oscillator, also used as the starting point for protocols in this work, which is referred to as the RPS (rock-paper-scissors) oscillator or simply the 3-cycle oscillator, which we discuss further in Section~\ref{sec:preli}. This protocol has been given a lot of attention in the statistical physics literature. The original analytical estimation method applied to RPS was based on approximation with the Fokker-Planck equation~\cite{RMF06}. A subsequent analysis of cyclic $3$- and $4$-species models using Khasminskii stochastic averaging can be found in~\cite{DF12}, and a mean field approximation-based analysis of RPS is performed in~\cite{PK09}. In~\cite{ICALP15}, we have performed a study of some algorithmic implications of RPS, showing that the protocol may be used to perform randomized choice in a population, promoting minority opinions, in $\tilde O(n^2)$ steps. All of these results provide a good qualitative understanding of the behavior of the basic cyclic protocols. We remark that the protocol used in this paper is directly inspired by the properties of RPS, as we discuss further on, but has a more complicated interaction structure (see Fig.~\ref{fig:rules}).

For protocols with convergence to a single point in the configuration space in the limit of large population size, a discussion of the limit behavior is provided in~\cite{BournezFK12}, who provide examples of protocols converging to limit points at coordinates corresponding to any algebraic numbers.

We also remark that local interaction dynamics on arbitrary graphs (as opposed to the complete interaction graph) exhibit a much more complex structure of their limit behavior, even if the graph has periodic structure, e.g., that of a grid. Oscillatory behavior may be overlaid with spatial effects~\cite{SMJSRP14}, or the system may have an attractor at a critical point, leading to simple dynamic processes displaying self-organized criticality (SOC,~\cite{soc}).

\section{Preliminaries: Building Blocks for Population Protocols}

\subsection{Protocol Definition}

A randomized \emph{population protocol} for a population of $n$ agents is defined as a pair $P = (K_n, R_n)$, where $K_n$ is the set of \emph{states} and $R_n$ is the set of \emph{interaction rules}. The interaction graph is complete. We will simply write $P = (K, R)$, when considering a protocol which is \emph{universal} (i.e., defined in the same way for each value of $n$) or if the value of $n$ is clear from the context. All the protocols we design are universal; our lower bounds also apply to some non-universal protocols. The set of rules $R \subseteq{K^4 \times [0,1]}$ is given so that each rule $j \in R$ is of the form $j = (i_1(j), i_2(j), o_1(j), o_2(j), q_j)$, describing an interaction read as: $``(i_1(j), i_2(j)) \to (o_1(j), o_2(j)) \text{with probability $q_j$}$''. For all $i_1, i_2 \in K$, we define $R_{i_1, i_2} = \{j \in R : (i_1(j),i_2(j)) = (i_1,i_2)\}$ as the set of rules acting on the pair of states $i_1, i_2$, and impose that $\sum_{j\in R_{i_1,i_2}} q_j \leq 1$.

For a state $A\in K$, we denote the number of agents in state $A$ as $\many A$, and the \emph{concentration} of state $A$ as $a=\many A/n$, and likewise for a set of states $\mathcal A$, we write $\many \mathcal A = \sum_{A \in \mathcal{A}}\many A$.

In any configuration of the system, each of the $n$ agents from the population is in one of states from $K_n$. The protocol is executed by an asynchronous scheduler, which runs in \emph{steps}. In every step the scheduler uniformly at random chooses from the population a pair of distinct agents to interact: the \emph{initiator} and the \emph{receiver}. If the initiator and receiver are in states $i_1$ and $i_2$, respectively, then the protocol executes at most one rule from set protocol $R_{i_1, i_2}$, selecting rule $j\in R_{i_1, i_2}$ with probability $q_j$. If rule $j$ is executed, the initiator then changes its state to $o_1(j)$ and the receiver to $o_2(j)$. The source has a special state, denoted $X$ in the $\textsc{Detect}$ problem, or one of two special states, denoted $\{X_1,X_2\}$ in the $\textsc{BitBroadcast}$ problem, which is never modified by any rule.

All protocols are presented in the randomized framework, however, the universal protocols considered here are amenable to a form of conversion into deterministic rules discussed in~\cite{pplb2}, which simulates randomness of rules by exploiting the inherent randomness of the scheduler in choosing interacting node pairs to distribute weakly dependent random bits around the system.

All protocols designed in this work are \emph{initiator-preserving}, which means that for any rule $j \in R$, we have $o_1 (j) = i_1 (j)$ (i.e., have all rules of the form $A + B \to A + C$, also more compactly written as $A \act\!\! B \to C$), which makes them relevant in a larger number of application. As an illustrative example, we remark that the basic rumor spreading (epidemic) model is initiator-preserving and given simply as $1 \act\!\! 0 \to 1$. All protocols can also obviously be rewritten to act on unordered pairs of agents picked by the scheduler, rather than ordered pairs.

\subsection{Protocol Composition Technique}

Our protocols will be built from simpler elements. Our basic building block is the input-controlled oscillatory protocol $P_o$ (see Fig.~\ref{fig:rules}).
We then use protocol $P_o$ as a component in the construction of other, more complex protocols, without disrupting the operation of the original protocol.

Formally, we consider a protocol $P_B$ using state set $B = \{B_i : 1 \leq i \leq k_b\}$ and rule set $R_B$, and a \emph{protocol extension} $P_{BC}$ using a state set $B \times C = B \times \{C_i : 1 \leq i \leq k_c\}$, where $C$ is disjoint from $B$, and \emph{rule extension} set $R_{BC}$. Each rule extension defines for each pair of states from $B \times C$ (i.e., to each element of $(B \times C) \times (B \times C)$) a probability distribution over elements of $C \times C$.

The \emph{composed protocol} $P_B \circ P_{BC}$ is a population protocol with set of states $B \times C$. Its rules are defined so that, for a selected pair of agents in states $(B_i, C_j)$ and $(B_{i'}, C_{j'})$, we obtain a pair of agents in states $(B_{i^*}, C_{j^*})$ and $(B_{i'^*}, C_{j'^*})$ according to a probability distribution defined so that:
\begin{itemize}
\item Each pair $B_{i'^*}, B_{i'^*}$ appears in the output states of the two agents with the same probability as it would in an execution of protocol $P_B$ on a pair of agents in states $B_i$ and $B_{i'}$.
\item
    Each pair $C_{i'^*}, C_{i'^*}$ appears in the output states of the two agents with the probability given by the definition of $P_{BC}$.
\end{itemize}
In the above, the pairs of agents activated by $P_B$ and $P_{BC}$ are not independent of each other. This is a crucial property in the composition of protocol $P_o$ when composing it with further blocks to solve the \textsc{Detect} problem.

We denote by $\mathbf 1_B$ the identity protocol which preserves agent states on set of states $B$. For a protocol $P$, we denote by $P/2$ a lazy version of a protocol $P$ in which the rule activation of $P$ occurs with probability $1/2$, and with probability $1/2$  the corresponding rule of the identity protocol is activated. Note that all asymptotic bounds on expected and w.h.p. convergence time obtained for any protocol $P$ also apply to protocol $P/2$, in the regime of at least a logarithmic number of time steps. We also sometimes treat a protocol $P_{BC}$ extension as a protocol in itself, applied to the identity protocol $\mathbf 1_B$.

The \emph{independently composed protocol} $P_B + P_{BC}$ is defined as an implementation of the composed protocol $(P_B) \circ (P_C/2)$, realized with the additional constraint that in each step, either the rule of $P_B$ is performed with an identity rule extension, or the rule extension of $P_{BC}$ is performed on top of the identity protocol $\mathbf 1_B$. Such a definition is readily verified to be correct by a simple coupling argument, and allows us to analyze protocols $P_B$ and $P_{BC}$, observing that the pairs (identities) of agents activated by the scheduler in the respective protocols are independent.

All the composed protocols (and protocol extensions) we design are also initiator-preserving, i.e., $C_{i^*} = C_{i}$ and $B_{i^*} = B_{i}$, with probability $1$. In notation, rules omitted from the description of protocol extensions are implicit, occurring with probability $0$ (where $C_{j^*} \neq C_{j}$) or with the probability necessary for the normalization of the distribution to $1$, where the state is preserved (where $C_{j^*} = C_{j}$).

As a matter of naming convention, we name the states in the separate state sets of the composed protocols with distinct letters of the alphabet, together with their designated subscripts and superscripts. The rumor source $X$ is treated specially and uses a separate letter (and may be seen as a one state protocol without any rules, on top of which all other protocols are composed; in particular, its state is never modified).  The six remaining states of protocol $P_o$ are named with the letters $A_{_?}^{^?}$, as usual in its definition. Subsequent protocols will use different letters, e.g., $M_?$ and $L_?$.

\section{Overview of Protocol Designs}\label{sec:protocols}

\subsection{Main Routine: Input-Controlled Oscillator Protocol \texorpdfstring{$P_o$}{Po}}\label{sec:osc}

We first describe the main routine which allows us to convert local input parameters (the existence of source into a form of global periodic signal on the population. This main building block is the construction of a $7$-state protocol $P_o$ following oscillator dynamics, whose design we believe to be of independent interest.

The complete design of protocol $P_o$ is shown in~Fig.~\ref{fig:rules}. The source state is denoted by $X$. Additionally, there are six states, called $A_i^+$ and $A_i^{++}$, for~$i\in\{1,2,3\}$. The naming of states in the protocol is intended to maintain a direct connection with the RPS oscillator dynamics, which is defined by the simple rule ``$A_i \act A_{i-1}\mapsto A_i$, for $i=1,2,3$''. In fact, we will retain the convention $A_i = \{A_{i}^{+}, A_{i}^{++}\}$ and $a_i = a_{i}^{+} + a_{i}^{++}$, and consider the two states $A_i^+$ and $A_i^{++}$ to be different flavors of the same species $A_i$, referring to the respective superscripts as either lazy ($^+$) or aggressive ($^{++}$).

\begin{figure}[t]
\begin{minipage}{\textwidth}
\begin{framed}
\vspace*{-8mm}
\begin{alignat*}{4}
\intertext{(1) Interaction with an initiator from the same species makes receiver aggressive:}
A_i^{^?} &\act A_{i}^{^?} && \mapsto \phantom{\ \ \ \;\!}A_{i}^{++}\\[1mm]
\intertext{(2) Interaction with an initiator from a different species makes receiver lazy (case of no attack):}
A_i^{^?} &\act A_{i+1}^{^?} && \mapsto \phantom{\ \ \ \;\!}A_{i+1}^{+\phantom+}\\[1mm]
\intertext{(3) A lazy initiator has probability $p$ of performing a successful attack on its prey:}
A_i^+ &\act A_{i-1}^{^?} && \mapsto \begin{cases}A_{i}^+, &\text{with probability $p$,}\\
A_{i-1}^+, &\text{otherwise.}
\end{cases}\\
\intertext{(4) An aggressive initiator has probability $2p$ of performing a successful attack on its prey:}
A_i^{++}\! &\act A_{i-1}^{^?} && \mapsto \begin{cases}A_{i}^+, &\text{with probability $2p$,}\\
A_{i-1}^+, &\text{otherwise.}
\end{cases}\\
\intertext{(5) The source converts any receiver into a lazy state of a uniformly random species:}
X &\act A_{_?}^{^?} && \mapsto \begin{cases}A_1^+, &\text{with probability $1/3$,}\\
A_2^+, &\text{with probability $1/3$,}\\
A_3^+, &\text{with probability $1/3$.}\\
\end{cases}
\end{alignat*}
\vspace*{-5mm}
\end{framed}
\end{minipage}
\caption{Rules of the basic oscillator protocol $P_o$. The adopted notation for initiator-preserving rules is of the form $A \act\!\!\!\!\! B \mapsto C$, corresponding to transitions written as $A + B \to A + C$ in the notation of chemical reaction networks or $(A,B) \to (A,C)$ in the notation of population dynamics. All rules apply to $i\in\{1,2,3\}$, whereas a question mark $?$ in a superscript or subscript denotes a wildcard, matching any permitted combination of characters, which may be set independently for each agent. Probability $p>0$ is any (sufficiently small) absolute constant.}
\label{fig:rules}
\end{figure}

The protocol has the property that in the absence of $X$, it stops in a corner state of the phase space, in which only one of three possible states appears in the population, and otherwise regularly (every $O(\log n)$ steps) moves sufficiently far away from all corner states. An intuitive formalization of the basic properties of the protocol is given by the theorem below.

\begin{theorem}\label{thm:osc}
There exists a universal protocol $P_o$ with $|K|=7$ states, including a distinguished source state $X$, which has the following properties.
\begin{enumerate}
\item For any starting configuration, in the absence of the source $(\many X=0)$, the protocol always reaches a configuration such that:
\begin{itemize}
\item all agents are in the same state: either $A_1^{++}$, or $A_2^{++}$, or $A_3^{++}$;
\item no further state transitions occur after this time.
\end{itemize}
Such a configuration is reached in $O(\log n)$ rounds, with constant probability (and in $O(\log^2 n)$ rounds with probability $1-O(1/n)$).
\item For any starting configuration, in the presence of the source $(\many X \geq 1)$, we have with probability $1-O(1/n)$:
\begin{itemize}
\item for each state $i\in K$, there exists a time step in the next $O(\log \frac{n}{\many X})$ rounds when at least a constant fraction of all agents are in state $i$;
\item during the next $O(\log \frac{n}{\many X})$ rounds, at least a constant fraction of all agents change their state at least once.
\end{itemize}
\end{enumerate}
\end{theorem}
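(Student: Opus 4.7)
The plan is to analyze the two claims of the theorem separately, exploiting the rock--paper--scissors backbone of protocol $P_o$.

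\textbf{Case 1 ($\many X = 0$).} First, verify that any configuration with all agents in a single state $A_i^{++}$ is absorbing: rule~(1) preserves it, rules~(2)--(4) require two distinct species, and rule~(5) requires the source. It therefore suffices to establish reachability of such a configuration in $O(\log n)$ rounds with constant probability. The key mechanism is a positive feedback between species dominance and the aggressive flavor: a numerically dominant species $A_i$ encounters itself often, so by rule~(1) most of its agents turn aggressive, doubling its attack rate on $A_{i-1}$; simultaneously, rule~(2) resets $A_{i+1}$ (the predator of $A_i$) to the lazy flavor whenever $A_i$ initiates an encounter with it, weakening the counter-attack. This breaks the neutral stability of the underlying RPS cycle. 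I would formalize this by tracking a Lyapunov function such as $V = -\log(a_1 a_2 a_3)$ and showing that $\E[\Delta V]$ per round is positive and of non-negligible magnitude away from the boundary, so that $V$ reaches $\Omega(\log n)$ within $O(\log n)$ rounds with constant probability, at which point the least populous species drops to $O(n^{1-\delta})$ agents. A straightforward continuation---driving this species to extinction under predation from a constant-fraction predator in another $O(\log n)$ rounds, then reducing the remaining non-dominant species in the simplified two-species predator--prey dynamic, and finally letting rule~(1) flip every surviving agent to the aggressive flavor via a coupon-collector argument---gives the $O(\log n)$ constant-probability bound. Repeating over $O(\log n)$ independent epochs then amplifies this to $O(\log^2 n)$ with probability $1 - O(1/n)$.

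\textbf{Case 2 ($\many X \geq 1$).} The source rule~(5) fires in expectation $\Theta(\many X)$ times per round, each firing converting a random non-source agent into a uniformly chosen lazy species. Consequently, no species can remain empty for more than a constant number of rounds without constant-probability replenishment to $\Omega(\many X)$ agents. From any starting configuration, at least one species $A_i$ has concentration $\Omega(1)$, and I claim the system then cycles through dominance of $A_i, A_{i+1}, A_{i+2}$ with each transition lasting $O(\log(n/\many X))$ rounds. Indeed, once $\many A_{i+1} \geq c \,\many X$ (from source injection) and $a_i = \Omega(1)$, rules~(3)--(4) yield $\E[\Delta \many A_{i+1}] = \Theta(p \cdot \many A_{i+1} \cdot a_i)$ per round, i.e., multiplicative growth by a factor $1 + \Omega(p)$ per round; hence $\many A_{i+1}$ grows from $\Omega(\many X)$ to $\Omega(n)$ in $O(\log(n/\many X))$ rounds, tracked by standard Chernoff bounds around the deterministic trajectory. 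During growth $A_i$ is consumed, and once $a_{i+1}$ reaches a constant threshold the roles rotate. Three such transitions---total duration $O(\log(n/\many X))$---suffice to make each species dominant at some moment, giving the first bullet; the second bullet follows because each transition flips $\Omega(n)$ agents between species states.

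\textbf{Main obstacle.} The principal difficulty is the $O(\log n)$ bound in Case~1: standard stochastic RPS has extinction time $\Theta(n)$, so the entire speed-up must come from the aggressive/lazy flavor coupling. Particular care is required when the trajectory passes close to the interior equilibrium $(1/3,1/3,1/3)$, where the first-order drift of $V$ vanishes and one must control the escape using a second-order (Jensen-type) contribution from the variance of attack rates, combined with the per-round stochastic fluctuation of magnitude $\Theta(1/\sqrt n)$ in each $a_i$. A secondary subtlety in Case~2 is to verify that Chernoff-based tracking of $\many A_{i+1}$ remains valid despite simultaneous source injections and flavor fluctuations; this should be handled by bounding each source of variance separately and applying linearity of expectation to the compounded trajectory.
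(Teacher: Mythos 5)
The high-level intuition in your Case~1 is correct, but your proposed formalization has a genuine gap: the Lyapunov function $V = -\log(a_1 a_2 a_3)$ does \emph{not} have positive drift in general, even away from the center of the triangle. Following the paper's notation, the drift of $\phi = -V$ is
$$
\dot{\phi} = \frac{p}{s}\left(-\tfrac12\textstyle\sum_i (a_i - a_{i-1})^2 + \sum_i \kappa_i(a_{i-1}-a_i)\right) \leq \frac{p}{s}\left(-\tfrac12\delta^2 + \kappa\delta\right),
$$
where $\kappa_i = s\,a_i^{++}/a_i - a_i$ measures the deviation of the aggressive fraction from its idealized steady-state value $a_i^2/s$, and $\kappa, \delta$ are Euclidean norms of $(\kappa_i)$ and $(a_i - a_{i-1})$. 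Your claim that the destabilizing mechanism makes $\E[\Delta V]$ reliably positive is exactly the intuitive statement that the aggressive fraction of $A_i$ is proportional to $a_i$ --- but this is only approximately true, and when $\kappa > \delta/2$ the drift of $\phi$ can point toward the center. The paper must therefore track $\kappa$ as an explicit second coordinate and use composite potentials such as $\psi^{(j)} = \eta^* - \frac{3jp}{s}\kappa^*$ (near the center) and $\psi = \eta^2 - \frac{4p}{s^2}\kappa^2$ (away from it), together with the bound $\dot\kappa \leq -\frac{s}{2}\kappa + 18p\delta$ showing that $\kappa$ decays toward a multiple of $\delta$. A proof built on $V$ alone omits this control entirely and would stall whenever the population starts (or wanders into) a configuration with an unfavorable aggressive/lazy split.

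A second, smaller discrepancy: your treatment of the near-equilibrium escape as a ``second-order Jensen-type'' argument does not match how the escape is actually achieved. The drift of $\phi$ near $(1/3,1/3,1/3)$ is $O(\delta^2)$, i.e.\ quadratically small, so convexity corrections of the same order do not yield a useful $\Omega(1/n)$-per-step drift. The paper instead performs an explicit anti-concentration/coupling argument (their Lemma on $\psi^{(1)}_n$): it isolates a $\Theta(n)$-size set of disjoint $(A_2^+, A_1^+)$ interactions, and shows that conditioned on all other scheduler choices, the binomial fluctuation of roughly $\sqrt n$ in the number of successful predations pushes the trajectory outward by $\Theta(1/\sqrt n)$ with constant probability, \emph{without} increasing $\kappa$ by more than a compensable amount. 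This is a direct use of the scheduler randomness, not a Taylor-expansion correction of the drift.

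Your Case~2 sketch is structurally aligned with the paper: source injections guarantee $\Omega(\many X)$ seeding of each species, and exponential amplification at rate $1+\Omega(p)$ per round drives the orbit around the triangle in $O(\log(n/\many X))$ rounds. The paper carries this out by partitioning the boundary region of the triangle into arcs and chaining per-arc growth/decay lemmas, whereas you phrase it as three consecutive exponential-growth phases; the two are essentially the same argument, and the Chernoff/martingale tracking you mention is the right tool. But Case~1 as you have written it would not close without the $\kappa$-tracking ingredient.
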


The proof of the Theorem is provided in Section~\ref{sec:upper}.

The RPS dynamics provides the basic oscillator mechanism which is still largely retained in our scenario. Most of the difficulty lies in controlling its operation as a function of the presence or absence of the rumor source. We do this by applying two separate mechanisms. The presence of rumor source $X$ shifts the oscillator towards an orbit closer to the central orbit $(A_1, A_2, A_3) = (1/3, 1/3, 1/3)$ through rule $(5)$, which increases the value of potential $\phi := \ln (a_1 a_2 a_3)$, where $a_i = \many A_i / n$. Conversely, independent of the existence of rumor source $X$, a second mechanism is intended to reduce the value of potential $\phi$. This mechanism exploits the difference between the aggressive and lazy flavors of the species. Following rule $(1)$, an agent belonging to a species becomes more aggressive if it meets another from the same species, and subsequently attacks agents from its prey species with doubled probability following rule $(4)$. This behavior somehow favors larger species, since they are expected to have (proportionally) more aggressive agents than the smaller species (in which pairwise interactions between agents of the same species are less frequent) --- the fraction of agents in $A_i$ which are aggressive would, in an idealized static scenario, be proportional to $a_i$. (This is, in fact, often far from true due to the interactions between the different aspects of the dynamics). As a very loose intuition, the destabilizing behavior of the considered rule on the oscillator is resemblant of the effect an eccentrically fitted weight has on a rotating wheel, pulling the oscillator towards more external orbits (with smaller values of $\phi$).

The intuition for which the proposed dynamics works, and which we will formalize and prove rigorously in Section~\ref{sec:upper}, can now be stated as follows: in the presence of rumor source $X$, the dynamics will converge to a form of orbit on which the two effects, the stabilizing and destabilizing one, eventually compensate each other (in a time-averaged sense). The period of a single rotation of the oscillator around such an orbit is between $O(1)$ and $O(\log n)$, depending on the concentration of $X$. In the absence of $X$, the destabilizing rule will prevail, and the oscillator will quickly crash into a side of the triangle.

For small values of $\many X>0$, the protocol can be very roughly (and non-rigorously) viewed as cyclic composition of three dominant rumor spreading processes over three sets of states $A_1$, $A_2$, $A_3$, one converting states $A_1$ to $A_3$, the next from $A_3$ to $A_2$, and the last from $A_2$ to $A_1$, which spontaneously take over at moments of time separated by $O(\log n)$ parallel rounds. For other starting configurations, and especially for the case of $\many X = 0 $, the dynamics of the protocol, which has $5$ free dimensions, is more involved to describe and analyze (see Section~\ref{sec:32}). We provide some further insights into the operation of the protocol in Section~\ref{xx}, notably formalizing the notion that an intuitively understood oscillation (going from a small number of agents in some state $A_i$, to a large number of agents in state $A_i$, and back again to a small number of agents in state $A_i$) takes $\Theta(\log \frac{n}{\many X})$ steps, with probability $1-O(1/n)$. As such, protocol $P_o$ can be seen as converting \emph{local input} $\many X$ into a \emph{global periodic signal} with period $\Theta(\log \frac{n}{\many X})$. What remains is allowing nodes to extract information from this periodic signal.

Simulation timelines shown in Fig.\ref{fig:osc_3x3} in the Appendix illustrate the idea of operation of protocol $P_o$ and its composition with other protocols.

\subsection{Protocols for \textsc{BitBroadcast}}

A solution to \textsc{BitBroadcast} is obtained starting with an independent composition of two copies of oscillator $P_o$, called $P_{o[1]}$ and $P_{o[2]}$, with states in one protocol denoted by subscript $[1]$ and in the other by subscript $[2]$. The respective sources are thus written as $X_{[1]}$ and $X_{[2]}$. In view of Theorem~\ref{thm:osc}, in this composition $P_{o[1]} + P_{o[2]}$, under the promise of the \textsc{BitBroadcast} problem, one of the oscillators will be running and the other will stop in a corner of its state space. Which of the oscillators is running can be identified by the presence of states $A_i^+[z]$, which will only appear for $z \in \{1,2\}$ corresponding to the operating oscillator. Moreover, by the same Theorem, every $O(\log n)$ rounds a constant number of agents of this oscillator will be in such a state $A_i^+[z]$, for any choice of $i \in \{1,2,3\}$. We can thus design the protocol extension $P_b$ to detect this. This is given by the pair of additional output states $\{Y_1, Y_2\}$ and the rule extension consisting of the two rules shown in Fig.~\ref{fig:Pb}.
\begin{figure}[t]
\begin{minipage}{\textwidth}
\begin{framed}
\vspace*{-8mm}
\begin{alignat*}{4}
\intertext{(6a) Interaction with agent state $A_{?[1]}^+$ sets agent output to $Y_1$:}
(A_{?[1]}^+, A_{?[2]}^{++}, Y_?) &\act (A_{?[1]}^?, A_{?[2]}^?, Y_?) &\mapsto Y_1\\[1mm]
\intertext{(6a) Interaction with agent state $A_{?[2]}^+$ sets agent output to $Y_2$:}
(A_{?[1]}^{++}, A_{?[2]}^+, Y_?) &\act (A_{?[1]}^?, A_{?[2]}^?, Y_?) &\mapsto Y_2
\end{alignat*}
\vspace*{-5mm}
\end{framed}
\end{minipage}
\caption{Protocol extension $P_b$ of protocol $(P_{o[1]} + P_{o[2]})$, with additional states $\{Y_1, Y_2\}$.}
\label{fig:Pb}
\end{figure}

\begin{theorem}[Protocol for \textsc{BitBroadcast}]\label{thm:br}
Protocol $(P_{o[1]} + P_{o[2]}) + P_b$, having $|K|=74$ states, including distinguished source states $X_{[1]}$, $X_{[2]}$ converges to an exact solution of \textsc{BitBroadcast}. This occurs in $O(\log^2 n)$ parallel rounds, with probability $1- O(1/n)$. In the output encoding, agent states of the form $(\cdot, \cdot, Y_z)$ represent answer ``$z$'', for $z\in \{1,2\}$.
\end{theorem}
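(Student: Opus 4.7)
The plan is to reduce the claim to the two oscillator guarantees provided by Theorem~\ref{thm:osc} and then layer a simple coupon-collector argument on top. By the \textsc{BitBroadcast} promise, without loss of generality assume $\many X_{[1]}\geq 1$ and $\many X_{[2]}=0$. Under the independent composition $(P_{o[1]}+P_{o[2]})+P_b$, each scheduler step activates exactly one of the three sub-protocols, each with a constant probability, so each of them runs at only a constant-factor slowdown compared to its standalone version and all bounds from Theorem~\ref{thm:osc} carry over verbatim up to constants. Source agents of $X_{[1]}$ carry no $P_{o[2]}$- or $P_b$-coordinates and are therefore inert with respect to those sub-protocols.

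First I would apply Theorem~\ref{thm:osc}(1) to $P_{o[2]}$, which runs in the absence of its source: within $O(\log^2 n)$ rounds, with probability $1-O(1/n)$, every non-source agent has its second coordinate frozen at a common $A_j^{++}[2]$ for some $j\in\{1,2,3\}$, and no further transition of that coordinate ever occurs. After this moment two key things are simultaneously true. The second-coordinate precondition of rule (6b), namely $A_?^+[2]$, is vacuously false on every non-source agent, so rule (6b) never fires again; and Theorem~\ref{thm:osc}(2) applied to the still-running $P_{o[1]}$ guarantees that in every subsequent window of $O(\log n)$ rounds there is a time step at which a constant fraction of agents occupy some lazy state $A_i^+[1]$. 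Hence from this point on, the only way a non-source agent's output coordinate can be rewritten is by rule (6a), which always writes the correct label $Y_1$; it remains only to show that within $O(\log^2 n)$ additional rounds every agent has been the receiver of at least one such activation. Since the count of any single state changes by at most one per step, a "constant fraction in $A_?^+[1]$" instant persists for $\Omega(n)$ steps, i.e., $\Omega(1)$ rounds; during such a sub-window rule (6a) fires $\Omega(n)$ times with i.i.d.\ uniform receivers. Concatenating $\Theta(\log n)$ disjoint length-$O(\log n)$ windows yields $\Omega(n\log n)$ rule-(6a) firings, so by a standard coupon-collector argument the probability that some agent has not been hit is $O(1/n)$, and once hit an agent's output stays at $Y_1$ forever.

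The main obstacle I anticipate is upgrading the qualitative "there exists a time step" assertion of Theorem~\ref{thm:osc}(2) into the quantitative "good steps account for a constant fraction of each window" statement needed above; the one-step-Lipschitz observation just sketched is the crucial tool, together with the fact that the scheduler's choice of interacting pair is independent of the global oscillator state. A secondary concern is that during the stabilization phase of oscillator $[2]$ rule (6b) may still fire and temporarily corrupt some outputs, so the exact-convergence claim must be measured from the moment both events (full stabilization of $[2]$ and subsequent coverage of all agents by rule (6a)) have occurred; summing the two $O(\log^2 n)$ time bounds and union-bounding the two $O(1/n)$ failure probabilities yields the overall $O(\log^2 n)$-rounds, $1-O(1/n)$-probability guarantee claimed by the theorem.
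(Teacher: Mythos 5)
The paper does not give a standalone proof of Theorem~\ref{thm:br}; the only argument is the short sketch in Section~3.2, which reduces the theorem to Theorem~\ref{thm:osc} applied to the two oscillator copies. Your proposal is a correct elaboration of exactly that reduction: apply Theorem~\ref{thm:osc}(1) to the source-free copy $P_{o[2]}$ to freeze everyone's second coordinate in some aggressive state, after which rule (6b)'s initiator condition $A_{?[2]}^+$ can never be met; apply Theorem~\ref{thm:osc}(2) to the running copy $P_{o[1]}$ to get a recurrent constant-fraction of agents in a lazy state $A_i^+[1]$; then turn the instantaneous "constant fraction" guarantee into a persistent window via the one-agent-per-step Lipschitz bound on state counts and coupon-collect over $\Theta(\log n)$ such windows. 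The observation that the independent composition runs each sub-protocol at a constant rate, so the asymptotic bounds of Theorem~\ref{thm:osc} apply up to constants, and the observation that the $P_b$-step receiver is drawn independently of the oscillator evolution (making the coupon-collector hits conditionally i.i.d.\ uniform), are the two slightly delicate points, and you handle both correctly. This is the same route as the paper, just with the details the paper leaves implicit actually written out.
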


The protocol $(P_{o[1]} + P_{o[2]}) + P_b$ is not ``silent'', i.e., it undergoes perpetual transitions of state, even once the output has been decided. As a side remark, we note that for the single-source broadcasting problem, or more generally for the case when the number of sources is small, $\max\{\many X_{[1]}, \many X_{[2]}\} = O(1)$, we can propose the following simpler silent protocol. We define protocol $P_o'$, by modifying protocol $P_o$ as follows. We remove from it Rule (5), and replace it by to the four rules shown in Fig.~\ref{fig:ss_broadcast}. The analysis of the modified protocol follows from the same arguments as those used to prove Theorem~\ref{thm:osc}(1). In the regime of $\max\{\many X_{[1]}, \many X_{[2]}\} = O(1)$, the effect of the source does not influence the convergence of the process and each of the three possible corner configurations, with exclusively species $\{A_1, A_2, A_3\}$, is reached in $O(\log n)$ steps with constant probability. However, rules $(5a)-(5d)$ enforce that the only stable configuration which will persist is the one in a corner corresponding to the identity of the source, i.e., $A_1$ for source $X_{[1]}$ and $A_2$ for source $X_{[2]}$; the source will restart the oscillator in all other cases. We thus obtain the following side result, for which we leave out the details of the proof.

\begin{figure}[t]
\begin{minipage}{\textwidth}
\begin{framed}
\vspace*{-8mm}
\begin{alignat*}{4}
\intertext{(5a-5b) Interaction with source $X_{[1]}$ alters corner configurations different from $A_1$:}
X_{[1]} &\act A_3^? & \mapsto A_1^+\\[1mm]
X_{[1]} &\act A_2^? & \mapsto A_3^+\\[1mm]
\intertext{(5c-5d) Interaction with source $X_{[2]}$ alters corner configurations different from $A_2$:}
X_{[2]} &\act A_1^? & \mapsto A_2^+\\[1mm]
X_{[2]} &\act A_3^? & \mapsto A_1^+
\end{alignat*}
\vspace*{-5mm}
\end{framed}
\end{minipage}
\caption{Protocol $P_o'$ is obtained from the basic oscillator protocol $P_o$ by removing Rule (5) and replacing it by the two rules shown above.}
\label{fig:ss_broadcast}
\end{figure}

\begin{observation}
Protocol $P_{o}'$, having $|K|=6+2 = 8$ states, including distinguished source states $X_{[1]}$, $X_{[2]}$ converges to an exact solution of \textsc{BitBroadcast}, eventually stopping with all agents in state $A_1^{++}$ if source $X_{[1]}$ is present and stopping with all agents in state $A_2^{++}$ if source $X_{[2]}$ is present, with no subsequent state transition. The stabilization occurs within $O(\log^2 n)$ parallel rounds, with probability $1- O(1/n)$, if $\max\{\many X_{[1]}, \many X_{[2]}\} = O(1)$, i.e., the broadcast originates from a constant number of sources.
\end{observation}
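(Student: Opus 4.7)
The plan is to combine two complementary behaviors of $P_o'$: (i) the oscillator rules (1)--(4) inherited from $P_o$ drive the configuration into some pure aggressive corner $A_i^{++}$, as in Theorem~\ref{thm:osc}(1); and (ii) the new source rules (5a)--(5d) destabilize every such corner except the one matching the active source. One first checks by direct inspection that, with only $X_{[1]}$ (resp.\ $X_{[2]}$) present, the all-$A_1^{++}$ (resp.\ all-$A_2^{++}$) configuration is absorbing: rules (1)--(4) are fixed at every pure aggressive corner, and rules (5a)--(5b) (resp.\ (5c)--(5d)) match only receivers in $A_3^?$ or $A_2^?$ (resp.\ $A_1^?$ or $A_3^?$), which are not present.

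For convergence to \emph{some} corner, I would argue that Theorem~\ref{thm:osc}(1) applies essentially verbatim. Rules (1)--(4) are unchanged from $P_o$, and source rules in $P_o'$ fire at rate $O\bigl((\many X_{[1]}+\many X_{[2]})/n\bigr)=O(1/n)$ per step, i.e., $O(1)$ per round. Over an $O(\log n)$-round epoch this produces only $O(\log n)$ single-agent perturbations, which are absorbed by the drift and martingale arguments underlying Theorem~\ref{thm:osc}(1). Hence with constant probability the protocol enters some corner $A_i^{++}$ within $O(\log n)$ rounds; iterating $\Theta(\log n)$ independent epochs yields entry into some corner within $O(\log^2 n)$ rounds with probability $1-O(1/n)$.

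Now suppose WLOG the source is $X_{[1]}$. From the $A_3^{++}$ corner, rule (5a) fires in $O(1)$ expected rounds and creates a single $A_1^+$ agent. Crucially, the predator species of $A_1$, namely $A_2$, is absent, so no rule of $P_o'$ can remove an $A_1$ agent, while rules (3)--(4) convert $A_3$'s into $A_1$'s through $A_1$-initiated interactions. The number of $A_1$ agents is therefore monotone non-decreasing and grows by a standard one-way rumor-spreading argument: from a single seed it reaches $n-O(1)$ in $O(\log n)$ rounds w.h.p., after which rule (1) drives the system into $A_1^{++}$ within another $O(\log n)$ rounds. From the $A_2^{++}$ corner, rule (5b) similarly seeds one $A_3^+$, which monotonically takes over because $A_1$ is absent; this delivers the system first to $A_3^{++}$ and then, by the previous case, to $A_1^{++}$. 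Since at most two such transitions are required, this phase contributes $O(\log n)$ rounds w.h.p., and the final $O(\log^2 n)$ bound follows by a union bound over the $O(\log n)$ convergence epochs and the at most two epidemic takeovers.

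The main obstacle lies in the interplay between the monotone epidemic and additional source firings. Once the growing species reaches a constant fraction but the overtaken species is not yet extinct, $X_{[1]}$ may trigger both (5a) and (5b), introducing spurious seeds of a third species. Because $\many X_{[1]}=O(1)$, the cumulative number of such seeds over an $O(\log n)$-round window is only $O(\log n)$, whereas the dominant epidemic already converts $\Omega(n)$ agents per round once it is established. A concentration argument is needed to show that such spurious seeds are immediately predated by the currently dominant species (which is precisely their predator) and cannot sustain themselves before that species is eliminated. Formalizing this domination --- essentially, that a parasite with injection rate $O(1/n)$ cannot outrun a host epidemic with rate $\Omega(1)$ --- is the only delicate technical piece; beyond it, the argument is a routine composition of standard epidemic bounds with Theorem~\ref{thm:osc}(1).
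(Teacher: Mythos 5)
Your overall structure mirrors what the paper itself sketches (and explicitly declines to fill in): apply the Theorem~\ref{thm:osc}(1) machinery to get into \emph{some} corner in $O(\log n)$ rounds with constant probability, observe that the corner matching the active source is absorbing under rules (5a)--(5d), and observe that the other two corners are restarted by the source. Your absorbing-corner check is correct, and the observation that $O(1)$ source interactions per round is a negligible perturbation over an $O(\log n)$-round epoch is the same informal claim the paper makes when it says ``the effect of the source does not influence the convergence of the process.'' The paper goes no further than that; you go further by tracing the post-restart trajectory, which is a genuinely useful refinement of the paper's sketch.

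Where you differ from the paper, and where the gap lives, is in that post-restart analysis. From the $A_3^{++}$ corner with $X_{[1]}$, your monotone argument is airtight: no rule of $P_o'$ can ever create an $A_2$ agent from the state set $\{A_1^+, A_1^{++}, A_3^+, A_3^{++}\}$, and no rule removes an $A_1$ agent, so $\many A_1$ is nondecreasing and the takeover is a clean one-way epidemic. From the $A_2^{++}$ corner, however, your stated claim that $A_3$ ``monotonically takes over because $A_1$ is absent'' is false as written: rule (5a) fires as soon as $\many A_3 \geq 1$, so $A_1$ agents appear almost immediately and prey on $A_3$, breaking monotonicity of $\many A_3$. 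You correctly flag this in your last paragraph, but then characterize the fix as ``a routine composition of standard epidemic bounds,'' which undersells the issue: the host/parasite framing does not directly apply because $A_1$ (the ``parasite'' of $A_3$) is also the eventual winner, and $A_3$'s decline is coupled to $A_2$'s decline through the cyclic predation. The correct thing to observe is that $\many A_2$ is nonincreasing (no rule creates an $A_2$) and that, while $a_2 > 0$, the log-growth rate of $a_1$ is bounded by roughly $p(a_3 - a_2) + O(\many X/n) \leq p + o(1)$ per round, so over the $O(\log n)$ rounds needed for $a_2$ to vanish, $a_1$ can only reach $n^{O(p)}/n \cdot \mathrm{polylog}(n)$ and hence cannot appreciably slow $A_3$'s elimination of $A_2$; once $a_2 = 0$ the $A_3^{++}$-case monotone argument takes over. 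This is a drift/Chernoff argument of the same flavor as the phases of Theorem~\ref{thm:osc}(1), not a black-box rumor-spreading bound, and it is the piece that actually needs to be supplied. Since the paper itself punts on exactly this, your attempt is a faithful and somewhat more detailed reconstruction with a correctly identified, but incompletely patched, hole.
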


\subsection{Protocol for \textsc{Detect}}

The solution to problem~\textsc{Detect} is more involved. It relies on two auxiliary extensions added on top of a single oscillator $P_o$. The first, $P_m$, runs an instance of the 3-state majority protocol of Angluin et al.~\cite{angluin2008} \emph{within} each species $A_i$ of the oscillator. For this reason, the composition between $P_o$ and $P_m$ has to be of the form $P_o \circ P_m$ (i.e., it cannot be independent). The operation of this extension is shown in Fig.~\ref{fig:newrules_l} and analyzed in Section~\ref{pm}. It relies crucially on an interplay of two parameters: the time $\Theta(\log \frac{n}{\many X})$ taken by the oscillator to perform an orbit, and the time $\Omega(\log \frac{n}{\many X})$ it takes for the majority protocol (which is reset by the oscillator in its every oscillation) to converge to a solution. When parameters are tuned so that the second time length is larger a constant number of times than the first, a constant proportion of the agents of the population are involved in the majority computation, i.e., both of the clashing states in the fight for dominance still include $\Omega(n)$ agents. In the absence of a source, shortly after the oscillator stops, one of these states takes over, and the other disappears.

\begin{figure}
\begin{minipage}{\textwidth}
\begin{framed}
\vspace*{-8mm}
\begin{alignat*}{4}
\intertext{(6) Meeting an agent of a different type resets majority setting:}
&& ((X \text{ or } A_j^?),M_?) &\act (A_i^?,M_?) && {\mapsto} \begin{cases}
M_{+1}, & \text{with probability $1/2$}\\
M_{-1}, & \text{with probability $1/2$}\\
\end{cases}, \quad\text{for $j\neq i$.}\\[1mm]
\intertext{(7-10) Three-state majority protocol among agents of the same type:}
(7):&& (A_i^?,M_{-1}) &\act (A_i^?,M_{+1}) && {\mapsto} M_0, \quad\text{with probability $r$}.\\
(8):&&(A_i^?,M_{+1}) &\act (A_i^?,M_{-1}) && {\mapsto} M_0, \quad\text{with probability $r$}.\\
(9):&&(A_i^?,M_{+1}) &\act (A_i^?,M_0) && {\mapsto} M_{+1}, \quad\text{with probability $r$}.\\
(10):&&(A_i^?,M_{-1}) &\act (A_i^?,M_0) && {\mapsto} M_{-1}, \quad\text{with probability $r$}.
\end{alignat*}
\vspace*{-5mm}
\end{framed}
\end{minipage}
\caption{Protocol extension $P_m$ of protocol $P_o$, with additional states $\{M_{-1}, M_0, M_{+1}\}$. This protocol extension is applied through the composition $(P_o \circ P_m)$. Probability $r>0$ is given by an absolute constant, depending explicitly on $s$ and $p$, whose value is sufficiently small.}
\label{fig:newrules_m}
\end{figure}

The above-described difference can be detected by the second, much simpler, extension $P_l$, designed in Fig.~\ref{fig:newrules_l} and analyzed in Section~\ref{pl}. The number of ``lights'' switched on during the operation of the protocol will almost always be more than $(1-\eps)n$, where $\eps > 0$ is a parameter controlled by the probability of lights spontaneously disengaging, and may be set to and arbitrarily small constant.

\begin{figure}
\begin{minipage}{\textwidth}
\begin{framed}
\vspace*{-8mm}
\begin{alignat*}{4}
\intertext{(11) Light switch progresses from $L_{-1}$ to $L_{+1}$ in the presence of initiator $M_{-1}$:}
(A_?^?, M_{-1}, L_?)  & \act (A_?^?, M_{?}, L_{-1}) && \mapsto L_{+1}\\[1mm]
\intertext{(12) Light switch progresses from $L_{+1}$ to $L_{\mathit{on}}$ in the presence of initiator $M_{+1}$:}
(A_?^?, M_{+1}, L_?)  & \act (A_?^?, M_{?}, L_{+1}) && \mapsto L_{\mathit{on}}\\[1mm]
\intertext{(13) Light spontaneously turns off:}
(A_?^?, M_?, L_?)  & \act (A_?^?, M_{?}, L_{\mathit{on}}) && \mapsto L_{-1} && ,\quad \text{with probability $q(\eps)$.}
\end{alignat*}
\vspace*{-5mm}
\end{framed}
\end{minipage}
\caption{Protocol extension $P_l$ of protocol $(P_o \circ P_m)$, with additional states $\{L_{-1}, L_{+1}, L_{\mathit{on}}\}$. This protocol extension is applied through the composition $(P_o \circ P_m) + P_l$. Probability $q(\eps)>0$ is given by an absolute constant, depending explicitly on $s$, $p$, $r$, and $\eps$, whose value is sufficiently small.}
\label{fig:newrules_l}
\end{figure}

\begin{theorem}[Protocol for \textsc{Detect}]\label{thm:pr}
For any $\eps>0$, protocol $(P_o \circ P_m) + P_l$, having $|K|=6\cdot 3 \cdot 3 + 1 = 55$ states, including a distinguished source state $X$, which solves the problem of spreading confirmed rumors as follows:
\begin{enumerate}
\item For any starting configuration, in the presence of the source $(\many X \geq 1)$, after an initialization period of $O(\log n)$ rounds, at an arbitrary time step the number of agents in an output state corresponding to a ``yes'' answer is $(1-\eps) n$, with probability $1 - O(1/n)$.
\item For any starting configuration, in the absence of the source $(\many X=0)$, the system always reaches a configuration such that all agents are in output states corresponding to a ``no'' answer for all subsequent time steps. Such a configuration is reached in $O(\log^2 n)$ rounds, with probability $1- O(1/n)$.
\end{enumerate}
\end{theorem}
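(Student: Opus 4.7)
I would treat the two parts of the theorem separately, starting with the easier absence-of-source regime.

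\emph{Part 2 (source absent).} By Theorem~\ref{thm:osc}(1), within $O(\log^2 n)$ rounds w.h.p.\ the oscillator settles in a configuration where every agent shares a single species label $A_i^{++}$ and no rule of $P_o$ ever fires again. From this point on rule~(6) of $P_m$ is permanently disabled, because every interacting pair shares the same species, and only the three-state majority rules~(7)--(10)---exactly the dynamics of Angluin--Aspnes--Eisenstat~\cite{angluin2008}---remain active. Within an additional $O(\log^2 n)$ rounds w.h.p.\ one of $M_{-1}, M_{+1}$ is extinguished; say $M_{-1}$ disappears (the opposite case is symmetric). Then rule~(11) can never fire, so no $L_{-1}$-agent progresses to $L_{+1}$ (nor, via rule~(12), to $L_{\mathit{on}}$), while rule~(13) continues to return $L_{\mathit{on}}$-agents to $L_{-1}$. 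A coupling with an exponentially decaying epidemic shows that $L_{\mathit{on}}$ is extinct within a further $O(\log n)$ rounds w.h.p., after which every agent sits in a ``no''-output state and no subsequent transition alters the output.

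\emph{Part 1 (source present).} I plan to maintain two structural invariants at every step past an $O(\log n)$-round initialization, w.h.p. \emph{(A) Both $M_{-1}$ and $M_{+1}$ contain $\Omega(n)$ agents.} Theorem~\ref{thm:osc}(2) guarantees that during every window of $W = \Theta(\log(n/\many X))$ rounds a constant fraction of the agents change species; each species change is a cross-species interaction in which rule~(6) of $P_m$ simultaneously resets the receiver's $M$-bit to a uniform $\pm 1$. A Chernoff bound then yields $\Omega(n)$ fresh $M_{+1}$-agents and $\Omega(n)$ fresh $M_{-1}$-agents per window. Choosing the majority rate $r$ to be a sufficiently small absolute constant ensures, via the standard analysis of~\cite{angluin2008}, that the consumption of $M_{\pm 1}$-mass by rules~(7)--(10) over a single window is a small fraction of the injected mass; both lower bounds then persist in perpetuity. \emph{(B) At most an $\eps$-fraction of agents sit outside $L_{\mathit{on}}$.} Given~(A), $\Omega(n)$ per-round pair-activations fire rule~(11) or~(12) on an $L$-agent currently in $\{L_{-1}, L_{+1}\}$, so any given non-$L_{\mathit{on}}$ agent reaches $L_{\mathit{on}}$ in constant expected time; meanwhile rule~(13) removes $L_{\mathit{on}}$-agents at per-interaction probability $q(\eps)$. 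Modelling the number of non-$L_{\mathit{on}}$ agents as a birth--death chain and taking $q(\eps)$ small enough, a standard drift/hitting-time analysis (e.g.\ Hajek's lemma) shows that after an $O(\log n)$-round transient this count exceeds $\eps n$ with probability $O(1/n)$ at any prescribed step.

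\emph{Main obstacle.} The delicate point is invariant~(A). The ``uniform $\pm 1$ refresh'' of rule~(6) is executed in the very same interaction that drives $P_o$, so the internal coin-flips of $P_m$ are coupled with the scheduler's choice of interacting pair; the Chernoff argument for fresh injections must therefore be carried out conditionally on the oscillator trajectory, exploiting the composition framework of the preliminaries to decouple the $M$-coin from the $P_o$-transition. A secondary technicality is that the window length $W$ depends on the unknown quantity $\many X$, so the constant $r$ must be chosen uniformly in $\many X$: taking $r$ sufficiently small guarantees that the~\cite{angluin2008} convergence time exceeds a single oscillator period by a constant factor for every admissible $\many X \geq 1$. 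Once~(A) is in place,~(B) and Part~2 follow by routine concentration and the standard analysis of the 3-state majority dynamics.
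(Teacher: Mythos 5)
Your Part~2 matches the paper's argument closely: Theorem~\ref{thm:osc}(1) gives the corner configuration, rule~(6) of $P_m$ then goes dormant, the residual $M$-dynamics is precisely the three-state majority of~\cite{angluin2008} (Lemma~\ref{lem:majority_stopping}), one of $M_{\pm1}$ dies, and the $L$-layer drains into ``no''-states via rule~(13). Your Part~1 has the right high-level shape (two invariants, with the $M$-invariant feeding into a mixing argument for the three-state $L$-chain), but invariant~(A) is where there is a genuine gap, and it is exactly the place where the paper does all the real work.

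The flaw is in the sentence ``the consumption of $M_{\pm1}$-mass by rules~(7)--(10) over a single window is a small fraction of the injected mass.'' Rules~(7)--(10) do not \emph{consume} $M_{\pm1}$ mass in any balanced sense: (7) and (8) annihilate opposite pairs symmetrically (contributing zero net drift), while (9) and (10) recruit $M_0$-agents to whichever of $M_{+1},M_{-1}$ is currently ahead. The danger is thus not a mass imbalance at all but \emph{disparity amplification}: within a species $A_i$ the quantity $D_t := M^{(+1)}_{i,t} - M^{(-1)}_{i,t}$ grows roughly multiplicatively at rate $1 + 4r/n$ per step, and over a window of $\Theta(n\log(n/\many X))$ steps the growth factor is $(n/\many X)^{\Theta(r)}$, which is polynomial in $n$ for small $\many X$. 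A Chernoff count of fresh $\pm1$ injections does nothing to control this growth, and indeed rules~(7)--(10) can fire $\omega(n)$ times over such a window, so ``a small fraction of injected mass'' is not a meaningful bound. What makes the argument go through in the paper is the observation you do not use: when species $A_i$ is small, $|D_t| \le \many A_i$ is automatically small, so the multiplicative growth starts from a tiny base; the paper formalizes this with a multiplicative-drift/submartingale bound on the squared potential $G_t = D_t^2$ (the unnumbered lemma preceding Lemma~\ref{lem:mainmajorityx}, giving $|m^{(+1)}_{i,t} - m^{(-1)}_{i,t}| \le 2e^{4rT/n}\max\{n^{-0.2}, a_{i,t_0}\}$), combined with Lemma~\ref{lem:goaround} to locate, for every observation time $t$, an earlier time $\tau$ at which the now-dominant species was the minority. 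Only then does Lemma~\ref{lem:mainmajority} deliver the $\Omega(1)$ lower bound on both $m^{(+1)}$ and $m^{(-1)}$. You also correctly flag the coupling issue for the shared scheduler pair (which the paper handles by running $P_o$ first and replaying the same activations for rules~(6)--(10)), but without the disparity-growth lemma and its anchoring at a moment of minimality, invariant~(A) does not follow.

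Finally, the ``secondary technicality'' you mention is framed in the wrong direction: the quantity $\many X$ enters the window length, not the majority convergence time, and the requirement on $r$ is that the disparity growth factor $(n/\many X)^{O(r)}$ over one window stays sublinear in the species size --- not that the Angluin et al.\ convergence time exceeds the period by a constant factor.
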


\section{Impossibility Results for Protocols without Non-Stationary Effects}\label{sec:impossibility}

For convenience of notation, we identify a configuration of the population with a vector $z = (z^{(1)}, \ldots, z^{(k)}) \in \{0,1,\ldots,n\}^k = Z$, where $z^{(i)}$, for $1\leq i \leq k$, denotes the number of agents in the population having state $i$, and $\on{z}=n$. Our main lower bound may now be stated as follows.

\begin{theorem}[Fixed points preclude fast stabilization]\label{thm:lb}
Let $\eps_1 >0$ be arbitrarily chosen, let $P$ be any $k$-state protocol, and let $z_0$ be a configuration of the system with at most $n^{\eps_0}$ agents in state $X$, where $\eps_0 \in (0, \eps_1]$ is a constant depending only on $k$ and $\eps_1$. Let $B$ be a subset of the state space around $z_0$ such that the population of each state within $B$ is within a factor of at most $n^{\eps_0}$ from that in $z_0$ (for any $z \in B$, for all states $i \in \{1,\ldots, k\}$,  we have $z_0^{(i)}/n^{\eps_0} <  z^{(i)} \leq n^{\eps_0} \max\{1,z_0^{(i)}\}$).

Suppose that in an execution of $P$ starting from configuration $z_0$, with probability $1 - o(1)$, the configurations of the system in the next $n^{2\eps_1}$ parallel rounds are confined to $B$.

Then, an execution of $P$ for $n^{2\eps_0}$ parallel rounds, starting from a configuration in which state $X$ has been removed from $z_0$, reaches a configuration in a $O(n^{6\eps_1})$-neighborhood of $B$, with probability $1-o(1)$.
\end{theorem}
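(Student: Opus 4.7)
I would prove the theorem by coupling two executions of $P$: the reference execution $\mathcal{P}_1$ starting from $z_0$, which by hypothesis stays in $B$ throughout the first $n^{2\eps_1}$ parallel rounds with probability $1-o(1)$, and the target execution $\mathcal{P}_2$ starting from $z_0$ with all $\leq n^{\eps_0}$ source agents removed. The objective is to show that after $T = n^{2\eps_0}$ rounds (so $T \leq n^{2\eps_1}$), the $L_1$-distance between the two configurations is $O(n^{6\eps_1})$; since $\mathcal{P}_1(T)$ lies in $B$, this will place $\mathcal{P}_2(T)$ in the required neighborhood of $B$. The underlying intuition is that the $\leq n^{\eps_0}$ source agents participate in only a tiny fraction of all interactions over this window, so their removal should perturb the dynamics only mildly.

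\textbf{Coupling construction.} I would view both executions as acting on a common labelled pool of $n$ agent slots, with the source slots in $\mathcal{P}_2$ treated as ordinary agents subject to the rules of $P$. At each step, the scheduler selects the same ordered pair of slots in both processes, and the same rule outcome is realized on steps where the two processes agree on the states of both selected agents. Let $D_t$ denote the set of slots whose states currently differ between $\mathcal{P}_1$ and $\mathcal{P}_2$; initially $|D_0| = \many X \leq n^{\eps_0}$, and at each step $|D_t|$ can grow by at most $2$, and only if the selected pair meets either an agent of $D_t$ or a current $X$-agent of $\mathcal{P}_1$.

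\textbf{Concentration.} Conditioning on the $1-o(1)$ event that $\mathcal{P}_1$ stays in $B$ throughout $[0,T]$, the number of $X$-agents never exceeds $n^{2\eps_0}$, so the probability that a given step causes $|D_t|$ to grow is at most $O((|D_t|+n^{2\eps_0})/n)$. I would combine Chernoff bounds on the number of $X$-involving steps over the $nT = n^{1+2\eps_0}$ interactions with a supermartingale concentration inequality (e.g., Freedman's) for the propagation of divergences, aiming to conclude $|D_T| = O(n^{6\eps_1})$ with probability $1-o(1)$. Since $\|z_1(T) - z_2(T)\|_1 \leq 2|D_T|$, this delivers the desired bound.

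\textbf{Main obstacle.} The central technical difficulty is to prevent the exponential compounding of divergences. A naive Gr\"onwall / branching-process estimate on $\mathbb{E}[|D_t|]$ produces an amplification factor of $\exp(\Theta(T/n)) = \exp(\Theta(n^{2\eps_0}))$, which is superpolynomial and thus far too large. Reaching the target polynomial bound $n^{6\eps_1}$ requires exploiting the freedom to pick $\eps_0$ as a sufficiently small constant depending on $k$ and $\eps_1$, as well as structural information provided by the hypothesis that $\mathcal{P}_1$ stays in $B$. I expect the refinement to proceed via a case analysis over the finitely many rule types of $P$ --- observing that most rules act as the identity on most state pairs and thus do not create new divergences --- combined with a finer accounting of how a single $X$-interaction can cascade over the short window of $n^{2\eps_0}$ rounds. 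Making this cascade analysis tight enough to match the exponent $6\eps_1$ is where I expect the bulk of the technical work to lie.
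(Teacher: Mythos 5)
Your proposal correctly identifies the right framework (coupling) and, importantly, correctly diagnoses the central obstacle: under the naive agent-level coupling (same pair, same rule when states agree), the set $D_t$ of disagreeing agents compounds multiplicatively, giving $\exp(\Theta(T/n)) = \exp(\Theta(n^{2\eps_0}))$ amplification, which is superpolynomial and hopeless. However, the sketch has a genuine gap: the proposed fixes (``picking $\eps_0$ small enough,'' ``case analysis over rule types,'' ``finer accounting of cascades'') do not actually neutralize this blowup, because shrinking $\eps_0$ only pushes the exponent around without ever bringing $\exp(n^{2\eps_0})$ below a polynomial, and because the agent-level coupling has no mechanism for disagreements to \emph{disappear} once created. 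An agent-level $|D_t|$ bound cannot succeed here; the naive coupling must be abandoned, not merely sharpened.

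The paper's proof is built on a structurally different coupling. First, a pigeonhole lemma (Lemma~\ref{lem:epsilons}) finds an exponent $\eps \in [\eps_0, \eps_1]$ so that every rule is cleanly either ``low probability'' ($\leq n^{\eps-1}$) or ``high probability'' ($\geq n^{24\eps-1}$), and every state is either low-representation or high-representation, with a polynomial gap --- this separation (made possible by $k = O(1)$) is what your ``case analysis over rule types'' should have been. Second, and crucially, the divergence between processes is tracked via the rule-execution-count differences $\rho_t(j) - \rho^*_t(j)$ (using $\on{z_t - z^*_t} \leq 4\sum_j|\rho_t(j)-\rho^*_t(j)| + n^\eps$), not via disagreeing agents. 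Third, the coupling \emph{actively repairs} high-probability-rule errors: whenever $\rho_t(\iota) \neq \rho^*_t(\iota)$ for an HP rule $\iota$, the two processes are forced to execute $\iota$ lazily (never simultaneously), turning the count difference into an unbiased random walk on $\{0,\dots, n^{6\eps}\}$ that returns to $0$ within $O(n^{1-12\eps})$ steps with probability $1-O(n^{-6\eps})$, before the next such error occurs. Low-probability-rule errors are left unrepaired but, being rare, accumulate only to $O(n^{3\eps})$ over the whole window. An auxiliary lemma (Lemma~\ref{lem:fail}) prevents low-representation states from being spuriously created in $z^*$ out of high-representation pairs, maintaining the ``LR domination'' invariant needed for the coupling's correctness. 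This repair mechanism is the missing ingredient in your sketch; without it (or an equivalent idea) the coupling cannot yield a polynomial bound.
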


In the statement of the Theorem, for the sake of maintaining the size of the population, we interpret ``removing state $X$ from $z_0$'' as replacing the state of all agents in state $X$ by some other state, chosen adversarially (in fact, this may be any state which has sufficiently many representatives in configuration $z_0$). The $O(n^{6\eps_1})$-neighborhood of $B$ is understood in the sense of the $1$-norm or, asymptotically equivalently, the total variation distance, reflecting configurations which can be converted into a configuration from $B$ by flipping the states of $O(n^{6\eps_1})$ agents.

The proof of Theorem~\ref{thm:lb} is provided in Section~\ref{sec:lb}. It proceeds by a coupling argument between a process starting from $z_0$ and a perturbed process in which state $X$ has been removed. The analysis differently treats rules and states which are seldom encountered during the execution of the protocol from those that are encountered with polynomially higher probability (such a clear separation is only possible when $k = O(\poly \log \log n)$). Eventually, the probability of success of the coupling reduces to a two-dimensional biased random walk scenario, in which the coordinates represent differences between the number of times particular rules have been executed in the two coupled processes.

We have the following direct corollaries for the problems we are considering. For \textsc{Detect}, if $B$ represents the set of configurations of the considered protocol, which are understood as the protocol giving the answer ``$\many X > 0$'', then our theorem says that, with probability $1-o(1)$, the vast majority of agents will not ``notice'' that $\many X$ had been set to $0$, even a polynomial number of steps after this has occurred, and thus cannot yield a correct solution. An essential element of the analysis is that it works only when state $X$ is removed in the perturbed process. Thus, there is nothing to prevent the dynamics from stabilizing even to a single point in the case of $X=0$, which is indeed the case for our protocol $P_r$. The argument for~\textsc{BitBroadcast} only applies to situations where the source agent is sending out white noise (independently random bits in successive interactions). Such a source can be interpreted as a pair of sources in states $X_1$ and $X_2$ in the population, each disclosing itself with probability $1/2$ upon activation and staying silent otherwise. In the cases covered by the lower bound, the scenario in which the source $X_1$ is completely suppressed cannot be distinguished from the scenario in which both $X_1$ and $X_2$ appear; likewise, the scenario in which the source $X_2$ is completely suppressed cannot be distinguished from the scenario in which both $X_1$ and $X_2$ appear. By coupling all three processes, this would imply the indistinguishability of the all these configurations, including those with only source $X_1$ and only source $X_2$, which would imply incorrect operation of the protocol.

Whereas we use the language of discrete dynamics for precise statements, we informally remark that the protocols covered by the lower bound of Theorem~\ref{thm:lb} include those whose dynamics $z_t/n$, described in the continuous limit $(n \to +\infty)$, has only point attractors,  repellers, and fixed points. In this sense, the use of oscillatory dynamics in our protocol seems inevitable. 

The impossibility result is stated in reference to protocols with a constant number of states, however, it may be extended to protocols with a non-constant number of states $k$, showing that such protocols require $n^{\exp[-O(\poly(k))]}$ time to reach a desirable output. (This time is larger than polylogarithmic up to some threshold value $k = O(\poly \log \log n)$.) The lower bound covers randomized protocols, including those in which rule probabilities depend on $n$ (i.e., non-universal ones).

\section{Input-Controlled Behavior of Protocols for \textsc{Detect}}\label{sec:signal}

In this Section, we consider the periodicity of protocols for self-organizing oscillatory dynamics, in order to understand how the period of a phase clock must depend on the input parameters. We focus on the setting of the \textsc{Detect} problem, considering the value $\many X$ of the input parameter. In Section~\ref{sec:osc}, we noted informally that the designed oscillatory protocol performs a complete rotation around the triangle in $\Theta(\log n/ \many X)$ rounds. Here, we provide partial evidence that the periodicity of any oscillatory protocol depends both on the value of $\many X$ and $n$. We do this by bounding the portions of the configuration space in which a protocol solving \textsc{Detect} finds itself in most time steps, separating the cases of sub-polynomial $\many X$ (i.e., $\many X < n^{\eps_0}$, where $\eps_0 > 0$ is a constant dependent on the specific protocol), and the case of $\many X = \Theta (n)$.


Any protocol on $k$ states (not necessarily of oscillatory nature) can be viewed as a Markov chain in its $k$-dimensional configuration space $[0,n]^{k}$, and as in Section~\ref{sec:impossibility} we identify a configuration with a vector $z \in \{0,1,\ldots,n\}^k = Z$. The configuration at time step $t$ is denoted $z(t)$. In what follows, we will look at the equivalent space of log-configurations, given by the bijection:
$$
Z \ni z = (z^{(1)}, \ldots, z^{(k)}) \mapsto (\lno z^{(1)}, \ldots, \lno z^{(k)} \equiv \lno z \in \{\lno 0,\lno 1,\ldots,\lno n\}^k),
$$
where $\lno a = \ln a$ for $a>0$ and $\lno a = -1$ for $a = 0$.

For $z_0 \in Z$, we will refer to the \emph{$d$-log-neighborhood} of $z_0$ as the set of points $\{z \in Z : |\lno z - \lno z_0| < d\}$.

Notice first that the notion of a box in the statement of Theorem~\ref{thm:lb} is closely related to the set of points in the $(\eps_0 \ln n)$-log-neighborhood of configuration $z_0$. It follows from the Theorem that any protocol for solving \textsc{Detect} within a polylogarithmic number of rounds $T$ with probability $1-o(1)$ must, in the case of $0 < \many X < n^{\eps_0}$, starting from $z_0$ at some time $t_0$, leave the $(\eps_0 \ln n)$-log-neighborhood of $z_0$ within $T$ rounds with probability $1-o(1)$.
We obtain the following corollary.

\begin{proposition}
Fix a universal protocol $P$ which solves the \textsc{Detect} problem with $\eps$-error in $T = O(\poly\log n)$ rounds with probability $1-o(1)$. Set $0 < \many X < n^{\eps_0}$, where $\eps_0 > 0$ is a constant which depends only on the definition of protocol $\many X$. Let $t_0$ be an arbitrarily chosen moment of time after at least $T$ rounds from the initialization of the protocol in any initial state. Then, within $T$ rounds after time $t_0$, there is a moment of time $t$ such that $z(t)$ is not in the $(\eps_0 \log n)$-neighborhood of $z(t_0)$, with probability $1-o(1)$.
\qed\end{proposition}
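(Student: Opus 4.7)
The plan is to prove the contrapositive, reducing to Theorem~\ref{thm:lb}. Assume the conclusion fails. Then for some starting configuration, time $t_0 \geq T$, and constant $\eps_0 > 0$, there is some non-vanishing probability $\alpha > 0$ that the random process $z(t)$ remains inside the $(\eps_0 \ln n)$-log-neighborhood of $z(t_0)$ for every $t \in [t_0, t_0 + T]$. Since $t_0 \geq T$ and $\many X \geq 1$, the $\eps$-correctness of $P$ guarantees simultaneously, with probability $1-o(1)$, that at least $(1-\eps)n$ agents of $z(t_0)$ sit in an output state encoding the answer ``yes''. Intersecting these two events and averaging over the random value of $z(t_0)$, I would fix a single deterministic configuration $z_0$ that (a) has at least $(1-\eps)n$ agents in ``yes'' output states, and (b) witnesses confinement: conditional on $z(t_0)=z_0$, the process stays inside the $(\eps_0 \ln n)$-log-neighborhood $B$ of $z_0$ for all $T$ subsequent rounds with probability at least $\alpha/2$.

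I would then invoke the argument of Theorem~\ref{thm:lb} on this $z_0$ and $B$, choosing the constant $\eps_1$ small enough that $n^{2\eps_1}\geq T$ and $6\eps_1 < 1-\eps_0$, with $\eps_0$ taken as the associated constant from the theorem. Two quantitative mismatches with the theorem's hypothesis must be addressed: confinement to $B$ holds only over $T$ rounds (rather than $n^{2\eps_1}$) and only with probability $\Omega(1)$ (rather than $1-o(1)$). Both are handled by inspecting the coupling underlying the theorem's proof (Section~\ref{sec:lb}), which is monotone in both parameters: truncating the coupling to $T$ rounds is strictly easier, and its success probability scales with the confinement probability. The conclusion I extract is that running $P$ for $T$ rounds starting from $z_0$ \emph{with state $X$ relabelled away} — reassigning each of the $\many X \leq n^{\eps_0}$ source agents to some sufficiently well-populated non-source state of $z_0$ — produces a configuration in the $O(n^{6\eps_1})$-neighborhood of $B$ with probability $\Omega(1)$.

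I would complete the contradiction by comparing with correctness in the absence of source. By property (a) and the definition of log-neighborhood, every point of $B$ contains at least $(1-\eps)n / n^{\eps_0}$ ``yes''-state agents; the additional $O(n^{6\eps_1})$ perturbation in $1$-norm leaves, since $6\eps_1 < 1-\eps_0$, still $\Omega(n^{1-\eps_0})$ agents in ``yes'' states — strictly more than any $\eps n$ error tolerated by the protocol. On the other hand, after the relabelling the population contains $\many X = 0$ source agents, so the correctness of $P$ in the absence of the source forces all agents into ``no'' output states within $T$ rounds with probability $1-o(1)$. This is incompatible with the $\Omega(1)$-probability persistence of ``yes''-agents obtained above, contradicting the failure of the proposition.

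The main obstacle is bookkeeping the two parameter mismatches when reusing the coupling underlying Theorem~\ref{thm:lb}: confinement window $T$ versus $n^{2\eps_1}$, and confinement probability $\Omega(1)$ versus $1-o(1)$. These are quantitative rather than conceptual issues, since the coupling argument is monotone in both quantities; but one must choose $\eps_0$ and $\eps_1$ as functions of the number of states of $P$ and of the error parameter $\eps$, so that the final ``yes''-count after the $O(n^{6\eps_1})$ perturbation safely exceeds the $\eps n$ error budget — this is where the proof's constants are effectively determined.
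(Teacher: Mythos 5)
Your approach — negate the statement, extract a deterministic confinement witness $z_0$ by averaging, invoke the machinery behind Theorem~\ref{thm:lb} after removing $X$, and derive a contradiction with the protocol's behaviour when $\many X = 0$ — is exactly how the paper intends this Proposition to follow (the paper records it as an immediate corollary with no written proof). The plan is essentially sound, but two points need tightening.

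First, the statement that ``the coupling argument is monotone in both parameters'' hides a non-trivial check. The supporting Lemma~\ref{lem:fail} is proved by running the process for $n^{1+2\eps}$ \emph{steps} and showing that too-frequent HP-to-LR rules would force the configuration out of the box. If confinement is only known over $T' = n \cdot \poly\log n$ steps, that lemma must be re-derived: the expected number of $J_t$-activations becomes $\Theta(n^{2\eps}\poly\log n)$ rather than $\Theta(n^{4\eps})$, and the $L_t$-count becomes $\Theta(n^{\eps}\poly\log n)$; the separation $\sum J_t - \sum L_t \gg k n^\eps$ still holds because of the polynomial gap $n^{\eps}$ between HP and LP rates, so the lemma survives — but this is a calculation, not a truncation, and deserves to be said. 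The probability mismatch, by contrast, genuinely is free: the proof of the final lemma works for any constant $\Pi\in(0,1]$ and gives success probability $\Pi - O(n^{-\eps})$.

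Second, your closing comparison ``$\Omega(n^{1-\eps_0})$ agents in `yes' states — strictly more than any $\eps n$ error tolerated by the protocol'' is false as written: $n^{1-\eps_0} = o(n)$, so this is \emph{not} more than $\eps n$. Your contradiction in fact survives only because you then appeal to the one-sided error semantics of \textsc{Detect} (for $\many X=0$ the protocol must drive every agent to ``no''), so even a single ``yes'' agent is too many. It would be cleaner — and make the argument robust to a two-sided error model — to exploit what the coupling actually delivers: $z^*_T$ is within $O(n^{6\eps_1})$ of $z_T$ itself (not merely of some point in $B$), and $z_T$ has at least $(1-\eps)n$ ``yes'' agents by $\eps$-correctness at time $t_0 + T$. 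Then $z^*_T$ has at least $(1-\eps)n - O(n^{6\eps_1}) > \eps n$ ``yes'' agents with constant probability, which contradicts $\eps$-correctness of $P$ with $\many X = 0$ without needing the one-sided assumption, for any $\eps < 1/2$.
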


The above Proposition suggests that oscillatory or quasi-oscillatory behavior at low concentrations of state $X$ must be of length $\Omega(\log n)$. By contrast, the following Proposition shows that in the case $\many X = \Theta(n)$, the protocol remains tied to a constant-size log-neighborhood of its configuration space.

\begin{proposition}\label{pro:constbox}
Fix a universal protocol $P$ with set of states $K$ which solves the \textsc{Detect} problem with $\eps$-error in $T = O(\poly\log n)$ rounds with probability $1-o(1)$. Then, there exists a constant $\delta_0 > 0$, depending only on the design of protocol $P$, with the following property. Fix $\many X \in [cn, n/2]$, where $0 < c< 1/2$ is an arbitrarily chosen constant. Let $t$ be an arbitrarily chosen moment of time, after at least $T$ rounds from the initialization of the protocol at an adversarially chosen initial configuration $z(0)$, such that each coordinate $z^{(i)}(0)$ satisfies $z^{(i)}(0) = 0$ or $z^{(i)}(0) > 1/(2|K|)$, for all $i\in \{1,\ldots,|K|\}$. Then, with probability $1-e^{-n^{\Omega(1)}}$, $z(t)$ is in the $\delta_0$-neighborhood of $z(0)$.
\end{proposition}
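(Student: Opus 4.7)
The plan is to reduce the claim to a mean-field analysis. Because $|K|$ is constant and $\many X = \Theta(n)$, every non-trivial rule at a configuration whose non-zero coordinates are each of size $\Theta(n)$ fires at rate $\Omega(n)$ per round. By a Kurtz-type argument (Doob decomposition of the increments plus Azuma--Hoeffding on the martingale part), the rescaled trajectory $\bar z(t) = z(t)/n$ stays within Euclidean distance $O(n^{-1/3})$ of the solution $a(t)$ of the mean-field ODE $\dot a = F(a)$, with $a(0) = \bar z(0)$, over any time window of length $n^{O(1)}$, with probability $1 - e^{-n^{\Omega(1)}}$. Here $F$ is the quadratic vector field extracted from the rule set $R$, Lipschitz on the simplex with constant depending only on $|K|$. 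This reduces the problem to showing that $a(t)$ stays in a constant log-neighborhood of $a(0)$: if so, then by the concentration estimate, so does $\bar z(t)$, which is what is claimed.

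I would analyze $a(t)$ coordinate by coordinate. For $i$ with $a_i(0) = 0$, I would split on whether any rule produces state $i$ from the initial support: if not, then $a_i(t) \equiv 0$; if yes, $a_i$ leaves zero at positive rate, and after an arbitrarily small positive time the coordinate is effectively non-zero, reducing to the case below. For $i$ with $a_i(0) \geq 1/(2|K|)$, the crude outflow bound $\dot a_i \geq - C a_i$ with $C = C(|K|)$ gives $a_i(t) \geq a_i(0) e^{-Ct}$, which keeps $a_i$ inside a constant log-neighborhood of $a_i(0)$ on a bounded horizon $t = O(1)$ only. The extension from a bounded-time guarantee to the polynomial-time guarantee claimed by the proposition is where the correctness of \textsc{Detect} enters: in the spirit of Theorem~\ref{thm:lb}, I would couple the trajectory with one starting from a small adversarial perturbation of $z(0)$ (for instance with the source removed, or with a handful of coordinates shifted) and argue that a large log-drift would force the two trajectories to produce distinguishable outputs, contradicting the $\eps$-error guarantee.

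The main obstacle, and the technical core of the proposition, is precisely this polynomial-time extension. Lipschitz continuity alone controls the flow only on a constant horizon; to forbid $a(t)$ from exploring the simplex more freely over polynomial time, one must genuinely invoke the error guarantee of \textsc{Detect} and translate the qualitative statement ``at least $(1-\eps)n$ agents output yes'' into a quantitative restriction on the flow. The constant $\delta_0$ emerging from the analysis will depend on $|K|$, the rule probabilities of $P$, and $\eps$. The coupling argument parallels Theorem~\ref{thm:lb} but operates in the large-$\many X$ regime: the martingale bookkeeping for rules involving the source must be redone, taking advantage of the higher source concentration to obtain the exponentially sharp $e^{-n^{\Omega(1)}}$ probability bound claimed.
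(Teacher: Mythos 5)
Your Kurtz-type fluid-limit plan correctly yields Euclidean tracking of the mean-field ODE on a bounded horizon, and you are also right that the crude Lipschitz bound $\dot a_i \geq -C a_i$ controls $a_i$ only on a window of length $O(1)$; the crucial question is how to extend to the full time scale. But the bridge you propose --- a coupling in the spirit of Theorem~\ref{thm:lb} that invokes the $\eps$-error guarantee of \textsc{Detect} --- does not close the gap, and it is not the mechanism the paper uses. In the regime $\many X = \Theta(n)$, any perturbation of $z(0)$ compatible with the \textsc{Detect} promise (shifting a handful of agents, or changing $\many X$ by $O(1)$) leaves the correct answer unchanged, so the correctness guarantee supplies no contradiction when coupled trajectories drift apart; the one perturbation that \emph{would} flip the answer --- removing the entire $\Theta(n)$-sized source population --- is macroscopic and cannot be tracked by a coupling argument at all. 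So the ``polynomial-time extension'' you flag as the technical core is exactly where the proposal is left open, and the ingredient you reach for does not supply it. In fact the paper's proof makes no use of the \textsc{Detect} hypothesis whatsoever.

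The actual argument is purely combinatorial and sidesteps the time-horizon issue by working backward from the query time $t$, so that only a window of length $O(n)$ steps ever needs to be controlled, regardless of $t$. Let $K' \ni X$ be a minimal rule-closed subset of states containing the support of $z(0)$, and set $\kappa = |K'|-1$. At time $t_1 = t - (\kappa-1)n$, pigeonhole gives some $Q_1 \in K'\setminus\{X\}$ with concentration at least $1/(2\kappa)$. Minimality of $K'$ yields a chain $Q_1,\ldots,Q_\kappa$ enumerating $K'\setminus\{X\}$ such that each $Q_{i+1}$ is producible by a rule acting on earlier states in the chain (together with $X$). Two elementary facts are then iterated over intervals $[t_s,t_{s+1}]$ of length $n$: (i) over $n$ scheduler steps a constant fraction of agents are never selected, so a state at constant concentration retains constant concentration one interval later with probability $1-e^{-n^{\Omega(1)}}$; and (ii) if $Q_1,\ldots,Q_i$ are all at constant concentration, then the rule producing $Q_{i+1}$ fires $\Theta(n)$ times during the next $n$ steps, putting $Q_{i+1}$ at constant concentration as well. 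After $\kappa-1$ iterations every state of $K'$ is at concentration $\geq (0.01p/2\kappa)^{2^\kappa}$ at time $t$, which (paired with the trivial upper bound $z^{(i)}(t)\leq n$) gives the $\delta_0$-log-neighborhood claim with $\delta_0$ depending only on $P$. No fluid limit, no coupling, and no correctness assumption are needed, and the exponentially small failure probability comes directly from the concentration of the non-interacting-agent count.
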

The proof of the Proposition is deferred to Section~\ref{sec:constbox}.

Note that, in the regime of a constant-size log-neighborhood of configuration $z(0)$, the discrete dynamics of the protocol adheres closely to the continuous-time version of its dynamics in the limit $n \to +\infty$. (See Section~\ref{sec:upper}, and in particular Lemma~\ref{lem:taylor}, for a further discussion of this property). Since the latter is independent of $n$, any oscillatory behavior ``inherited'' from the continuous dynamic would have a period of $O(1)$ rounds. We leave as open the question whether some form of behavior of a protocol with polylogarithmic (i.e., or more broadly, non-constant and subpolynomial) periodicity for \textsc{Detect} can be designed in the regime of $\many X = \Theta(n)$ despite this obstacle. In particular, the authors believe that the existence of an input-controlled phase clock with a period of $\Theta(\log n)$ for any $\many X > 0$, and the absence of operation for $\many X = 0$, is unlikely in the class of discrete dynamical systems given by the rules of population protocols.

The remaining sections of the paper provide proofs of the Theorems from Sections~\ref{sec:protocols}, \ref{sec:impossibility}, and~\ref{sec:signal}.

\section{Analysis of Oscillator Dynamics \texorpdfstring{$P_o$}{Po}}\label{sec:upper}\label{Po}

This section is devoted to the proof of Theorem~\ref{thm:osc}.

\subsection{Preliminaries: Discrete vs. Continuous Dynamics}\label{sec:preli}

\paragraph{Notation.} For a configuration  of a population protocol, we write $z = (z^{(1)}, \ldots, z^{(k)}) $ to describe the number of agents in the $k$ states of the protocol, and likewise use vector $\uu = (\uu^{(1)}, \ldots, \uu^{(k)}) = z/n$ to describe their concentrations. The concentration of a state called $A$ which is the $i_A$-th state in vector $\uu$ is equivalently written as $a\equiv a(\uu) \equiv \uu^{(i_A)}$, depending on which notation is the easiest to use in a given transformation.

If vector $\uu$ represents the current configuration of the protocol and $\uu':= \uu'|\uu$ is the random variable describing the next configuration of the protocol after the execution of a single rule, we write $\Delta \uu := \uu'- \uu$.  We also use the notation $\Delta$ to functions of state $\uu$.

Next, we define the \emph{continuous dynamics} associated with the protocol by the following vector differential equation:
$$
\dot{\uu} \equiv \frac{d \uu}{d t} := n \E (\Delta \uu)
$$
and likewise, for each coordinate, $\dot{a} = n \E(\Delta a)$ (we use the dot-notation and $d/dt$ interchangeably for time differentials). This continuous description serves for the analysis only, and reflects the behavior of the protocol in the limit $n \to \infty$.\footnote{We note that some of our results rely on the stochasticity of the random scheduler model, and do not immediately generalize to the continuous case.}

\paragraph{Warmup: the RPS oscillator.} Our oscillatory dynamics may be seen as an extension of the rock-paper-scissors (RPS) protocol (see Related work). This is a protocol with three states $A_1$, $A_2$, $A_3$ and three rules:
\begin{equation*}
A_i \act A_{i-1} \mapsto A_i \quad \text{with probability } p,
\end{equation*}
where $p>0$ is an arbitrarily fixed constant, and the indices of states $A_i$ are always $1, 2,$ or $3$, and any other values should be treated as $\modd 3$ in the given range. For $i\in \{1,2,3\}$, the change of concentration of agents of state $A_i$ in the population in the given step can be expressed for the RPS protocol as:
\begin{equation}\label{eq:deltana}
\Delta a_i = \frac{1}{n} \cdot \Delta \many A_i =
\begin{cases}
+1/n, &\text{ with probability }pa_{i-1}a_{i},\\
-1/n, &\text{ with probability }pa_{i}a_{i+1},\\
0, &\text{ otherwise},
\end{cases}
\end{equation}
Thus, the corresponding continuous dynamics for RPS is given as:
$$
\dot a_i = n \E (\Delta a_i) = pa_{i-1}a_{i} - pa_{i}a_{i+1},
$$
for $i=1,2,3$. The orbit of motion for this dynamics in $\R^3$ is given by two constants of motion. First, $a_1 + a_2 + a_3 = 1$ by normalization. Secondly, for any starting configuration with a strictly positive number of agents in each of the three states, the following function $\phi$ of the configuration:
\begin{equation}\label{eq:potential}
 \phi= \ln (a_1 a_2 a_3)
\end{equation}
is easily verified to be constant over time $\dot \phi = 0$, hence $\phi = \ln (a_1 a_2 a_3) = \const < 0$ (or more simply, $a_1 a_2 a_3 = \const$). Thus, for the continuous dynamics, the initial product of concentrations completely determines its perpetual orbit, which is obtained by intersecting the appropriate curve $a_1 a_2 a_3 = \const$ with the plane $a_1 + a_2 + a_3 = 1$. As a matter of convention, the plane $a_1 + a_2 + a_3 = 1$ with conditions $a_i \geq 0$ is drawn as an equilateral triangle (we adopt this convention throughout the paper, for subsequent protocols). All of the orbits are concentric around the point $(1/3, 1/3, 1/3)$, which is in itself a point orbit maximizing the value of $\phi = -\ln 27$. The discrete dynamics follows a path of motion which typically resembles random-walk-type perturbations around the path of motion, until eventually, after $\tilde O(n^2)$ steps it crashes into one of the sides of the triangle.
Subsequently, if $a_i=0$, for some $i=1,2,3$, then no rule can make $a_i$ increase. (If $a_{i-1}>0$, in the next $O(\log n)$ steps, all remaining agents of $A_{i+1}$ will convert to $A_{i-1}$, and there will be only agents from $A_{i-1}$ left.) Thus, the protocol will terminate in a corner of the state space.

A further discussion of the RPS dynamics can be found in~\cite{HS98,ICALP15}.

\subsection{Proof Outline of Theorem~\ref{thm:osc}}

The rest of the section is devoted to the proof of Theorem~\ref{thm:osc}. We start by noting some basic properties in Subsection~\ref{sec:31}, then prove the properties of the protocol for the case of $X=0$ (Subsection~\ref{sec:32}, and finally analyze (the somewhat less involved) case of $X>0$ (Subsection~\ref{sec:33}). For the case of $X=0$, the proof is based on a repeated application of concentration inequalities for several potential functions (applicable in different portions of the $6$-dimensional phase space). In two specific regions, in the $O(1/\sqrt n)$-neighborhood of the center of the $(A_1, A_2, A_3)$-triangle and very close to its sides, we rely on stochastic noise to ``push'' the trajectory away from the center of the triangle, and also to push it onto one of its sides. Fortunately, each of these stages takes $O(\log n)$ parallel rounds, with strictly positive probability. Overall, the $O(\log n)$ parallel rounds bound for the case of $X=0$ is provided with constant probability; this translates into $O(\log n)$ parallel rounds in expectation, since subsequent executions of the process for $O(\log n)$ rounds have independently constant success probability, and the process has a geometrically decreasing tail over intervals of length $O(\log n)$.

\subsection{Properties of the Oscillator}\label{sec:31}

In the following, we define $s=a_1+a_2+a_3\in [0,1]$. Handling the case of $s<1$ allows us not only to take care of the fact that a fraction of the population may be taken up by rumor source $X$, but also allows for easier composition of $P_o$ with other protocols (sharing the same population). We set $p$ as a constant value independent of $n$, which is sufficiently small (e.g., $p = s^2/10^{12}$ is a valid choice; we make no efforts in the proofs to optimize constants, but the protocol appears in simulations to work well with much larger values of $p$).

We will occasionally omit an explanation of index $i$, which will then implicitly mean ``for all $i=1,2,3$''. We define $a_{\min} := \min_{i=1,2,3} a_i$ and $a_{\max} := \max_{i=1,2,3} a_i$.

From the definition of the protocol one obtains the distribution of changes of the sizes of states in a step:
$$
\Delta  a_i =
\begin{cases}
+1/n, & \text{with probability } \frac13 x(s-a_i) + pa_i^+a_{i-1}+ 2pa_i^{++}a_{i-1},\\
-1/n, & \text{with probability } \frac23 x a_i + pa_ia_{i+1}^+ + 2pa_{i+1}^{++}a_i,\\
0, & \text{otherwise.}
\end{cases}
$$
$$
\Delta  a_i^{++} =
\begin{cases}
+1/n, & \text{with probability } a_i(a_{i} - a_i^{++}),\\
-1/n, & \text{with probability } x a_i^{++} +  (s-a_i) a_i^{++},\\
0, & \text{otherwise.}
\end{cases}
$$

Taking the expectations of the above random variables, and recalling that $a_i=a_i^++a_i^{++}$, we obtain:
\begin{align}\
\dot{a}_i &=x(s/3 -a_i)+ pa_i^+a_{i-1}+ 2pa_i^{++}a_{i-1}-pa_ia_{i+1}^+-2pa_{i+1}^{++}a_i \nonumber \\
	  &=x(s/3 -a_i)+ pa_{i-1}(a_i+a_i^{++}) - pa_i(a_{i+1}+a_{i+1}^{++}) \label{eq:a_dot}\\
\dot{a}_i^{++} &= -x a_i^{++} +  a_i(a_{i} - a_i^{++}) - (s-a_i)a_i^{++} = -x a_i^{++} + a_i^2 - s a_i^{++}. \label{eq:app_dot}
\intertext{Moreover, we have by a simple transformation:}
\dot{\phi} &=\sum \frac{\dot{a}_i}{a_i}=\frac{x}{3}\bigg(\big(\sum\frac{s}{a_i}\big)-9\bigg) + p\bigg(\sum a_i^{++}\big(\frac{a_{i-1}}{a_i}-1\big)\bigg) \label{eq:derphi}
\end{align}

\subsection{Stopping in \texorpdfstring{$O(n \log n)$}{O(n log n)} Sequential Steps in the Absence of a Source}\label{sec:32}

Throughout this subsection we assume that $x=0$. We consider first the case where $a_i\ne 0$, for $i=1,2,3$ (noting that as soon as $a_i = 0$, we can easily predict the subsequent behavior of the oscillator, as was the case for the RPS dynamics).

The dynamics of $P_o$ is defined in such a way that that when $x=0$ and in the absence of the rules of the RPS oscillator, the value of $a_i^{++}$ would be close to $\frac{a_i^2}{s}$.
Consequently, we define $\kappa_i$, $i=1,2,3$ as the appropriate normalized corrective factor:
$$
\kappa_i = s \frac{a_i^{++}}{a_i}-a_i \quad\text{ thus }\quad a_i^{++}=\frac{a_i}{s}(a_i+\kappa_i).
$$
Note that as $0\le a_i^{++} \le a_i \le 1$, thus $-1 \leq \kappa_i \leq 1$. Next, we introduce the following definitions:
$$
\delta_i = a_i - a_{i-1}
$$
$$
\delta = \sqrt{\delta_1^2 + \delta_2^2 + \delta_3^2}
$$
$$
\kappa = \sqrt{\kappa_1^2 + \kappa_2^2 + \kappa_3^2}
$$
We also reuse potential $\phi$ from the original RPS oscillator. This time, it is no longer a constant of motion. By \eqref{eq:derphi} and the definition of $\kappa_i$, for $x=0$ we upper-bound $\dot{\phi}$ as:
\begin{align}
 \dot{\phi} &= \frac{p}{s}\bigg(\sum a_i(a_i+\kappa_i)(\frac{a_{i-1}}{a_i}-1)\bigg) \nonumber\\
            &= \frac{p}{s}\bigg(\sum (a_i+\kappa_i)(a_{i-1}-a_i)\bigg) \nonumber\\
            &= \frac{p}{s}\bigg(-\frac 12 \sum (a_i-a_{i-1})^2+\sum \kappa_i(a_{i-1}-a_i)\bigg) \nonumber\\
            &\leq \frac{p}{s}\bigg(-\frac 12 \delta^2+ \kappa \delta \bigg) \label{eq:phi_dot_xzero}
\end{align}
The above change $\dot{\phi}$ of the potential is indeed negative when $\kappa \approx 0$ (which is in accordance with our intention in designing the destabilizing rules for the oscillator).

The functions $\delta$, $\phi$ and $\kappa$ are intricately dependent on each other. In general, we will try to show that $\delta$ and $\phi$ increase over time, while $\kappa$ stays close to $0$. This requires that we first introduce a number of auxiliary potentials based on these two functions.


First, for $x=0$, we can rewrite~\eqref{eq:app_dot} as:
$$
\dot{a_i^{++}} = a_i^2 - s a_i^{++} = -a_i \kappa_i.
$$
Next, introducing the definition of $\kappa_i$ to~\eqref{eq:a_dot}, we obtain for $x=0$:
\begin{align*}
\dot{a_i} &= pa_{i-1}(a_i+a_i^{++}) - pa_i(a_{i+1}+a_{i+1}^{++}) = p a_i (a_{i-1} - a_{i+1}) + p(a_i^{++} a_{i-1} - a_{i+1}^{++} a_i) =\\
&= p a_i \delta_{i-1} + p \left(\frac1s a_i (a_i + \kappa_i) a_{i-1} - \frac1s a_{i+1} (a_{i+1} + \kappa_{i+1})  a_i\right) =\\
&= p a_i \delta_{i-1} + \frac ps a_i \left(\kappa_i a_{i-1} - \kappa_{i+1} a_{i+1} + (a_{i-1} a_i - a_{i+1}^2) \right).
\end{align*}
From the above, an upper bound on $|\dot{a_i}|$ follows directly using elementary transformations:
\begin{align*}
|\dot{a_i}| & \leq  p a_i |\delta_{i-1}| + \frac ps a_i \left(|\kappa_i| a_{i-1} + |\kappa_{i+1}| a_{i+1} + |a_{i-1} a_i - a_{i+1}^2| \right) \leq\\
& \leq p a_i\delta + \frac ps a_i\left(\kappa(a_{i-1} + a_{i+1}) + |a_{i-1} a_i - a_{i+1}^2|)\right)=\\
& = p a_i\delta + p a_i \kappa \frac{a_{i-1} + a_{i+1}}{s} + \frac ps a_i |(a_{i-1}-a_{i+1}) a_i - (a_{i+1}-a_i)a_{i+1} | \leq\\
& \leq p a_i (\delta + \kappa) + \frac ps a_i\left(|a_{i-1}-a_{i+1}| a_i + |a_{i+1}-a_i|a_{i+1}\right) \leq\\
& \leq p a_i(\delta + \kappa) + p a_i\delta\frac {a_i + a_{i+1}}{s}\\
& \leq p a_i(2\delta + \kappa).\\
\end{align*}

We are now ready to estimate $\dot \kappa_i$ for $x=0$, using the definition of $\kappa_i$ and the previously obtained formula for $\dot a_i^{++}$:
\begin{alignat*}{3}
\dot \kappa_i &= s \left(\frac{\dot a_i^{++}}{a_i} - \frac{a_i^{++}}{a_i^2}\dot a_i \right)-\dot a_i &&= -s  \kappa_i - \left(s\frac{a_i^{++}}{a_i^2}+1\right) \dot{a_i} \\
&= -s  \kappa_i - \left(s\frac{a_i^{++}}{a_i}-a_i\right) \frac{\dot{a_i}}{a_i} -2 \dot a_i &&= -s  \kappa_i - \frac{\dot{a_i}}{a_i} \kappa_i -2 \dot a_i .
\end{alignat*}
Next from the bound on $|\dot{a_i}|$:
\begin{align*}
\dot {\kappa^2_i} &= 2\kappa_i \dot{\kappa}_i = 2\big(-s \kappa_i^2 - \frac{\dot{a_i}}{a_i}\kappa_i^2 - 2\dot{a_i}\kappa_i\big)\\
&\leq 2\big(-s \kappa_i^2 + \frac{|\dot{a_i}|}{a_i}\kappa_i^2 + 2|\dot{a_i}||\kappa_i|\big)\\
&\leq 2\big(-s \kappa_i^2 + p(2\delta + \kappa)(\kappa_i^2 + 2a_i|\kappa_i|)\big) \\
&\leq 2\big(-s \kappa_i^2 + p(2\delta + \kappa)(\kappa + 2\kappa)\big) \\
&= -2s \kappa_i^2 + p(12\delta\kappa + 6\kappa^2).\\
\end{align*}

Next:
\begin{align}
\dot \kappa &= \frac{1}{2\kappa}\sum \dot {\kappa^2_i} \leq \frac{1}{2\kappa}\left(-2s \sum \kappa_i^2 + 3p (12\delta\kappa + 6\kappa^2)\right) \leq \nonumber\\
& \leq \frac{1}{2\kappa}\left(-2s \kappa^2 + p (36\delta\kappa + 18\kappa^2)\right) = \nonumber\\
&= (-s + 9p) \kappa + 18 p \delta \leq \nonumber\\
&\leq -\frac{s}{2} \kappa + 18 p \delta, \label{eq:kappa_dot}
\end{align}
where in the final transformation we took into account that $p \leq s/18$.

Now, we define the potential $\eta$ for any configuration with all $a_i>0$ as:
$$
\eta = \left(\ln\frac{s^3}{27} - \phi\right)^{1/2}= \left(-\sum_{i=1}^3\ln \frac {a_i}{s/3}\right)^{1/2}.
$$

We remark that $\eta$ is always well-defined when $a_{\min} > 0$, and that $\eta \geq 0$.

\paragraph{Overview of the proof.} The proof for the case of $X=0$ proceeds by following the trajectory of the discrete dynamics of $P_o$, divided into a number of stages. We define a series of time steps $t_0, t_1, \ldots, t_7$ by conditions on the configuration met at time $t_i$, and show that subject to these conditions holding, we have $t_{j+1}\leq t_j + O(n\log n)$ (we recall that here time is measured in steps), with at least constant probability. Overall, it follows that the configuration at time $t_7$, which corresponds to having reached a corner state, is reached from $t_0$, which is any initial configuration with $X=0$, in $O(n \log n)$ time steps, with constant probability.

The intermediate time steps may be schematically described as follows (see Fig.~\ref{fig:triangle1}). For configurations which start close to the center of the triangle ($\delta \leq s/12$), we define a pair of potentials $\psi^{(1)}$, $\psi^{(2)}$, based on a linear combination of modified versions of $\eta$ and $\kappa$. The dynamics will eventually escape from the area $\delta \leq s/12$; however, first it may potentially reach a very small area of radius $O(1/\sqrt n)$ around the center of the triangle with $\kappa \approx 0$ (Lemma~\ref{lem:pot_negative}, time $t_1$, reached in $O(n \log n)$ steps by a multiplicative drift analysis on potential $\psi^{(2)} < 0$), pass through the vicinity of center of the triangle, escaping it with $\kappa \approx 0$  (Lemma~\ref{lem:pot_aroundzero}, time $t_2$, reached in $O(n \log n)$ steps with constant probability by a protocol-specific analysis of the scheduler noise, which with constant probability increases $\eta$ without increasing $\kappa$ too much), and eventually escapes completely to the area of $\delta > s/12$ (Lemma~\ref{lem:pot_positive}, exponentially increasing value of potential $\psi^{(1)} > 0$).

\begin{figure}
  \centering
  \vspace{-1cm}
  \includegraphics[width=0.7\textwidth]{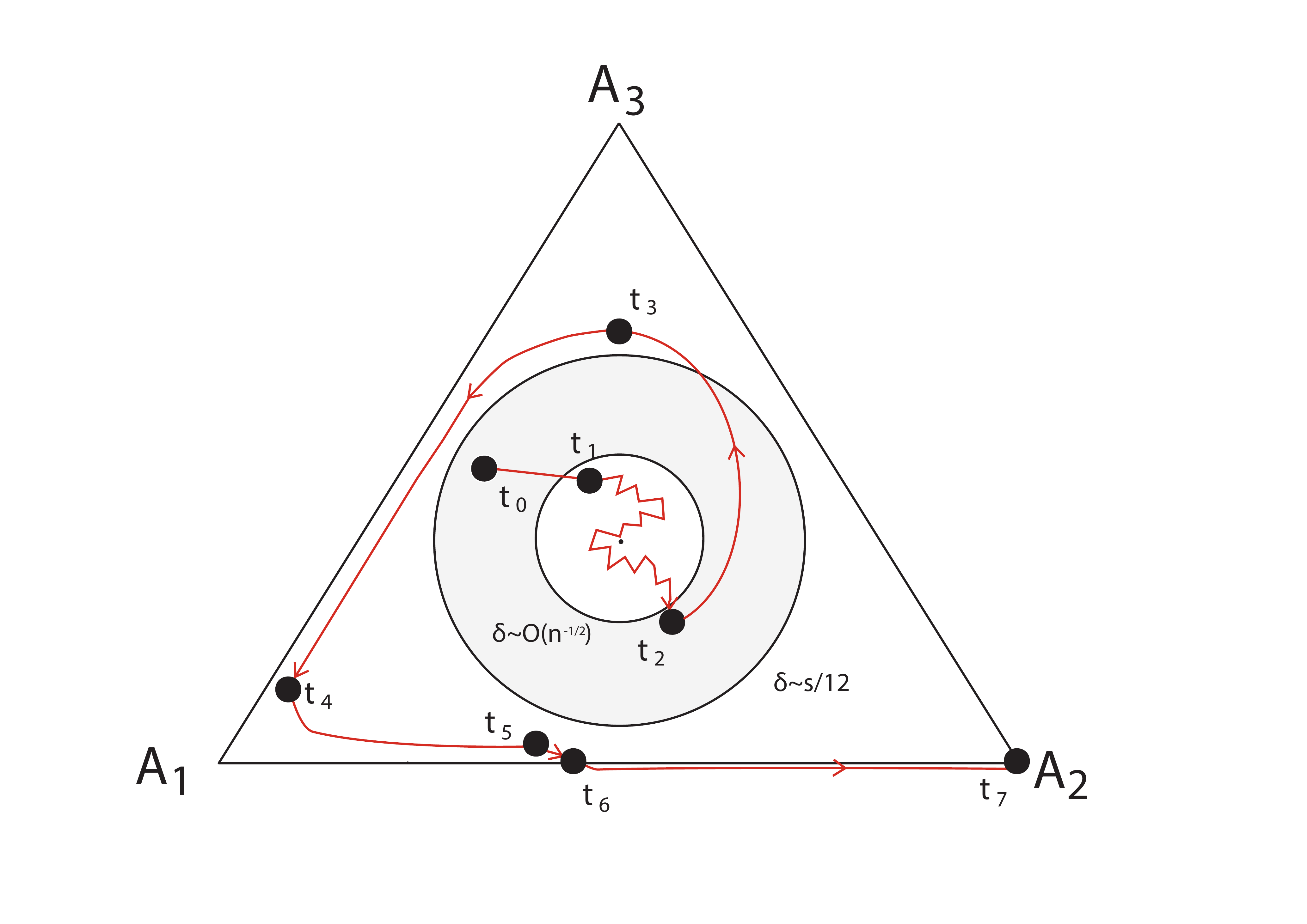}\vspace{-1cm}
  \caption{Schematic illustration of order of phases in the proof of stabilization of protocol $P_o$ for $X=0$.}\label{fig:triangle1}
\end{figure}

In the area of $\delta > s/12$, we define a new potential $\psi$ based on $\phi$ and $\kappa$. This increases (Lemma~\ref{lem:pot_psi_first}, additive drift analysis on $\psi$ with bounded variance) until a configuration at time $t_4$ with a constant number of agents of some species $A_i$ is reached. This configuration then evolves towards a configuration at time $t_5$ at which some species has $O(1)$ agents, and additionally its predator species is a constant part of the population~(Lemma~\ref{lem:pot_psi_second}, direct analysis of the process combined with analysis of potential $\psi$ and a geometric drift argument). Then, the species with $O(1)$ agents is eliminated in $O(n)$ steps with constant probability ($t_6$, Lemma~\ref{lem:eliminationoffirstspecies}), and finally one more species is eliminated in another $O(n \log n)$ steps (at time $t_7$, Lemma~\ref{lem:eliminationofsecondspecies}, straightforward analysis of the dynamics). At this point, the dynamics has reached a corner.

Throughout the proof, we make sure to define boundary conditions on the analyzed cases to make sure that the process does not fall back to a previously considered case with probability $1 - o(1)$.

\paragraph{Phase with $\delta \leq s/12$.} We then have $a_i \in [3s/12, 5s/12]$ and $\frac {a_i}{s/3} \in [3/4, 5/4]$, for $i=1,2,3$. In this range, we have:
$$ 
\frac{1}{3} \left(\frac {a_i}{s/3}-1\right)^2 < \left(\frac {a_i}{s/3}-1\right) - \ln \frac {a_i}{s/3} < \frac{3}{4} \left(\frac {a_i}{s/3}-1\right)^2.
$$
Summing the above inequalities for $i=1,2,3$ and noting that $\sum_{i=1}^3(\frac {a_i}{s/3}-1) = 0$, we obtain:
$$
\frac{1}{3} \sum_{i=1}^3\left(\frac {a_i}{s/3}-1\right)^2 < \eta^2 < \frac{3}{4} \sum_{i=1}^3 \left(\frac {a_i}{s/3}-1\right)^2.
$$
Next, we have:
$$
\sum_{i=1}^3\left(\frac {a_i}{s/3}-1\right)^2 = \left(\frac 3 s\right)^2 \sum_{i=1}^3 \left(a_i - s/3\right)^2 = \frac{3\delta^2}{s^2}.
$$
Combining the two above expressions gives the sought bound between $\eta$ and $\delta$ as:
$$
\frac{\delta}{s} < \eta < \frac{3}{2}\frac{\delta}{s}
$$
and equivalently
$$
\delta \in (\tfrac{2}{3} s\eta, s\eta).
$$

We have directly from \eqref{eq:phi_dot_xzero} and from the relations between $\eta$ and $\delta$:
\begin{equation}
\dot \eta = - \frac{\dot \phi}{2\eta} \geq \frac{p}{2s\eta}\bigg(\frac 12 \delta^2 - \kappa \delta \bigg) =
\frac{p}{4s\eta} \delta^2 - \frac{p}{2s\eta} \kappa \delta \geq
\frac{ps}{9} \eta - \frac{p}{2}\kappa,\label{eq:eta_dot}
\end{equation}
and from \eqref{eq:kappa_dot}:
\begin{equation}
\dot \kappa \leq -\frac{s}{2} \kappa + 18 p \delta \leq -\frac{s}{2} \kappa + 18ps\, \eta.\label{eq:kappa_dot_bis}
\end{equation}
Moving to the discrete-time model, it is advantageous to eliminate the discontinuity of partial derivatives of $\eta$ and $\kappa$ at points with $\eta=0$ and $\kappa=0$ respectively, which is a side-effect of the applied square root transformation in the respective definitions of $\eta$ and $\kappa$. We define the auxiliary functions $\eta^*$ and $\kappa^*$ by adding an appropriate corrective factor:
\begin{align*}
\eta^* &= \sqrt {\eta^2 + \frac1n}\\
\kappa^* &= \sqrt {\kappa^2 + \frac1n}\\
\end{align*}
and derive accordingly from~\eqref{eq:eta_dot} and \eqref{eq:kappa_dot_bis}:
\begin{align}
\dot \eta^* &= \frac{\eta}{\eta^*}\dot \eta \geq \frac{ps}{9} \left(\eta - \frac{1}{\sqrt n}\right) - \frac{p}{2}\kappa \geq
    \frac{ps}{9} \eta^* - \frac{p}{2}\kappa^* - \frac{2ps}{9 \sqrt n} \label{eq:etastar_dot}\\
\dot \kappa^* &= \frac{\kappa}{\kappa^*}\dot \kappa  \leq -\frac{s}{2} \left(\kappa - \frac{1}{\sqrt n}\right) + 18ps\, \eta \leq
    -\frac{s}{2} \kappa^* + 18ps\, \eta^* +  \frac{s}{\sqrt n}. \label{eq:kappastar_dot}
\end{align}


Let $\uu$ be the $5$-dimensional vector representing the current configuration of the system: $\uu := (a_1^+, a_1^{++}, a_2^+,\allowbreak a_2^{++},a_3^+) \equiv (\uu^{(1)}, \ldots, \uu^{(5)})$; note that the last element $a_3^{++}$ is determined as $a_3^{++} = s - \sum_{i=1}^5 \uu^{(i)}$.\footnote{In principle it is also correct to represent $\uu$ as a vector of dimension $6$, i.e., including $a_3^{++}$ in $\uu$ as a free dimension. However, such a representation would lead to second-order partial derivatives $\frac{\partial^2}{\partial \uu^{(i)} \partial \uu^{(j)}}\eta^*(\uu)$ which are too large for our purposes.} The following lemma is obtained by a folklore application of Taylor's theorem.
\begin{lemma}\label{lem:taylor}
Let $f : \R^5 \to \R$ be a $C^2$ function in a sufficiently large neighborhood of $\uu$, with $\min_{1\leq i\leq 5} \uu^{(i)} \geq 2/n$. Then, $|\E \Delta f(\uu) - \frac{\dot f}{n}| \leq \frac{2}{n^2} \max_{\|\uu^* - \uu\|_{\infty} \leq 1/n} D_f(\uu^*)$, where $D_f(\uu^*) := \max_{1\leq i, j \leq 5} \left|\frac{\partial f^2(\uu^*)}{\partial \uu^{(i)} \partial \uu^{(j)}}\right|$.
\end{lemma}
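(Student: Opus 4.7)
The plan is to view $\Delta f(\uu) = f(\uu + \Delta \uu) - f(\uu)$ through a second-order Taylor expansion, observe that the expectation of the first-order term reproduces $\dot f / n$ exactly, and bound the quadratic remainder via $D_f$. Before invoking Taylor's formula I would first record that, because every rule of $P_o$ is initiator-preserving, a single scheduler step changes the state of at most one agent (the receiver). Consequently at most two coordinates of $\uu$ are modified, each by exactly $1/n$, so $\|\Delta \uu\|_\infty \leq 1/n$ and $\|\Delta \uu\|_1 \leq 2/n$ deterministically. The hypothesis $\min_i \uu^{(i)} \geq 2/n$ then guarantees that both $\uu + \Delta \uu$ and every point of the segment $[\uu, \uu + \Delta \uu]$ stays inside the open region where $f$ is $C^2$, so Taylor's theorem with Lagrange remainder is legitimately applicable.

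Next, for any realisation of $\Delta \uu$, Taylor's theorem yields an intermediate point $\uu^\dagger$ with $\|\uu^\dagger - \uu\|_\infty \leq 1/n$ such that
$$
f(\uu + \Delta \uu) - f(\uu) \;=\; \nabla f(\uu) \cdot \Delta \uu + \tfrac{1}{2}\, \Delta \uu^T H_f(\uu^\dagger)\, \Delta \uu,
$$
where $H_f$ denotes the Hessian. Taking expectations of both sides and using the chain rule together with the definition of the continuous dynamics, $\dot f = \nabla f(\uu) \cdot \dot \uu = n\, \nabla f(\uu) \cdot \E[\Delta \uu]$, the linear term reproduces $\dot f / n$ and produces
$$
\E[\Delta f(\uu)] - \frac{\dot f}{n} \;=\; \tfrac{1}{2}\, \E\bigl[\Delta \uu^T H_f(\uu^\dagger)\, \Delta \uu\bigr].
$$

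Finally I would bound the quadratic form pointwise by the elementary inequality $|v^T H v| \leq \|v\|_1^2 \cdot \max_{i,j} |H_{ij}|$ applied to $v = \Delta \uu$ and $H = H_f(\uu^\dagger)$, which, combined with $\|\Delta \uu\|_1 \leq 2/n$, yields
$$
\bigl|\Delta \uu^T H_f(\uu^\dagger)\, \Delta \uu\bigr| \;\leq\; \frac{4}{n^2}\, D_f(\uu^\dagger) \;\leq\; \frac{4}{n^2}\, \max_{\|\uu^* - \uu\|_\infty \leq 1/n} D_f(\uu^*).
$$
Pulling the deterministic upper bound out of the expectation and dividing by $2$ delivers the claimed inequality. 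There is no serious obstacle: this is genuinely the folklore second-order drift estimate. The only point requiring mild care is the interplay between the boundary condition $\uu^{(i)} \geq 2/n$ and the single-receiver bound $\|\Delta \uu\|_\infty \leq 1/n$, which together keep $\uu + \Delta \uu$ and the Lagrange point $\uu^\dagger$ inside the $C^2$-regularity region of the functions of interest (notably $\eta^*$ and $\kappa^*$, whose second derivatives blow up as some $a_i \to 0$).
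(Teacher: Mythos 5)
Your proposal is correct and takes essentially the same route as the paper: a first-order Taylor expansion whose linear term reproduces $\dot f/n$ in expectation, with the Lagrange remainder bounded by the worst second derivative over an $\ell_\infty$-ball of radius $1/n$. You are in fact slightly more careful than the paper's write-up in one spot — you explicitly invoke $\|\Delta\uu\|_1\le 2/n$ (since a single receiver changing state touches at most two of the five tracked coordinates) to obtain the clean constant $\tfrac12\cdot(2/n)^2 = \tfrac{2}{n^2}$, which matches the lemma's statement.
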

\begin{proof}
Let $\uu'$ be the random variable representing the configuration of the system after its next transition from configuration $\uu$.
Observe that in every non-idle step of execution of the protocol, exactly one agent changes its state, so
$\|\uu'-\uu\|_{\infty} \leq 1/n$.

Applying Taylor approximation we have:
\begin{align}
\E(\Delta f) &=\E(f(\uu')|\uu)-f(\uu) = \sum_{\uu'} (f(\uu')-f(\uu))\Pr(\uu'|\uu) = \sum_{\uu'} \left(\nabla f(\uu) \cdot (\uu'-\uu) + R_2(\uu,\uu')\right)\Pr(\uu'|\uu) =\nonumber\\
&= \nabla f(\uu) \cdot \sum_{\uu'} (\uu'-\uu) \Pr(\uu'|\uu) + R_2(\uu) = \nabla f(\uu) \cdot \frac1n(\dot \uu^{(1)}, \ldots, \dot \uu^{(5)})^T + R_2(\uu) = \frac{\dot f}{n} + R_2(\uu),\label{eq:taylor_approx}
\end{align}
where $\nabla f(\uu)$ is the gradient of $f$ at $\uu$, $R_2(\uu,\uu') \in \R$ denotes the second-order Taylor remainder for function $f$ expanded at point $\uu$ along the vector towards point $\uu'$, and $R_2(\uu) \in \R $ is subsequently an appropriately chosen value, satisfying:
$$
|R_2(\uu)| \leq \frac{1}{n^2} \max_{\|\uu^* - \uu\|_{\infty} \leq 1/n} D_f(\uu^*).
$$
\end{proof}

The following lemma is obtained directly by computing and bounding all second order partial derivatives of functions $\eta^*$ and $\kappa^*$ with respect to variables $(\uu^{(1)}, \ldots, \uu^{(5)})$.
\begin{lemma}
There exists a constant $c_1>1$ depending only on $s,p$, such that, for any configuration $\uu$ with $\delta(\uu) \leq s/12$:
\begin{itemize}
\item $\max_{\|\uu^* - \uu\|_{\infty} \leq 1/n} D_{\eta^*}(\uu^*) < c_1 \sqrt n$,
\item $\max_{\|\uu^* - \uu\|_{\infty} \leq 1/n} D_{\kappa^*}(\uu^*) < c_1 \sqrt n$.
\vspace{-5mm}\end{itemize}
\qed
\end{lemma}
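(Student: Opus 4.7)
Set $g := \eta^2 = -\sum_{i=1}^3 \ln(a_i/(s/3))$ and $h := \kappa^2 = \sum_{i=1}^3 \kappa_i^2$, so that $\eta^* = \sqrt{g+1/n}$ and $\kappa^* = \sqrt{h+1/n}$. Since $a_1 = \uu^{(1)}+\uu^{(2)}$, $a_2 = \uu^{(3)}+\uu^{(4)}$, and $a_3 = s-\uu^{(1)}-\uu^{(2)}-\uu^{(3)}-\uu^{(4)}$ are affine in $\uu$, and $a_i \geq s/3 - \delta \geq s/4$ on the $1/n$-thickening of $\{\delta \leq s/12\}$ for $n$ large enough, both $g$ and $h$ are $C^\infty$ on this set, and all first and second partials with respect to the $\uu^{(i)}$ are bounded by an absolute constant depending only on $s$. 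Differentiating $\sqrt{g+1/n}$ twice gives
\[
\frac{\partial^2 \eta^*}{\partial \uu^{(i)} \partial \uu^{(j)}} = \frac{1}{2(g+1/n)^{1/2}}\,\frac{\partial^2 g}{\partial \uu^{(i)} \partial \uu^{(j)}} \;-\; \frac{1}{4(g+1/n)^{3/2}}\,\frac{\partial g}{\partial \uu^{(i)}}\,\frac{\partial g}{\partial \uu^{(j)}},
\]
with an analogous formula for $\kappa^*$ and $h$. The first summand is immediately $O(\sqrt n)$: its numerator is $O(1)$ and $(g+1/n)^{-1/2}\leq\sqrt n$.

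\paragraph{Main obstacle.} The whole game is to control the second summand, whose naive estimate involves a factor $(g+1/n)^{-3/2}$ that may reach $n^{3/2}$. The step I expect to be the main obstacle, and which is the heart of the proof, is a Lojasiewicz-type inequality
\[
\bigl|\partial g/\partial \uu^{(i)}\bigr|^2 \leq C\,g, \qquad \bigl|\partial h/\partial \uu^{(i)}\bigr|^2 \leq C\,h,
\]
valid throughout the region for a constant $C$ depending only on $s$. Granted these, the second summand of the Hessian of $\eta^*$ is bounded by
\[
\frac{|\partial g/\partial \uu^{(i)}|\,|\partial g/\partial \uu^{(j)}|}{(g+1/n)^{3/2}} \leq \frac{Cg}{(g+1/n)^{3/2}} \leq \frac{C}{(g+1/n)^{1/2}} \leq C\sqrt n,
\]
and the same holds for $\kappa^*$, whence choosing $c_1$ large enough delivers the lemma.

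\paragraph{Verifying the Lojasiewicz inequalities.} For $g$, the intuition is that $g$ attains its global minimum value $0$ exactly on the central configuration $a_1=a_2=a_3=s/3$, so $g$ and $\nabla g$ vanish there simultaneously. Writing $\delta_i' := a_i - s/3$ and using $\sum \delta_i'=0$, a Taylor expansion of $\ln(1+x)$ combined with the identity $\sum \delta_i'^2=\delta^2/3$ yields $g = \tfrac{3}{2s^2}\delta^2 + O(\delta^3)$; on the other hand, direct differentiation via the affine formulas above gives $\partial g/\partial \uu^{(i)} = (a_k-a_l)/(a_k a_l)$ or $0$ for suitable $k,l\in\{1,2,3\}$, hence $|\partial g/\partial \uu^{(i)}| = O(\delta)$, so $|\partial g/\partial \uu^{(i)}|^2 = O(\delta^2) = O(g)$ on a neighborhood of the center. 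The bound extends to all of $\{\delta \leq s/12\}$ by a compactness argument: outside any fixed neighborhood of the center, $g$ is bounded below by a positive constant depending only on $s$, while $|\nabla g|$ stays bounded, so the inequality holds trivially with a possibly larger constant. For $h$, the argument is more direct: $\partial h/\partial \uu^{(i)} = 2\sum_j \kappa_j\,(\partial \kappa_j/\partial \uu^{(i)})$, and since $\partial \kappa_j/\partial a_j^{++} = s/a_j$ and $\partial \kappa_j/\partial a_j = -sa_j^{++}/a_j^2 - 1$ are both bounded when $a_j \geq s/4$, Cauchy--Schwarz gives $|\partial h/\partial \uu^{(i)}|^2 \leq C\sum_j \kappa_j^2 = Ch$. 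Finally, because all of the above estimates are uniform on the open $1/n$-thickening of $\{\delta\leq s/12\}$, the same $c_1$ works for $\max_{\|\uu^*-\uu\|_\infty \leq 1/n} D_{\eta^*}(\uu^*)$ and for $\max_{\|\uu^*-\uu\|_\infty \leq 1/n} D_{\kappa^*}(\uu^*)$.
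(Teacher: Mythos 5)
Your proof is correct and fills in the computation the paper leaves implicit (the paper gives no proof beyond ``obtained directly by computing and bounding all second order partial derivatives''). The organizing idea — a Lojasiewicz-type inequality $|\nabla g|^2 \le Cg$, $|\nabla h|^2 \le Ch$ to tame the $(\eta^*)^{-3}$ and $(\kappa^*)^{-3}$ factors appearing in the Hessians — is exactly what makes the ``direct computation'' go through, and you verify it correctly: for $g=\eta^2$ you combine $|\partial_i g| = O(\delta)$ (direct differentiation, with $a_i \ge s/4$ bounded away from $0$) with the paper's bound $\eta > \delta/s$, and for $h = \kappa^2$ you use Cauchy--Schwarz together with boundedness of $\nabla\kappa_j$ in the same region. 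All the individual estimates check out (the coefficient $\frac{3}{2s^2}$ in the Taylor expansion of $g$, the forms $\partial_i g = (a_k-a_l)/(a_ka_l)$ via the affine parametrization, the $1/n$-thickening only perturbing constants by $O(1/n)$). One small remark: since $g = \eta^2 > \delta^2/s^2$ holds throughout $\{\delta \le s/12\}$ by the inequality the paper already established, your compactness argument for extending the Lojasiewicz inequality for $g$ away from the center is dispensable — the direct pointwise bound suffices — but this does not affect correctness.
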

In view of the above lemmas, we obtain from~\eqref{eq:eta_dot} and~\eqref{eq:kappa_dot_bis}, for an appropriately chosen constant $c_2 = 2c_1 + s$:
$$
\begin{cases}
\E\Delta \eta^* \geq \frac{1}{n}\left(\frac{ps}{18} \eta^* - \frac{p}{2}\kappa^* - \frac{c_2}{\sqrt n} \right), & \text{when $\delta \leq s/12$,}\\
\E\Delta \kappa^* \leq \frac{1}{n}\left(-\frac{s}{3} \kappa^* + 18ps\,\eta^* + \frac{c_2}{\sqrt n} \right), & \text{when $\delta \leq s/12$.}\\
\end{cases}
$$
For $j=1,2$, we now define two linear combinations of functions $\eta^*$ and $\kappa^*$:
$$
\psi^{(j)} = \eta^* - \frac{3jp}{s} \kappa^*.
$$
When $\delta \leq s/12$, we have:
\begin{align*}
\E\Delta \psi^{(j)} &\geq \frac{1}{n}\left( \frac{ps}{18} \eta^* - \frac{p}{2}\kappa^* - \frac{c_2}{\sqrt n} + jp \kappa^* - 54jp^2 \eta^* - \frac{3jp}{s} \frac{c_2}{\sqrt n} \right) \geq \frac{1}{n}\left( \frac{ps}{24} \eta^* + \frac{jp}{2}\kappa^* - \frac{2c_2}{\sqrt n} \right) \geq\\
&\geq  \frac{ps}{24 n} \left(\eta^* + \frac{3jp}{s} \kappa^* - \frac{48c_2}{ps\sqrt n} \right) \geq \frac{ps}{24 n} \left(|\psi^{(j)}| - \frac{c_3}{\sqrt n}\right),
\end{align*}
where we denoted $c_3 := \frac{48c_2}{ps}$ and used the fact that $p < \frac{s}{72\cdot 54\cdot 2}$.

We subsequently perform an analysis of $\psi^{(j)}_t = \psi^{(j)}(\uu_t)$, $j=1,2$, treating them as stochastic processes. We remark that $\psi^{(2)}_{t} \leq \psi^{(1)}_{t}$, since $\psi^{(1)}_{t} - \psi^{(2)}_{t} = \frac{3p}{s}\kappa^* \geq 0$.

\begin{lemma}\label{lem:pot_negative}
Let $\uu_{t_0}$ be an arbitrary starting configuration of the system. Then, with constant probability, for some $t_1=t_0+O(n \log n)$, a configuration $\uu_{t_1}$ is reached such that $\psi^{(1)}_{t_1} \geq \psi^{(2)}_{t_1} \geq -\frac{2c_3}{\sqrt n}$.
\end{lemma}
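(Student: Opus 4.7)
The plan is to apply a multiplicative drift analysis to $-\psi^{(2)}$ in the regime where it is sufficiently positive (equivalently, $\psi^{(2)}$ sufficiently negative), exploiting the drift inequality derived immediately above the statement of the lemma.

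The first observation is that it suffices to control $\psi^{(2)}$: since $\psi^{(1)}_t - \psi^{(2)}_t = \tfrac{3p}{s}\kappa^*_t \geq 0$, any time $t_1$ at which $\psi^{(2)}_{t_1} \geq -2c_3/\sqrt n$ automatically satisfies $\psi^{(1)}_{t_1} \geq \psi^{(2)}_{t_1} \geq -2c_3/\sqrt n$. If $\psi^{(2)}_{t_0} \geq -2c_3/\sqrt n$ already, take $t_1 = t_0$. Otherwise $\psi^{(2)}_{t_0} < 0$, and I claim this automatically forces $\delta(\uu_{t_0}) \leq s/12$ so that the drift inequality applies. Indeed, $\psi^{(2)} = \eta^* - \tfrac{6p}{s}\kappa^*$ together with the trivial bound $\kappa^* \leq 2$ gives $\eta^* \leq 12p/s$, which by the assumed smallness of $p$ forces $\eta \leq 1/10$. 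Unwinding $\eta^2 = -\sum_i \ln(a_i/(s/3))$ then shows each $a_i$ is within a small multiplicative factor of $s/3$, so $\delta \leq s/12$ by direct calculation. Thus the drift bound is valid throughout the trajectory as long as $\psi^{(2)}_t < 0$.

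Next, define the stopped process $\tilde Z_t := \max\{-\psi^{(2)}_t,\,2c_3/\sqrt n\}$, taking values in $[2c_3/\sqrt n,\,x_{\max}]$ with $x_{\max} = O(1)$ (using that $\eta^*$ and $\kappa^*$ are uniformly bounded on configurations with $a_i > 0$). On the event $\{\tilde Z_t > 2c_3/\sqrt n\}$, equivalently $\{\psi^{(2)}_t < -2c_3/\sqrt n\}$, we have $|\psi^{(2)}_t| - c_3/\sqrt n \geq \tfrac{1}{2}|\psi^{(2)}_t|$, and combining with $\E \Delta \tilde Z_t = -\E \Delta \psi^{(2)}_t$ the drift estimate yields
\[
\E[\Delta \tilde Z_t \mid \tilde Z_t] \leq -\frac{ps}{48n}\,\tilde Z_t.
\]
This is a standard multiplicative drift with rate $\alpha = \Theta(1/n)$. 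An application of the multiplicative drift theorem with target $x_{\min} = 2c_3/\sqrt n$ gives
\[
\E[T - t_0] \leq \alpha^{-1}\bigl(1 + \ln(x_{\max}/x_{\min})\bigr) = O(n \log n),
\]
where $T$ is the first time $\tilde Z_T = 2c_3/\sqrt n$, i.e.\ $\psi^{(2)}_T \geq -2c_3/\sqrt n$. Markov's inequality then delivers $T - t_0 = O(n \log n)$ with constant probability.

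The principal technical obstacle is aligning the regime where the drift bound is valid (namely $\delta \leq s/12$) with the regime in which the lemma's conclusion has not yet been reached; this is handled cleanly by the reduction showing that $\psi^{(2)}_t < 0$ forces $\delta(\uu_t) \leq s/12$. A secondary concern is that the per-step fluctuation $|\Delta \psi^{(2)}_t|$ is $O(1/\sqrt n)$ rather than $O(1/n)$, due to the square-root structure of $\eta^*$ and $\kappa^*$; however, since the target threshold $2c_3/\sqrt n$ is itself of this order, fluctuations of this magnitude do not undermine the multiplicative drift conclusion but merely set the natural target precision.
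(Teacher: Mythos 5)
Your proof is correct and follows essentially the same route as the paper's: reduce to controlling $\psi^{(2)}$ via $\psi^{(1)} - \psi^{(2)} = \tfrac{3p}{s}\kappa^* \geq 0$, observe that $\psi^{(2)} < 0$ forces the configuration into the region $\delta \leq s/12$ where the drift bound applies, run a multiplicative drift argument on a stopped version of $|\psi^{(2)}_t|$, and finish with Markov's inequality. The paper uses a time-filtered process frozen at the hitting time, whereas you floor the process at $2c_3/\sqrt n$; both are standard ways of setting up the same drift calculation. (As a side note, the paper's displayed drift inequality for $\psi^{*(2)}_t$ appears to have a sign typo — your negative-drift version $\E[\Delta\tilde Z_t \mid \tilde Z_t] \leq -\tfrac{ps}{48n}\tilde Z_t$ is the intended form.)
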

\begin{proof}

W.l.o.g.\ assume $t_0 = 0$. We subsequently only analyze process $\psi^{(2)}_{t}$.
Let $t_1$ be the first time step such that $\psi^{(2)}_{t_1} > -\frac{2 c_3}{\sqrt n}$. If $t_1\neq 0$, then $\psi^{(2)}_{0} < 0$. Note that then $\psi^{(2)}_{t} < 0$ for all $t \leq t_1$, from which it follows by a straightforward calculation from the definition of $\psi$, $\kappa$, and $\eta$, that $\delta_t < \frac{s}{12}$ for all $t \leq t_1$.

We now define the filtered stochastic process $\psi^{*(2)}_t$ as $\psi^{*(2)}_t := |\psi^{(2)}_t|$ for $t< t_1$, and put $\Delta \psi^{*(2)}_{t} := 0$ for $t \geq t_1$. For all $t\geq 0$, we then have:
$$
\E(\Delta \psi^{*(2)}_{t} | \psi^{*(2)}_{t} \neq 0) \leq \frac{ps}{48n} \psi^{*(2)}_{t}.
$$
Since $0\leq \psi^{*(2)}_{t} < 9$ for all time steps, a direct application of multiplicative drift analysis (cf.~\cite{muldrift}) gives:
$$
\E t_1 \leq \frac{48n}{ps} \left(1 + \ln \frac{9 \sqrt n}{2c_3}\right),
$$
and the claim follows by Markov's inequality.
\end{proof}

\begin{lemma}\label{lem:pot_aroundzero}
Let $\uu_{t_1}$ be an arbitrary starting configuration of the system such that $\psi^{(j)}_{t_1} \in [-\frac{2c_3}{\sqrt n}, \frac{4c_3}{\sqrt n}]$, for $j=1,2$. Then, with constant probability, for some $t_2=t_1+O(n)$, a configuration $\uu_{t_2}$ is reached such that $\psi^{(1)}_{t_2} \geq \frac{4c_3}{\sqrt n}$.
\end{lemma}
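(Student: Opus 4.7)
Note that the assumption $\psi^{(1)}_{t_1}, \psi^{(2)}_{t_1} \in [-2c_3/\sqrt n, 4c_3/\sqrt n]$, combined with the identity $\psi^{(1)} - \psi^{(2)} = (3p/s)\kappa^*$, pins the starting configuration to the regime $\eta^*_{t_1}, \kappa^*_{t_1} = O(1/\sqrt n)$ (explicitly, $\kappa^*_{t_1} \leq 2sc_3/(p\sqrt n)$ and $\eta^*_{t_1} \leq 10c_3/\sqrt n$), and so the configuration sits in an $O(1/\sqrt n)$-neighbourhood of the symmetric point $a_i = s/3$, $a_i^{++} = s/9$. In this regime the deterministic drift estimates \eqref{eq:etastar_dot}--\eqref{eq:kappastar_dot} cannot by themselves push $\psi^{(1)}$ up by $\Theta(1/\sqrt n)$: the additive error term $-c_2/(n\sqrt n)$ in the drift of $\eta^*$ is already of the same order as the signal we want to build. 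The plan is therefore to harvest the stochastic noise of the scheduler. First-order drifts are negligible on this scale, but the second moment $\E[(\Delta \eta)^2]$ is not. Concretely, we run the protocol for $T = c_4 n$ steps, with $c_4$ a sufficiently large constant to be fixed, and show that with absolute-constant probability $\eta^*_{t_1+T}$ grows to at least $c_7/\sqrt n$ while $\kappa^*$ stays $O(1/\sqrt n)$ throughout.

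\textbf{Controlling $\kappa^*$.} As long as $\eta^* = O(1/\sqrt n)$, the bound~\eqref{eq:kappastar_dot} together with Lemma~\ref{lem:taylor} gives a strongly mean-reverting stochastic recursion $\E \Delta \kappa^* \leq \frac{1}{n}\bigl(-\tfrac{s}{3}\kappa^* + O(1/\sqrt n)\bigr)$, up to a Taylor remainder of order $n^{-3/2}$. Introduce the stopping time $\tau := \inf\{t \geq t_1 : \kappa^*_t \geq C/\sqrt n\}$ for a large constant $C$. A standard supermartingale / Bernstein-type concentration argument for $\kappa^*_{t \wedge \tau}$ (exploiting the restoring force $-s\kappa^*/3$ and the per-step jump bound $|\Delta \kappa^*| = O(1/(n\kappa^*))$) then yields $\Pr[\tau \leq t_1 + T] = o(1)$, so the event $\mathcal E_\kappa := \{\kappa^*_t \leq C/\sqrt n$ for all $t \in [t_1, t_1+T]\}$ holds with probability $1 - o(1)$.

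\textbf{Forcing $\eta^*$ to grow.} I work with $\eta^{*2}_t$ rather than with $\eta^*_t$ directly, via the identity $\E \Delta \eta^{*2} = \E \Delta \eta^2 = 2 \eta \,\E \Delta \eta + \E (\Delta \eta)^2$. On our window, the first-order contribution $2\eta \E \Delta \eta$ is $O(n^{-2})$, while $\E(\Delta \eta)^2$ admits a uniform lower bound of order $\Omega(n^{-2})$: each of the three cyclic attack rules of $P_o$ fires with probability $\Theta(1)$ near the centre, each shifts a \emph{distinct} pair of coordinates $(a_i, a_{i-1})$ by $\pm 1/n$ as in~\eqref{eq:deltana}, and the resulting increments of $\eta^2 = -\sum \ln(a_i/(s/3))$ contribute additively (without cross-cancellation) to the variance. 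Summing the drift over $T = c_4 n$ steps yields an accumulated drift of $\Omega(c_4/n)$ in $\eta^{*2}$, whereas the martingale part has variance $O(T (\eta^*/n)^2) = O(c_4/n^2)$. Chebyshev's inequality applied to the martingale, combined with the initial bound $\eta^{*2}_{t_1} \geq 1/n$, then gives $\eta^{*2}_{t_1+T} \geq c_7^2/n$, i.e.\ $\eta^*_{t_1+T} \geq c_7/\sqrt n$, with absolute-constant probability, where $c_7$ can be made as large as we wish by taking $c_4$ large. Intersecting with $\mathcal E_\kappa$ and choosing $c_7 \geq 4c_3 + 3pC/s$ yields $\psi^{(1)}_{t_2} = \eta^*_{t_2} - (3p/s)\kappa^*_{t_2} \geq 4c_3/\sqrt n$ at $t_2 := t_1 + T$, as required.

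\textbf{Main obstacle.} The most delicate point is the uniform lower bound $\E(\Delta \eta)^2 = \Omega(1/n^2)$ in an $O(1/\sqrt n)$-neighbourhood of the symmetric point. Since $\eta$ is defined through a square root that is non-differentiable at $0$, one cannot invoke Lemma~\ref{lem:taylor} for $\eta$ itself; this is exactly why the analysis is carried out for the regularised $\eta^{*2} = \eta^2 + 1/n$, which is smooth and bounded below by $1/n$. What remains is the direct verification, from~\eqref{eq:deltana} and the rule probabilities of Fig.~\ref{fig:rules}, that near the centre the per-rule increments of $\eta^2$ are of order $1/n$ with probability $\Theta(1)$ and that their weighted-by-probability sum of squares is uniformly $\Omega(1/n^2)$; this is a tedious but conceptually straightforward computation.
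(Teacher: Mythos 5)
Your approach is genuinely different from the paper's, but the central step --- the claim that the noise gives $\eta^{*2}$ a uniformly positive drift of order $\Omega(1/n^2)$ --- does not hold, and this is exactly the obstruction the paper's coupling argument is designed to get around.

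Two concrete problems. First, the claim that ``each of the three cyclic attack rules fires with probability $\Theta(1)$ near the centre'' is wrong: rules (1)--(2) change only the $+/{+}{+}$ flavor, not the species, so they never move any $a_i$; the only transitions that change some $a_i$ are the attack rules (3)--(4), and those succeed with probability $p$ or $2p$. Hence $\E[(\Delta\eta)^2]$ and the convexity remainder $R_2$ in $\E\,\Delta(\eta^{*2})=\tfrac{1}{n}\dot{(\eta^2)}+R_2$ are $\Theta(p/n^2)$, with an explicit small factor $p$.

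Second, and more seriously, the first-order term is not harmlessly ``$O(n^{-2})$''; it can be \emph{negative} with a constant that dwarfs $p$. A direct computation near the centre gives
\[
\frac{\dot{(\eta^2)}}{n}=-\frac{\dot\phi}{n}=\frac{p}{s^2n}\left(\tfrac12\delta^2+\textstyle\sum_i\kappa_i\delta_i\right)\;\ge\;-\frac{p\,\kappa\delta}{s^2 n},
\]
and the last bound is tight up to sign when $\kappa$ and $\delta$ are anti-aligned. From the hypothesis $\psi^{(1)}_{t_1}-\psi^{(2)}_{t_1}=\tfrac{3p}{s}\kappa^*_{t_1}\le \tfrac{6c_3}{\sqrt n}$ one only gets $\kappa_{t_1}=O\bigl(\tfrac{s c_3}{p\sqrt n}\bigr)$, and $\delta_{t_1}=O\bigl(\tfrac{s c_3}{\sqrt n}\bigr)$, so the first-order drift can be as negative as $-\Theta\bigl(\tfrac{c_3^2}{n^2}\bigr)$. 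Since $c_3=\tfrac{48c_2}{ps}=\Theta(1/p)$, this negative contribution is $\Theta(1/(p^2 n^2))$, which completely dominates the $\Theta(p/n^2)$ convexity gain you are trying to harvest. Your mean-reversion control of $\kappa^*$ does not rescue this: the contraction rate is $\Theta(s/n)$ per step, so relaxing $\kappa^*$ from its initial $\Theta(c_3/(p\sqrt n))$ scale to its equilibrium $\Theta(c_2/(s\sqrt n))$ scale already consumes $\Theta(n\log(1/p))$ steps during which the drift of $\eta^{*2}$ is uncontrolled (and possibly strongly negative), and even at equilibrium one does not get a clean sign for the net drift --- precisely because the very linear combinations $\psi^{(j)}=\eta^*-\tfrac{3jp}{s}\kappa^*$ were introduced in the paper to cancel this cross-term, and even for those the paper only obtains $\E\,\Delta\psi^{(j)}\ge-\tfrac{ps\,c_3}{24\,n^{3/2}}$, i.e.\ a drift that can integrate to $-\Theta(c_3/\sqrt n)$ over $n$ steps, exactly the scale of the target. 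So there is no drift to accumulate.

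The paper resolves this not by drift but by an anti-concentration/coupling argument: after fixing a ``phase $P_A$'' of the schedule, the residual randomness is captured by a single binomial random variable $B$ (the number of $A_1^+\!\to\!A_2^+$ conversions among $\Theta(n)$ isolated $(A_2^+,A_1^+)$ pairs). $B$ has fluctuations of order $\sqrt n$ with constant probability, and shifting $B$ by $\Theta(\sqrt n)$ perturbs $\delta_{2,n}$ --- and hence $\eta^*_n$ --- by $\Theta(1/\sqrt n)$ while perturbing each $\kappa_{i,n}$ by only $O(1/\sqrt n)$, so after multiplying by the small coefficient $3p/s$ the $\kappa^*$ perturbation cannot cancel the $\eta^*$ gain. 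If you want to salvage a variance-based proof, it would have to be of this conditional anti-concentration type rather than a positive-drift-of-$\eta^{*2}$ type; as written, the ``Forcing $\eta^*$ to grow'' step has a genuine gap.
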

\begin{proof}
W.l.o.g. assume that $t_1 = 0$ and suppose that initially $\psi^{(2)}_{0} \leq \psi^{(1)}_{0} < \frac{4c_3}{\sqrt n}$ (i.e., that $t_2 \neq t_1$). Then, from the lower and upper bounds on $\psi^{(1)}_{0}$ and $\psi^{(2)}_{0}$  we obtain the following bounds on $\kappa_0$ and $\delta_0$:
\begin{align*}
\frac{3p}{s} \kappa^*_{0} = \psi^{(1)}_{0} - \psi^{(2)}_{0} \leq \frac{2c_3}{\sqrt n} + \frac{4c_3}{\sqrt n} &\implies \kappa_{0} \leq \frac{2c_3 s}{p \sqrt n},\\
\eta^*_{0} = 2 \psi^{(1)}_{0} - \psi^{(2)}_{0} \leq 2\frac{4c_3}{\sqrt n} + \frac{2c_3}{\sqrt n} & \implies \eta_{0} \leq \frac{10c_3}{\sqrt n} \implies \delta_{0} < \frac{10c_3 s}{\sqrt n}.
\end{align*}
It follows that, for $i=1,2,3$, $a_{i,0} \in [\frac{s}{3} - \frac{10c_3 s}{\sqrt n}, \frac{s}{3} + \frac{10c_3 s}{\sqrt n}]$ and $a_{i,0}^{++} = \frac{a_{i,0}}{s}(a_{i,0} + \kappa_{i,0}) \in [(\frac{1}{3} - \frac{10c_3}{\sqrt n})(\frac{s}{3} - \frac{10c_3 s}{\sqrt n} - \frac{2c_3 s}{p \sqrt n}), (\frac{1}{3} + \frac{10c_3}{\sqrt n})(\frac{s}{3} + \frac{10c_3 s}{\sqrt n} + \frac{2c_3 s}{p \sqrt n})]$. For the sake of clarity of notation, we will simply write $a_{i,0} = s/3 \pm O(1/\sqrt n)$ and $a_{i,0}^{++} = s/9 \pm O(1/\sqrt n)$, hence also $a_{i,0}^{+} = 2s/9 \pm O(1/\sqrt n)$.

We will consider now the sequence of exactly $n$ transitions of the protocol, between time steps $t=0,1,\ldots, n$.

For all $t$ we have $\E \Delta \psi^{(2)}_t \geq -\frac{c_3 ps}{24 n^{3/2}}$. Consider the Doob submartingale $Y_t = \sum_{\tau=0}^{t-1} X_t$ with increments $(X_t)$ given as:
$$
X_{t} = \begin{cases}
\Delta\psi^{(2)}_t + \frac{c_3 ps}{24 n^{3/2}}, &\text{if $Y_t > -\frac{c_3}{\sqrt n}$}\\
0, & \text{otherwise},
\end{cases}
$$
Noting that $|X_t| \leq \frac{9}{n}$, an application of the Azuma inequality for submartingales to $(Y_n)$ gives: $\Pr [Y_n \leq - \frac{c_3}{\sqrt n}] \leq \exp{[-c_3^2/162]} $ (cf. e.g.~\cite{concen.pdf}[Thm.\ 16]). From here it follows directly that:
$$
\Pr\left[\psi^{(2)}_n > -\frac{c_3}{\sqrt n} + \psi^{(2)}_0 - n\frac{c_3 ps}{24 n^{3/2}}\right] \geq 1 - \exp{[-c_3^2/162]} > 1/2.
$$
Noting that $\psi^{(2)}_{0} \geq -\frac{2c_3}{\sqrt n}$, we have:
\begin{equation}\label{eq:psi_lb}
\Pr\left[\psi^{(2)}_n \geq -\frac{2c_3}{\sqrt n}\right] > 1/2.
\end{equation}
We now describe the execution of transitions in the protocol for times $t=0,1,\ldots,n-1$ through the following coupling. First, we select the sequence of pairs of agents chosen by the scheduler. Let $V_2^+$ (respectively, $V_1^+$) denote the subsets of the set of $n$ agents, having initial states $A_2^+$ (resp., $A_1^+$) at time $0$, respectively, which are involved in exactly one transition in the considered time interval, acting in it as the initiator (resp., receiver). Let $S\subseteq\{0,1,\ldots,n-1\}$ denotes the subset of time steps at which the scheduler activates a transition involving an element of $V_2^+$ as the initiator and an element of $V_1^+$ as the receiver. The execution of the protocol is now given by:
\begin{itemize}
\item Phase $P_A$: Selecting the sequence of pairs of elements activated by the scheduler in time steps $(0,1,\ldots,n-1)$. This also defines set $S$. Executing the rules of the protocol in their usual order for time steps from set $\{0,1,\ldots,n-1\}\setminus S$.
\item Phase $P_B$: Executing the rules of the protocol for time steps from set $S$.
\end{itemize}
Observe that since elements of pairs activated in time steps from $S$ are activated only once throughout the $n$ steps of the protocol, the above probabilistic coupling does not affect the distribution of outcomes.

Directly from~\eqref{eq:psi_lb}, we obtain through a standard bound on conditional probabilities that at least a constant fraction of choices made in phase $P_A$ leads to an outcome ``$\psi^{(2)}_n \geq -\frac{2c_3}{\sqrt n}$'' with at least constant probability during phase $P_B$:
\begin{equation}\label{eq:bound_a}
\Pr\left[P_A : \Pr \left[\psi^{(2)}_n \geq -\frac{2c_3}{\sqrt n} \big| P_A\right] > 1/4\right] \geq 1/3.
\end{equation}
We now remark on the size of set $S$. The distribution of $|S|$ depends only on $a_{2,0}^+$, $a_{1,0}^+$, and the choices made by the random scheduler. We recall that $a_{2,0}^+ = 2s/9 \pm O(1/\sqrt n)$. Since the expected number of isolated edges in a random multigraph on $n$ nodes (representing the set of agents) and $n$ edges (representing the set of time steps) is $(1\pm o(1))e^{-4}n$, the number of such edges having the first endpoint in an agent in state $A_2^+$ and the second endpoint in an agent in state $A_1^+$ is $(1\pm o(1))\frac{4e^{-4}s^2}{81}n$. A straightforward concentration analysis (using, e.g., the asymptotic correspondence between $G(n,m)$ and $G(n,p)$ random graph models and an application of Azuma's inequality for functions of independent random variables) shows that the bound $|S| = (1\pm o(1))\frac{4e^{-4}s^2}{81}n$ holds with very high probability. In particular, we have:
\begin{equation}\label{eq:bound_s}
\Pr [|S| > c_4 n] = 1 - e^{\Omega(-n)},
\end{equation}
for some choice of constant $c_4$ which depends only on $s$.

Relations~\eqref{eq:bound_a} and \eqref{eq:bound_s} provide all the necessary information about phase $P_A$ that we need. Subsequently, we will only analyse phase $P_B$, conditioning on a fixed execution of phase $P_A$ such that the following event $F_A$ holds:
\begin{equation}\label{eq:f_a}
\Pr \left[\psi^{(2)}_n \geq -\frac{2c_3}{\sqrt n} \big| P_A \right] > 1/4 \quad \wedge \quad |S| > c_4 n.
\end{equation}
We remark that, by a union bound over~\eqref{eq:bound_a} and \eqref{eq:bound_s}, $\Pr[F_A] \geq 1/3 - e^{\Omega(-n)} > 1/4$.

In the remainder of our proof, our objective will be to show that:
$$
\Pr \left[\psi^{(1)}_n \geq \frac{4c_3}{\sqrt n} \big| P_A \right] > c_5,
$$
for some constant $c_5 > 0$ depending only on $s,p$, for any choice of $P_A$ for which event $F_A$ holds. When this is shown, the claim of the lemma will follow directly, with a probability value given as at least $c_5 \Pr[F_A] > c_5 / 4$ by the law of total probability.

We now proceed to analyze the random choices made during phase $P_B$. Each of the considered $|S|$ interactions involves a pair of agents of the form $(A_2^+, A_1^+)$, and describes the following transition:
$$
(A_2^+, A_1^+) \to \begin{cases}
(A_2^+, A_2^+), &\text{with probability $p$},\\
(A_2^+, A_2^+), &\text{with probability $1-p$},
\end{cases}
$$
independently at random for each transition. The only state changes observed during this phase are from $A_1^+$ to $A_2^+$, and we denote by $B$ the number of such state changes. The value of random variable $B$ completely describes the outcome of phase $P_B$.

We have $\E B = p|S|$, and by a standard additive Chernoff bound:
\begin{equation}
\label{eq:chernoff_b}
\Pr\left[|B - p|S|| \leq 2 \sqrt n \big| P_A \right] \geq 1 - 2 e^{-4} > 7/8.
\end{equation}
Let $\mathcal B \subseteq [p|S| - 2 \sqrt n, p|S| + 2 \sqrt n]$ be the subset of the considered interval containing values of $B$ such that
$\left(\psi^{(1)}_n \big| P_A, B\in \mathcal B\right) \geq \frac{4c_3}{\sqrt n}$. If $\Pr[B \in \mathcal B | P_A] \geq 1/8$, then the claim follows directly.

Otherwise, it follows from~\eqref{eq:f_a} and~\eqref{eq:chernoff_b} that there must exist a value $b \in [p|S| - 2 \sqrt n, p|S| + 2 \sqrt n] \setminus \mathcal B$, such that:
$$
\left(\psi^{(2)}_n \big| P_A, B=b \right) \geq -\frac{2c_3}{\sqrt n}.
$$
Given that:
$$
\left(\psi^{(1)}_n \big| P_A, B=b \right) \leq \frac{4c_3}{\sqrt n}.
$$
and recalling that $\psi^{(2)}_n \leq \psi^{(1)}_n$, we obtain the following bound on $\eta_n$:
\begin{equation}\label{eq:eta_bound1}
\left(\eta^*_{n} \big| P_A, B=b\right) = \left(2 \psi^{(1)}_{n} - \psi^{(2)}_{n} \big| P_A, B=b\right) \leq 2\frac{4c_3}{\sqrt n} + \frac{2c_3}{\sqrt n} =  \frac{10c_3}{\sqrt n}.
\end{equation}

We now consider lower bounds on the value of $\psi^{(2)}_n$, conditioned on $P_A, B=b^+$ (respectively, $P_A, B=b^-$), where $b^+$ (resp., $b^-$) is a value arbitrarily fixed in the range $b^+\in [b + \frac{20c_3s}{\sqrt n}, b + \frac{21c_3s}{\sqrt n}]$ (resp., $b^-\in [b - \frac{21c_3s}{\sqrt n}, b - \frac{20c_3s}{\sqrt n}]$). The executions of the protocol with $B=b^+$ and $B=b^-$ differ with respect to the execution with $B=b$ in the number of executed transitions from $a_1^+$ to $a_2^+$ by at least $\frac{20c_3s}{\sqrt n}$. Recalling that $\delta_2 = a_2 - a_1$, it follows that for some $b' \in \{b^+, b^-\}$ we have after $n$ steps:
$$
\left(\delta_{n} \big| P_A, B=b' \right) \geq \left(|\delta_{2,n}| \big| P_A, B=b' \right) \geq \frac{20c_3s}{\sqrt n}.
$$
Subsequently, we will assume that $b'=b^+$; the case of $b' = b^-$ is handled analogously. From the relation $\eta > \delta/s$ and~\eqref{eq:eta_bound1} we have:
\begin{equation}\label{eq:eta_bound2}
\left(\eta^*_{n} \big| P_A, B=b^+\right) \geq  \frac{20c_3}{\sqrt n} \geq \left(\eta^*_{n} \big| P_A, B=b\right) + \frac{10c_3}{\sqrt n}.
\end{equation}

When comparing the value of $\kappa^*_n$ in the two cases, $B=b^+$ and $B=b$, it is convenient to consider $\kappa^*$ as the length of the vector $(\kappa_1, \kappa_2, \kappa_3, 1/\sqrt n)$ in Euclidean space. For each of the coordinates $\kappa_i$, $i=1,2,3$, we have:
$$
\left|\left(\kappa_{i,n} \big| P_A, B=b^+\right) - \left(\kappa_{i,n} \big| P_A, B=b\right)\right | < \frac{40c_3}{\sqrt n},
$$
hence:
\begin{equation}\label{eq:kappa_bound}
\left(\kappa^*_{n} \big| P_A, B=b^+\right) < \left(\kappa^*_{n} \big| P_A, B=b\right) + \frac{120c_3}{\sqrt n}.
\end{equation}
Introducing \eqref{eq:eta_bound2} and \eqref{eq:kappa_bound} into the definition of $\psi^{(1)}_n$, we obtain directly:
$$
\left(\psi^{(1)}_{n} \big| P_A, B=b^+\right) > \left(\psi^{(1)}_{n} \big| P_A, B=b\right) + \frac{10 c_3}{\sqrt n} - \frac{3p}{s} \frac{120c_3}{\sqrt n} \geq - \frac{2c_3}{\sqrt n} + \frac{10 c_3}{n} - \frac{3p}{s} \frac{120c_3}{\sqrt n} > \frac{4c_3}{\sqrt n},
$$
where we again used the fact that $p$ is a sufficiently small constant w.r.t. $s$. We thus obtain:
$$
\left(\psi^{(1)}_{n} \big| P_A, B \in [b + \tfrac{20c_3s}{\sqrt n}, b + \tfrac{21c_3s}{\sqrt n}]\right) > \frac{4c_3}{\sqrt n},
$$
where by the definition of random variable $B$ as a sum of i.i.d.\ binary random variables and the choice of value $b$ in the direct vicinity of the expectation of $B$, the event $B \in [b + \frac{20c_3s}{\sqrt n}, b + \frac{21c_3s}{\sqrt n}]$ holds with constant probability. The case of $b'=b^{-}$ is handled analogously.
\end{proof}

\begin{lemma}\label{lem:pot_positive}
Let $\uu_{t_2}$ be an arbitrary starting configuration of the system such that $\max\{\psi^{(1)}_{t_2}, \psi^{(2)}_{t_2}\} = \psi^{(1)}_{t_2} \geq \frac{4c_3}{\sqrt n}$. Then, with constant probability, for some $t_3=t_2+O(n \log n)$, a configuration $\uu_{t_3}$ is reached such that $\delta_{t_3} > s/12$.
\end{lemma}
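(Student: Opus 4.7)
I would mirror the approach of Lemma~\ref{lem:pot_negative}, now driving $\psi^{(1)}$ upward rather than $|\psi^{(2)}|$ downward. Define the escape stopping time $T_e := \inf\{t \geq t_2 : \delta_t > s/12\}$. The crucial a priori observation, valid throughout the drift region $\delta \leq s/12$, is that
\[
\psi^{(1)} \leq \eta^* \leq \tfrac{3\delta}{2s} + \tfrac{1}{\sqrt n} \leq \tfrac{1}{8} + \tfrac{1}{\sqrt n},
\]
so the event $\psi^{(1)}_t \geq \tfrac{1}{4}$ forces escape. Hence it suffices to show that, with constant probability, $\psi^{(1)}$ attains $\tfrac{1}{4}$ within $O(n\log n)$ steps after $t_2$.

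\textbf{Submartingale from the multiplicative drift.} Within the drift region, the inequality $\E\Delta\psi^{(1)} \geq \rho\bigl(\psi^{(1)} - c_3/\sqrt n\bigr)$ with $\rho := ps/(24n)$, already derived above, shows that the shifted, discounted stopped process
\[
N_t := (1+\rho)^{-(t-t_2)}\bigl(\tilde\psi_t - c_3/\sqrt n\bigr), \qquad \tilde\psi_t := \psi^{(1)}_{\min(t, T_e)},
\]
is a submartingale with $N_{t_2} \geq 3c_3/\sqrt n$. Choosing $T = t_2 + \lceil 48\, n \log n/(ps)\rceil$ gives $(1+\rho)^{-(T-t_2)} \leq n^{-2}$, and optional stopping combined with the deterministic bound $|\tilde\psi_T - c_3/\sqrt n| \leq \tfrac{1}{8}$ yields $\Pr[T_e \leq T] \geq \Omega(1/\sqrt n)$, which is the strongest conclusion that the plain submartingale argument can deliver.

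\textbf{Main obstacle and resolution via noise.} The principal obstacle is strengthening the $\Omega(1/\sqrt n)$ bound into a constant. At the starting scale $\psi^{(1)}_{t_2} = \Theta(1/\sqrt n)$, the step-wise fluctuations $|\Delta\psi^{(1)}| = O(1/\sqrt n)$, inferred from the Lipschitz behaviour of $\eta^*$ and $\kappa^*$ near $\eta^* \sim 1/\sqrt n$ via Lemma~\ref{lem:taylor}, are of the same order as the process value itself, so the drift is too weak to survive a naive moment bound. I would harness the noise rather than fight it: over a single round of $n$ steps the accumulated variance of $\psi^{(1)}$ is $\Theta(p)$ (each of the $\Theta(pn)$ RPS-type transitions contributing $O(1/n)$ to the variance), so a Berry--Esseen-type argument gives that with constant probability $\psi^{(1)}_{t_2 + n} \geq \Omega(\sqrt{p})$, a value polynomially larger than $1/\sqrt n$.

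\textbf{Amplification.} Once $\psi^{(1)} \geq \Omega(\sqrt p)$, the per-round multiplicative drift $\Theta(ps)\,\psi^{(1)}$ dominates the per-round standard deviation of $\psi^{(1)}$, and each of the subsequent $O(\log n)$ amplification rounds multiplies $\psi^{(1)}$ by $1 + \Omega(ps)$ with failure probability $\exp\!\bigl(-\Omega\bigl(n\,(\psi^{(1)})^2\bigr)\bigr)$ that decays geometrically as $\psi^{(1)}$ grows. A union bound across these rounds preserves the $\Omega(1)$ success probability, so $\psi^{(1)}$ reaches $\tfrac{1}{4}$, thereby forcing $\delta > s/12$, within $O(n\log n)$ steps of $t_2$.
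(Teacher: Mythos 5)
There is a genuine gap: your central claim that the step size is $|\Delta\psi^{(1)}| = O(1/\sqrt n)$ near $\eta^* \sim 1/\sqrt n$ is not correct, and both your pessimistic assessment of the submartingale argument and the Berry--Esseen remedy inherit this error. Near the centre of the triangle a single transition changes one $a_i$ by $+1/n$ and another $a_j$ by $-1/n$, so
$\Delta(\eta^2) = -\ln\!\bigl(1+\tfrac{1}{na_i}\bigr) - \ln\!\bigl(1-\tfrac{1}{na_j}\bigr) = \tfrac{1}{n}\bigl(\tfrac{1}{a_j}-\tfrac{1}{a_i}\bigr) + O(1/n^2)$,
and in the region $\delta\le s/12$ (so $a_i,a_j\ge s/4$) the leading term is $O(\delta/n) = O(\eta/n) = O(\eta^*/n)$ by the $\delta<s\eta$ relation; dividing by $\eta^{*'}+\eta^* \ge \eta^*$ yields $|\Delta\eta^*| = O(1/n)$, not $O(1/\sqrt n)$. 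A parallel computation gives $|\Delta\kappa^*| = O(1/n)$, so $|\Delta\psi^{(1)}| = O(1/n)$. This is exactly the cancellation the paper exploits to assert $|X_t|\le 9/n$ in its Doob submartingale. With an $O(1/n)$ increment, Azuma for $\Theta(n)$ steps and a target shift of $\Theta(A)$ from a starting value $A=\Theta(c_3/\sqrt n)$ gives failure probability $\exp[-\Theta(A^2 psn)]$, and since $c_3 = \Theta(1/(ps))$ the exponent is already $\Theta(1/(ps))\gg 1$: the first doubling succeeds with probability close to $1$, not $\Omega(1/\sqrt n)$. The paper then iterates this doubling claim $O(\log n)$ times with failure probabilities decaying like $\exp[-\Theta(4^{i})]$, summing to a constant, yielding the stated $O(n\log n)$ bound with constant probability. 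Your escape criterion $\psi^{(1)} \ge 1/4 \Rightarrow \delta > s/12$ and the intention to amplify geometrically are sound, but the intermediate ``resolution via noise'' step is both unnecessary and incorrect: with the true $O(1/n)$ increments, a single round contributes variance $\Theta(p/n)$, not $\Theta(p)$, so the conclusion $\psi^{(1)}_{t_2+n}\ge \Omega(\sqrt p)$ does not follow from a Berry--Esseen argument.
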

\begin{proof}
We subsequently consider only the process $\psi^{(1)}_{t}$. We start by showing the following claim.

\emph{Claim.} Suppose $\psi^{(1)}_{0} = A \geq \frac{4c_3}{\sqrt n}$. Then, with probability at least $1-\exp{[-A^2 psn / 46656]}$, for some time step $t \leq \frac{72n}{ps}$ the process reaches a value $\psi^{(1)}_{t} \geq 2A$, or $\delta_t > s/12$.

\emph{Proof (of claim).}
Consider the Doob submartingale $Y_t = \sum_{\tau=0}^{t-1} X_t$ with increments $(X_t)$ given as:
$$
X_{t} = \begin{cases}
\Delta\psi^{(1)}_t - \frac{ps A}{48 n}, &\text{if $Y_t > \frac{A}{2}$ or $\delta_t > s/12$},\\
0, & \text{otherwise}.
\end{cases}
$$
Noting that $|X_t| \leq \frac{9}{n}$, an application of the Azuma inequality for submartingales (cf. e.g.~\cite{concen.pdf}[Thm.\ 16]) to $(Y_T)$ with $T = \frac{72n}{ps}$ gives:
$$
\Pr [Y_T \leq - \frac{A}{2}] \leq \exp{[-A^2 psn / 46656]},
$$
Moreover, assuming the barrier $\delta_t>s/12$ was not reached, we have:
$$
\left(\psi^{(1)}_T \big| Y_T > - \frac{A}{2}\right) = \psi^{(1)}_0 + \frac{ps A}{48 n} T + Y_T > A + \frac{ps A}{48 n} \frac{72n}{ps} - \frac{A}{2} = 2A,
$$
which completes the proof of the claim.

We now prove the lemma by iteratively applying the claim over successive intervals of time $(\tau_0, \tau_1,\ldots)$, such that $\tau_0 = t_2$ and $\tau_{i+1}$ is the first time step not before $\tau_i$ such that $\psi^{(1)}_{\tau_{i+1}} \geq 2\psi^{(1)}_{\tau_{i}}$ or $\delta_{\tau_{i+1}} \geq s/12$. By the claim, we have:
$$
\Pr\left[\tau_{i+1} - \tau_{i} \leq \frac{72n}{ps}\right] \geq 1-\exp{[-(\psi^{(1)}_{\tau_{i}})^2 psn / 46656]}.
$$
Noting that $c_3 > 48/(ps)$ by definition, and that before the barrier $\delta > s/12$ is reached, we have $\psi^{(1)}_{\tau_{i}} \geq \frac{4c_3}{\sqrt n} 2^i \geq \frac{192}{ps\sqrt n} 2^i$, we obtain:
$$
\Pr\left[\tau_{i+1} - \tau_{i} \leq \frac{72n}{ps}\right] > 1-\exp{[-4^{i+1}]}.
$$
and further:
$$
\Pr\left[\tau_{i+1} \leq \frac{72n}{ps}(i+1)\right] > \prod_{j=0}^{i}\left(1-\exp{[-4^{j+1}]}\right) > 1 - \sum_{j=0}^{i} \exp{[-4^{j+1}]} > 0.98.
$$
In particular, putting $i=\log_2 n$, $\Pr\left[\tau_{i} \leq \frac{72n\log_2 n}{ps}\right] > 0.98$. Since for this value of $i$, we must have $\delta_{\tau_{i}} \geq s/12$ (since otherwise we would have $\psi^{(1)}_{\tau_{i}} = \omega(1)$, which is impossible), the claim of the lemma follows.
\end{proof}

\paragraph{Phase with $\delta > s/12$.}

The second phase of convergence corresponds to configurations of the system which are sufficiently far from the center point $(a_1,a_2,a_3) = (s/3, s/3, s/3)$. Formally, we analyze a variant of potential $\phi$ (with an additive corrective factor proportional to $\kappa^2$) to show that, starting from a configuration with $\delta > s/12$, we will eliminate one of the three populations $a_1, a_2, a_3$ in $O(n \log n)$ sequential steps with constant probability, without approaching the center point too closely (a value of $\delta = \Omega(1)$ will be maintained throughout).

For this part of the analysis, we define the considered potential as:
\begin{equation}\label{eq:psi}
\psi = \eta^2 - \frac {4 p} {s^2} \kappa^2 = \ln \frac{s^3}{27} - \phi - \frac {4 p} {s^2} \kappa^2,
\end{equation}
for any configuration $\uu$ with $a_{\min} > 0$.

We have directly from \eqref{eq:phi_dot_xzero} and \eqref{eq:kappa_dot}:
\begin{align}
\dot \psi &= -\dot \phi - \frac {4 p} {s^2} 2 \kappa \dot \kappa \geq \frac{p}{s}\bigg(\frac 12 \delta^2 - \kappa \delta + 4 \kappa^2 - 144 \frac p s \kappa\delta\bigg) =\nonumber\\
& = \frac{p}{s}\bigg(\frac 14 \delta^2 + \left(\frac{\delta}{2} - 2\kappa\right)^2 + \left(1 - 144 \frac p s\right)\kappa\delta\bigg)
\geq \frac 14 \frac{p}{s} \delta^2, \label{eq:psi_dot}
\end{align}
where in the last transformation we took into account that $p \leq s/144$.

For the sake of technical precision in formulating the subsequent lemmas, we also consider the stochastic process $\psi^*_t$, given as $\psi^*_t = \psi(\uu_t)$ for any $t < t_d$, where $t_d$ is defined as the first time in the evolution of the system such that a configuration with $a_{\min,t_d} < c_6/n$ is reached, where $c_6 = 313600/s$ is a constant depending only on $s$. For all $t \geq t_d$, we define $\psi^*_{t} := \psi^*_{t-1} + \frac{1}{n}$.

\begin{lemma}\label{lem:psi_average}
In any configuration $\uu_t$ with $\delta \geq s/20$ we have: $\E \Delta\psi_t^* \ge  \frac 18 \frac{p\delta^2}{sn}  > \frac{ps}{3600n}$.
\end{lemma}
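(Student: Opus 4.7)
The lemma has two conclusions: $\E \Delta \psi_t^* \geq \tfrac{1}{8}\tfrac{p\delta^2}{sn}$ and $\tfrac{1}{8}\tfrac{p\delta^2}{sn} > \tfrac{ps}{3600n}$. The second is immediate from $\delta \geq s/20$, which gives $\tfrac{1}{8}\tfrac{p\delta^2}{sn} \geq \tfrac{ps}{3200n}$. The plan is to establish the first by splitting on whether $t < t_d$ or $t \geq t_d$.

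The case $t \geq t_d$ is trivial: by the definition of $\psi_t^*$, $\Delta \psi_t^* = 1/n$ deterministically, and one checks $1/n \geq \tfrac{p\delta^2}{8sn}$ since $\delta \leq s\sqrt{3}$ and $p$ is chosen to be a small constant (so $p\delta^2/s \leq 3ps \leq 8$). In the case $t < t_d$, we have $\psi_t^* = \psi_t$ and $a_{\min,t} \geq c_6/n$. I would expand $\Delta \psi$ by a second-order Taylor expansion of $\psi$ between $\uu$ and $\uu + \Delta\uu$: the expected first-order contribution equals $\dot\psi/n$, which by~\eqref{eq:psi_dot} is at least $\tfrac{p\delta^2}{4sn}$, so it suffices to show that the expected second-order remainder has absolute value at most $\tfrac{p\delta^2}{8sn}$.

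To control this remainder, I would decompose $\psi = \mathrm{const} - \phi - \tfrac{4p}{s^2}\kappa^2$ and treat the two pieces separately. A direct application of the $\tfrac{2}{n^2}\max D_f$ bound of Lemma~\ref{lem:taylor} fails here: the second derivatives of $\ln a_i$ scale as $a_i^{-2}$ and those of $\kappa_i = sa_i^{++}/a_i - a_i$ scale like $a_i^{-3}$, so near the boundary these are $\Theta(n^2)$ or worse and the worst-case remainder is $\Omega(1)$. The key structural observation is that every rule (1)--(4) of $P_o$ (with $x=0$) that modifies species $A_i$ fires with probability proportional to $a_i$, since it must involve an agent of $A_i$ as initiator or receiver. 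Per-rule, a singular second derivative $\sim a_i^{-2}$ at the perturbed coordinate, multiplied by step-size-squared $(1/n)^2$ and by rule rate $\sim a_i$, contributes only $O(1/(n^2 a_i))$; summing over species and rules gives total second-order correction $O(1/(n^2 a_{\min})) = O(1/(c_6 n))$. With the constant $c_6 = 313600/s$ chosen large enough to absorb all implicit $s,p$-dependent constants, this is at most $\tfrac{p\delta^2}{8sn}$ under $\delta \geq s/20$, completing the proof.

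I expect the principal book-keeping obstacle to be the $\kappa^2$ piece. Expanding $\Delta \kappa_i^2 = 2\kappa_i \Delta \kappa_i + (\Delta\kappa_i)^2$, the derivation splits into a linearized part whose expectation combines with $\dot \phi$ to reconstruct $\dot \psi/n$ (so the inequality~\eqref{eq:psi_dot} can be invoked cleanly), and a genuinely variance-type part that must be bounded using $|\kappa_i| \leq 1$ together with the concentration-proportional rule-rate argument above. Each of the rule schemas of Fig.~\ref{fig:rules} needs to be matched explicitly against the derivatives it can excite --- a tedious but finite case analysis, with the worst case coming from the $\partial \kappa_i/\partial a_i^+ \propto a_i^{++}/a_i^2$ derivatives, where a single factor of $a_i$ from the rule rate compensates only one of the two singular factors in the denominator.
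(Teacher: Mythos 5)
Your proposal is correct and lands on the same central idea as the paper: the second-order Taylor error near the boundary of the simplex, which a direct appeal to Lemma~\ref{lem:taylor} cannot control, is tamed by observing that every rule of $P_o$ that touches species $A_i$ fires with probability $O(a_i)$, which cancels one of the singular $a_i^{-1}$ factors and yields a remainder of order $O\!\left(\tfrac{1}{n^2 a_{\min}}\right)$ that the margin in~\eqref{eq:psi_dot} absorbs once $a_{\min}\ge c_6/n$. This is exactly the modified application of Lemma~\ref{lem:taylor} that the paper performs for the $\kappa_i^2$ pieces. The one place your route genuinely differs is the treatment of $\phi$: you would reuse the rule-rate trick to control its second-order remainder, whereas the paper bypasses Taylor entirely there by writing $\E\Delta\phi = \sum_i \E\ln\!\left(1 + \tfrac{\Delta a_i}{a_i}\right)$ and applying the elementary one-sided inequality $\ln(1+b)\le b$, which gives $\E\Delta\phi\le\dot\phi/n$ exactly, with no remainder term at all and hence no case analysis over rules. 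Both work; the paper's is shorter and avoids bounding second derivatives of $\ln a_i$ altogether, while yours is more uniform in how it handles the two pieces. One small correction: the second derivatives of $\kappa_i$ scale as $O(a_i^{-2})$ rather than $a_i^{-3}$ once you use $a_i^{++}\le a_i$; this does not affect your conclusion, since a single $a_i$ from the rule rate is all that is needed (and the paper's explicit bound $D_{\kappa_i^2}\le 8s^2/a_i^2$ confirms the $a_i^{-2}$ scaling for $\kappa_i^2$).
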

\begin{proof}
We have:
\begin{equation}\label{eq:delta_psi}
\E \Delta \psi = - \E \Delta \phi - \frac {4 p} {s^2} \E \Delta(\kappa^2)
\end{equation}
Following the definition of $\phi$ in Eq.~\eqref{eq:potential}, we have by linearity of expectation:
$$
\E \Delta\phi = \E \left(\sum \ln(a_i+\Delta a_i)- \sum \ln a_i \right)= \sum \E \ln \left(1+\frac{\Delta a_i}{a_i}\right).
$$
Next, using the bound $\ln(1+b) \leq b$ which holds for $b >-1$, we have:
$$
\sum \E \ln (1+\frac{\Delta a_i}{a_i}) \le \sum \frac{\E \Delta a_i}{a_i} = \sum \frac{\dot{a_i}/n}{a_i} = \dot{\phi}/n
$$
from which it follows that:
\begin{equation}\label{eq:delta_phi}
\E \Delta \phi \leq \frac{1}{n}\dot \phi
\end{equation}
To analyze $\E \Delta(\kappa^2)$, we apply a variant of Lemma~\ref{lem:taylor}. A direct application of the lemma is not sufficient due to the singularity related to the $a_i^{-1}$ term in the definition of $\kappa_i$; however, this effect is compensated when we take into account that any change of the value of $\kappa_i^2$ occurs in the considered protocol with probability at most proportional to $a_i$. For the specific case of $\kappa_i^2$, for fixed $i=1,2,3$, we consider $\kappa_i^2 : \R^2 \to \R$ as a function of the restricted configuration $\bar \uu = (a_i^+, a_i^{++})$, and we rewrite expression~\eqref{eq:taylor_approx} as:
\begin{align*}
\E(\Delta \kappa_i^2) &= \sum_{\uu' = (a_i^{+'}, a_i^{++'}) \neq \uu} \left(\nabla f(\uu) \cdot (\uu'-\uu) + R_2(\uu,\uu')\right)\Pr(\uu'|\uu) \leq \nonumber\\
&\leq \frac{\dot {\kappa_i^2}}{n} +  \frac{1}{n^2} \max_{\|\uu^* - \uu\|_{\infty} \leq 1/n} D_{\kappa_i^2}(\uu^*) \sum_{\uu' = (a_i^{+'}, a_i^{++'}) \neq \uu} \Pr(\uu'|\uu) \leq  \frac{\dot {\kappa_i^2}}{n} +  \frac{1}{n^2} \max_{\|\uu^* - \uu\|_{\infty} \leq 1/n} D_{\kappa_i^2}(\uu^*) a_i.
\end{align*}

A straightforward computation from the definition of function $\kappa_i$ shows that:
$$
\max_{\|\uu^* - \uu\|_{\infty} \leq 1/n} D_{\kappa_i^2}(\uu^*) \leq \frac{8s^2}{a_i^2}.
$$
It follows that
$$
\E(\Delta \kappa_i^2) \leq \frac{\dot {\kappa_i^2}}{n} +  \frac{a_i}{n^2} \frac{8s^2}{a_i^2} =  \frac1n\left(\dot {\kappa_i^2} + \frac{8s^2}{a_i n}\right),
$$
and so:
\begin{equation}\label{eq:delta_kappa2}
\E (\Delta \kappa^2) \leq \frac{1}{n}\left(\dot {\kappa^2} + \frac{24s^2}{a_{\min} n}\right).
\end{equation}
Introducing~\eqref{eq:delta_phi} and~\eqref{eq:delta_kappa2} into~\eqref{eq:delta_psi}, we obtain:
\begin{align*}
\E \Delta \psi &= - \E \Delta \phi - \frac {4 p} {s^2} \E \Delta(\kappa^2) \geq
-\frac{1}{n} \dot \phi -  \frac{1}{n} \frac {4 p} {s^2}\left( \dot{\kappa^2} + \frac{24s^2}{a_{\min} n} \right)= \frac1n \dot \psi-\frac{96p}{n^2a_{\min}} \geq\\
&\geq \frac{p}{4sn}\left(\delta^2 - \frac{392s}{a_{\min} n}\right) \geq \frac{p}{8sn} \delta^2,
\end{align*}
where in the second-to-last transformation we used~\eqref{eq:psi_dot}, and in the last transformation we used the relation $ \frac{392s}{a_{\min} n} \leq \frac{\delta^2}{2}$ which holds when $\delta \geq s/20$ and $a_{\min} \geq c_6/n$.

The claim thus follows when $\psi^*_{t} = \psi_{t}$ and $\psi^*_{t+1} = \psi_{t+1}$, i.e., for $t < t_d$. For larger values of $t$, the claim follows trivially from the definition of $\psi^*_{t}$.
\end{proof}

The above Lemma is used to show that, starting from any configuration with $\delta > s/12$, we quickly reach a configuration in which some species has a constant number of agents.

\begin{lemma}\label{lem:psi_concentration}
If $\delta_t \geq s/20$, we have:
\begin{itemize}
\item[(i)] $|\Delta\psi^*_t| \leq c_7$,
\item[(ii)] $\Var(\Delta\psi^*_t) \leq \frac{c_8}{n}$.
\end{itemize}
where $c_7>0$ and $c_8>0$ are constants depending only on $s$. Moreover, in any configuration $\uu$ with $a_{\min} \geq 2/n$, we have:
\begin{itemize}
\item[(iii)] $|\Delta\psi(\uu)| \leq \frac{c_7}{n a_{\min}}$.
\item[(iv)] $\Var(\Delta\psi(\uu)) \leq \frac{c_8}{n^2 a_{\min}}$.
\end{itemize}
\end{lemma}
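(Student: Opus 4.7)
\emph{Plan.} I would first establish the per-configuration bounds (iii) and (iv) directly from the definition of $\psi$, and then deduce (i) and (ii) from the truncation definition of $\psi^*$.

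Starting from the decomposition $\Delta\psi = -\Delta\phi - \tfrac{4p}{s^2}\,\Delta(\kappa^2)$ given by~\eqref{eq:psi}, I would bound the two pieces separately. Since each step of the protocol changes the state of at most one agent, every coordinate of $\uu$ changes by at most $1/n$ in absolute value. Using $|\ln(1+y)|\leq 2|y|$ valid for $|y|\leq 1/2$ (which holds whenever $a_{\min}\geq 2/n$), each $|\Delta \ln a_j|\leq 2/(na_j)$, so $|\Delta\phi|\leq 6/(na_{\min})$. A case analysis over the protocol rules gives $|\Delta\kappa_i|=O(1/(na_{\min}))$, the dominant contribution being of the form $\pm s/(na_i)$ coming from transitions that alter the $a_i^{++}/a_i$ ratio (e.g.\ rule (1) acting within species $i$, or rule (2) consuming an $A_{i}^{++}$ receiver). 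Combined with $|\kappa_i|\leq 1$, this yields $|\Delta(\kappa_i^2)| = |2\kappa_i\Delta\kappa_i+(\Delta\kappa_i)^2|=O(1/(na_{\min}))$. Summing over $i$ and inserting into the decomposition of $\Delta\psi$ establishes (iii).

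For the variance bound (iv), the key observation is that any transition contributing $\Theta(1/(na_i))$ either to $|\Delta\ln a_i|$ or to $|\Delta\kappa_i|$ is triggered only when the scheduler selects an interaction involving at least one agent of species $i$, an event whose total probability is $O(a_i)$. Hence the contribution of such transitions to $\E[(\Delta\ln a_i)^2]$, respectively to $\E[(\Delta\kappa_i)^2]$, is at most $O(a_i)\cdot O(1/(na_i)^2)=O(1/(n^2 a_i))$. Summing the three contributions one gets $\Var(\Delta\phi)\leq \E[(\Delta\phi)^2] = O(1/(n^2 a_{\min}))$ and $\E[(\Delta\kappa_i)^2]=O(1/(n^2 a_i))$. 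Using $|\kappa_i|\leq 1$, $\E[(\Delta(\kappa_i^2))^2]\leq 8\kappa_i^2(\Delta\kappa_i)^2+2(\Delta\kappa_i)^4 = O((\Delta\kappa_i)^2)$, whence $\Var(\Delta(\kappa^2))=O(1/(n^2 a_{\min}))$. Combining through the coefficients of~\eqref{eq:psi} gives (iv).

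For (i) and (ii), I would split according to the definition of $\psi^*$. For $t < t_d-1$ we have $a_{\min,t}\geq c_6/n$ and $a_{\min,t+1}\geq c_6/n$, so $\Delta\psi^*_t=\Delta\psi(\uu_t)$ and (iii), (iv) evaluated at $a_{\min}=\Theta(1/n)$ give $|\Delta\psi^*_t|=O(c_7/c_6)$ and $\Var(\Delta\psi^*_t)=O(c_8/(c_6 n))$, both depending only on $s$ (and the fixed $p$). For $t\geq t_d-1$, $\Delta\psi^*_t=1/n$ is deterministic, so both bounds are trivially satisfied. The main technical hurdle is the variance computation: one must carefully pair each rule contributing a per-step change of order $1/(na_i)$ with its activation probability $O(a_i)$, so that the $1/a_i$ factors partially cancel and the summed variance scales as $1/(n^2 a_{\min})$ rather than the much weaker $1/(n^2 a_{\min}^2)$ that would result from naively multiplying the worst-case change by the probability of any transition.
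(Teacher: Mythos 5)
Your proof is correct and follows the same strategy as the paper's. The crucial observation in both is identical: any transition that produces a per-step change of magnitude $\Theta(1/(na_i))$ in $\psi$ can only be triggered with probability $O(a_i)$ (because, with $x=0$ as assumed throughout this subsection, every rule that alters the count or aggressiveness of species $i$ requires the scheduler to select at least one agent currently in species $i$), so the two $a_i$ factors partially cancel in the variance, yielding $O(1/(n^2a_{\min}))$ instead of the naive $O(1/(n^2a_{\min}^2))$. The paper phrases this as a single uniform bound $|\psi(\uu')-\psi(\uu)|\le c_7/(n\min_{i\in S}a_i)$ over the set $S$ of changing species together with $\Pr(\uu'\mid\uu)\le\min_{i\in S}a_i$, while you decompose $\Delta\psi$ into $\Delta\phi$ and $\Delta(\kappa^2)$ and bound each coordinate-wise; these are equivalent up to renaming constants. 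Your reduction of (i), (ii) to (iii), (iv) via the truncation $\psi^*$ at $a_{\min}=c_6/n$ also matches the paper, including the deterministic $\Delta\psi^*_t=1/n$ branch for $t\ge t_d-1$.
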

\begin{proof}
We first consider the case of a configuration with $a_{\min} \geq 2/n$. Using the definition of $\psi$ (and within it, of $\phi$ and $\kappa$). Consider any transition from a configuration $\uu$ to a subsequent configuration $\uu'$ and let $S \subseteq \{1,2,3\}$ be defined as the set of indices of configurations changing between $\uu$ and $\uu'$ ($S = \{ i : a_i^+(\uu) \neq a_i^+(\uu') \vee a_i^{++}(\uu) \neq a_i^{++}(\uu')\}$). We verify that there exists an absolute constant $c_7 > 0$ such that:
$$
|\psi (\uu') - \psi(\uu)| \leq \frac{c_7}{n \min_{i\in S} a_i} \leq c_7.
$$
Moreover, by the definition of the protocol a transition from $\uu$ to $\uu'$ occurs with probability $\Pr(\uu'|\uu) \leq \min_{i\in S} a_i$. Since there is only a constant number of possible successor configurations $\uu_{t+1}$ for $\uu_t$ (loosely bounding, not more than $3^6$), it follows that:
$$
\Pr \left[|\Delta \psi(\uu)| > \frac{1}{b}\right] \begin{cases}
< \frac{3^6 c_7 b}{n}, &\text{for any $b>0$}\\
= 0, & \text{for $b > \frac{c_7}{n a_{\min}}$}.
\end{cases}
$$
The bounds on the variance of $\Var(\Delta \psi(\uu))$ and that of $\Delta \psi^*_t = \Delta \psi(\uu)$ (for $t < t_d$) with  $a_{\min,t} \geq (c+6+1)/n$ follow directly. The analysis of $\Delta \psi^*_t$ when $a_{\min,t} = c_6/n$ and $t < t_d$ is performed analogously, noting that if the succeeding configuration $\uu' = \uu_{t+1}$ is such that $a_{\min}(\uu') < c_6/n $, then $\Delta \psi^*_t = \frac{1}{n}$. Finally, for $t\geq t_d$, the result holds trivially by the definition of $\psi^*_t$.
\end{proof}

\begin{lemma}\label{lem:pot_psi_first}
Let $\uu_{t_3}$ be an arbitrary starting configuration of the system such that $\delta_{t_3} > s/12$. Then, with probability $1-O(1/n)$, for some $t_4=t_3+O(n \log n)$, a configuration $\uu_{t_4}$ is reached such that $a_{\min,t_4} = c_6$.
\end{lemma}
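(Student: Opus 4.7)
The plan is an additive drift analysis of the process $\psi^*_t$ (introduced in Equation~\eqref{eq:psi} above), combining the positive-drift bound of Lemma~\ref{lem:psi_average} with the variance bound of Lemma~\ref{lem:psi_concentration}(iii)--(iv). The goal is to show that within $T = c_9 n \log n$ steps, for a sufficiently large constant $c_9$, the potential $\psi^*_t$ grows by $\Omega(\log n)$, which exceeds the a priori cap $\psi^*_t \leq \eta_t^2 \leq 3\ln\!\bigl(sn/(3c_6)\bigr) + O(1) = O(\log n)$ that holds whenever $a_{\min,t} \geq c_6/n$; hence by contradiction the system must have reached $a_{\min} \leq c_6/n$ at some point in the window.

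The key ``gap'' observation I would set up is that $\psi^*_t \geq 9/1600$ implies $\delta_t \geq s/20$, and hence the positive drift $\E \Delta \psi^*_t \geq ps/(3600n)$ of Lemma~\ref{lem:psi_average} applies. Indeed, the contrapositive is immediate: if $\delta_t < s/20$, the bound $\eta \leq 3\delta/(2s)$ from Section~\ref{sec:32} (valid whenever $\delta \leq s/12$) gives $\eta_t^2 < 9/1600$, so $\psi_t \leq \eta_t^2 < 9/1600$. On the other hand, $\delta_{t_3} > s/12$ forces $\eta^2_{t_3} \geq 1/144$ (using $\eta \geq \delta/s$ applied at the boundary of the region considered in Section~\ref{sec:32}), and for $p$ a sufficiently small absolute constant the $-(4p/s^2)\kappa^2$ correction in $\psi$ still leaves a strictly positive gap, yielding $\psi^*_{t_3} \geq c_{12}$ for a constant $c_{12} > 9/1600$.

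The core concentration step is then a Freedman-type martingale inequality applied to the stopped process $\psi^*_{t \wedge \tau_1}$, where $\tau_1 = \min\{t > t_3 : a_{\min,t} \leq c_6/n\}$. On the event $\{\tau_1 > t_3 + T\}$ the bound of Lemma~\ref{lem:psi_concentration}(iv) holds throughout the window, giving cumulative conditional variance $\sum_t \Var(\Delta \psi^*_t \mid \mathcal F_t) \leq c_8 T /(c_6 n) = O(\log n)$, which by Freedman's inequality yields martingale fluctuations of $O(\log n)$ with probability $1-O(1/n)$. Combined with the gap observation, this means that with probability $1-O(1/n)$ the drift of at least $ps/(3600n)$ is in force throughout the $T$ steps, so $\psi^*_{t_3+T} \geq c_{12} + (c_9 ps /3600)\log n - O(\log n)$; for $c_9$ large enough this violates the $O(\log n)$ cap on $\psi^*$ under $a_{\min} \geq c_6/n$, forcing $\tau_1 \leq t_3 + T$.

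The main obstacle—and where I expect the argument to require the most care—is ruling out excursions in which $\psi^*_t$ drops below the threshold $9/1600$ (and thereby loses the positive-drift guarantee) with probability more than $O(1/n)$ over the window. To handle this, I would exploit the fact that the continuous-time derivative $\dot\psi \geq p\delta^2/(4s) \geq 0$ from Equation~\eqref{eq:psi_dot} is non-negative \emph{everywhere}, so the uniform bound $\E \Delta \psi^*_t \geq -O(1/(n^2 a_{\min})) = -O(1/n)$ holds regardless of $\delta_t$ (coming entirely from the second-order Taylor residual analyzed in Lemma~\ref{lem:psi_average}). This lets me replace $\psi^*_t$ with the shifted process $\tilde{\psi}_t = \psi^*_t + O((t-t_3)/n)$, which becomes a genuine submartingale, and apply Freedman once more to conclude that a downward excursion of size $c_{12} - 9/1600 = \Omega(1)$ from $\tilde{\psi}_{t_3}$ over the $T$-step window has probability at most $\exp(-\Omega(1)/O(\log n /n)) = n^{-\omega(1)}$, which is more than enough.
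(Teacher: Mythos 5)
Your overall strategy is the same as the paper's: use the additive drift of $\psi^*_t$ from Lemma~\ref{lem:psi_average} together with the variance control of Lemma~\ref{lem:psi_concentration}, and argue that $\psi^*$ must grow past the $O(\log n)$ cap that $\psi$ obeys while $a_{\min}\geq c_6/n$, forcing a crossing. The "gap observation" ($\psi\geq 9/1600 \Rightarrow \delta\geq s/20$, and $\psi_{t_3}>1/150>9/1600$), and the main Freedman/Azuma-McDiarmid step for the $\Omega(\log n)$ growth, are essentially the paper's Doob-submartingale argument. So the proposal is not a genuinely different route.

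The problem is your handling of the "main obstacle" — showing the process does not leave the positive-drift region $\psi^*_t\geq 9/1600$ before the window closes. As written, this step does not go through, for two related reasons. First, you use the uniform bound $\E\Delta\psi^*_t\geq -O(1/(n^2 a_{\min}))$ and then relax it to $-O(1/n)$ (valid only because $a_{\min}\geq c_6/n$). The resulting linear shift $\tilde\psi_t=\psi^*_t+O((t-t_3)/n)$ accumulates to $\Theta(\log n)$ over the $T=\Theta(n\log n)$ window, which swamps the $\Omega(1)$ gap $c_{12}-9/1600$ you are trying to preserve — so a bound on the downward fluctuation of $\tilde\psi$ says nothing about $\psi^*$ staying above the threshold. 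Second, the exponent $\exp\bigl(-\Omega(1)/O(\log n/n)\bigr)$ presupposes a cumulative conditional variance of $O(\log n/n)$, but your own variance estimate for the same window was $\sum_t \Var(\Delta\psi^*_t)\leq c_8 T/(c_6 n)=O(\log n)$; with that bound Freedman gives only $\exp(-\Omega(1/\log n))$, which is useless. The two bounds you invoke in the same paragraph are off from each other by a factor of $n$.

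The fix, which is exactly what the paper does implicitly, is to observe that in the regime where a downward crossing could occur ($\psi$ in a narrow band just below $1/150$), one necessarily has $\delta<s/12$ and hence $a_{\min}>s/4=\Omega(1)$; in that region both the drift deficit ($-O(1/(n^2 a_{\min}))=-O(1/n^2)$, so the shift over $T$ steps is $O(\log n/n)=o(1)$) and the step size ($|\Delta\psi_t|\leq 4c_7/(sn)$, variance $O(1/n^2)$) are much tighter than your global $O(1/n)$ figures. With these refined bounds a single Azuma application over a horizon as long as $n^2/\ln n$ shows the corridor is not breached, with failure probability $O(1/n)$ — that is the paper's separate first-stage estimate $\Pr[\exists_{t<n^2/\ln n}\ \psi_t<1/170]=O(1/n)$, which you would need to reconstruct rather than absorb into a single shifted-process Freedman bound.
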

\begin{proof}
W.l.o.g. assume that $t_3 = 0$. First we remark that, by the relation between $\eta$ and $\delta$ for $\delta \leq s/12$, a process starting with $\delta_0 > s/12$ satisfies:
$$
\psi_0 = \psi_0 = \eta^2_0 - \frac{4p}{s^2}\kappa_0^2 > \frac{(s/12)^2}{s^2} - \frac{12p}{s^2} > \frac{1}{150}.
$$
Moreover, for any configuration $\uu'$ with $\delta(\uu') \leq s/20$ we have:
$$
\psi(\uu') = \eta^2(\uu') - \frac{4p}{s^2}\kappa^2 \leq  \frac{(\frac{3}{2}\frac{s}{20})^2}{s^2} < \frac{1}{170}.
$$
Thus, initially $\psi_0 > \frac{1}{150}$ and as long as for all time steps $t$ we have $\psi_t \geq \frac{1}{170}$, the barrier condition $\delta_t \geq s/20$ has not been violated. Moreover, for $\psi_t \in [\frac{1}{170}, \frac{1}{150}]$, we have by Lemma~\ref{lem:psi_average} that $\E \Delta \psi_t \geq 0$. Moreover, by Lemma~\ref{lem:psi_concentration}~(iii) and the fact that $\delta_t < \frac{1}{144}$ which implies $a_{\min,t} > s/4$, we have that $|\Delta \psi_t| \leq \frac{4c_7}{sn}$.

It follows from a standard application of Azuma's inequality for martingales (resembling the analysis of the hitting time of the random walk with step size $O(1/n)$, from one endpoint of a path of length $\Theta(1)$ to the other) that:
$$
\Pr\left[\exists_{t < n^2 / \ln n}\ \psi_t < \frac{1}{170}\right] = O(1/n),
$$
hence also throughout the first $n^2 / \ln n$ steps of the process we have $\delta > s/18$, with probability $1 - O(1/n)$.

We are now ready to analyze the subsequent stages of the process, designing a Doob submartingale $Y_t =\sum_{\tau=0}^{t-1} X_t$ with time increments $(X_t)$ defined as:
$$
X_{t} = \begin{cases}
\Delta\psi^*_t - \frac{ps}{3600}, &\text{if $\psi_\tau > 1/170$ and $a_{\min,\tau} \geq c_6$ for all $\tau \leq t$}\\
0, & \text{otherwise}.
\end{cases}
$$
Using Lemma~\ref{lem:psi_concentration}~(i)~and~(ii) and applying the Azuma-McDiarmid inequality\footnote{If our objective in the proof of the lemma were to show a bound on $t_4$ which holds with constant probability (which would be sufficient for our purposes later on), rather than a w.h.p. bound, then this specific step of the proof can also be performed using Markov's inequality. In any case, we would need to make use of the bounded variance of $\psi^*_t$ in the proof of the next Lemma.} in the bounded variance version (cf. e.g.~\cite{concen.pdf}[Thm.\ 18]) to $Y_t$ for $t_c = c^3 n \ln n$, for some sufficiently large constant $c > 0$ depending only on $s$, we obtain:
$$
\Pr[Y_t \leq -c^2 \ln n] \leq \exp{\left[-\frac{c^4\ln^2 n}{2t\frac{c_8}{n} + \frac{2c_7c^2 \ln n}{3}}\right]} = \exp{\left[-\frac{c\ln n}{2 c_8 + \frac{2}{3} \frac{c_7}{c}}\right]} = 1 - O(1/n).
$$
If the event $X_{t} = \Delta\psi^*_t - \frac{ps}{3600}$ were to hold for all $t < t_c$ with $c=\frac{2 \cdot 3600}{ps}$ and if $Y_{t_c} > - c^2 \ln n$, then we would have $\psi^*_{t_c} = \psi^*_0 + Y_{t_c} + \frac{ps}{3600} t_c \geq 0 - c^2 \ln n + 3 c^2 \ln n = 2c^2 \ln n$, which would mean that $\psi^*_{t_c} \neq \psi_{t_c}$, since $\psi \leq 3\ln n + O(1)$ by definition. If $\psi^*_{t_c} \neq \psi_{t_c}$, then $t_4 < t_c$, and the proof is complete. (Indeed, to reach a configuration with $a_{\min} < c_6/n$, the protocol has to pass through a configuration with $a_{\min} = c_6/n$, since the size of each population changes by at most $1$ in each transition.) Otherwise, we must have that at least one of the following events holds: $Y_{t_c} \leq - c^2\ln n$, or $\psi_\tau \leq 1/170$ for some $\tau < t_c$, or $a_{\min,\tau} < c_6$ for some $\tau < t_c$. We have established that each of the first two of these events holds with probability $O(1/n)$, whereas if the latter event holds, then $t_4 < t_c$. Thus, $t_4 < t_c$ holds with probability $1-O(1/n)$ by a union bound.
\end{proof}

\begin{lemma}\label{lem:pot_psi_second}
Let $\uu_{t_4}$ be a starting configuration of the system such that $a_{\min,t_4} = c_6/n$. Then, with constant probability, for some $t_5=t_4+O(n \log n)$, a configuration $\uu_{t_5}$ is reached such that $a_{j,t_5} \leq c_6/n$ and $a_{j+1,t_5} > s/40$, for some $j \in \{1,2,3\}$.
\end{lemma}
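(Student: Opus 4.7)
Let $j_0 \in \{1,2,3\}$ be an index realizing $a_{j_0,t_4} = c_6/n$. If $a_{j_0+1,t_4} > s/40$ already, set $t_5 := t_4$ and $j := j_0$; both conditions of the lemma hold immediately. Otherwise $a_{j_0+1,t_4} \leq s/40$, and by mass conservation $a_{j_0-1,t_4} \geq s - s/40 - c_6/n > 19s/20$, so the configuration sits close to the $A_{j_0-1}$ vertex of the simplex. Writing $u_t := a_{j_0,t}$ and $v_t := a_{j_0+1,t}$, the plan is to show that, with constant probability, there exists $t_5 \in (t_4, t_4 + O(n\log n)]$ at which $v_{t_5} \leq c_6/n$ and simultaneously $u_{t_5} \leq 39s/40$; then $a_{j_0-1,t_5} = s - u_{t_5} - v_{t_5} > s/40$, and the lemma holds with $j := j_0 + 1$.

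The guiding heuristic is the continuous RPS-type dynamics: plugging $x = 0$ into~\eqref{eq:a_dot} and neglecting the aggressive/lazy distinction, one obtains $\dot u \approx pu(a_{j_0-1} - v)$ and $\dot v \approx pv(u - a_{j_0-1})$, so while $a_{j_0-1}$ remains close to $s$ we have $\dot u/u \approx ps \approx -\dot v/v$. Consequently $\ln(u/v)$ drifts up at rate $\approx 2ps$ per round, whereas $\ln(uv)$ drifts only at rate $p(u - v)$, whose integral over the relevant $O(\log n)$-round window is $O(1)$. Thus the product $u_t v_t$ is an approximate invariant: by the time $v_t$ descends from $\leq s/40$ to $c_6/n$---which takes $\Theta(n\log n)$ steps---$u_t$ has risen from $c_6/n$ to only $\Theta(s/40)$, and the predator $A_{j_0-1}$ retains at least a constant fraction of the mass.

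The rigorous argument follows the martingale template of Lemmas~\ref{lem:pot_psi_first} and~\ref{lem:psi_concentration}. Define $t_5$ as the first step $t \geq t_4$ at which $v_t \leq c_6/n$. Step (a): using the bound $\mathbb{E}\Delta \lno v_t \leq -c_9/n$ derived from~\eqref{eq:a_dot} whenever $u_t \leq 3a_{j_0-1}/4$ and $a_{j_0-1} \geq s/4$, a Doob submartingale on $(-\lno v_t - c_9 t/n)$, stopped at $t_5$ together with barriers enforcing the two conditions just stated, combined with the bounded-variance step estimates of Lemma~\ref{lem:psi_concentration}, yields $t_5 - t_4 = O(n\log n)$ with constant probability. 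Step (b): for the potential $\zeta_t := \lno(u_t) + \lno(v_t)$, a direct computation from~\eqref{eq:a_dot} combined with a Lemma~\ref{lem:taylor}-style Taylor expansion gives $|\mathbb{E}\Delta \zeta_t| = O(p(u_t + v_t)/n)$, whose time-integrated drift over $t_5 - t_4$ steps is $O(1)$; an Azuma--McDiarmid bound on the associated Doob martingale (with per-step variance $O(1/(n \min(u_t,v_t)))$) then confines $|\zeta_{t_5} - \zeta_{t_4}|$ within $O(1)$ with constant probability. Combining the two steps, $u_{t_5} = e^{\zeta_{t_5}}/v_{t_5} \leq O(u_{t_4} v_{t_4}) \cdot n / c_6 = O(s/40) < 39s/40$.

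The principal technical obstacle is the singularity of $\lno u_t$ and $\lno v_t$ when either argument is $\Theta(1/n)$, where a single interaction can shift the logarithm by $\Omega(1)$ and threaten martingale concentration. We handle this exactly as was done for $\kappa^*$ in Section~\ref{sec:32}: work with shifted potentials $\ln(u_t + 1/n)$ and $\ln(v_t + 1/n)$, and observe that, since the probability of any step involving a given species is proportional to its concentration, the per-step variance of the shifted logarithm is bounded by $O(1/(n\min(u,v)))$---tight enough for the bounded-variance Azuma--McDiarmid step. A secondary issue is that $u_t$ could momentarily dip below $c_6/n$ en route; in such an event we relabel $j_0$ and restart the case split at the top of the proof, noting that by the same dominance argument ($a_{j_0-1}$ still close to $s$) we land in Case~A with the new minimum species and succeed trivially.
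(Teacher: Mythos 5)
Your overall strategy — track $u=a_{j_0}$ and $v=a_{j_0+1}$ directly and argue that $\zeta = \lno u + \lno v$ stays bounded while a supermartingale drives $v$ down to $c_6/n$ — is genuinely different from the paper's route, which works with the global potential $\psi$ (Eq.~\eqref{eq:psi}), proves that $\psi^*$ accumulates $(1+10^{-4}ps)\ln n$ of drift while $a_2$ grows geometrically, and derives a contradiction with $\psi_{t_e}=\ln n+O(1)$ if $a_3$ has not fallen below $c_6/n$. The problem is that Step~(b), the key step of your proposal, rests on a drift estimate that is not correct.

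You claim $|\E\Delta\zeta_t| = O(p(u_t+v_t)/n)$, and your heuristic justification ``neglects the aggressive/lazy distinction.'' But that distinction does not become negligible near the corner $a_{j_0-1}\approx s$. Writing $\beta_i := a_i^{++}/a_i$, Eq.~\eqref{eq:a_dot} with $x=0$ gives
\[
\frac{\dot a_i}{a_i} = p\,a_{i-1}(1+\beta_i) - p\,a_{i+1}(1+\beta_{i+1}),
\]
and summing for $i=j_0$ and $i=j_0+1$ yields
\[
\dot\zeta \;=\; p\,a_{j_0-1}\bigl(\beta_{j_0} - \beta_{j_0-1}\bigr) \;+\; p\,(u-v)\bigl(1+\beta_{j_0+1}\bigr).
\]
The second term is indeed $O(p(u+v))$, but the first is not: it is controlled by the aggression levels of the \emph{small} and the \emph{large} species, not by the masses of the two small species. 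Near the relevant equilibrium ($a_i^{++}\approx a_i^2/s$, so $\beta_i\approx a_i/s$), one has $\beta_{j_0}-\beta_{j_0-1}\approx(u-a_{j_0-1})/s\approx -1$, so the first term is $\Theta(ps)$ per round. Over the $\Theta(\log n)$ rounds it takes $v$ to descend to $c_6/n$, this accumulates to $\Theta(\log n)$, not $O(1)$: the ``approximate invariant'' $uv$ in fact decays by a polynomial factor. (It is precisely this destabilizing effect of rule~(1)/rule~(4) on $\phi=\ln(a_1 a_2 a_3)=\zeta+\ln a_{j_0-1}$ that makes the oscillator crash into the sides of the triangle in the first place, cf.~Eq.~\eqref{eq:phi_dot_xzero}.) Moreover, the term can be made \emph{positive} for an adversarial initial $\kappa$ (e.g.\ $\kappa_{j_0}\approx 1$, $\kappa_{j_0-1}\approx -1$), and your argument provides no device to control $\kappa$ or to bound the initial transient. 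This is exactly the hole that the paper's potential $\psi = \ln\tfrac{s^3}{27}-\phi-\tfrac{4p}{s^2}\kappa^2$ is engineered to close: the $\kappa^2$ correction, together with Lemma~\ref{lem:psi_average}, is what produces a cleanly signed drift, and your $\zeta$ has no analogous correction.

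A secondary, smaller flaw: the ``relabel $j_0$ and restart, landing in Case~A'' handling of the event $u_t<c_6/n$ does not work as stated. If $u_t<c_6/n$ while $u_t\le v_t\le s/40$, the minimum species is still $j_0$ and its successor $a_{j_0+1}=v_t\le s/40$, so you are \emph{not} in Case~A. (If instead $v_t<u_t$, the new minimum is $j_0+1$ and its successor $a_{j_0-1}>s/40$, so Case~A does apply there — but the other subcase is left unhandled.) In fact, since the shifted potentials $\ln(\cdot+1/n)$ are bounded below, a careful martingale argument could absorb this case without any relabelling; the restart device is both incorrect and unnecessary.

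Because the accumulated $\kappa$-induced drift of $\zeta$ is generically negative, the numerical \emph{conclusion} of Step~(b) ($u_{t_5}\ll 39s/40$) may still be true — the argument is wrong in a direction that happens to be harmless. But the proposal as written contains a drift estimate that is off by a $\Theta(\log n)$ factor, provides no control over the transient where $\kappa$ could push the drift in the wrong direction, and would therefore need to be reworked substantially, likely by reintroducing a $\kappa$-dependent correction along the lines of the paper's $\psi$.
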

\begin{proof}
W.l.o.g. assume that $\arg\min_{i=1,2,3} a_{i,t_4} =2$. If $a_{3,t_4} > s/40$, then the claim follows immediately, putting $t_5 = t_4$ and $j = 2$. Otherwise, we will show that with constant probability, the system will evolve so that $a_2$ will increase over time until within $O(n \log n)$ steps we will have a time step $t_5$ with $j =3$ (i.e., $a_{3, t_5} \leq c_6/n$ and $a_{1, t_5} > s/40$).

In the considered case, w.l.o.g. assume $t_4 =0$. Next, let $T = c n \ln n$ for a sufficiently large constant $c$; we choose as $c:= 2 \log_2 \frac{1}{0.005ps}$ for convenience in later analysis. Intuitively, in view of Lemmas~\ref{lem:psi_average} and~\ref{lem:psi_concentration}, the potential $\psi^*_T$ will be further increased in the next steps: the random variable $(\psi^*_T - \psi^*_0 | \uu_0)$ 
has an expected value of $\Theta(T/n) = \Theta(\log n)$, with a standard deviation of $\Theta(\sqrt{T/n}) = \Theta(\sqrt{\log n})$.

By an application of the Azuma-McDiarmid inequality for martingales with bounded variance similar to that in the proof of Lemma~\ref{lem:pot_psi_first}, we obtain the following result:\footnote{Such an analysis can also be performed using Chebyshev's inequality, obtaining a slightly weaker expression in the probability bound.}
\begin{equation}
\label{eq:stable_potential_blah}
\Pr \left[\forall_{t \leq T}\ \psi^*_t \geq \psi^*_0 + \frac{pst}{3600n} - \eps\frac{T}{n} \right] = 1 - n^{-\Omega(1)}, \quad\text{for any constant $\eps > 0$}.
\end{equation}

Observe that since $a_{2,0} = c_6/n = O(1/n)$, we have $\psi^*_0 \geq \ln n - O(1)$. Taking this into account, for our purposes, a slightly weaker and simpler form of expression~\eqref{eq:stable_potential_blah} will be more convenient:
\begin{equation}
\label{eq:stable_potential}
\Pr \left[\forall_{t \in [0.5 n \log_2 n, T]}\ \psi^*_t \geq (1 + 10^{-4}ps)\ln n \right] = 1 - o(1).
\end{equation}
The proof of the lemma is completed by a more fine-grained analysis of the considered protocol. In the initial configuration $t_4 = 0$, we have $a_{2,0} = c_6/n$ (there are exactly $c_6$ agents in state $A_2$), and since $t_4 \neq t_5$, we have $a_{3,0} < s/40$. Consequently, $a_{1,0} = s - s/40 - O(1/n) > 0.9s$. Informally, since the prey of $A_2$ (i.e., $A_1$) is more than twice more numerous than its predator (i.e., $A_3$), we should observe the increase in the size of population of $A_2$, regardless of the activities ($A_i^+$ or $A_i^{++}$) of the agents in the population. We consider the evolution of the system, finishing at the earliest time $t_e$ when $a_2(t_e) > s/100$. The following relations are readily shown (apply e.g.\ Lemma~\ref{lem:ranges_2} with $i=2$ and $\uu=0$):
\begin{align}
\E (\Delta a_{2,t} | a_{3,t} < 0.05 s, t < t_e) &\geq \frac{0.05ps}{n} a_2 \label{eq:exp_delta_a2}\\
\E (\Delta a_{3,t} | a_{3,t} < 0.05 s, t < t_e) &\leq -\frac{0.05ps}{n} a_3 \label{eq:exp_delta_a3}.
\end{align}
From~\eqref{eq:exp_delta_a3}, taking into account that $|\Delta a_{i,t}| \leq \frac{1}{n}$ and $a_{3,0} < s/40$, an application of Azuma's inequality for martingales shows that:
\begin{equation}
\label{eq:athree_bounded}
\Pr [\forall_{t \leq \min\{t_e, T\}}\ a_{3,t} < 0.05 s ] = 1- o(1).
\end{equation}
Taking into account the above, by a straightforward geometric growth analysis (compare e.g.\ proof of Lemma~\ref{lem:pot_positive}), we obtain from~\eqref{eq:exp_delta_a2}:
\begin{equation}
\label{eq:te_upper}
\Pr [t_e < T] = 1-o(1).
\end{equation}
Moreover, since the speed of increase of $a_2$ is bounded (even in the absence of predators) by that of a standard push rumor spreading process (formally, $\E (\Delta a_{2,t}) \leq a_{2,t}$), we have (compare e.g.~\cite{rsexact}):
\begin{equation}
\label{eq:te_lower}
\Pr [t_e > 0.5 n \log_2 n] = 1-o(1).
\end{equation}
Now, we observe that with constant probability, the size of population $A_2$ does not decrease in the time interval $[0, t_e]$ below the value $a_{2,0} = c_6/n$, attained at the beginning of this interval:
\begin{equation}\label{eq:csixbarrier}
\Pr \left[\forall_{t\in [0,t_e]}\ a_{2,t}\geq c_6/n\right] = \Omega(1).
\end{equation}

Indeed, with constant probability the value $a_{2,t}$ is initially non-decreasing: with constant probability, in the first $O(n)$ rounds each of the $c_6 = O(1)$ agents from $A_2$ will be triggered by the scheduler $O(1)$ times in total, and each interaction involving an agent from $A_2$ will have this agent as the initiator, and an agent from the largest of the three populations, $A_1$, as the receiver (the prey). Thus, with constant probability, the number of agents in population $A_2$ is increased to an arbitrary large constant (e.g., $1000c_6$). After this, we use the geometric growth property~\eqref{eq:exp_delta_a2} to show that $a_{2,t}$ reaches the barrier $a_{2,t} > s/100$ (at time $t_e$) before the event $a_{2,t} < c_6/n$ occurs (cf.~e.g.~proof of Lemma~\ref{lem:pot_positive}, or standard analysis of variants of rumor-spreading processes in their initial phase~\cite{rspushpull}).

When the event from bound~\eqref{eq:csixbarrier} holds, at least one of the following events must also hold:
\begin{itemize}
\item[(A)] $a_{\min,t} \geq c_6/n$, for all $t\leq t_e$,
\item[(B)] or there exists a time step $t < t_e$ such that $a_{1,t} \leq c_6/n$,
\item[(C)] or there exists a time step $t < t_e$ such that $a_{3,t} \leq c_6/n$.
\end{itemize}
To complete the proof, we will show that each of the events (A) and (B) holds with probability $o(1)$. Indeed, then in view of~\eqref{eq:csixbarrier}, event (C) will necessarily hold with probability $\Omega(1)$. This means that, with probability $\Omega(1)$, there exists a time step $t < t_e$ such that $a_{3,t} < c_6/n$ and $a_{2,t}< s/100$ (since $t < t_e$), and so also $a_{1,t} > s - s/100 - c_6/n > 0.98 s > s/40$; thus, the claim of the lemma will hold with $t_5 = t$ and $j=3$.

To show that event (B) holds with probability $o(1)$, notice that $a_{2,t} < s/100$ by definition of $t_e$, and moreover $a_{3,t} < 0.05s $ with probability $1-o(1)$, hence the event $a_{1,t} < s - s/100 - 0.05s = 0.94s$ holds with probability $o(1)$.

To show that event (A) holds with probability $o(1)$, notice that, substituting in~\eqref{eq:stable_potential} $t = t_e$, by a union bound over~\eqref{eq:stable_potential},~\eqref{eq:te_upper} and \eqref{eq:te_lower} we obtain:
$$
\Pr \left[\psi^*_{t_e} \geq (1 + 10^{-4}ps)\ln n \right] = 1 - o(1).
$$
This means that, with probability $1-o(1)$, we have  $\psi^*_{t_e} \neq \psi_{t_e}$ or $\psi_{t_e} \geq (1 + 10^{-4}ps)\ln n$. In the first case, event (A) cannot hold. In the second case, observe that $a_{2, t_e} = s/100 + O(1/n)$ by definition of $t_e$, so $a_{1,t_e} > s - s/100 - O(1/n) - c_6/n > 0.98 s$, and it follows that $\psi_{t_e} = \sum_{i=1}^3 \ln \frac{1}{a_{i, t_e}} + O(1) = \ln n +O(1)$. Since the condition  $\psi_{t_e} \geq (1 + 10^{-4}ps)\ln n$ is not fulfilled, event (A) can only hold with probability $o(1)$.
\end{proof}

\begin{lemma}\label{lem:eliminationoffirstspecies}
Let $\uu_{t_5}$ be a starting configuration of the system such that $a_{j,t_5} \leq c_6/n$ and $a_{j+1,t_5} > s/40$, for some $j \in \{1,2,3\}$. Then, with constant probability, for some $t_6=t_5+O(n)$, a configuration $\uu_{t_6}$ is reached such that $a_{\min,t_6} = 0$.
\end{lemma}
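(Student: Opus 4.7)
The plan is to exploit that $N_t := \many A_{j,t}$ is bounded by the constant $c_6 = c_6(s)$ at time $t_5$, so that on the time scale of $O(n)$ steps, the process $N_t$ behaves as a small-scale birth--death random walk on $\{0,1,\dots,O(c_6)\}$. I will argue that within a window $[t_5, t_5+Cn]$ for a suitable constant $C=C(s,p)$, $N_t$ reaches $0$ with positive constant probability, which directly gives $a_{\min,t_6}=0$.

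Because $P_o$ is initiator-preserving and $x=0$, $N_t$ changes only via firings of rule~(3) or~(4): it decreases by one when an initiator in $A_{j+1}$ attacks a receiver in $A_j$, and increases by one when an initiator in $A_j$ attacks a receiver in $A_{j+2}$. Translating the rule probabilities into per-step rates and using $a_{j+1,t_5}>s/40$ together with the trivial bound $a_{j+2,t_5}\leq s$, one obtains
\[
\Pr[\Delta N_t=-1\mid\uu_t]\geq p\,a_{j+1,t}\,a_{j,t}\geq\tfrac{ps}{40}\cdot\tfrac{N_t}{n},\quad \Pr[\Delta N_t=+1\mid\uu_t]\leq 2p\,a_{j+2,t}\,a_{j,t}.
\]
In particular, the total per-step probability of a state change is $\Theta(N_t/n)$, and conditional on a change, it is a decrease with probability at least $\rho:=a_{j+1}/(a_{j+1}+2a_{j+2})\geq 1/81$.

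The only concentration whose evolution I need to control is $a_{j+1,t}$: since $|\dot a_{j+1}|\leq O(p)$ per round and the per-step variance of $a_{j+1,t}$ is $O(1/n^2)$, a Doob--Azuma argument shows that by choosing $C\leq s/(320p)$, with probability $1-o(1)$ we have $a_{j+1,t}>s/80$ for all $t\in[t_5,t_5+Cn]$. On this event the rate bounds above persist (with $\rho$ replaced by a slightly weaker constant $\rho'=\Omega(1)$). Conditioning on this good event, let $\tau_1<\tau_2<\cdots$ be the successive times at which $N_t$ changes. Each increment $N_{\tau_{i+1}}-N_{\tau_i}$ is, independently of the past given the configuration at $\tau_i$, a decrease with probability at least $\rho'$. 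Hence the probability that the first $c_6$ increments are all decreases is at least $(\rho')^{c_6}$, a positive constant depending only on $s$; on that event $N_{\tau_{c_6}}\leq N_{t_5}-c_6\leq 0$, so $N_t$ is forced to hit $0$ at or before $\tau_{c_6}$. A routine first-moment calculation on the total number of state changes in $Cn$ steps (expectation $\Theta(Cc_6)$ while $N_t\in[1,c_6]$) ensures that, for $C$ a sufficiently large multiple of $c_6$, at least $c_6$ changes occur within $Cn$ steps with constant probability.

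The main tension is the choice of $C$: it must be large enough (a constant multiple of $c_6=313600/s$) to yield the required number of state changes of $N_t$, yet small enough ($C\leq s/(320p)$) to keep the drift of $a_{j+1}$ from pushing it below the $s/80$ threshold. Since $p$ is chosen sufficiently small relative to $s$ in the definition of $P_o$ (e.g.\ $p\leq s^2/10^{12}$), the admissible interval for $C$ is non-empty. Consequently, both the implicit ``constant probability'' and the implicit constant in $t_6-t_5=O(n)$ depend on $s,p$ but are independent of $n$, exactly as the lemma asserts.
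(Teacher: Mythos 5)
The paper takes a genuinely different, one-shot route. It looks at the $n$ pairs chosen by the scheduler over the next $n$ steps, conditions on the pattern in which each of the (at most $c_6$) agents of $A_j$ is the receiver in exactly one pair whose other endpoint is a distinct $A_{j+1}$ agent that itself appears in no other pair, observes that this pattern has probability $\Omega(1)$ (a random-multigraph/isolated-edge calculation), and then notes that under this pattern all agents of $A_j$ are eliminated simultaneously with probability at least $p^{c_6}=\Omega(1)$, since the $c_6$ rule firings are independent. No control of $a_{j+1}$ over an extended window is needed.

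Your sequential birth--death argument contains a real gap. The expected number of changes of $N_t$ in $Cn$ steps is \emph{not} $\Theta(Cc_6)$: at level $k$, the per-step probability that $N_t$ changes is $\Theta(psk/n)$ (an $A_j$ agent must be scheduled together with an $A_{j\pm1}$ partner and rule (3) or (4) must fire), carrying a factor $ps$ that you drop. Worse, conditional on the event you need --- the first $N_{t_5}$ changes being all decreases --- the process $N_t$ marches down through levels $c_6, c_6-1,\dots,1$, spending $\Theta(n/(psk))$ steps at level $k$, so the total descent takes $\Theta\bigl(n H_{c_6}/(ps)\bigr)=\Theta\bigl(n\log(c_6)/(ps)\bigr)$ steps by the harmonic sum. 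This required budget scales as $1/p$, exactly like your upper constraint $C\leq s/(320p)$, so shrinking $p$ cannot help: the ratio of required to admissible $C$ is $\Theta\bigl(\log(c_6)/s^2\bigr)\geq 320\ln(313600)>4000$, a fixed constant-factor violation independent of $p$ and $n$. Concretely, by the time the birth--death walk would finish its descent, $a_{j+1}$ has long since dropped far below $s/80$, and your uniform lower bound $\rho'=\Omega(1)$ no longer holds; the assertion that ``the admissible interval for $C$ is non-empty'' is therefore false. (One can in principle rescue the idea by letting $a_{j+1}$ decay and tracking $\rho$ dynamically along the trajectory --- it stays of order $c_6^{-O(1)}$ and still yields a positive constant --- but that is a different and considerably more delicate argument than the one you wrote. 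The paper's batch conditioning sidesteps the issue entirely.)
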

\begin{proof}
We consider the pairs of interacting agents chosen by the scheduler in precisely the next $n$ rounds after time $t_5$. Given that set $A_{j,t_5}$ has constant size, and set $A_{j+1,t_5}$ has linear size in $n$, it is straightforward to verify that with constant probability, the set of randomly chosen $n$ pairs of agents has all of the following properties:
\begin{itemize}
\item Each agent from $A_{j,t_5}$ belongs to exactly one pair picked by the scheduler, and is the receiver in this pair.
\item Each agent interacting in a pair with an agent from $A_{j,t_5}$ belongs to exactly one pair.
\item Each agent interacting in a pair with an agent from $A_{j,t_5}$ belongs to set $A_{j+1,t_5}$.
\end{itemize}
Conditioned on such a choice of interacting pairs by the scheduler, the protocol changes the state of all agents from set $A_{j,t_5}$ to state $j+1$ with probability at least $p^{|A_{j,t_5}|} \geq p^{c_6} = \Omega(1)$. State $j$ is then effectively eliminated.
\end{proof}

In the absence of species $j$, the interaction between species $j-1$ and $j+1$ collapses to a lazy predator-prey process, with transitions of the form $(A_{j-1}, A_{j+1})\to (A_{j-1}, A_{j-1})$ associated with a constant transition probability. A w.h.p.\ bound on the time of elimination of species $j+1$ follows immediately from the analysis of the push rumor spreading model, and we have the following Lemma.

\begin{lemma}\label{lem:eliminationofsecondspecies}
Let $\uu_{t_6}$ be a starting configuration of the system such that $a_{j,t_6} = 0$, for some $j \in \{1,2,3\}$. Then, with probability $1 - O(1/n)$, for some $t_7=t_6+O(n \log n)$, a configuration $\uu_{t_7}$ is reached such that for all $t \geq t_7$, $a_{j,t} = a_{j+1,t} = 0$ and $a_{j-1,t} = s$.
\qed
\end{lemma}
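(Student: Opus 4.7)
The plan is straightforward once one identifies that, conditioned on $\many A_{j,t_6}=0$ and $\many X=0$, the rules of $P_o$ restricted to the remaining populations reduce to a lazy asynchronous push rumor-spreading process between species $A_{j-1}$ and $A_{j+1}$. The $O(n\log n)$-step bound then follows from the standard analysis of that process in the sequential activation model.

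First I would establish the invariant $\many A_{j,t}=0$ for all $t\geq t_6$ by a direct inspection of the five rules. Rule (1) fires only between two agents of the same species, so with $\many A_j=0$ it cannot create an $A_j$ agent. Rule (2) changes only the flavor of the receiver while preserving its species, so it neither consumes nor creates $A_j$ agents. Rules (3) and (4) can change the receiver's species to $A_i$ only when the initiator is in $A_i$, so producing a fresh $A_j$ receiver would require an initiator in $A_j$, which is impossible. Consequently $\many A_j$ remains $0$ forever, and the only species-changing event available is a successful attack of an $A_{j-1}$ initiator on an $A_{j+1}$ receiver (rule (3) or (4) with $i=j-1$), which converts the receiver into $A_{j-1}$ with probability at least $p$. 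Every other activation only shuffles the $+/++$ flavors within a species.

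In the nondegenerate case $\many A_{j-1,t_6}\geq 1$ (the case $\many A_{j-1,t_6}=0$ is handled by reindexing $j\mapsto j+1$, under which the claim is immediate as no transitions can fire at all), the process $\many A_{j-1,t}$ is monotone non-decreasing, and a single step increments it with probability at least $\frac{p\,\many A_{j-1}\,\many A_{j+1}}{n(n-1)}$. This is exactly the per-step success probability of asynchronous push rumor spreading on the complete graph, up to a constant factor $p$. The standard two-phase argument (see e.g.~\cite{rspushpull,gnw}) then gives the bound: in the exponential-growth phase, each increment of $\many A_{j-1}$ from $k$ to $k+1$ requires a geometric number of steps with mean at most $\frac{n(n-1)}{p\,k(n-k)}$, so summing $k$ from $1$ to $n/2$ yields $O((n/p)\log n)$ steps, with the usual Chernoff bound on sums of geometric variables promoting expectation to a high-probability statement; the symmetric cleanup phase $(\many A_{j+1}\leq n/2)\to(\many A_{j+1}=0)$ needs another $O((n/p)\log n)$ steps with the same probability.

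Once $\many A_{j+1}=0$ is attained, no species-changing rule can fire (the roles of $A_{j-1}$ and $A_{j+1}$ in the surviving reaction are now both empty or impossible), so the configuration $\many A_{j-1}=s$, $\many A_j=\many A_{j+1}=0$ is preserved for all subsequent time steps, which is the statement of the lemma. The only substantive step is really the rule-by-rule invariance check of the first paragraph; after that the convergence bound is essentially quoted from the push-epidemic literature already cited in the paper.
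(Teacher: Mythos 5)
Your proof takes the same route as the paper's one-paragraph justification: once species $j$ is absent, the remaining dynamics collapses to a push rumor-spreading process from $A_{j-1}$ to $A_{j+1}$, and the $O(n\log n)$ bound is quoted from the standard analysis of that process. Your version is more explicit about the rule-by-rule invariance check and about the degenerate boundary case, both of which the paper leaves implicit.

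One small slip in the degenerate case: if $\many A_{j-1,t_6}=0$ (all remaining oscillator agents already in $A_{j+1}$), the reindexing that makes the lemma's conclusion $a_{j^*}=a_{j^*+1}=0,\ a_{j^*-1}=s$ hold literally is $j\mapsto j-1$, not $j\mapsto j+1$: taking $j^*=j-1$ gives $a_{j^*}=a_{j-1}=0$, $a_{j^*+1}=a_j=0$, $a_{j^*-1}=a_{j+1}=s$, whereas with $j^*=j+1$ the hypothesis $a_{j^*,t_6}=0$ already fails since $a_{j+1}=s$. This is a cosmetic off-by-one; the substance of the argument is sound and matches the paper's.
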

After a further $O(n \log n)$ steps after time $t_7$, the final configuration of all agents in the oscillator's population will be $a_{j-1,t}^{++} = s$.

\subsection{Operation of the Oscillator in the Presence of a Source}\label{sec:33}

In this section we prove properties of the oscillatory dynamics for the case $\many X>0$. It is possible to provide a detailed analysis of the limit trajectories of the dynamics in this case, as a function of the concentration of $x$. Here, for the sake of compactness we only show the minimal number of properties of the oscillator required for the proof of Theorem~\ref{thm:osc}. When the given configuration is such that $a_{\min}$ is sufficiently large, say $a_{\min} > 0.02s$, then both the subclaims of Theorem~\ref{thm:osc}(2) hold for the considered configuration. (The first subclaim hold directly; the second subclaim follows by a straightforward concentration analysis of the number of agents changing state in protocol $P_o$ over the next $0.01s n$ steps, since we will always have $a_{\min} \geq 0.01s$ during the considered time interval.) Otherwise, the considered configuration is close to one of the sides of the triangle. We will show that in the next $O(n \log n)$ steps, with high probability, the protocol will either reach a configuration with $a_{\min} > 0.02s$, or will visit successive areas around the triangle, as illustrated in Fig.~\ref{fig:triangle2}. The following Lemmas show that within each area, an exponential growth process occurs, which propagates the agent towards the next area.

\begin{figure}[ht]
\begin{centering}
\includegraphics[scale=1]{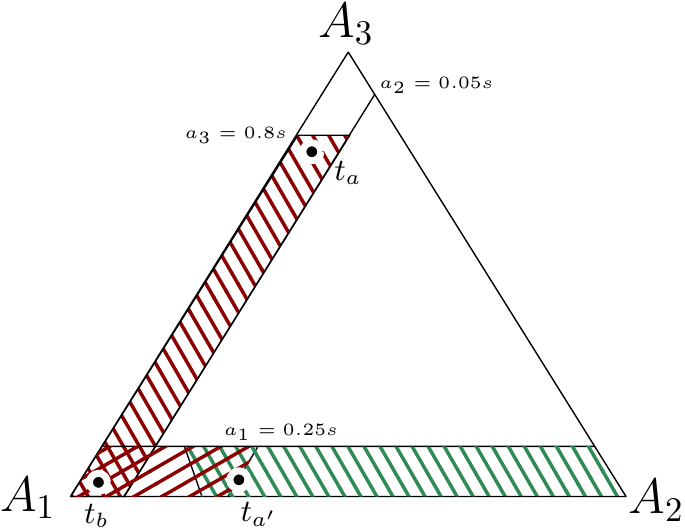}
\caption{Configurations considered in Lemmas~\ref{lem:ranges_1} and \ref{lem:ranges_1X} (descending dashed slope) and Lemmas \ref{lem:ranges_2} and~\ref{lem:ranges_2X} (ascending dashed slope) for $i=1$ (dark red) and $i=2$ (green). Unless the process leaves the shaded area towards the center o the triangle, the shape of the time trajectory is counter-clockwise around the triangle, w.h.p.}
\label{fig:triangle2}
\end{centering}
\end{figure}

\begin{lemma}\label{lem:ranges_1}
 If $a_{i-1}<0.8s$ and $a_{i+1}<0.05s$, then $\dot{a}_{i-1} \le  xs/3 - 0.05psa_{i-1}$.
\end{lemma}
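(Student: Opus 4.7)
The plan is to use formula \eqref{eq:a_dot} from the paper, reindexed to apply to $a_{i-1}$. Concretely, substituting $i \to i-1$ in \eqref{eq:a_dot} and using the convention that indices are read mod $3$ (so $a_{i-2} = a_{i+1}$), we obtain
\[
\dot{a}_{i-1} = x\left(\tfrac{s}{3} - a_{i-1}\right) + p\,a_{i+1}(a_{i-1} + a_{i-1}^{++}) - p\,a_{i-1}(a_{i} + a_{i}^{++}).
\]
So proving the Lemma amounts to showing that the non-source part (everything outside of $xs/3$) is at most $-0.05ps\,a_{i-1}$. Dropping the harmless term $-xa_{i-1} \leq 0$, this reduces to establishing
\[
a_{i+1}(a_{i-1} + a_{i-1}^{++}) - a_{i-1}(a_{i} + a_{i}^{++}) \leq -0.05 s\, a_{i-1}.
\]

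The key step is to bound the two parts separately using the hypotheses. For the positive term (the prey term for species $A_{i-1}$), I would use $a_{i-1}^{++} \leq a_{i-1}$ together with the hypothesis $a_{i+1} < 0.05 s$ to get
\[
a_{i+1}(a_{i-1} + a_{i-1}^{++}) \leq 2 a_{i+1} a_{i-1} \leq 0.1 s \, a_{i-1}.
\]
For the negative term (the predator term eating $A_{i-1}$), I would use the normalization $a_i = s - a_{i-1} - a_{i+1}$ together with the two hypotheses $a_{i-1} < 0.8 s$ and $a_{i+1} < 0.05 s$ to deduce $a_i \geq 0.15 s$, whence
\[
a_{i-1}(a_i + a_i^{++}) \geq a_{i-1} a_i \geq 0.15 s \, a_{i-1}.
\]
Subtracting gives a net contribution of at most $(0.1 - 0.15) s\, a_{i-1} = -0.05 s\, a_{i-1}$, as desired. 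Putting everything back together with the multiplicative $p$ and adding the $xs/3 - x a_{i-1}$ term (with $-xa_{i-1}$ discarded) finishes the proof.

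I do not anticipate any real obstacle here: the statement is a purely deterministic one-line bookkeeping inequality for the continuous drift, and the numerical constants ($0.8 s, 0.05 s, 0.05 ps$) are chosen precisely so that the elementary bounds above go through with room to spare. The only mild subtlety is remembering the index convention (in the statement $a_{i-1}$'s predator is $a_i$ and its prey is $a_{i-2} = a_{i+1}$), which is what makes the hypothesis $a_{i+1} < 0.05 s$ (small prey) combine with $a_{i-1} < 0.8 s$ (leaving room for the predator $a_i$) to force $a_i \geq 0.15 s$.
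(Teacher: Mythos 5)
Your proof is correct and takes essentially the same route as the paper: substitute $i\to i-1$ in \eqref{eq:a_dot}, drop $-xa_{i-1}$ and $pa_{i-1}a_i^{++}$ (both $\leq 0$), bound the prey term by $2pa_{i-1}a_{i+1}\leq 0.1ps\,a_{i-1}$, and use $a_i = s - a_{i-1} - a_{i+1} > 0.15s$ for the predator term. The paper simply factors the last two steps as $pa_{i-1}(2a_{i+1}-a_i)\leq pa_{i-1}(0.1s-0.15s)$; the content is identical.
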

\begin{proof}
From the assumptions we have that $a_{i}>0.15s$. Starting from \eqref{eq:a_dot} we obtain:
  \begin{align*}
    \dot{a}_{i-1} &= x(s/3 -a_{i-1})+ pa_{i+1}(a_{i-1}+a_{i-1}^{++}) - pa_{i-1}(a_i+a_i^{++})\\
   &\le xs/3+ 2pa_{i-1}a_{i+1}- pa_{i-1}a_i \\
   &\le xs/3+ pa_{i-1}(2a_{i+1}- a_i) \\
   &\le xs/3+ pa_{i-1}(0.1s-0.15s)=xs/3- 0.05psa_{i-1} \\
  \end{align*}
\end{proof}

From the above bound on expectation, the following Lemma follows directly by a standard concentration analysis. In what follows, we consider an execution in which the concentration $x$ is strictly positive and bounded by a sufficiently small absolute constant (i.e., $\many X$ is at most a given constant fraction of the entire population), with the required upper bounds on $x$ used in the proofs of lemmas given in their statements. This is a technical assumption, which allows us to simplify the proof structure. In particular, the assumption $\many X \leq c_{12} n$ can be omitted in the statement of Theorem~\ref{thm:osc}, and the claim of the theorem can even be proved for executions in which $\many X$ changes during the execution of the protocol, as long as the invariant $\many X > 0$ is preserved over the considered interval of time.

\begin{lemma}\label{lem:ranges_1X}
Let $\uu_{t_a}$ be a starting configuration of the system such that $a_{i-1,t_a}<0.75s$ and $a_{i+1,t_a}<0.05s$. Suppose $x < 10^{-3} ps$, starting from time $t_a$. Then, for some $t_b \in [t_a, t_a + c_9 n]$, where $c_9$ is a constant depending only on $p$ and $s$, with probability $1-e^{-n^{\Omega(1)})}$, the system reaches a configuration $\uu_{t_b}$ such that exactly one of the following two conditions is fulfilled:
\begin{itemize}
\item either $a_{\min,t_b} \geq 0.02s$,
\item or $a_{i+1, t_b} < 0.05s$ and $a_{i-1, t_b} < 0.02s$.
\end{itemize}
\end{lemma}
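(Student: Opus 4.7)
The plan is to track the evolution of $a_{i-1}$, which by Lemma~\ref{lem:ranges_1} satisfies $\dot a_{i-1} \le xs/3 - 0.05ps\,a_{i-1}$ whenever $a_{i-1} < 0.8s$ and $a_{i+1} < 0.05s$. Combined with the assumption $x < 10^{-3}ps$, this yields a strictly negative drift $\dot a_{i-1} \le -\eps_*$ for some constant $\eps_* = \Theta(ps^2) > 0$ as long as $a_{i-1} \ge 0.02s$. Geometrically the process is pushed toward the $a_{i-1} = 0$ side of the triangle at a rate independent of $n$, so $O(n)$ sequential steps should suffice to take $a_{i-1}$ from below $0.75s$ down to below $0.02s$.

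To upgrade this to a high-probability statement, I would work with the stopping time
\[
\tau = \inf\{t \ge t_a : a_{i-1,t} < 0.02s \;\text{or}\; a_{i+1,t} \ge 0.05s \;\text{or}\; a_{i-1,t} \ge 0.78s\}
\]
and set $t_b = \tau$. For $t < \tau$ the hypotheses of Lemma~\ref{lem:ranges_1} are satisfied, so $\E[\Delta a_{i-1,t} \mid \mathcal{F}_t] \le -\eps_*/n$, and the process $Z_t = a_{i-1,t\wedge\tau} + (\eps_*/n)(t\wedge\tau - t_a)$ is a supermartingale with increments of size $O(1/n)$. Applying Azuma-Hoeffding over a horizon $T = c_9 n$ with $c_9 = C/(ps)$ for a sufficiently large absolute constant $C$,
\[
\Pr[Z_{t_a+T} - Z_{t_a} \ge s/2] \le \exp(-\Omega(s^2 n / c_9)) = e^{-n^{\Omega(1)}}.
\]
Yet if $\tau > t_a + T$, we would deterministically have $Z_{t_a+T} - Z_{t_a} \ge -0.75s + \eps_* c_9 > s/2$ by the choice of $c_9$, a contradiction. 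Hence $\tau \le t_a + c_9 n$ with probability $1 - e^{-n^{\Omega(1)}}$.

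Because a single step changes at most one of the six populations by $\pm 1/n$, exactly one of the three events in the definition of $\tau$ is triggered at time $\tau$. If $a_{i-1,\tau} < 0.02s$, then simultaneously $a_{i+1,\tau} < 0.05s$ and the second outcome of the lemma holds. If $a_{i+1,\tau} \ge 0.05s$, single-step boundedness gives $a_{i+1,\tau} \le 0.05s + 1/n$, while $a_{i-1,\tau} \ge 0.02s$ and $a_{i-1,\tau} < 0.78s$; therefore $a_{i,\tau} = s - a_{i-1,\tau} - a_{i+1,\tau} > 0.16s$, so $a_{\min,\tau} \ge 0.02s$ and the first outcome holds. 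The two outcomes are automatically mutually exclusive since the second implies $a_{\min} \le a_{i-1} < 0.02s$. The main obstacle I anticipate is excluding the third triggering event $a_{i-1,\tau} \ge 0.78s$: starting from $a_{i-1,t_a} < 0.75s$ and using that the drift is uniformly negative throughout $[0.75s, 0.78s]$ (Lemma~\ref{lem:ranges_1} applies there because both $a_{i-1} < 0.8s$ and $a_{i+1} < 0.05s$), I would rule this out via a one-sided maximal Azuma inequality for supermartingales over the horizon $c_9 n$, giving failure probability $e^{-n^{\Omega(1)}}$. A union bound over all the low-probability failure events then yields the conclusion.
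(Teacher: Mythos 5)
Your proposal follows the same route as the paper: apply Lemma~\ref{lem:ranges_1} to get a strictly negative constant drift $\E\Delta a_{i-1} \le -\Theta(ps^2)/n$ in the region $a_{i-1}\in[0.02s,0.8s)$, $a_{i+1}<0.05s$, and then conclude by an Azuma-type concentration argument over a horizon of $O(n)$ steps. The paper dispatches the second step with a single phrase (``straightforward concentration analysis''), while you make the stopping-time, supermartingale, and case analysis fully explicit, including the auxiliary third trigger $a_{i-1}\ge 0.78s$, which you correctly identify as the edge case needing a separate maximal-inequality argument since Lemma~\ref{lem:ranges_1} only applies below $0.8s$; the paper glosses over this. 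That is a genuine improvement in rigor.

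One small slip in the write-up: it is not true that ``a single step changes at most one of the six populations by $\pm 1/n$.'' Each non-idle step changes the state of exactly one agent, hence two populations change (the old state down by one, the new state up by one), and consequently two of the three aggregated quantities $a_{i-1},a_i,a_{i+1}$ can move simultaneously. In particular rule (5), triggered by a source initiator, can take a receiver from species $A_{i-1}$ directly to species $A_{i+1}$, which could fire the first and second triggers in the same step (e.g.\ $a_{i-1}$ drops just below $0.02s$ while $a_{i+1}$ rises to exactly $0.05s$). In that corner case neither of the two lemma outcomes holds at $\tau$. This is harmless: one simply takes $t_b=\tau-1$ instead, where $a_{i-1,\tau-1}\ge 0.02s$, $a_{i+1,\tau-1}\in(0.05s-1/n,0.05s)$, and $a_{i,\tau-1}>0.9s$, so $a_{\min,\tau-1}\ge 0.02s$ and the first outcome holds. (Triggers 2 and 3 cannot co-occur, since they require both $a_{i-1}$ and $a_{i+1}$ to increase in a single step, which cannot happen.) With this one-line patch, your argument is complete and matches the paper's intent.
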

\begin{proof}
In the considered range of values of $a_{i-1}$, we have $a_{i-1,t_a} < 0.75s$ and $a_{i-1,t} \geq 0.02s$, for all $t$ until we leave the considered area at time $t_b$. Taking into account that $x < 10^{-3} ps$, it follows from Lemma~\ref{lem:ranges_1} that:
$$
\E\Delta{a}_{i-1} \le \frac1n (xs/3 - 0.05psa_{i-1}) \leq \frac1n (x < 10^{-3} ps^2/3 - 10^{-3}ps^2) < - 0.0005 \frac{ps^2}{n}.
$$
Taking into account that $|\Delta{a}_{i-1} \le \frac1n|$, it follows from a straightforward concentration analysis
 (cf. e.g. proof of Lemma~\ref{lem:pot_positive} for a typical analysis of this type of exponential growth process) that a boundary of the considered area (either $a_{i-1, t} < 0.02s$ or $a_{\min, t} > 0.02s$) must be reached within $O(n)$ steps with very high probability, as stated in the claim of the lemma.
\end{proof}

A similar analysis is performed for the next area.
\begin{lemma}\label{lem:ranges_2}
 If $a_{i+1}<0.25s$ and $a_{i-1}<0.05s$ then $\dot{a}_{i+1} \ge xs/12+0.6psa_{i+1}$ and $\dot{a}_{i-1} \le xs/3- 0.2psa_{i-1}$.
 \end{lemma}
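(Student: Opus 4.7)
The proof plan is to apply equation \eqref{eq:a_dot} (the exact formula for $\dot a_j$) separately at indices $j=i+1$ and $j=i-1$, and then term-by-term bound using the standing hypotheses $a_{i+1}<0.25s$, $a_{i-1}<0.05s$. These two bounds immediately give the crucial auxiliary bound $a_i = s - a_{i+1} - a_{i-1} > 0.7 s$, which lets us lower-bound the predator/prey contributions coming from species $i$.

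For the first inequality, I would expand
\begin{align*}
\dot{a}_{i+1} &= x(s/3 - a_{i+1}) + p a_i (a_{i+1}+a_{i+1}^{++}) - p a_{i+1}(a_{i-1}+a_{i-1}^{++}),
\end{align*}
recalling that indices are taken modulo $3$ so $a_{(i+1)+1}=a_{i-1}$. I would then bound the three summands as follows: the source term is at least $xs/12$ since $a_{i+1}<0.25s$; the positive $p$-term is at least $0.7ps\cdot a_{i+1}$ using $a_i>0.7s$ and $a_{i+1}^{++}\ge 0$; the negative $p$-term is at most $2p a_{i+1} a_{i-1}\le 0.1ps\cdot a_{i+1}$ using $a_{i-1}^{++}\le a_{i-1}<0.05s$. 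Summing, $\dot a_{i+1}\ge xs/12+0.6ps\,a_{i+1}$.

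For the second inequality, the analogous expansion is
\begin{align*}
\dot{a}_{i-1} &= x(s/3 - a_{i-1}) + p a_{i+1}(a_{i-1}+a_{i-1}^{++}) - p a_{i-1}(a_i+a_i^{++}),
\end{align*}
since $a_{(i-1)-1}=a_{i+1}$. Here the source term is trivially at most $xs/3$; the positive $p$-term is at most $2p a_{i+1} a_{i-1}\le 0.5ps\cdot a_{i-1}$ from $a_{i+1}<0.25s$; and the negative $p$-term is at least $pa_{i-1}a_i\ge 0.7ps\cdot a_{i-1}$ from $a_i>0.7s$. Combining gives $\dot a_{i-1}\le xs/3-0.2ps\,a_{i-1}$.

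There is no real obstacle: both inequalities are symbolic consequences of the deterministic ODE \eqref{eq:a_dot}, and the only mildly subtle point is remembering the cyclic index shifts and the fact that $a_j^{++}\in[0,a_j]$, which is what lets us sandwich the factors $(a_j+a_j^{++})$ between $a_j$ and $2a_j$ in the right direction for each of the four summands above. The roles of the ``$0.7$'' coming from $a_i>0.7s$ and of the ``$0.05$'' and ``$0.25$'' bounds on the small species are exactly balanced so that one gets a clean net coefficient of $+0.6ps$ in the first bound and $-0.2ps$ in the second, matching the statement.
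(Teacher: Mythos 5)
Your proof is correct and follows exactly the same approach as the paper: expand $\dot a_{i+1}$ and $\dot a_{i-1}$ from \eqref{eq:a_dot} using the cyclic index shifts, derive $a_i>0.7s$ from the two hypotheses, and bound each $p$-term using $0\le a_j^{++}\le a_j$. (Incidentally, the paper's displayed expansion of $\dot a_{i-1}$ has a typo, writing $a_{i+1}$ where $a_i$ should appear in the negative term; you wrote the correct version, and the paper's subsequent line also uses the correct $a_i$.)
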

\begin{proof}
From assumptions we have that $a_{i}>0.7s$. Starting from \eqref{eq:a_dot} we obtain:
  \begin{align*}
   \dot{a}_{i+1} &= x(s/3 -a_{i+1})+ pa_{i}(a_{i+1}+a_{i+1}^{++}) - pa_{i+1}(a_{i-1}+a_{i-1}^{++})\\
   &\ge x(s/3-a_{i+1}) + pa_{i+1}(a_{i}-2a_{i-1})\\
   &\ge x(s/3-0.25s) + pa_{i+1}(0.7s-2\cdot 0.05s) \\
   &= xs/12+0.6psa_{i+1}\\
   \\
   \dot{a}_{i-1} &= x(s/3 -a_{i-1})+ pa_{i+1}(a_{i-1}+a_{i-1}^{++}) - pa_{i-1}(a_{i+1}+a_{i+1}^{++})\\
   &\le xs/3+ pa_{i-1}(2a_{i+1}-a_{i}) \\
   &\le xs/3+ pa_{i-1}(2\cdot0.25s-0.7s)=xs/3- 0.2psa_{i-1}.
  \end{align*}
\end{proof}

Again, a concentration result follows directly.
\begin{lemma}\label{lem:ranges_2X}
Let $\uu_{t_b}$ be a starting configuration of the system such that $a_{i-1,t_b}<0.02s$ and $a_{i+1,t_b}<0.02s$. Suppose $x < 0.02 ps$, starting from time $t_b$. Then, for some $t_{a'} \in [t_b, t_b + c_{10}n \ln \frac{1}{\max\{1/n,a_{i+1,t_b}\}}] \subseteq [t_b, t_b + c_{10} n \ln n]$, where $c_{10}$ is a constant depending only on $p$ and $s$, with probability $1-O(1/n^3)$, the system reaches a configuration $\uu_{t_{a'}}$ such that exactly one of the following two conditions is fulfilled:
\begin{itemize}
\item either $a_{\min,t_{a'}} \geq 0.02s$,
\item or $a_{i-1, t_{a'}} < 0.05s$, $a_{i+1, t_{a'}} > 0.25s$, and (consequently) $a_{i, t_{a'}} < 0.75s$.
\end{itemize}
\end{lemma}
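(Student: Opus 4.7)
My plan is to control the two ``small'' concentrations $a_{i-1}$ and $a_{i+1}$ separately using the drift bounds of Lemma~\ref{lem:ranges_2}, while the dominant species $a_i$ passively takes up the remaining mass. Define the stopping time $\tau := \min(\tau_1,\tau_2,\tau_3)$, where $\tau_1$ is the first time $t\geq t_b$ with $a_{i-1,t}\geq 0.05s$, $\tau_2$ the first time with $a_{i+1,t}\geq 0.25s$, and $\tau_3$ the first time with $a_{\min,t}\geq 0.02s$. For all $t<\tau$ both hypotheses of Lemma~\ref{lem:ranges_2} are in force, so its drift estimates apply. Set $T:=c_{10}\,n\ln(1/\max\{1/n,a_{i+1,t_b}\})$. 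It suffices to establish with probability $1-O(1/n^3)$ the two events (i) $\tau_1>t_b+T$ (the ``bad'' barrier is never crossed) and (ii) $\min(\tau_2,\tau_3)\leq t_b+T$ (the process exits via a ``good'' side). Setting $t_{a'}:=\tau$, if $\tau=\tau_3$ then condition~1 of the lemma holds directly, while if $\tau=\tau_2$ then $a_{i-1,\tau}<0.05s$ together with $a_{i+1,\tau}\geq 0.25s$ forces $a_{i,\tau}<0.75s$, giving condition~2.

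For event~(i), the drift upper bound $\dot a_{i-1}\leq xs/3-0.2ps\,a_{i-1}$ has equilibrium $a^*:=5x/(3p)$, strictly less than $s/30<0.034s$ by the assumption $x<0.02ps$. The envelope ODE $\dot f=xs/3-0.2ps\,f$ with $f(t_b)=a_{i-1,t_b}<0.02s$ therefore stays bounded above by $\max(a_{i-1,t_b},a^*)<0.034s$ for all $t\geq t_b$. To lift this deterministic bound to the discrete process, I would form the Doob martingale $M_t:=a_{i-1,t}-a_{i-1,t_b}-\sum_{\sigma=t_b}^{t-1}\E[\Delta a_{i-1,\sigma}\mid\text{past}]$, whose increments are bounded by $2/n$, and apply Azuma's inequality in running-supremum form over $[t_b,t_b+T]$ with $T=O(n\log n)$. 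This yields $\max_t |M_t|=O(\log n/\sqrt n)=o(1)$ with probability $1-O(1/n^3)$, hence $a_{i-1,t}<0.05s$ throughout the window.

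For event~(ii), the key algebraic rewriting of Lemma~\ref{lem:ranges_2} is $\dot a_{i+1}\geq 0.6ps\bigl(a_{i+1}+x/(7.2p)\bigr)$, so that the shifted variable $b_t:=a_{i+1,t}+x/(7.2p)$ obeys the multiplicative drift $\E[\Delta b_t\mid\text{past}]\geq (0.6ps/n)b_t$ whenever $t<\tau$. Since $\many X\geq 1$ forces $x\geq 1/n$, the initial value satisfies $b_{t_b}\geq \max\{a_{i+1,t_b},1/(7.2pn)\}$. A standard high-probability multiplicative-drift theorem (cf.~\cite{muldrift}) then shows that $b_t$ reaches the upper threshold $0.253s\geq 0.25s+x/(7.2p)$ within $O(n\ln(1/b_{t_b}))\leq c_{10}\,n\ln(1/\max\{1/n,a_{i+1,t_b}\})$ rounds with probability $1-O(1/n^3)$, for a suitable constant $c_{10}$ depending only on $p,s$.

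The main technical delicacy I anticipate is making the multiplicative-drift step rigorous when $a_{i+1,t_b}$ is very small (order $1/n$ or zero): in this regime $b_t=\Theta(1/n)$, per-step relative fluctuations are $\Theta(1)$, and a naive logarithmic transformation is too noisy for Azuma concentration to apply directly. The cleanest workaround I foresee is to split~(ii) into two subphases: a short ``seeding'' phase in which the additive source term $xs/12$ dominates and a Chernoff bound on the number of $X$-initiated activations pushes $b_t$ from $\Theta(1/n)$ up to $\Theta(\log n/n)$ within $O(n\log n)$ rounds w.h.p., followed by a multiplicative-drift phase in which $b_t$ is already large enough for log-concentration, or a coupling with a pure Yule-type birth process, to go through cleanly. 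A final union bound over the failure probabilities of~(i) and~(ii) completes the argument with the claimed $1-O(1/n^3)$ success probability.
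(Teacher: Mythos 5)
Your decomposition into (i) controlling the predator $a_{i-1}$ via the drift upper bound of Lemma~\ref{lem:ranges_2} plus a martingale deviation bound, and (ii) a growth argument for $a_{i+1}$ via the drift lower bound, matches the structure of the paper's proof, and the envelope/Azuma treatment of part~(i) is sound. The algebraic rewriting $\dot a_{i+1}\geq 0.6ps\,(a_{i+1}+x/(7.2p))$ is correct and is a clean way to expose the multiplicative drift.

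The genuine gap is exactly the one you flag, and your first proposed remedy does not close it. A Chernoff bound on the \emph{gross} number of $X$-initiated activations controls only the positive increments of $a_{i+1}$; but in the regime $a_{i+1}=\Theta(1/n)$, the probabilities of a single positive step ($\approx \frac13 x s$) and of a single negative step ($\approx a_{i+1}(a_{i-1}+x)$) are of the same order $\Theta(\max\{x,a_{i+1}\}/n)$, so a gross count of positive events says nothing about net progress — over $\Theta(n\log n)$ steps you could accumulate $\Theta(\log n)$ positive events and $\Theta(\log n)$ negative events and make no net advance. What saves the argument is not the rate of positive events but the \emph{conditional} bias: conditioning on a non-idle step for $a_{i+1}$, the probability that the step is $+1/n$ exceeds the probability that it is $-1/n$ by a fixed constant (roughly $\Omega(ps)$). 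The paper exploits exactly this by filtering to the event $R_t$ that $a'_{i+1}$ changes, observing $\E[\Delta a'_{i+1,t}\mid R_t]\geq 0.3ps/n$, and then treating the filtered process as a constant-bias lazy random walk on $\{0,1/n,2/n,\ldots\}$, with checkpoints $Q_c$ spaced $\Theta(\log n/n)$ apart and geometrically distributed waiting times between non-idle steps whose means shrink as the walk climbs. Your second fallback — ``coupling with a pure Yule-type birth process'' — is the correct idea, since the filtered walk with constant positive bias is essentially a subcritical-death birth process, and to make it rigorous you would need to supply precisely this checkpoint-and-waiting-time bookkeeping (or an equivalent argument on the logarithm of $b_t$) rather than a tail bound from an off-the-shelf multiplicative-drift theorem, which does not apply when per-step multiplicative fluctuations are $\Theta(1)$.

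Two smaller remarks. First, your bound in part~(i) via Azuma with $\lambda=\Theta(\log n/\sqrt n)$ delivers only $1-O(1/n^3)$, which suffices for this lemma, but note the paper actually obtains $1-e^{-n^{\Omega(1)}}$ for that sub-step because the deviation threshold there is a constant ($\approx 0.01s$) rather than $o(1)$; using the constant threshold avoids having to appeal to the sharp $O(\log n/\sqrt n)$ window. Second, remember that the drift bounds of Lemma~\ref{lem:ranges_2} are only valid inside the stopping region you defined; your use of the stopping time $\tau$ is the right device, but all the martingale constructions in (i) and (ii) must be explicitly stopped at $\tau$ (as the paper's filtered process $a'_{i+1,t}$ does), otherwise the comparison with the envelope ODE is not valid beyond the first exit.
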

\begin{proof}
We first show that, starting from time $t_b$ onward, the process $a_{i-1, t}$ satisfies $a_{i-1, t} < 0.05s$ for all $t \in [t_b, t_*]$ with probability $1-e^{-n^{\Omega(1)}}$, where $t_*$ is defined as the minimum of time $t_a + c_{10} n \ln n$ and the last time moment such that $a_{i+1, t} \leq 0.25s$ holds for all $t \in [t_b, t_*]$. By Lemma~\ref{lem:ranges_2}, we have for all $t \in [t_b, t_*]$ such that $a_{i-1, t} > 0.01s$:
$$
\E \Delta a_{i-1} = \frac 1n \dot{a}_{i-1} \le \frac 1n (xs/3- 0.2psa_{i-1}) < \frac 1n (0.02ps^2/3 - 0.02ps^2) < -\frac{ 0.01ps^2}{n},
$$
where we took into account the assumption $x < 0.02ps$. The claim on $a_{i-1, t} < 0.05s$  follows from a standard concentration analysis, noting that $|\Delta a_{i-1,t}| \leq \frac1n$.

In order to analyze the process $a_{i+1,t}$, we apply a filter and consider the process $a'_{i+1,t}$, starting at time $t_b$, defined as follows. For as long as $a_{i-1, t} < 0.05s$, we put $a'_{i+1,t} := a_{i+1,t}$, and starting from the first time $t_{**}$ when $a_{i-1, t} > 0.05s$, we compute $a'_{i+1,t+1}$ as the subsequent value of $a_{i+1}$ after a simulation of a single step of the process for some state $\uu$ with concentrations of types: $x(\uu) = x$, $a_{i+1}(\uu) = a'_{i+1,t}$, $a_{i-1}(\uu) = 0.05s$, and $a_i(\uu) = s - a_{i-1}(\uu) - a_{i+1}(\uu)$.

For a given time step $t$, let $R_t$ denote the event that $\Delta a'_{i+1,t} \neq 0$. By the construction of protocol $P_o$, which always requires at least one agent of type $X$ or type $A_{i+1}$ to be involved in an interaction which creates or destroys an agent of type $A_{i+1}$, we have:
$$
\Pr [R_t] \leq x + 2 a'_{i+1,t}.
$$
Moreover, from Lemma~\ref{lem:ranges_2} it follows that for $a'_{i+1,t} < 0.25s$:
 $$
 \E\Delta a'_{i+1,t} = \frac{\dot a'_{i+1,t}}{n}\ge \frac1n (xs/12+0.6psa'_{i+1,t}).
$$
Since $\Delta a'_{i+1,t} | \neg R_t = 0$, we have:
$$
 \E\Delta a'_{i+1,t} | R_t \ge \frac{\frac1n (xs/12+0.6psa'_{i+1,t})}{\Pr[R_t]} \geq \frac{\frac1n (xs/12+0.6psa'_{i+1,t})}{x + 2 a'_{i+1,t}} \geq \frac{0.3 ps}{n},
$$
and moreover $\Delta a'_{i+1,t} | R_t \in \{-\frac{1}{n}, 0, \frac{1}{n}\}$. Analysis of this type of process is folklore (in the context of epidemic models with infection and recovery) but somewhat tedious; we sketch the argument for the sake of completeness. When considering only those steps for which event $R_t$ holds, the considered process can be dominated by a lazy random walk on the line $\{0,\frac{1}{n}, \frac{2,n}, \ldots\}$, with a constant bias towards its right endpoint. To facilitate analysis, we define points $Q_c = {c \lfloor \alpha \ln n \rfloor}{n}$, for $c = 0,1,\ldots$, where constant $\alpha > 0$ is subsequently suitably chosen, and for any point $Q_c$ to the right of the starting point of the walk (i.e, $c > c_{\min}$ where $c_{\min}$ is the smallest integer such that $Q_{c_{\min}+1} > a_{i+1,t_b}$) define $s_c$ as the number of steps of the walk until its first visit to $s_c$. For a suitable choice of constants $\alpha$ and $\beta > 0$ sufficiently large, we have that for any $c$, with probability at least $1 - O(1/n^2)$, $s_{c+1} - s_c \leq \beta \ln n$, and moreover between its step $s_c$ and its step $s_{c+1}$, the walk is confined to the subpath $(Q_{c-1}, Q_{c+1})$ of the considered path. Considering the original time $t$ of our process $a'_{i+1,t}$ (including moments with $\neg R_t$), let $t_c$ be the moment of time corresponding to the $s_c$-th step of the walk. Conditioning on events which hold with probability $1 - O(1/n^2)$, the value $t_{c+1} - t_c$ can be stochastically dominated by the sum of $\beta\ln n$ independent geometrically distributed random variables, each with expected value $O(\frac{n}{\max\{1, (c-1) \ln n\}})$. Let $c_{\max}$ be the largest positive integer such that $Q_{c_{\max}} < 0.25s$. Applying a union bound on the conditioning of all intervals $t_{c+1} - t_c$, for $c \geq c_{min}$ and a concentration bound on the considered geometric random variables, we eventually obtain that with probability $1 - O(1/n^3)$ the condition $a'_{i+1,t}$ is achieved for time:
\begin{align*}
t &< \sum_{c = c_{\min}}^{c_{\max}} (t_{c+1} - t_c) = O\left(n + \ln n \sum_{c = c_{\min}}^{c_{\max}} \frac{n}{\max\{1, (c-1) \ln n\}}\right) = \\ & O\left(n (\ln n - \ln \max\{1,a_{i+1,t_b} n\} + 1)\right) = O\left(n \ln \frac{1}{\max\{1/n,a_{i+1,t_b}\}}\right).
\end{align*}
Recalling that $a'_{i+1,t} = a_{i+1,t}$ holds throughout the considered time interval with very high probability, the claim follows.
\end{proof}

An iterated application of Lemmas~\ref{lem:ranges_1X} and Lemmas~\ref{lem:ranges_2X} moves the process along time moments $t_a$, $t_b$, $t_a', \ldots$, where time moment $t_a'$ is again be fed to Lemma~\ref{lem:ranges_1X}, considering the succeeding value of $i$. After a threefold application of both Lemmas, the process has w.h.p. in $O(n \log n)$ steps either performed a complete rotation, passing through three moments of time designated as ``$t_a$'', rotated by one third of a full circle, or has reached at some time $t'$ a point with $a_{\min,t'} \geq 0.02s$. In either case, the claim of Theorem~\ref{thm:osc}(2) follows directly.

\section{Analysis of Protocol for \textsc{Detect}}

\subsection{Further Properties of the Oscillator}\label{xx}

We start by stating a slight generalization of Lemma~\ref{lem:psi_average}, capturing the expected change of potential $\psi_t^*$ (given by~\eqref{eq:psi}) for the case $\many X > 0$, for configurations which are sufficiently far from both the center and the sides of the triangle.

\begin{lemma}\label{lem:psi_average_Xpos}
In any configuration $\uu_t$ with $10^{-6}s^2 \leq a_{\min} \leq 0.02s$ and $x < c_{12}$ we have: $\E \Delta\psi_t \ge \frac{ps}{7200n}$, where $c_{12}>0$ is an absolute constant which depends only on $s$ and $p$.
\end{lemma}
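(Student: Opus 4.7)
The plan is to retrace the proof of Lemma~\ref{lem:psi_average} while tracking the additional $x$-dependent terms that now enter $\dot\phi$, $\dot a_i$, $\dot a_i^{++}$, and hence $\dot\kappa$. The passage from discrete expectation to continuous derivative goes through unchanged: the bound $\E\Delta\phi \le \frac{1}{n}\dot\phi$ relies only on $\ln(1+y)\le y$, and the estimate $\E\Delta(\kappa^2) \le \frac{1}{n}(\dot{\kappa^2} + \frac{24s^2}{a_{\min}n})$ comes from a second-order Taylor expansion together with the fact that any change in $\kappa_i^2$ carries an $a_i$ multiplicative factor in its transition probability; neither argument ever used $x=0$. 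In our regime $a_{\min}\ge 10^{-6} s^2$ the resulting error term $\tfrac{96p}{n^2 a_{\min}}$ is of order $1/n^2$, so it suffices to prove a constant lower bound on the continuous quantity $\dot\psi$ and divide by $n$.

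For the continuous lower bound, I would isolate the contribution of the source to each ingredient. From~\eqref{eq:derphi}, $\dot\phi$ gains the summand $\tfrac{x}{3}(\sum s/a_i - 9)$, which under $a_{\min}\ge 10^{-6}s^2$ is at most $\tfrac{x}{3}\cdot\tfrac{3\cdot 10^6}{s}=O(x/s)$. Re-deriving from~\eqref{eq:app_dot} gives $\dot a_i^{++} = -a_i\kappa_i - \tfrac{x a_i(a_i + \kappa_i)}{s}$, and $\dot a_i$ picks up the extra summand $x(s/3 - a_i)$. Feeding these into $\dot\kappa_i = -s\kappa_i - x(\kappa_i + a_i) - \tfrac{\kappa_i}{a_i}\dot a_i - 2\dot a_i$ and squaring produces, beyond the $x=0$ contribution already bounded in~\eqref{eq:kappa_dot}, only terms of the form $x\kappa_i$, $x a_i$, and $x\kappa_i/a_i$. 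Since $|\kappa_i|\le 1$ and $a_i\ge a_{\min}\ge 10^{-6}s^2$, each of these is bounded by an absolute constant $C = C(p,s)$ times $x$.

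To finish, I would combine these estimates in $\dot\psi = -\dot\phi - \tfrac{8p}{s^2}\kappa\dot\kappa$. The hypothesis $a_{\min}\le 0.02s$ together with $a_1+a_2+a_3=s$ forces some other coordinate to be at least $0.49s$, whence $\delta \ge 0.47s$. Plugging into the $x=0$ estimate~\eqref{eq:psi_dot} preserved by the above decomposition, the RPS-driven destabilizing contribution is at least $\tfrac{p}{4s}\delta^2 \ge 0.05\,ps$, and the total $x$-perturbation is at most $C'x$ for a constant $C'=C'(p,s)$. Choosing $c_{12} := \tfrac{0.05\,ps}{2C'}$ yields $\dot\psi \ge \tfrac{ps}{40}$, and dividing by $n$ (and absorbing the negligible Taylor error) easily beats $\tfrac{ps}{7200n}$.

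The main obstacle is the bookkeeping in the $\dot\kappa$ computation: the $x$-corrections generate terms proportional to $1/a_i$ which naively blow up as $a_i\to 0$, and the lower bound $a_{\min}\ge 10^{-6}s^2$ in the statement of the lemma exists precisely to keep these terms a bounded constant that can be absorbed into the final choice of $c_{12}$. Once the threshold is fixed, the rest of the argument is a line-by-line copy of the proof of Lemma~\ref{lem:psi_average} with explicit constants.
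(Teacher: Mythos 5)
Your approach is genuinely different from the paper's, and both are valid. The paper's proof is a short conditioning argument: let $E_t$ be the event that an $X$-agent participates in the current interaction; conditioned on $\neg E_t$ the step is distributionally an $x=0$ step, so $\E[\Delta\psi_t\mid\neg E_t] \ge \frac{ps}{3600n}$ by Lemma~\ref{lem:psi_average} (whose hypothesis $\delta\ge s/20$ is met because $a_{\min}\le 0.02s$); conditioned on $E_t$ the worst-case single-step drop is $|\Delta\psi_t| < c_{12}'/n$ by the argument of Lemma~\ref{lem:psi_concentration}(iii) (here the lower bound $a_{\min}\ge 10^{-6}s^2$ is used); and since $\Pr[E_t] < 2x$, the law of total expectation gives the claim for $x < c_{12} := \frac{1}{2(2c_{12}'+1)}$. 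Your route instead retraces the continuous-dynamics derivation of $\dot\psi$ with the $x$-dependent terms carried through explicitly in $\dot\phi$, $\dot a_i$, $\dot a_i^{++}$, and $\dot\kappa_i$, showing each extra term is $O(x/s)$ (with a large but absolute constant controlled by $a_{\min}\ge 10^{-6}s^2$) and then choosing $c_{12}$ small enough. This does reach the same conclusion, but it is considerably more laborious, and it obscures the fact that the paper's $c_{12}$ needs only a crude single-step bound rather than a full re-derivation of the drift inequalities.

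One imprecision worth flagging: you assert that the bound $\E\Delta(\kappa^2) \le \frac{1}{n}(\dot{\kappa^2} + \frac{24s^2}{a_{\min}n})$ ``never used $x=0$''. It did. In Lemma~\ref{lem:psi_average}, the Taylor remainder for $\kappa_i^2$ is controlled by the observation that any change of $\kappa_i^2$ occurs with probability $\le a_i$, and that is false when $x>0$: rule (5) lets a source agent create an agent in state $A_i^+$ with probability $\frac{1}{3}x(s-a_i)$, which is not bounded by $a_i$. The correct bound is $\le a_i + O(x)$, which gives an extra remainder of order $\frac{s^2 x}{n^2 a_{\min}^2}$. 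In your regime this is still $O(1/n^2)$ and harmless, but the justification you give is not quite right as stated, and one should note that the lower bound on $a_{\min}$ is doing work here as well (both to control $1/a_i$ factors in the $\dot\kappa$ algebra and to make this extra remainder negligible).
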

\begin{proof}
We can condition the expectation of $\E \Delta\psi_t$ on the event $E_t$, which holds if an agent in state $X$ participates in the current interaction. Conditioned on $\neg E_t$, the analysis corresponds directly to the computations performed for the case $x=0$, where we remark that the assumptions of Lemma~\ref{lem:psi_average} are satisfied due to the assumed upper bound on $a_{\min}$. Thus:
$$
\E \Delta\psi_t | \neg E_t > \frac{ps}{3600n}.
$$
Next, taking into account the lower bound on $a_{\min}$, by exactly the same argument as in Lemma~\ref{lem:psi_concentration}($iii$), we have $|\Delta\psi_t| < c'_{12}$/n, for some choice of constant $c'_{12} > 0$ which depends only on $s$ and $p$. Obviously,
$$
\E \Delta\psi_t | E_t > -c'_{12}/n,
$$
and since $\Pr[E_t] < 2x$, by the law of total expectation:
$$
\E \Delta\psi_t > (1-2x) \frac{ps}{3600n} - 2x \frac{c'_{12}}{n} > \frac{ps}{7200n},
$$
where the last inequality holds for any $x < c_{12}$, where $c_{12} := \frac{1}{2(2c'_{12}+1)}$.
\end{proof}

\begin{lemma}
\label{lem:oscillatory_intervals}
Suppose $a_{\min,t_0} < 10^{-6} s^4$ at some time $t_0$. Then, there exists an absolute constant $c_{13} > 0$, such that the following event holds with probability $1-e^{-n^{\Omega(1)})}$: for all $t \in [t_0, t_0+ e^{n^{c_{13}}}]$, we have $a_{\min,t} < 0.01s^2$.
\end{lemma}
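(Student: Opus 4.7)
The plan is to combine the positive drift of $\psi$ from Lemma~\ref{lem:psi_average_Xpos} with an exponential supermartingale argument on excursions through the ``analyzable'' regime where that drift is available, and then union-bound over excursions. I would partition the state space by $a_{\min}$ with thresholds $M_1 := 10^{-6} s^2$ and $M_2 := 0.01 s^2$ into an outer region $\{a_{\min} < M_1\}$, a middle region $R := \{M_1 \leq a_{\min} < M_2\}$, and a bad target $\{a_{\min} \geq M_2\}$. The hypothesis $a_{\min,t_0} < 10^{-6} s^4 \ll M_1$ (treating $s$ as a constant bounded below, as in all applications of the theorem in the paper) places the starting configuration strictly inside the outer region, and since each transition moves every $a_i$ by at most $1/n$, the process can reach the bad target only by completing a full crossing of $R$. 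Hence it suffices to bound the probability of a single crossing succeeding and to union-bound over the at most $T^* := e^{n^{c_{13}}}$ crossings possible in the time window.

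Next, I would compute matching potential barriers at the two sides of $R$. From the definition~\eqref{eq:psi} of $\psi$, the trivial estimate $a_1 a_2 a_3 \leq a_{\min} s^2$ gives that whenever $a_{\min,t} \leq M_1$ one has $\psi_t \geq \Psi_{\text{hi}} := \ln(10^6/(27 s)) - O(1)$, while on the far boundary $a_{\min,t} = M_2$ the extremal lower bound $a_1 a_2 a_3 \geq M_2^2(s - 2 M_2)$ gives $\psi_t \leq \Psi_{\text{lo}} := \ln(10^4/(27 s^2)) + O(1)$. Direct calculation then yields $\Psi_{\text{hi}} - \Psi_{\text{lo}} \geq C_s := \ln(100 s) - O(1) > 0$, a positive constant for $s$ bounded below; any successful crossing of $R$ therefore forces $\psi$ to drop by at least $C_s$.

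For a single crossing, inside $R \subseteq [M_1, 0.02 s]$ Lemma~\ref{lem:psi_average_Xpos} provides the drift $\E \Delta \psi_t \geq \mu := ps/(7200 n)$, while Lemma~\ref{lem:psi_concentration}(iii)--(iv) provides $|\Delta \psi_t| \leq c_7/(n M_1)$ and $\Var(\Delta \psi_t) \leq c_8/(n^2 M_1)$. I would define the exponential process $Z_t := \exp(-\lambda(\psi_t - \Psi_{\text{lo}}))$ with $\lambda := \alpha n$ for a sufficiently small constant $\alpha = \alpha(s, p) > 0$. A short Taylor expansion of $\E[e^{-\lambda \Delta \psi_t} \mid \mathcal{F}_t]$ using the above bounds gives $1 - \lambda \mu + O(\lambda^2/n^2) \leq 1$, so $Z_t$ is a supermartingale while the excursion remains in $R$. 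Optional stopping at the end of the excursion gives $\Pr[\text{crossing succeeds}] \leq Z_0 \leq e^{-\lambda C_s} = e^{-\Omega(n)}$, and union-bounding over $\leq T^*$ crossings yields total failure probability $\leq e^{n^{c_{13}} - \Omega(n)} = e^{-n^{\Omega(1)}}$ for any fixed $c_{13} \in (0,1)$.

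The main delicate point, which also dictates the admissible $c_{13}$, is that the supermartingale construction must tolerate $\lambda$ as large as $\Theta(n)$ in order to exponentiate the constant-sized gap $C_s$ into the per-excursion tail $e^{-\Omega(n)}$. This calibration is possible only because $\mu = \Theta(1/n)$ and $\Var(\Delta \psi_t) = O(1/n^2)$ are at matching scales throughout $R$, which is precisely the region on which Lemma~\ref{lem:psi_average_Xpos} and Lemma~\ref{lem:psi_concentration}(iii)--(iv) are simultaneously valid. Extending the analysis into the outer region, where both the drift sign and the variance may deteriorate as $a_{\min} \to 0$, would destroy this calibration --- but doing so is unnecessary once the problem is framed in terms of crossings of $R$, as above.
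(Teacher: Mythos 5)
Your proof is correct and follows essentially the same route as the paper's: both establish that escaping from $\{a_{\min} < 10^{-6}s^4\}$ to $\{a_{\min} \geq 0.01 s^2\}$ requires traversing an intermediate annulus on which Lemma~\ref{lem:psi_average_Xpos} gives a $\Theta(1/n)$ drift for $\psi$ against an $\Omega(1)$ potential barrier with $O(1/n)$ steps and $O(1/n^2)$ variance (Lemma~\ref{lem:psi_concentration}), yielding an exponential hitting time; your exponential supermartingale is simply a concrete instantiation of the "standard" concentration step the paper leaves implicit (it works with $\hat\psi = \ln\frac{s^3}{27}-\psi$ and thresholds phrased in $\hat\psi$-space so the barrier width is a fixed $\ln 4$, but translates to the same $a_{\min}$ annulus).
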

\begin{proof}
Let $\hat \psi_t \equiv  \ln \frac{s^3}{27} - \psi_t$.
 Consider any $t \geq t_0$ such that $a_{\min,t} < 10^{-6} s^4$. Then, $\phi_{t} < \ln a_{\min,t} < \ln (10^{-6} s^4)$ and consequently:
 $$\hat \psi_t  =  \phi_t + \frac {4 p} {s^4} \kappa_t^2 \leq \ln (10^{-6} s^4) + \frac{4p}{s^4} \leq \ln (2 \cdot 10^{-6} s^4),$$
 where we recall that $\kappa_t^2 \leq 1$ and the last inequality follows for $p$ chosen to be sufficiently small ($4p/s^4 < \ln 2$).

Further, note that if for some time $t$ we have $\hat \psi_t < \ln (8 \cdot 10^{-6} s^4)$, then:
$$
\phi_t = \hat \psi_t - \frac {4 p} {s^4} \kappa_t^2 < \ln (8 \cdot 10^{-6} s^4),
$$
thus $a_{1,t} a_{2,t} a_{3,t} < 8 \cdot 10^{-6} s^4$, from which it follows that $a_{\min,t}^2 < 16 \cdot 10^{-6} s^4$, and so $a_{\min,t} < 0.01 s^2$.

Thus, for $\hat \psi_t < \ln (8 \cdot 10^{-6} s^4)$, at least one of the following holds:
\begin{itemize}
\item Either $\hat \psi_t < \ln (2 \cdot 10^{-6} s^4)$,
\item Or $\hat \psi_t \geq \ln (2 \cdot 10^{-6} s^4)$, thus $a_{\min, t} \geq 10^{-6} s^4$. Then, taking into account that $a_{\min,t} < 0.01 s^2$, we have by Lemma~\ref{lem:psi_average_Xpos}: $\E \Delta \hat \psi_t < - \frac{ps}{7200n}$.
 \end{itemize}
Taking into account the known properties of function $\psi_t$ (Lemma~\ref{lem:psi_concentration}), we have that starting from $\hat \psi_{t_0} < \ln (2 \cdot 10^{-6} s^4)$, it takes time exponential in a polynomial of $n$ ($e^{\Omega(n^{c_{13}})}$, for some absolute constant $c_{13} > 0$) to break the potential barrier for $\hat \psi$, i.e., to reach the first moment of time $t_1$ such that $\hat \psi_{t_1} \geq \ln (8 \cdot 10^{-6} s^4)$, with probability $1-e^{-n^{\Omega(1)})}$, for some absolute constant $c_{14} > 0$. To complete the proof, recall that for any $t < t_1$, we have $\hat \psi_{t}< \ln (8 \cdot 10^{-6} s^4)$, and so as previously established, $a_{\min,t} < 0.01 s^2$.
\end{proof}

For any execution of the oscillator protocol $P_o$, we can now divide the axis of time into maximal time intervals of two types, which we call \emph{oscillatory} and \emph{central}. A central time interval continues for as long as the condition $a_{\min,t} \geq 10^{-6} s^4$ is fulfilled, and turns into an oscillatory interval as soon as this condition no longer holds. An oscillatory time interval continues for as long as the condition $a_{\min,t} < 0.01s^2 $ is fulfilled, and turns into an oscillatory interval as soon as this condition no longer holds. Lemma~\ref{lem:oscillatory_intervals} implies that an oscillatory interval is of exponential length w.v.h.p.

\begin{lemma}\label{lem:fastoscillations}
Suppose $0 < x < c_{12}$. Let $t_0$ be an arbitrary moment of time such that $a_{\min,t_0} > 0$. Let $T = C n \ln \frac{1}{a_{\min,t_0}}$, for an arbitrarily fixed constant positive integer $C =O(1)$. With probability $1 - e^{-n^{\Omega(1)}}$, we have for all subsequent moments of time $t \in [t_0, t_0 + T]$:
$$
\ln\frac{1}{\max\{\frac{1}{n}, a_{\min,t}\}} \leq 100C \ln \frac{1}{a_{\min, t_0}}.
$$
\end{lemma}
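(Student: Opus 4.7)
My plan is to analyse the same potential $\hat\psi_t := -\phi_t + \tfrac{4p}{s^2}\kappa_t^2 \equiv \ln\tfrac{s^3}{27} - \psi_t$ used in Section~\ref{sec:32}. Writing $\phi_t=\sum_i \ln a_{i,t}$ and using $\kappa_t^2\le 1$, one has the sandwich $\ln(1/a_{\min,t}) - O(1) \le \hat\psi_t \le 3\ln(1/a_{\min,t}) + O(1)$, whence $\hat\psi_{t_0}\le 3L+O(1)$ with $L:=\ln(1/m)$. It thus suffices to show $\sup_{t\in[t_0,t_0+T]}(\hat\psi_t-\hat\psi_{t_0})\le L + O(1)$ with probability $1-e^{-n^{\Omega(1)}}$, and to translate back via the same sandwich. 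We may moreover restrict to the non-trivial regime $m > n^{-1/(100C)}$: in the opposite regime the right-hand side $100CL$ already exceeds $\ln n$, while the left-hand side is always bounded by $\ln n$ (since $\max\{1/n,a_{\min,t}\}\ge 1/n$). In the non-trivial regime $K:=nm=n^{\Omega(1)}$, and this is the large parameter that will drive the $e^{-n^{\Omega(1)}}$ concentration bound.

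\paragraph{Drift and fluctuations of $\hat\psi$.} I would next establish, piecewise in $a_{\min,t}$, the following control on the one-step increment of $\hat\psi$. For $a_{\min,t}\in[10^{-6}s^2,0.02s]$, Lemma~\ref{lem:psi_average_Xpos} already gives $\E\Delta\hat\psi_t \le -\Omega(1/n)$. For $a_{\min,t}<10^{-6}s^2$, the expression~\eqref{eq:derphi} for $\dot\phi$ contributes $\tfrac{x}{3}\bigl(\tfrac{s}{a_{\min,t}}-9\bigr)-O(p)$ from the source term, giving the much stronger drift $\E\Delta\hat\psi_t\le -\Theta\bigl(x/(n\,a_{\min,t})\bigr)$ once one controls the second-order Taylor remainder via Lemma~\ref{lem:taylor}. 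The fluctuations are bounded uniformly by Lemma~\ref{lem:psi_concentration}(iii,iv): $|\Delta\hat\psi_t|=O(1/(n\,a_{\min,t}))$ and $\Var(\Delta\hat\psi_t)=O(1/(n^2 a_{\min,t}))$. The structural point is that the drift-to-standard-deviation ratio at any configuration $\uu$ is $\Omega(\sqrt{n\,a_{\min}(\uu)})$, hence polynomially strong whenever $a_{\min}$ is polynomially bounded away from $1/n$.

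\paragraph{Tail bound via stopping-time bootstrap.} These estimates combine into the claimed high-probability bound by a stopping-time argument. Set $\tau := \inf\{t\ge t_0 : a_{\min,t}\le m/2\}$. While $t<\tau$, the step size of $\hat\psi$ is bounded by $O(1/(nm))$ and the per-step variance by $O(1/(n^2 m))$, while the drift is non-positive. An Azuma-type inequality for martingales with bounded step size and bounded variance (as used already in the proof of Lemma~\ref{lem:pot_psi_first}) then gives
$$\Pr\bigl[\sup_{t\le\tau\wedge(t_0+T)}(\hat\psi_t-\hat\psi_{t_0})>L\bigr]\;\le\;\exp\bigl(-\Omega(n m^2 L/C)\bigr)\;=\;e^{-n^{\Omega(1)}},$$
using the non-trivial-regime bound $nm^2 > n^{1-2/(100C)} = n^{\Omega(1)}$. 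On the complementary event the sandwich yields $\hat\psi_t \le 4L + O(1)$ uniformly on $[t_0,t_0+T]$, hence $\ln(1/a_{\min,t})\le 4L+O(1)\le 100CL$ (using $L \ge \ln 3$ since $m\le s/3$); in particular $a_{\min,t} > m/2 \ge m^{100C}$, so $\tau>t_0+T$ and the desired inequality holds.

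\paragraph{Main obstacle.} The principal technical difficulty is the near-singular behaviour of $\hat\psi$ as $a_{\min,t}\to 1/n$: a single agent removal can cause $\phi_t$ to jump by $\Theta(\ln n)$, and the Taylor-remainder bound of Lemma~\ref{lem:taylor} becomes comparable to the source-driven drift. I would handle this by replacing $\hat\psi$ with a regularised potential $\hat\psi^*$ in which each $\ln a_i$ is replaced by $\ln(a_i+1/n)$, in the spirit of the $\psi^*$ construction around Lemma~\ref{lem:psi_average}, and then separately union-bounding the rare ``catastrophic'' steps in which the last agent of a minority species is converted: each such step has probability $O((x+p)/n)$, summing to an expected $O((x+p)L)$ over the whole time interval, which is easily controlled by Chernoff.
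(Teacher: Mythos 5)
Your bootstrap at $\tau := \inf\{t\ge t_0 : a_{\min,t}\le m/2\}$ does not close. On the event that the Azuma bound holds, you get $\hat\psi_{t\wedge\tau}-\hat\psi_{t_0}\le L$ up to time $t_0+T$, and since $\hat\psi_{t_0}$ can be as large as $3L+O(1)$, the sandwich only yields $\ln(1/a_{\min,t\wedge\tau})\le 4L+O(1)$, i.e.\ $a_{\min,t\wedge\tau}\gtrsim m^4$. That bound is perfectly compatible with $a_{\min,\tau}\le m/2$, so you cannot conclude $\tau>t_0+T$; the assertion ``in particular $a_{\min,t}>m/2$'' is not a consequence of anything you derived, and is in fact false for the process (over $\Theta(L)$ rounds the oscillator can and does drive $a_{\min}$ down by a polynomial factor in $m$ — the paper's own proof only guarantees $a_{\min,t}\gtrsim m^{C+4}$). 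To salvage the bootstrap you would need to set the stopping threshold no higher than $\Theta(m^4)$ (so that the post-Azuma bound $a_{\min}\gtrsim m^4$ actually contradicts the stopping event), and then re-check that the resulting step size $O(1/(nm^4))$ and variance $O(1/(n^2m^4))$ over $T=CnL$ steps still give an $e^{-n^{\Omega(1)}}$ tail in the regime $m>n^{-1/(100C)}$ (they do, but this must be re-done). As written there is a genuine hole.

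The paper's argument sidesteps all of this and is substantially shorter. It works with $\phi$ alone — the $\kappa^2$ corrective term in $\hat\psi$ is unnecessary because only a one-sided drift control is needed — and observes directly from \eqref{eq:derphi} that $\dot\phi\ge -p$ \emph{unconditionally}: the source term $\frac{x}{3}\bigl(\sum\frac{s}{a_i}-9\bigr)$ is nonnegative by AM–HM, and the $p$-term is bounded below by $-ps$. This eliminates your piecewise-in-$a_{\min}$ drift analysis entirely. The stopping level is then placed at a fixed threshold $\phi=-0.2\ln n$ (independent of $m$): since the non-trivial regime is $m>n^{-0.01/C}$, the process starts at $\phi_0\ge -0.03\ln n$, the single-step increment while above the stopping level is $|\Delta\phi_t|\le 4n^{-0.8}$, the systematic drift over $T=CnL<0.01n\ln n$ steps contributes at most $-0.01\ln n$, and the Azuma fluctuation is trivially contained in the remaining $0.16\ln n$ gap with probability $1-e^{-n^{\Omega(1)}}$. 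Because the stopping threshold is far from both $m$ and $1/n$, the singularity issues you raise in your ``Main obstacle'' paragraph do not arise and no regularisation of $\phi$ is needed.
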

\begin{proof}
Without loss of generality assume $t_0 = 0$.
We can assume in the proof that $a_{\min,0} > n^{-0.01 / C }$, otherwise the claim trivially holds. Thus, initially we have $\phi_{0} \geq 3\ln a_{\min,0} \geq -\frac{0.03}{C} \ln n \geq -0.03 \ln n$.

We proceed to show that potential $\phi$ does not decrease much during the considered motion. We have at any time $\dot \phi \geq -p$, which follows directly from~\eqref{eq:derphi}.

Suppose at some time $t$ we have $\phi_t \geq -0.2 \ln n$. We note that $a_{\min,t} \geq e^{\phi_t} \geq n^{-0.2}$.
Applying Lemma~\ref{lem:taylor}, we have under these assumptions:
$$
\E \Delta \phi_t \geq -\frac{p}{n} - O\left(\frac{1}{n^2 a_{\min,t}^2}\right) \geq -\frac{1}{n}
$$
and moreover by the properties of the natural logarithm (cf. e.g.~\cite{ICALP15}):
$$
|\Delta \phi_t\| < \frac{4}{n a_{\min, t}} \leq \frac{4 e^{-\phi_t}}{n} \leq 4 n^{-0.8}.
$$
As usual, we apply a Doob martingale, with $\phi'_t := \phi_t$ until the first moment of time $t$ such that $\phi_t < -0.2 \ln n$, and subsequently $\phi'_{t+1} := \phi'_t$ for larger $t$. We have $\E \Delta \phi_t \geq -n^{-1}$ and $|\Delta \phi_t\| < 4 n^{-0.8}$.

Considering $T = C n \ln \frac{1}{a_{\min,0}} < C n \ln n$ steps of the process starting from time $0$, by a standard application of Azuma's inequality, we obtain that with probability $1 - e^{-n^{\Omega(1)}}$, for all $t \in [0,T]$ we have:
$$
\phi'_t -  \phi'_0 \geq - \frac{T}{n} - n^{-0.2} \geq - C \ln \frac{1}{a_{\min,0}} - 1 \geq -0.1 \ln n.
$$
From the last inequality it follows that $\phi'_t \geq \phi'_0 - 0.1 \ln n \geq -0.2 \ln n$ for all $t \in T$, with probability $1 - e^{-n^{\Omega(1)}}$, and so $\phi'_t = \phi_t$. We now rewrite the same bound for $\phi_t$, using the relation $\phi \geq 3 \ln a_{\min}$:
$$
\phi_t \geq  \phi_0 - C \ln \frac{1}{a_{\min,0}} - 1 \geq
3 \ln a_{\min,0} - C \ln \frac{1}{a_{\min,0}} - 1 \geq -(C + 4) \ln \frac{1}{a_{\min,0}} \geq -100C \ln \frac{1}{a_{\min,0}}.
$$
Taking into account that $-\ln \frac{1}{a_{\min, t}} \geq \phi_t$ for $a_{\min, t} \neq 0$, we obtain the claim.
\end{proof}

\begin{lemma}\label{lem:goaround}
Suppose $0 < x < c_{12}$. Fix type $i \in \{1,2,3\}$. Let $t_0$ be any time such that $a_{\min,t_0} < 10^{-6}s^4$. Let $t^* > t$ be the first moment after $t_0$ such that $i$ is the most represented type, $a_{i,t^*} = a_{\max, t^*}$. Let $t^{**} > t^*$ be the first moment after $t^*$ such that $i$ is the least represented type, $a_{i,t^{**}} = a_{\min, t^{**}}$.

Then, with probability $1 - O(1/n^3)$,  $t^{**} \leq t_0 + c_{11}n \ln \frac{1}{\max\{1/n,a_{i+1,t_0}\}}$, where $c_{11}>0$ is an absolute constant depending only on $s$ and $p$.
\end{lemma}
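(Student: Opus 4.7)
The plan is to iterate Lemmas~\ref{lem:ranges_1X} and~\ref{lem:ranges_2X} to step the oscillator through a bounded number of dominance rotations, using Lemma~\ref{lem:oscillatory_intervals} to pin us to the oscillatory regime and Lemma~\ref{lem:fastoscillations} to control how small the various populations can become along the way.

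A single \emph{rotation} of the oscillator transfers dominance from $A_j$ to $A_{j+1}$ and decomposes as follows: starting from $a_{j-1}<0.75s$ and $a_{j+1}<0.05s$, Lemma~\ref{lem:ranges_1X} with index $j$ drives $a_{j-1}$ below $0.02s$ in $O(n)$ steps; then Lemma~\ref{lem:ranges_2X} with the same index grows $a_{j+1}$ above $0.25s$ in $O\!\left(n\ln\tfrac{1}{\max\{1/n,\,a_{j+1,\,\mathrm{start}}\}}\right)$ steps, producing a configuration with $a_j<0.75s$ and $a_{j-1}<0.05s$, which is exactly the prerequisite for Lemma~\ref{lem:ranges_1X} with index $j+1$. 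Applying Lemma~\ref{lem:oscillatory_intervals} once at $t_0$ gives with probability $1-e^{-n^{\Omega(1)}}$ the invariant $a_{\min,t}<0.01s^2$ throughout an exponentially long window, and since $0.01s^2<0.02s$ this rules out the alternative ``central'' exit of Lemmas~\ref{lem:ranges_1X} and~\ref{lem:ranges_2X}, so the rotations actually chain. A case analysis on which species is dominant at $t_0$ then shows that at most three rotations are needed for $A_i$ to first become maximal (attaining $t^*$), and at most one more for $A_i$ to first fall to minimal (attaining $t^{**}$), giving at most four rotations in total.

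The heart of the argument, and the main obstacle, is converting the resulting sum of rotation times into the sharp bound $c_{11}\,n\ln\tfrac{1}{\max\{1/n,\,a_{i+1,t_0}\}}$. The difficulty is that the duration of each rotation is governed by the concentration of \emph{its} rising species at \emph{its} start, not by $a_{i+1,t_0}$ directly. The plan is to use Lemma~\ref{lem:fastoscillations} to bound these starting concentrations below by a polynomial in $a_{\min,t_0}$, and then to absorb the $a_{\min,t_0}$ factor into $a_{i+1,t_0}$ on a case-by-case basis: in the sub-cases in which $a_{i+1,t_0}$ is itself the current minimum this is immediate, while in the remaining sub-cases the rotational structure (in which $A_{i+1}$ is either currently dominant, recently dominant, or the imminent rising species) forces the bottleneck rising species to be of the same order of magnitude as $a_{i+1,t_0}$ up to polynomial distortion; taking logarithms and summing over the constant number of rotations then yields the claimed bound.

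Finally, a union bound over the $O(1)$ invocations of Lemmas~\ref{lem:ranges_1X} (failure $e^{-n^{\Omega(1)}}$), \ref{lem:ranges_2X} (failure $O(1/n^3)$), \ref{lem:oscillatory_intervals} and \ref{lem:fastoscillations} (both failure $e^{-n^{\Omega(1)}}$) gives the overall failure probability $O(1/n^3)$ stated in the lemma.
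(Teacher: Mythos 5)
Your high-level strategy is identical to the paper's: apply Lemma~\ref{lem:oscillatory_intervals} once at $t_0$ to rule out the ``central'' exit, iterate Lemmas~\ref{lem:ranges_1X} and~\ref{lem:ranges_2X} for a constant number of sub-rotations until $A_i$ has passed through the maximum and then the minimum, and use Lemma~\ref{lem:fastoscillations} between iterations to keep $\ln\frac{1}{a_{\min}}$ from blowing up. The probability accounting via union bound is also correct, with Lemma~\ref{lem:ranges_2X} contributing the dominant $O(1/n^3)$ failure term.

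The gap is in the step you yourself flag as the heart of the argument: converting the natural $\ln\frac{1}{a_{\min,t_0}}$ bound into $\ln\frac{1}{\max\{1/n,\,a_{i+1,t_0}\}}$. Your claim that in the remaining sub-cases ``the rotational structure forces the bottleneck rising species to be of the same order of magnitude as $a_{i+1,t_0}$ up to polynomial distortion'' does not hold in the sub-case where $A_{i+1}$ is the dominant species at $t_0$. There $a_{i+1,t_0}=\Theta(s)$, so $\ln\frac{1}{\max\{1/n,\,a_{i+1,t_0}\}}=O(1)$ and the claimed bound is $O(n)$; but the first sub-rotation is governed by the rising species $A_{i-1}$, whose concentration at $t_0$ can be as small as $\Theta(1/n)$ (the process has $a_{\min,t_0}<10^{-6}s^4$ with no upper bound on how small it can be), so that sub-rotation alone can take $\Theta(n\log n)$ steps. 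Your claimed ``polynomial distortion'' between $a_{i-1,t_0}$ and $a_{i+1,t_0}$ is false here; what Lemma~\ref{lem:fastoscillations} controls is the ratio to $a_{\min,t_0}$, not to $a_{i+1,t_0}$.

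For what it's worth, the paper's own proof also only derives a bound ``of the form $c_{11}\ln\frac{1}{a_{\min,t_0}}$'' and does not close this gap either; the statement's $a_{i+1,t_0}$ appears to be a slip (and the downstream application in Lemma~\ref{lem:majority_any} goes through with the weaker $a_{\min,t_0}$ form, since there $t_0$ is chosen so that $A_i$ itself is the minimum). So you are in roughly the same position as the paper, but your write-up asserts the stronger conversion as a fact rather than acknowledging that what the iteration yields is a bound in $a_{\min,t_0}$. Either prove the bound in terms of $a_{\min,t_0}$ and note that this is what the remainder of the paper actually uses, or give an honest case analysis acknowledging the sub-case above as problematic.
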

\begin{proof}
From Lemma~\ref{lem:oscillatory_intervals}, we have that w.v.h.p.\, the protocol is in an oscillatory interval which will last super polynomial time, i.e., with probability $1-e^{-n^{\Omega(1)})}$: for all $t \in [t_0, t_0+ e^{n^{c_{13}}}]$, we have $a_{\min,t} < 0.02s$. Acting as in the previous subsection, we iteratively apply Lemma~\ref{lem:ranges_1X} and Lemma~\ref{lem:ranges_2X}. After at most 6 applications of both Lemmas, the process has performed two complete rotations around the triangle,  w.h.p., passing in particular through time moments $t^*$ where the designated type $i$ was a maximal type and $t^{**}$ where type $i$ was a minimal type. It remains to bound the time required to perform these iterations. We consider as an example a single application of Lemma~\ref{lem:ranges_2X} starting at a time $t_b$ and ending at a time $t_{a'}$, where $t_{a'} < t_b + c_{10}n \ln \frac{1}{\max\{1/n,a_{i+1,t_b}\}}$ with probability $1 - O(1/n^3)$. Applying Lemma~\ref{lem:fastoscillations} at time $t_b$, we obtain  $\ln \frac{1}{a_{\min, t_{a'}}} \leq 100C \ln \frac{1}{a_{\min, t_b}}$, w.v.h.p. We use this bound for the next application of Lemma~\ref{lem:ranges_1X}, and so on. After a total of at most 12 applications, we eventually obtain a bound  of the form $c_{11} \ln \frac{1}{a_{\min, t_0}}$ on the length of the considered time interval, where the value of $c_{11}$ is computed as a function of $C$.
\end{proof}

\subsection{Protocol Extension \texorpdfstring{$P_m$}{Pm}: Majority}\label{pm}

The composition of the extension is specified in Fig.~\ref{fig:newrules_m}.
In what follows, we denote $M^{(s)}_i := \many (A_i^?,M_s)$ and $m^{(s)}_i := M^{(s)}_i / n$, for $s \in \{-1, 0, +1\}$.

\begin{lemma}
Suppose $0 < x < c_{12}$. Let $t_0$ be an arbitrarily chosen moment of time and let $T>0$. For fixed $i \in \{1,2,3\}$, we have for all $t \in [t_0,t_0+T]$:
$$|m^{(+1)}_{i,t} - m^{(-1)}_{i,t}| \leq 2 e^{4rT/n} \max\{n^{-0.2}, a_{i,t_0}\},$$
with probability $1 - O(e^{-n^{1/6}})$.
\end{lemma}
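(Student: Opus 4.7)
The plan is to analyze the integer-valued process $D_t := M^{(+1)}_{i,t} - M^{(-1)}_{i,t}$ directly and then divide by $n$. First, by enumerating the contributions of rules~(6)--(10) from Figure~\ref{fig:newrules_m} to $\E[\Delta D_t \mid \mathcal F_t]$, the drift simplifies to $\E[\Delta D_t \mid \mathcal F_t] = \alpha_t D_t$ with $\alpha_t \in [-1/n,\, r/n]$. Specifically, rules~(7) and~(8) affect $M^{(+1)}_i$ and $M^{(-1)}_i$ symmetrically and cancel in expectation; rules~(9) and~(10) together yield a drift of $(r\, m^{(0)}_{i,t}/n)\, D_t$; and rule~(6), the reset, contributes $-((1-a_{i,t})/n)\, D_t$, since a fair $\pm 1$ reset averaged over the three possible prior $M$-states of the receiver drives $D_t$ toward zero at a rate proportional to the concentration of non-species-$i$ agents. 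Deterministically $|\Delta D_t| \leq 2$, and a short rule-by-rule calculation gives $\E[(\Delta D_t)^2 \mid \mathcal F_t] \leq C$ for a constant $C = C(r)$.

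Since $1+\alpha_s \in [1-1/n,\, 1+r/n]$ \emph{deterministically}, I would pass to the rescaled process $\tilde D_t := D_t / \prod_{s<t}(1+\alpha_s)$. This process is an honest martingale; its normalizing scale stays in $[e^{-T/n},\, e^{rT/n}]$ over the interval $[t_0, t_0+T]$; and its martingale-difference increments $\tilde\xi_t$ satisfy $|\tilde\xi_t| \leq O(e^{T/n})$ with $\E[\tilde\xi_t^2 \mid \mathcal F_t] \leq O(e^{2(t-t_0)/n})$. Summing the geometric series gives cumulative conditional quadratic variation $\sum_{s<T}\E[\tilde\xi_s^2 \mid \mathcal F_s] \leq O(n\, e^{2T/n})$. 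Freedman's maximal inequality, applied at deviation level $\lambda = n^{7/12}\, e^{T/n}$, then yields $\Pr[\max_{s \leq T}|\tilde D_s - \tilde D_0| \geq \lambda] \leq 2\exp(-\Omega(n^{1/6}))$, matching the required tail.

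Translating back via $|D_t| = |\tilde D_t|\, \prod_{s<t}(1+\alpha_s) \leq |\tilde D_t|\, e^{rT/n}$ and using the initial bound $|D_0| \leq \many A_{i,t_0} = a_{i,t_0}\, n$, one obtains $|D_t| \leq (a_{i,t_0}\, n + n^{7/12}\, e^{T/n})\, e^{rT/n}$ for all $t \in [t_0, t_0+T]$, with the claimed probability. Dividing by $n$, the resulting bound $a_{i,t_0}\, e^{rT/n} + n^{-5/12}\, e^{(r+1)T/n}$ is comfortably absorbed into $2\, e^{4rT/n}\, \max\{n^{-0.2},\, a_{i,t_0}\}$ in the regime $T = O(n \log n)$ relevant for subsequent applications; both the factor $2$ and the excess exponent in the statement leave plenty of slack.

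The main subtlety, rather than a true obstacle, is the setup of the rescaled martingale: although $\alpha_s$ depends on the random state through $m^{(0)}_{i,s}$ and $a_{i,s}$, the magnitude $|\alpha_s|$ is pathwise bounded by $\max(r,1)/n$, so the normalizing product is confined to $[e^{-T/n}, e^{rT/n}]$ with probability one and no separate concentration argument on the product itself is required. The $e^{T/n}$ factor in the per-step bounds restricts the analysis to $T/n = O(\log n)$, which is precisely the range where the lemma is needed.
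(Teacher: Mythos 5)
Your drift decomposition for $D_t = M^{(+1)}_{i,t} - M^{(-1)}_{i,t}$ is essentially right --- the rules (6) and (9)--(10) do combine to give $\E[\Delta D_t\mid\mathcal F_t]=\alpha_t D_t$ with $\alpha_t$ confined to $[-1/n,\,r/n]$, and $\E[(\Delta D_t)^2\mid\mathcal F_t]=O(1)$. But the martingale rescaling step introduces a gap that breaks the bound. Because $\alpha_s$ can sit near $-1/n$, the normalizing product $\prod_{s<t}(1+\alpha_s)$ can shrink to roughly $e^{-T/n}$, so the martingale increments $\tilde\xi_t$ are amplified by a factor up to $e^{T/n}$. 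That factor survives into your final estimate as $n^{-5/12}e^{(r+1)T/n}$, and it is not absorbed into $e^{4rT/n}\max\{n^{-0.2},a_{i,t_0}\}$: the protocol fixes $r$ as a very small absolute constant, and the lemma is invoked at $T=\Theta\bigl(\tfrac{n}{r}\ln\tfrac{1}{a_{i,t_0}}\bigr)$, so $e^{T/n}/e^{4rT/n}=e^{(1-4r)T/n}$ is a large power of $1/a_{i,t_0}$ (or of $n$). Your claim that the excess ``leaves plenty of slack'' is wrong precisely in the regime where the lemma is used; the resulting bound can exceed $1$ while the target is $o(1)$.

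The underlying issue is that the negative part of $\alpha_t$ is a \emph{contraction} of $D_t$, which is beneficial, but dividing out the product $\prod(1+\alpha_s)$ converts that contraction into noise amplification. The paper sidesteps this by passing to $G_t=D_t^2$ and proving the one-sided multiplicative drift $\E[\Delta G_t]\le \tfrac{4r}{n}G_t+2$: the $r m^{(0)}_i/n$ term is the only possible expansion and is bounded by $r/n$, while the $-(1-a_i)/n$ term only shrinks $G_t$ and never needs to be normalized away. A filtered submartingale plus an Azuma iteration over windows of length $n$ then gives the $e^{8rt/n}$ growth of $G_t$, hence the $e^{4rt/n}$ bound on $|D_t|$. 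If you want to keep a one-dimensional process, you would need to normalize only by the positive part of the drift, $\prod(1+\alpha_s^+)$, but that makes $\tilde D_t$ a one-sided super/submartingale depending on the sign of $D_t$, and you are then essentially forced back to $|D_t|$ or $D_t^2$ anyway.
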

\begin{proof}
W.l.o.g.\ assume $t_0=0$ and $i=1$. Denote $A_0 := \many A_{1,0}$, $D_t := M^{(+1)}_{1,t} - M^{(-1)}_{1,t}$, and $G_t := D_t^2$. As usual, we denote $\Delta D_t = D_{t+1} - D_t$ and $\Delta G_t = G_{t+1} - G_t$. For the subsequent analysis, we choose to use the ``squared potential'' $G_t$ to simplify considerations; this would be the usual potential of choice to analyze an unbiased random walk with a fair coin toss.

First we remark that $|D_0| \leq A_0$, so $|G_0| \leq A_0^2$. Next, observe that since at most one agent changes its state in a single time step, we have $|\Delta D_t| \leq 2$, and so:
\begin{equation}
\label{eq:deltaGt_max}
|\Delta G_t| \leq (|D_t| + 2)^2 - |D_t|^2 \leq 4(|D_t| + 1) \leq 4n +4.
\end{equation}
We now upper bound the expectation  $\E\Delta G_t$. We condition this expectation on the disjoint set of events $R_6, R_7, R_8, R_9, R_{10}, R_0$, where $R_r$, for $6 \leq r \leq 10$, corresponds to Rule $(r)$ being executed in the current step, and $R_0$ is the event that none of these rules is executed. We have the following:
\begin{itemize}
\item If event $R_0$ or $R_6$ holds, then at least one of the following three situations occurs: (1) the values of $M^{(+1)}_{1}$ and $ M^{(-1)}_{1}$ both remain unchanged at time $t$, (2) an agent changes state from type $A_1$ to another type, or (3) an agent turns from another type into type $A_1$. In case (1), we have $D_{t+1} = D_t$. In case (2), the probability that $|D_{t+1}| = |D_t| + 1$ is not more than the probability that $|D_{t+1}| = |D_t| - 1$, since by the construction of the protocol, the choice of the agent leaving the population is completely independent of its value $M_s$. In case (3), we have $\Pr[|D_{t+1}| = |D_t| - 1] = \Pr[|D_{t+1}| = |D_t| +1] = 1/2$ by construction. In all cases, $\Pr[|D_{t+1}| = |D_t| + 1] \leq \Pr[|D_{t+1}| = |D_t| - 1]$. We therefore have:
    $$
    \E [\Delta G_t | R_0 \vee R_6] \leq ((|D_t| + 1)^2 - |D_t|^2) + ((|D_t| - 1)^2 - |D_t|^2) \leq 2.
    $$
\item For events $R_7$ and $R_8$, we have $D_{t+1} | R_7 = D_t - 1$ and $D_{t+1} | R_8 = D_t + 1$. Since events $R_7$ and $R_8$ hold with equal probability, it follows that:
    $$
    \E [\Delta G_t | R_7 \vee R_8] \leq ((|D_t| + 1)^2 - |D_t|^2) + ((|D_t| - 1)^2 - |D_t|^2) \leq 2.
    $$
\item Finally, for events $R_9$ and $R_{10}$, we have $D_{t+1} | R_9 = D_t + 1$ and $D_{t+1} | R_{10} = D_t - 1$. Since $\Pr[R_9] \cdot M^{(-1)}_{1} = \Pr[R_{10}] \cdot M^{(+1)}_{1}$
    \begin{align*}
    \E [\Delta G_t | R_9 \vee R_{10}] &\leq \frac{\max\{M^{(+1)}_{1},M^{(-1)}_{1}\}}{A_1} ((|D_t| + 1)^2 - |D_t|^2) + \\
                                           & \frac{\min\{M^{(+1)}_{1},M^{(-1)}_{1}\}}{A_1} ((|D_t| - 1)^2 - |D_t|^2) \leq \\
                                           &\frac{4|D_t|^2}{A_1} + 2 = \frac{4 G_t}{A_1} + 2,
    \end{align*}
    where we assume in notation that $A_1 > 0$.
\end{itemize}
Applying the law of total expectation for $\Delta G_t$ over the set of events $R_0 \vee R_6, R_7 \vee R_8, R_9\vee R_{10}$ and noting that $\Pr[R_9 \wedge A_{10}] \leq r\frac{A_1}{n}$, we eventually obtain:
\begin{equation}
    \label{eq:deltaGt_e}
    \E \Delta G_t \leq 4r\frac{G_t}{n} + 2.
\end{equation}
Inequalities~\eqref{eq:deltaGt_max} and \eqref{eq:deltaGt_e} are sufficient to lower-bound the evolution of random variable $G_t$, which undergoes multiplicative drift with rate parameter $1 + 4r/n$ (up to lower order terms). Since known multiplicative drift lower bounds~(cf.e.g.~\cite{dx,dy}) do not appear to cover this case explicitly, we sketch the corresponding submartingale analysis (with slightly weaker parameters) for the sake of completeness.

Consider any moment of time $t$ such that $G_{t} \geq n^{1.6}$. Define target value $G_{\max} = (1 + 8r) G_{t} \geq n^{1.6}$. Consider the following filter, defining $G'_\tau$, $\tau \geq 0$ as the submartingale with $G'_{\tau} = G_{t+\tau}$ until the first moment of time at which $G_{t+ \tau} < G_{\max}$, and $G'_{\tau} = G'_{\tau+1} + \frac{5r}{n} G_{\max}$ for all subsequent moments of time. Note that $\E\Delta G'_\tau \leq \frac{5r}{n} G_{\max}$ by~\eqref{eq:deltaGt_e} and $|\Delta G'_\tau| \leq 4n + 4 \leq 5n$ by~\eqref{eq:deltaGt_max} (where we conduct the entire analysis for $n$ sufficiently large with respect to absolute constants of the algorithm). By Azuma's inequality, we have for any $\tau > 0$ and $z > 0$:
$$
\Pr[G'_{\tau} > G'_0 + \frac{5r}{n} G_{\max} \tau + z] \leq \exp\left[\frac{-z^2}{2 \tau (5n)^2}\right].
$$
Next, choosing any $\tau \leq n$ and $z = r G_{\max} \geq r n^{1.6}$ and noting that:
$$
G'_0 + \frac{5r}{n} G_{\max} \tau + z \leq G_{t} + 5r G_{\max} + r G_{\max} \leq G_{\max},
$$
we rewrite the concentration equality as:
$$
\Pr[G'_{\tau} > G_{\max}] \leq \exp\left[\frac{-(r n^{1.6})^2}{2 \tau (5n)^2}\right] < e^{-n^{0.19}}.
$$
Applying to the above a union bound over all $\tau \in [0,n]$, we obtain by another crude estimate:
$$
\Pr[\forall_{\tau \in [0,n]} G'_{\tau} \leq G_{\max}] \geq 1 - e^{-n^{0.18}},
$$
from which it follows directly by the definition of $G'_{\tau}$ that:
$$
\Pr[\forall_{\tau \in [0,n]} G_{t+\tau} \leq G_{\max}] \geq 1 - e^{-n^{0.18}}.
$$
Thus, given that $G_{\max} = (1 + 8r) G_{t}$, the value of $G_t$ increases by a factor of at most $(1+8r)$ over $n$ steps, with very high probability. Iterating the argument at most a logarithmic number of times and applying a union bound gives for arbitrary $t$:
$$
G_t \leq (1 + 8r)^{t/n + 1} \max \{n^{1.6},A_0^2\} \leq 2 e^{8rt/n} \max \{n^{1.6}, A_0^2\},
$$
with probability at least $1 - e^{-n^{1/6}}$, from which the claim of the lemma follows directly after taking the square root and normalizing by a factor of $n$.
\end{proof}

By considering the sizes of populations $m^{(-1)}_{i,t}$, $m^{(0)}_{i,t}$, and $m^{(+1)}_{i,t}$ (whose sum is $a_{i,t}$), we obtain the following corollary of the above Lemma, applied for a suitably chosen value $T = \frac{0.001n}{r} \ln \frac{1}{a_{i,t_0}}$.

\begin{lemma}\label{lem:mainmajorityx}
Suppose $0 < x < c_{12}$. Let $t_0$ be an arbitrarily chosen moment of time with $a_{i,t_0} \leq 0.02 s^2$. For fixed $i \in \{1,2,3\}$, we have for all $t \in [t_0,t_0+ \frac{0.001n}{r} \ln \frac{1}{\max\{\frac{1}n, a_{i,t_0}\}}]$:
$$
|m^{(+1)}_{i,t} - m^{(-1)}_{i,t}| \leq 0.1 a_{i,t} + 0.05s,
$$
with probability $1 - O(e^{-n^{1/6}})$.
\qed\end{lemma}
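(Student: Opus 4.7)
The plan is to invoke the preceding lemma with the specified value $T := \frac{0.001n}{r}\ln\frac{1}{\max\{1/n, a_{i,t_0}\}}$, compute the resulting right-hand side, and then do a two-case analysis on the size of $a_{i,t_0}$. Substituting $T$, the exponent $4rT/n = 0.004 \ln\frac{1}{\max\{1/n, a_{i,t_0}\}}$, so the concentration bound becomes
\begin{equation*}
|m^{(+1)}_{i,t} - m^{(-1)}_{i,t}| \;\le\; 2\,(\max\{1/n, a_{i,t_0}\})^{-0.004}\,\max\{n^{-0.2}, a_{i,t_0}\} \;=:\; B,
\end{equation*}
with probability $1 - O(e^{-n^{1/6}})$, for all $t \in [t_0, t_0+T]$. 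The goal is then to show $B \leq 0.1 a_{i,t} + 0.05 s$.

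\emph{Case 1 ($a_{i,t_0} < n^{-0.2}$).} Here $\max\{n^{-0.2}, a_{i,t_0}\} = n^{-0.2}$, while $(\max\{1/n, a_{i,t_0}\})^{-0.004} \le n^{0.004}$. Thus $B \le 2 n^{-0.196}$, which is at most $0.05s$ for sufficiently large $n$, and the bound holds trivially regardless of $a_{i,t}$.

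\emph{Case 2 ($a_{i,t_0} \ge n^{-0.2}$).} Now $B = 2 a_{i,t_0}^{0.996}$. If additionally $a_{i,t_0} \leq (0.025 s)^{1/0.996}$, then $B \leq 2 \cdot 0.025 s = 0.05 s$ and we are done. Otherwise $a_{i,t_0} \ge c^* := (0.025s)^{1/0.996}$, a constant depending only on $s$, so $T/n \leq \frac{0.001}{r}\ln\frac{1}{c^*}$ is a constant. Since the concentration of any type changes by at most $1/n$ per step, $|a_{i,t} - a_{i,t_0}| \le T/n$, and choosing $r$ sufficiently small (as permitted by the protocol's definition, which allows $r = r(s,p)$ to be arbitrarily small) this difference is at most $0.1\, a_{i,t_0}$, hence $a_{i,t} \geq 0.9 a_{i,t_0}$. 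The target bound $0.1 a_{i,t} + 0.05 s \geq 0.09 a_{i,t_0} + 0.05 s$ therefore reduces to verifying
\begin{equation*}
g(a) := 2 a^{0.996} - 0.09 a \;\le\; 0.05\,s \quad \text{for all } a \in [c^*, 0.02 s^2].
\end{equation*}
A direct calculation shows $g$ is increasing on this interval, so it suffices to bound $g(0.02 s^2) = 2(0.02 s^2)^{0.996} - 0.09 \cdot 0.02 s^2 \leq 2 (0.02)^{0.996} s^{1.992}$, which is bounded by $0.05 s$ for all $s \le 1$.

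The only non-routine step in this plan is verifying that choosing $r$ small enough makes the drift of $a_i$ during the window of length $T$ small relative to $a_{i,t_0}$; this is unproblematic because the protocol definition explicitly allows $r$ to be any sufficiently small positive constant depending on $s$ and $p$, and the required smallness here is a clean function of $c^*(s)$. All remaining manipulations are elementary algebraic estimates on the exponents $0.004$ and $0.996$.
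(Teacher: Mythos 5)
Your overall plan — substitute $T=\frac{0.001n}{r}\ln\frac{1}{\max\{1/n,a_{i,t_0}\}}$ into the preceding concentration lemma, obtain the bound $B := 2(\max\{1/n,a_{i,t_0}\})^{-0.004}\max\{n^{-0.2},a_{i,t_0}\}$, and show $B$ is dominated by the right-hand side — is exactly the route the paper intends (the paper states the lemma as a corollary of the preceding one with this same $T$ and omits the calculation). Case 1 and sub-case 2a are correct.

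However, sub-case 2b is both vacuous and, more importantly, contains a logical error that would be fatal if the case were not vacuous. It is vacuous because the hypothesis already gives $a_{i,t_0}\leq 0.02s^2$, and one checks that $0.02s^2 < (0.025s)^{1/0.996}$ for all $s\in(0,1]$ (indeed $(0.025s)^{1/0.996}/(0.02s^2)=(0.025)^{1/0.996}/(0.02)\cdot s^{1/0.996-2}$, and the first factor exceeds $1$ while $s^{1/0.996-2}=s^{-0.996}\geq 1$); hence sub-case 2a always applies, and you could have concluded Case 2 in one line via $B=2a_{i,t_0}^{0.996}\leq 2(0.02s^2)^{0.996}<0.05s$. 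The error inside sub-case 2b is that you claim making $r$ smaller makes the drift $T/n\leq\frac{0.001}{r}\ln\frac{1}{c^*}$ small; but $T$ is inversely proportional to $r$, so decreasing $r$ \emph{increases} $T/n$. To make $T/n$ small one would need $r$ large, which is not what the protocol design allows. You should delete sub-case 2b and note that the hypothesis forces sub-case 2a — the flaw is inert here, but a nearly identical slip in a non-vacuous case would be a real gap.
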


The above Lemma provides a crucial lower bound on the size of population $m^{(0)}_{i}$.

\begin{lemma}\label{lem:mainmajority}
Suppose $0 < x < c_{12}$. Let $t_0$ be an arbitrarily chosen moment of time with $a_{i,t_0} \leq 0.02 s^2$. For fixed $i \in \{1,2,3\}$, we have for all $t \in [t_0,t_0+ \frac{0.005n}{r} \ln \frac{1}{\max\{\frac{1}n, a_{i,t_0}\}}]$ such that $a_{i,t} > 0.25s$:
$$
\min\{m^{(+1)}_{i,t}, m^{(-1)}_{i,t}\} \geq  0.001 s^3,
$$
with probability $1 - e^{-n^{\Omega(1)}}$.
\end{lemma}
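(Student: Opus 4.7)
To prove Lemma~\ref{lem:mainmajority}, the plan is to combine the bound on $|m^{(+1)}_{i,t} - m^{(-1)}_{i,t}|$ furnished by Lemma~\ref{lem:mainmajorityx} with an upper bound on $m^{(0)}_{i,t}$ obtained via drift analysis. Writing $M_t := m^{(+1)}_{i,t}+m^{(-1)}_{i,t}$ so that $a_{i,t}=M_t + m^{(0)}_{i,t}$, it suffices to establish both $|m^{(+1)}_{i,t}-m^{(-1)}_{i,t}|\le 0.15 s$ and $M_t \ge \tfrac{2}{3} a_{i,t}-o(1)$ whenever $a_{i,t}>0.25 s$. The first bound requires extending Lemma~\ref{lem:mainmajorityx}, whose window is a factor of $5$ shorter than the target window. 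For any $t$ with $a_{i,t}>0.25s$, I would define $t'\le t$ as the most recent time with $a_{i,t'}\le 0.02 s^2$; such $t'$ exists by the hypothesis $a_{i,t_0}\le 0.02 s^2$. By the exponential growth rate $\dot a_i \ge 0.6 ps\, a_i$ from Lemma~\ref{lem:ranges_2} and the concentration in Lemma~\ref{lem:ranges_2X}, traversing $a_i$ from $0.02 s^2$ to $0.25 s$ takes time at most $O(n\ln(1/\max\{1/n,a_{i,t'}\})/(ps))$. Choosing $r$ sufficiently small relative to $ps$ makes this upper bound on $t-t'$ fit inside the window length $0.001 n\ln(1/\max\{1/n,a_{i,t'}\})/r$ of Lemma~\ref{lem:mainmajorityx} applied at $t'$, so that lemma delivers $|m^{(+1)}_{i,t}-m^{(-1)}_{i,t}|\le 0.1 a_{i,t} + 0.05 s \le 0.15 s$ with probability $1-e^{-n^{\Omega(1)}}$.

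For the second bound, a direct calculation of the expected contributions from rules $(7)$--$(10)$ of $P_m$, recalling that each fires with probability $r$ per matching pair and shifts $M^{(0)}_i$ by $\pm 1$, gives
\begin{equation*}
\E\Delta m^{(0)}_{i,t}\;\le\;\frac{r}{n}\bigl(2\,m^{(+1)}_{i,t}m^{(-1)}_{i,t} \;-\; M_t\, m^{(0)}_{i,t}\bigr),
\end{equation*}
with rule $(6)$ only strengthening the destruction term (and $P_o$-coordinates not directly affecting the $M$-coordinates). By AM--GM, the drift is negative whenever $m^{(0)}_{i,t}>M_t/2$. Combined with $|\Delta m^{(0)}_{i,t}|\le 1/n$ and the a priori $M_t=\Omega(s)$ on the relevant segment (which will hold once $a_{i,t}>0.25 s$ via the drift argument itself), a Doob-submartingale construction analogous to the proof of Lemma~\ref{lem:pot_psi_first} will give $m^{(0)}_{i,t}\le M_t/2 + o(1)$ with probability $1-e^{-n^{\Omega(1)}}$, once the contraction has operated over its relaxation time $O(n/(rs))$, which fits comfortably inside the target window of length $0.005 n\ln(1/a_{i,t_0})/r$.

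Combining, $m^{(0)}_{i,t}\le M_t/2 + o(1)$ together with $M_t+m^{(0)}_{i,t}=a_{i,t}$ yields $M_t\ge \tfrac{2}{3} a_{i,t}-o(1) > s/6 - o(1)$, whence
\begin{equation*}
\min\{m^{(+1)}_{i,t}, m^{(-1)}_{i,t}\} \;=\; \tfrac12\bigl(M_t - |m^{(+1)}_{i,t}-m^{(-1)}_{i,t}|\bigr) \;\ge\; \tfrac12\bigl(s/6 - 0.15 s\bigr) - o(1) \;\ge\; 0.008\,s \;\ge\; 0.001\,s^3
\end{equation*}
for all $s\in(0,1]$, and a union bound over the two high-probability events of the preceding paragraphs delivers the claimed $1-e^{-n^{\Omega(1)}}$ probability. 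The main obstacle is reconciling the two timescales in play: the oscillator reshapes $A_i$ on a scale of $\Theta(n\ln(1/a_{i,t_0}))$ steps, whereas the intra-species majority relaxes at rate $\Theta(rs/n)$ per step. The constant $r$ must be chosen small enough that Lemma~\ref{lem:mainmajorityx} remains valid across a full bloom of $a_i$, yet large enough that the transient for $m^{(0)}_{i,t}$ to reach quasi-equilibrium fits within the target window; verifying that these competing constraints on $r$ (expressed in terms of $s$ and $p$) are mutually consistent, and that $m^{(0)}_{i,t}$ tracks $M_t/2$ uniformly as $M_t$ itself evolves during a bloom, will be the most delicate step.
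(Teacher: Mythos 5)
Your approach is genuinely different from the paper's, and where the paper succeeds cleanly, your argument has gaps that you acknowledge but do not close. The paper tracks $m^{(+1)}_{i}$ directly in two phases: during $a_{i,\tau}\in[0.1s,0.25s]$, Rule (6) alone creates $M_{+1}$ at rate $\Omega(s^2/n)$ (the receiver just has to sit in $A_i$ and meet any agent of another species), while Rules (8)--(9) destroy it at rate $O(r\, m^{(+1)}_i/n)$, so a simple submartingale pushes $m^{(+1)}_i$ up to $\Omega(s^3)$; Phase 2 then keeps it above $0.001s^3$, invoking Lemma~\ref{lem:mainmajorityx} only to argue that $m^{(-1)}_{i}$ cannot greatly exceed $m^{(+1)}_{i}$ so that Rule~(9) cannot overwhelm Rule~(6). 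In contrast you pivot to $m^{(0)}_i$ and $M_t = m^{(+1)}_i + m^{(-1)}_i$ and try to establish $m^{(0)}_{i,t}\le M_t/2 + o(1)$ via a contraction drift. Two things go wrong there.

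First, the drift is circular: the contraction rate is $rM_t/n$, and you need $M_t=\Omega(s)$ for it to operate, but $M_t = \Omega(s)$ is exactly the conclusion you are after. You flag this parenthetically ("the a priori $M_t=\Omega(s)$ \dots which will hold once $a_{i,t}>0.25s$ via the drift argument itself") but never break the circle. Second, even granting $M_t=\Omega(s)$, the relaxation-time accounting does not close: contraction at rate $rM_t/n$ needs $\Theta(n/(rs))$ steps to pull $m^{(0)}_i$ down from an arbitrary starting value, and this does \emph{not} fit inside the stated window. The window is $\frac{0.005n}{r}\ln\frac{1}{\max\{1/n,a_{i,t_0}\}}$, so fitting would require $\frac{1}{s} < 0.005\ln\frac{1}{a_{i,t_0}}$; with $a_{i,t_0}$ on the order of $0.02s^2$ the right-hand side is $O(\ln(1/s))$, which does not dominate $1/s$. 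What actually controls $m^{(0)}_i$ on these timescales is not a contraction toward $M_t/2$ but the fact that all new arrivals to species $A_i$ during the bloom enter as $M_{\pm1}$ (Rule (6)) while the only source of $M_0$ is Rules (7)--(8) at rate $O(r a_i^2/n)$; integrating this small creation rate over the bloom gives a bound of order $rs/p$ on the cumulative increase of $m^{(0)}_i$, which is small for $r$ small relative to $p$ --- a "slow-creation plus dilution" argument, not the AM--GM drift you sketch. This is roughly what the paper does implicitly by tracking $m^{(+1)}_i$ rather than $m^{(0)}_i$: Rule (6)'s influx feeds $m^{(\pm1)}_i$ directly, so one can analyze the quantity of interest without ever needing an a priori lower bound on $M_t$.

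Your shifted-window application of Lemma~\ref{lem:mainmajorityx} at the most recent $t'$ with $a_{i,t'}\le 0.02s^2$, with $r$ taken small relative to $ps$ so that the bloom time fits inside the inherited window, is sound in spirit and parallels a constraint the paper also imposes; and your final arithmetic $\tfrac12(s/6-0.15s)\ge 0.001s^3$ checks out. The weak link is the middle step, and for the reasons above it would need to be recast rather than just "filled in."
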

\begin{proof}
Note first that by the conditions $a_{i,t_0} \leq 0.02 s^2 < 0.1s < 0.25s < a_{i,t}$. Let $t_1 > t_0$ denote the last moment of time before $t_2$ such that $a_{i,t_1-1} < 0.1 s$ and let $t_2 \geq t_1 + 0.15 sn$ denote the first moment of time after $t_1$ such that $a_{i,t_2} > 0.25 s$. We have $t \geq t_2 + 0.05sn$.

We now consider the process $m^{(+1)}_{i,\tau}$ (exactly the same arguments may be applied for process $m^{(-1)}_{i,\tau}$). We have $\Delta m^{(+1)}_{i,\tau} \leq 1/n$. The analysis is divided into two phases:
\begin{itemize}
\item Phase 1: $\tau \in [t_1, t_2)$ (thus  $a_{i,\tau} \in [0.1 s, 0.25s]$). Initially, $m^{(+1)}_{i,t_1} \geq 0$. Suppose for some step $\tau$ we have $m^{(+1)}_{i,\tau} < 0.02s$. Rule (6) is executed with probability $(a_{i, \tau})(1-a_{i, \tau}) \geq 0.1s (1 - 0.25s) \geq 0.075s^2$, whereas rule (8) or (9) which reduces $m^{(+1)}_{i,\tau}$ is executed with probability at most $2r m^{(+1)}_{i,\tau} < 0.04rs < 0.01s^2$. A computation of the expected value provides:
    $$
    \E [\Delta m^{(+1)}_{i,\tau} | (m^{(+1)}_{i,\tau} < 0.02s)] > \frac{0.06s^2}{n}.
    $$
    An application of Azuma's inequality yields that $m^{(+1)}_{i,t_2} > \frac{1}{2} \frac{0.06s^2}{n} (t_2 - t_1) > 0.004 s^3$, with probability $1 - e^{-n^{\Omega(1)}}$. In case of failure, we consider the process no further.
\item Phase 2: $\tau \in [t_2, t)$ (thus  $a_{i,\tau} \geq 0.1s$). From Phase 1, we have that initially $m^{(+1)}_{i,t_2} > 0.004 s^3$. Suppose for some step $\tau$ we have $0.001s^3 < m^{(+1)}_{i,\tau} < 0.01s$. We consider two cases:
    \begin{itemize}
    \item If  $a_{i,\tau} \geq 0.25s$, then by Lemma~\ref{lem:mainmajorityx} we have:
    $$
    |m^{(+1)}_{i,t} - m^{(-1)}_{i,t}| \leq 0.1 a_{i,t} + 0.05s,
    $$
    with probability $1 - e^{-n^{\Omega(1)}}$. In case of failure we interrupt the analysis (this is an implicit application of union bounds over successive steps $\tau$). Under the assumption $m^{(+1)}_{i,\tau} < 0.01sr$, we conclude:
    $$
    m^{(-1)}_{i,t} \leq 0.1 a_{i,t} + 0.05s + m^{(+1)}_{i,t} < 0.3 a_{i,t},
    $$
    and hence:
    \begin{equation}\label{eq:zo}
    m^{(-1)}_{i,t} \leq \frac{1}{2}m^{(0)}_{i,t} - 0.005s.
    \end{equation}
    Now, to compute the expected value $\Delta m^{(+1)}_{i,\tau}$, we remark that rule (6) does not decrease this expected value since $m^{(0)}_{i,\tau}$. Moreover, in view of~\eqref{eq:zo}, the probability of executing rule (9) (which increases $m^{(-1)}_{i,t}$ by $1/n$) exceeds the probability of executing one of the rules (7) or (8) (which decrease $m^{(-1)}_{i,t}$ by $1/n$) by $0.005s m^{(+1)}_{i,\tau} r > 5 \cdot 10^-6 s^4 r$ by the assumption $0.001s^3 < m^{(+1)}_{i,\tau}$. We eventually obtain in this case:
    \begin{equation}\label{eq:fr}
    \E [\Delta m^{(+1)}_{i,\tau} | (0.001s^3 < m^{(+1)}_{i,\tau} < 0.01s)] > \frac{5 \cdot 10^-6 s^4 r}{n}.
    \end{equation}
    \item If $a_{i,\tau} \geq 0.25s$, then assuming $m^{(+1)}_{i,\tau} < 0.01s$, we can perform an analogous analysis as in the first phase to obtain:
    $$
    \E [\Delta m^{(+1)}_{i,\tau} | (m^{(+1)}_{i,\tau} < 0.01s)] > \frac{0.06s^2}{n},
    $$
    which, in particular, also implies~\eqref{eq:fr}.
    \end{itemize}
    We have thus shown that the expected change to $m^{(+1)}_{i,\tau}$ satisfies \eqref{eq:fr}. Noting that initially $m^{(+1)}_{i,t_2} > 0.004 s^3 = 0.001 s^3 + 0.003 s^3$, an application of Azuma's inequality to an appropriate Doob martingale with~\eqref{eq:fr} shows that the event $m^{(+1)}_{i,\tau} > 0.001 s^3$ will hold for all remaining steps of the process $\tau$, with probability $1- e^{-n^{\Omega(1)}}$.
\end{itemize}
\end{proof}

\begin{lemma}\label{lem:large_portion}
Let $t$ be any moment of time with $a_{\min,t} > 4r^{1/2}$. For all $i \in \{1,2,3\}$, we have:
$$
\min\{m^{(+1)}_{t}, m^{(-1)}_{t}\} \geq  r^{3/2} s
$$
with probability $1 - O(e^{-n^{\Omega(1)}})$.
\end{lemma}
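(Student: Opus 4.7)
The plan is a drift argument showing that, when $a_{\min}>4r^{1/2}$, Rule~(6)---whose firing rate is proportional to $(s-a_i)\cdot a_i$ and which resets the receiver's $M$--coordinate to $M_{+1}$ or $M_{-1}$ with equal probability---pumps each of $m^{(+1)}_i$ and $m^{(-1)}_i$ well above $r^{3/2}s$, overpowering the destabilizing effect of the majority Rules~(7)--(10). By the symmetry of $M_{+1}$ and $M_{-1}$ in the rules, and a union bound over $i\in\{1,2,3\}$, it suffices to fix $i$ and lower-bound $m^{(+1)}_{i,t}$.

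First I would compute the one-step drift by direct inspection of Rules~(6), (7), (9) (the only ones that change $m^{(+1)}_i$):
\[
\E[n\,\Delta m^{(+1)}_{i,\tau}\mid\mathcal F_\tau]
\;=\;(x+s-a_i)\!\left(\tfrac{a_i}{2}-m^{(+1)}_i\right)+r\,m^{(+1)}_i\!\left(m^{(0)}_i-m^{(-1)}_i\right).
\]
Under the assumptions $a_{\min}\ge 2r^{1/2}$ and $m^{(+1)}_i\le 2r^{3/2}s$, we have $s-a_i\ge 2a_{\min}\ge 4r^{1/2}$ and $a_i/2-m^{(+1)}_i\ge \tfrac12 r^{1/2}$ for $r$ sufficiently small, so the first term is at least $2r$, whereas the absolute value of the second is at most $r\cdot 2r^{3/2}s\cdot s\le r$; hence $\E[\Delta m^{(+1)}_{i,\tau}]\ge r/n$ in this regime.

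Because a single transition modifies any species count by at most one unit, the hypothesis $a_{\min,t}>4r^{1/2}$ \emph{deterministically} guarantees $a_{\min,\tau}>2r^{1/2}$ throughout the backward window $[t-T,t]$ with $T:=2r^{1/2}n$, so the drift lower bound from the previous paragraph applies at every $\tau\in[t-T,t]$ for which $m^{(+1)}_{i,\tau}\le 2r^{3/2}s$. I would then write the Doob decomposition $m^{(+1)}_{i,\tau}=m^{(+1)}_{i,t-T}+A_\tau+M_\tau$, where $A$ is predictable and $M$ is a martingale with per-step increments bounded by $1/n$, and apply the maximal form of Azuma's inequality to $M$ on this window:
\[
\Pr\!\Bigl[\max_{\tau\in[t-T,t]}|M_\tau-M_{t-T}|>\tfrac{r^{3/2}s}{4}\Bigr]\le 2\exp\!\Bigl(-\Omega(r^{5/2}s^2 n)\Bigr)\le e^{-n^{\Omega(1)}}.
\]
On the complement event the fluctuation $|M_{\tau_1}-M_{\tau_2}|$ is at most $r^{3/2}s/2$ for every pair $\tau_1,\tau_2\in[t-T,t]$, which is what allows us to handle random stopping times on the window.

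Finally, conditioning on this high-probability event, I would distinguish two cases according to whether there exists $\tau^*\in[t-T,t]$ with $m^{(+1)}_{i,\tau^*}\ge 2r^{3/2}s$. If yes, pick the largest such $\tau^*$; then $m^{(+1)}_{i,\sigma}< 2r^{3/2}s$ for all $\sigma\in[\tau^*+1,t]$, the predictable drift on this subinterval is nonnegative, and $m^{(+1)}_{i,\tau^*+1}\ge 2r^{3/2}s-1/n$, yielding $m^{(+1)}_{i,t}\ge 2r^{3/2}s-1/n-r^{3/2}s/2\ge r^{3/2}s$ for $n$ large. Otherwise the drift lower bound applies across the entire window and accumulates to at least $Tr/n=2r^{3/2}$, so $m^{(+1)}_{i,t}\ge 0+2r^{3/2}-r^{3/2}s/2\ge r^{3/2}s$ using $s\le 1$. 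The main subtlety, which I expect to require the most care, is handling the \emph{random} index $\tau^*$ uniformly---this is why the maximal rather than pointwise Azuma bound is required---and choosing $T$ so that simultaneously the stability window keeps $a_{\min}>2r^{1/2}$ and the accumulated drift $Tr/n$ dominates the target $r^{3/2}s$.
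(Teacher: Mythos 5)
Your proposal takes essentially the same route as the paper's own proof: a deterministic backward window of length $2r^{1/2}n$ over which $a_{\min}>2r^{1/2}$ is guaranteed, a positive-drift estimate for $m^{(+1)}_i$ while it is small, and a martingale concentration argument; the maximal-Azuma estimate combined with the case split at $\tau^*$ is a clean way of making rigorous what the paper compresses into ``a standard application of Azuma's inequality.'' One imprecision worth noting is the claimed drift identity. In the composition $P_o\circ P_m$, rules (6), (7), (9) are not the only transitions affecting $m^{(+1)}_i$: since $m^{(+1)}_i$ counts agents by their $A$-component, it also changes whenever a $P_o$ transition (a successful attack via rules (3)--(4), or a source conversion via rule (5)) moves the receiver into or out of species $A_i$ in the very step in which rule (6) fires. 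Concretely, the coefficient $a_i/2$ in your first term should carry a factor $1-O(p)$ to discount rule-(6) resets on receivers that simultaneously leave $A_i$, and there is an omitted nonnegative gain term coming from agents entering $A_i$ with a freshly reset $M$-coordinate. Both corrections are either favorable or of order $O(p)$, so the lower bound $\E[n\,\Delta m^{(+1)}_{i}]=\Omega(r)$ in the regime $a_{\min}\ge 2r^{1/2}$, $m^{(+1)}_i\le 2r^{3/2}s$ still holds (after a harmless adjustment of absolute constants), and the remainder of your argument goes through unchanged.
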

\begin{proof}
Denote $c = 4r^{1/2}$. To show the claim, observe that necessarily for all $\tau \in [t - cn/2, t]$ we have $a_{\min,\tau} > c/2$. We consider the change of value $m^{(+1)}_{\tau}$ over time (the argument for $m^{(-1)}_{\tau}$ follows symmetrically). Initially, we have $m^{(+1)}_{t - cn/2} \geq 0$, and at every step $|\Delta m^{(+1)}_{\tau}| \leq 1/n$. At any time $\tau$ such that $m^{(+1)}_{\tau} < s/4$ we have the following cases:
\begin{itemize}
\item Rule (6) is executed, which happens with probability at least $a^2_{\min,\tau} > c^2/4$. Since $m^{(+1)}_{\tau} < s/4$, conditioned on this event, the expected value of $\Delta m^{(+1)}_{\tau}$ is at least $1/4$.
\item One of the rules (7)-(10) is executed, which occurs with probability at most $r$.
\item In all other cases, we have $\Delta m^{(+1)}_{\tau} = 0$.
\end{itemize}
Noting that $r = c^2/16$, the claim follows from a standard application of Azuma's inequality.
\end{proof}

\begin{lemma}\label{lem:majority_any}
Suppose $0 < x < c_{12}$. Let $t \geq 2c_{11}n \log n$ be an arbitrary moment of time. Then, $\min\{m_{+1, t}, m_{-1, t}\} > c_{15}$, with probability $1 - O(1/n)$, for some absolute constant $c_{15} > 0$ depending only on $s$, $p$, and $r$.
\end{lemma}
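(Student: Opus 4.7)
Plan. The argument splits into two cases based on $a_{\min,t}$, with a threshold chosen to align Lemmas~\ref{lem:large_portion} and~\ref{lem:mainmajority}. In the first case, $a_{\min,t} \geq 4r^{1/2}$, Lemma~\ref{lem:large_portion} applies directly and yields $\min\{m^{(+1)}_t, m^{(-1)}_t\} \geq r^{3/2}s$ with probability $1 - O(e^{-n^{\Omega(1)}})$, so taking $c_{15} \le \min\{r^{3/2}s, 0.001s^3\}$ covers this case. In the second case, $a_{\min,t} < 4r^{1/2}$, I fix $r$ as a sufficiently small absolute constant, in particular so that $4r^{1/2} < 10^{-6}s^4$, which by Lemma~\ref{lem:oscillatory_intervals} places $t$ inside an oscillatory interval. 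Since $\sum_i a_{i,t} = s$, some species $i^*$ has $a_{i^*,t} > 0.25s$; the plan is then to apply Lemma~\ref{lem:mainmajority} to this dominant species and conclude $\min\{m^{(+1)}_{i^*,t}, m^{(-1)}_{i^*,t}\} \geq 0.001s^3$, which lower-bounds the respective totals.

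The crux is to supply an admissible ``anchor time'' $t_0 < t$ with $a_{i^*,t_0} \leq 0.02s^2$ and $t - t_0 \leq \frac{0.005n}{r} \ln (1/\max\{1/n, a_{i^*,t_0}\})$. I plan to extract $t_0$ from the rotation structure of the oscillator: since at time $t - 2c_{11}n\log n$ one is still in the same oscillatory interval as at $t$ (by the super-polynomial persistence of such intervals granted by Lemma~\ref{lem:oscillatory_intervals}, together with a short entry-time argument using $|\Delta a_{\min}| \leq 1/n$ to handle the corner case in which the interval was entered within the last $2c_{11}n\log n$ steps, which by itself already gives $\Theta(n)$ steps of history below $10^{-6}s^4$), Lemma~\ref{lem:goaround} guarantees that within at most $c_{11}n\log n$ further steps species $i^*$ reaches a local minimum. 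At that moment, being inside an oscillatory interval forces $a_{i^*,t_0} < 0.01s^2 \leq 0.02s^2$ and one has $t - t_0 \leq 2c_{11}n\log n$.

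Verifying the window inequality is the delicate step. Both $t - t_0$ (the cycle length estimate from Lemma~\ref{lem:goaround}) and the window length (from Lemma~\ref{lem:mainmajority}) take the form ``constant times $n \ln(1/\text{small concentration})$'', involving $a_{i^*+1,t_0}$ and $a_{i^*,t_0}$ respectively. Lemma~\ref{lem:fastoscillations} ensures that over an $O(n\log n)$ time window the two concentrations are comparable up to a multiplicative constant on the log scale, so choosing $r$ small enough relative to $c_{10}$, $c_{11}$, and the constant implicit in Lemma~\ref{lem:fastoscillations} makes the window strictly longer than the gap. A union bound over the failure probabilities of Lemmas~\ref{lem:large_portion}, \ref{lem:oscillatory_intervals}, \ref{lem:goaround}, and \ref{lem:mainmajority} then gives overall success $1 - O(1/n)$, dominated by the $O(1/n^3)$ term from Lemma~\ref{lem:goaround}.

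The main obstacle is this final calibration of absolute constants: a single value of $r$ must simultaneously render Case~1 tight through the threshold $4r^{1/2}$, keep the window of Lemma~\ref{lem:mainmajority} longer than a complete oscillator rotation, and satisfy the various ``sufficiently small'' hypotheses built into Lemmas~\ref{lem:mainmajority} and~\ref{lem:large_portion}. The log-scale matching via Lemma~\ref{lem:fastoscillations} is precisely what makes such a choice possible, for without it the dependence of the cycle time on $a_{i^*+1,t_0}$ could not be reconciled with the dependence of the validity window on $a_{i^*,t_0}$.
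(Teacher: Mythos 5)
Your Case~1 ($a_{\min,t}\geq 4r^{1/2}$, Lemma~\ref{lem:large_portion}) is correct and matches the corresponding branch of the paper's proof. The gap is in Case~2, in the choice of the anchor $t_0$ and the comparison with the window of Lemma~\ref{lem:mainmajority}.

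You take $t_0$ to be the first minimum of $i^*$ after $t-2c_{11}n\log n$, which gives you only the crude bound $t-t_0\leq 2c_{11}n\log n$. But the window in Lemma~\ref{lem:mainmajority} is $\frac{0.005n}{r}\ln\frac{1}{\max\{1/n,\,a_{i^*,t_0}\}}$, and at a minimum in an oscillatory interval all you know is $a_{i^*,t_0}<0.01s^2$; this is a \emph{constant}, so the window can be as short as $O(n/r)$. A fixed $r$ cannot make $O(n/r)\geq c_{11}n\log n$, so the inequality you need fails whenever there are several rotations between your $t_0$ and $t$ and $a_{i^*,t_0}$ is not polynomially small. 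Lemma~\ref{lem:fastoscillations} does not close this gap: it bounds $\ln(1/a_{\min,\cdot})$ from \emph{above} over a forward time window (it prevents the minimum from collapsing too fast), it does not relate $a_{i^*+1,t_0}$ to $a_{i^*,t_0}$ at the \emph{same} instant, and it certainly gives no lower bound on $\ln(1/a_{i^*,t_0})$. What is actually needed is to choose the anchor so that $t$ falls inside a \emph{single} rotation from $t_0$ — then $t-t_0\leq c_{11}n\ln\frac{1}{\max\{1/n,a_{i^*+1,t_0}\}}$ — and to use the trivial ordering $a_{i^*,t_0}=a_{\min,t_0}\leq a_{i^*+1,t_0}$, which makes the window length dominate the rotation length once $r<0.005/c_{11}$. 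The paper gets such an anchor by taking a union bound over all candidate start times $\tau\in[0,t]$ of Lemma~\ref{lem:goaround} and then using the ``intermediate value'' observation (for $t_0=0$ we have $t^{**}<t$, for $t_0=t$ we have $t^{*}\geq t$) to exhibit a $\tau$ and a type $i$ with $a_{i,\tau}=a_{\min,\tau}$, $a_{i,t}=a_{\max,t}$, and $t\in[t^{*},t^{**}]$.

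A secondary issue: your ``short entry-time argument'' ($|\Delta a_{\min}|\leq 1/n$) only guarantees $\Theta(n)$ steps of history below $10^{-6}s^4$, not the $2c_{11}n\log n$ steps you assert, so the assumption that $t-2c_{11}n\log n$ lies in the same oscillatory interval can fail when the interval was entered recently. The paper handles this separately: take the last time $t'$ with $a_{\min,t'}\geq 10^{-6}s^4$, apply Lemma~\ref{lem:goaround} from $t'$, and in the sub-case $t\leq t^{**}$ use Lemma~\ref{lem:fastoscillations} (this is the paper's actual use of that lemma) to deduce $a_{\min,t}\geq c'$ and fall back on Lemma~\ref{lem:large_portion}.
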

\begin{proof}
Assume w.l.o.g.\ $t = 2c_{11} n \ln n$. Instead of analyzing the evolution of the real system, we consider an execution of a system which is coupled with it over the first $t$ steps as follows. First, starting from time $0$, we perform $t$ steps of protocol $P_o$ (i.e., considering only rules $(1)-(5)$ of its definition, and without setting values of the second component $M_?$). Next, we once again activate the pairs of elements which were activated in the first part of the coupling, in the same order, applying rules $(6)-(10)$ of the protocol with the same outcome which they would have received in the original execution. Clearly, at time $t$ the same configuration $\uu(t)$ is reached by both the original and coupled execution.

Consider first the execution of $P_o$ from time $0$. If~$a_{\min, 0} < 10^{-6}s^4$, then the execution is in an oscillatory interval at time $0$, and will remain in it ($a_{\min} < 0.02 s^2$) until time $t = c \ln n$ with probability $1 - e^{-n^{\Omega(1)}}$. Then, for all $\tau \in [0,t]$ we assume that the claim of Lemma~\ref{lem:goaround} holds with $t_0 = \tau$ for all $i \in \{1,2,3\}$. By a crude union bound, this event holds with probability $1 - O(1/n)$; from now on we assume this is true. (Formally, to allow us to proceed, in the analysis we can say with implicitly couple the system with a different set of random choices to which the system switches in the low-probability event that the  claim of Lemma~\ref{lem:goaround} were not to hold for some $t_0 = \tau$ for the original system.) Given that in the claim of Lemma~\ref{lem:goaround} for $t_0=0$ we have $t^{**} < t$, and for $t_0=t$ we have $t^{*} \geq t$, by the properties of the time intervals $[t^*, t^{**}]$ we observe that there must exist a time $\tau \in [0,t]$ and a type $i \in \{1,2,3\}$ such that $A_i$ is the least represented type at time $\tau$ and the most represented type at time $t$ ($a_{\max, t} = a_{i, t}$ and $a_{\min, \tau} = a_{i, \tau}$), and moreover $t \in [t^*, t^{**}]$ in the claim of Lemma~\ref{lem:goaround} with choice of $t_0 = \tau$. Since $a_{\max, t} \geq s/3$, we now apply Lemma~\ref{lem:mainmajority} with $t_0 = \tau$ to obtain the claim, noting that $a_{\min, \tau}< 0.02s^2$ and moreover that $t < \tau+ \frac{0.005n}{r} \ln \frac{1}{\max\{\frac{1}n, a_{i,t_0}\}}$, given that we have $t < \tau + c_{11}n \ln \frac{1}{\max\{1/n,a_{i+1,t_0}\}}$, where $c_{11}$ is a constant depending only on $s$ and $p$, and noting that we can choose $r < \frac{c_11}{0.005}$.

It remains to consider the cases when the execution starts at time $0$ with $a_{\min, 0} \geq 10^{-6}s^4$. Then, if $a_{\min, t} \geq 10^{-6}s^4$ holds, the claim follows from Lemma~\ref{lem:large_portion} given that $r$ is chosen so that $10^{-6}s^4 > 4 r^{1/2}$. Otherwise, there must exist some last time $t' \leq t$ such that $a_{\min, t'} \geq 10^{-6}s^4$. We apply Lemma~\ref{lem:goaround} with $t_0 = t'$. If the obtained value $t^{**}$ satisfies $t^{**} < t$, then we can apply an analogous analysis as in the case $a_{\min, 0} < 10^{-6}s^4$ to obtain the claim. Otherwise, we have that $t \leq t^{**} \leq t + c n$, where the value of constant $c$, depending only on $s$ and $p$ follows from Lemma~\ref{lem:goaround}. By an iterated application of Lemma~\ref{lem:fastoscillations}, we obtain that $a_{\min, t} \geq c'$, where the value of constant $c'>0$, depending only on $s$ and $p$, follows from the application of Lemma~\ref{lem:fastoscillations}. Choosing $r$ sufficiently small so that $c' > 4 r^{1/2}$, we complete the proof using Lemma~\ref{lem:large_portion}.
\end{proof}

Finally, for the sake of completeness we state how the majority protocol stops in the case of $\many X = 0$.

\begin{lemma}\label{lem:majority_stopping}
Suppose $x=0$. Then, there exists a moment of time $t_s$ such that either $m_{+1,t} = 0$ or $m_{-1,t}= 0$ holds for all $t>t_s$. Moreover, $t_s < c_{16} n \log^2 n$ with probability $1 - O(1/n)$, for some absolute constant $c_{16} > 0$ depending only on $s$, $p$, and $r$.
\end{lemma}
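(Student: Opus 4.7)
The plan is to decouple the argument into two phases by exploiting the fact that, when $\many X=0$, the oscillator component $P_o$ eventually stops firing any transitions in its $P_o$-coordinate, and from that moment onward the composed protocol $(P_o \circ P_m)$ behaves, on the $M_{?}$-coordinate, exactly as the standard $3$-state approximate majority protocol of Angluin, Aspnes and Eisenstat~\cite{angluin2008}.

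Phase~1: stabilization of the oscillator. Apply Theorem~\ref{thm:osc}(1) with $\many X=0$: there exists a time $t_o = O(n \log^2 n)$ such that, with probability $1 - O(1/n)$, starting from step $t_o$ all agents are in the same corner state $A_j^{++}$ for some fixed $j \in \{1,2,3\}$, and no further $P_o$-transitions occur. Define $t_s^{(1)} := t_o$. Up to time $t_o$, Rule (6) may fire and keep re-randomizing the $M$-values of agents whose species changes, so during Phase~1 we make no quantitative claim about $m_{+1}$ or $m_{-1}$; we just carry the state forward to time $t_o$.

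Phase~2: pure $3$-state majority on a frozen background. Conditioning on the high-probability event of Phase~1, from time $t_o$ onward every pair of interacting agents has both coordinates in the same species $A_j$. Consequently, Rule~(6) never fires after $t_o$ (its precondition $j \neq i$ on the species cannot be satisfied), and Rules~(7)--(10) fire whenever the selected pair has $M$-values lying in $\{M_{-1}, M_0, M_{+1}\}$ in the appropriate pattern, each with probability $r$. This is precisely the classical $3$-state approximate majority protocol executed under the asynchronous random scheduler on the full population of $n$ agents. By the folklore convergence analysis of that protocol~\cite{angluin2008}, from any starting $M$-configuration, with probability $1 - O(1/n)$, within $O(n \log n)$ further steps one of the two extreme populations $M_{+1}$ or $M_{-1}$ is eliminated and remains empty for all subsequent steps; after that, the remaining extreme state spreads to consume all $M_0$ agents, and the system becomes silent. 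Setting $t_s := t_o + O(n \log n) = O(n \log^2 n)$ and applying a union bound over the failure events of the two phases gives the stated conclusion.

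The only subtle point is justifying that Rule~(6) is truly silent after $t_o$: this requires that the species assignment of \emph{every} agent be frozen at time $t_o$, which is exactly the silence guaranteed by Theorem~\ref{thm:osc}(1) (``no further state transitions occur after this time''). Everything else reduces to quoting the well-established convergence time of $3$-state approximate majority, tuned by the constant factor $r$, which only changes the hidden constants. There is no real obstacle beyond this bookkeeping; the bound $O(n\log^2 n)$ is dominated by the oscillator-stabilization phase, while the majority phase contributes only a lower-order $O(n\log n)$ term.
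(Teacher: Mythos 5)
Your proof is correct and follows essentially the same two-phase decomposition as the paper's own argument: invoke Theorem~\ref{thm:osc}(1) to reach a frozen corner configuration of the oscillator in $O(n\log^2 n)$ steps with probability $1-O(1/n)$, note that Rule~(6) can then never fire while Rules~(7)--(10) reduce to the classical three-state approximate majority protocol of Angluin--Aspnes--Eisenstat on the full population, and quote its $O(n\log n)$ (in the paper, bounded more loosely by $O(n\log^2 n)$) w.h.p.\ extinction of one extreme species. The only cosmetic difference is that you explicitly state the union bound over the two phases and correctly identify Rule~(6) as the one silenced by the corner state (the paper's text has an off-by-one typo in the rule numbering).
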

\begin{proof}
By Theorem~\ref{thm:osc}(1), there exists a moment of time $t_0 = O(n \log^2 n)$ such that the system reaches a corner configuration (cf. Lemma~\ref{lem:eliminationofsecondspecies}). W.l.o.g., assume that $a_1 = s$ and $a_2 = a_3 = 0$. At this point, in the majority protocol Rule (7) will never again be activated, whereas the execution of rules (8)-(11) follows precisely the classical majority scenario of Angluin et al.~\cite{angluin2008}. By a standard concentration analysis (see also~\cite{angluin2008}), one of the two species $M_{+1}, M_{-1}$ will become extinct in $O(n \log^2 n)$ steps with probability $1 - O(1/n)$.
\end{proof}
As a side remark on Lemma~\ref{lem:majority_stopping} that it is possible to initialize the system entirely with a state $(A_i, M_0)$ so that  $m_{+1,t} = m_{-1,t}= 0$ holds throughout the process (even if the designed protocol will never enter such a configuration from most initial configurations).

\subsection{Protocol Extension \texorpdfstring{$P_l$}{Pl}: Detection with Lights}\label{pl}

To complete the proof of Theorem~\ref{thm:pr}, we design a protocol extension $P_d$, such that \textsc{Detection} is solved by the composition $((P_o \circ P_m) + P_l)$. Extension $P_l$, uses three states, $\{L_{-1}, L_{+1}, L_{\mathit{on}}\}$. We informally refer to states $L$ as lights. The composition is given in Fig.~\ref{fig:newrules_l}.
Informally, state $L_{-1}$ means that the agent is ``waiting for meeting $M_{-1}$'', then after meeting $M_{-1}$ it becomes $L_{+1}$,  ``waiting for $M_{+1}$'' and finally it becomes $L_{\mathit{on}}$.

To analyze the operation of the protocol, consider first the case of $x=0$. By Lemma~\ref{lem:majority_stopping}, after $O(n \log^2 n)$ steps, agents in at least one of the states $\{M_{+1}, M_{-1}\}$ are permanently eliminated from the system. Thus, either rule (11) or rule (12) will never again be executed in the future. An agent which is in state $L_{\mathit{on}}$ will spontaneously move to another state following rule (13) within $O(\frac{1}{q(\eps)}n \log n)$ steps, with probability $1 - O(1/n^2)$, and will never reenter such a state, since this would require the activation of both rule (11) and rule (12). By applying a union bound over all agents, we obtain that state $L_{\mathit{on}}$ never again appears in the population after $O(n \log n)$ steps from the termination of the majority protocol, with probability $1 - O(1/n)$.
Overall, all nodes reach a state having a different state than $L_{\mathit{on}}$ after $O(n \log^2 n)$ steps from the start of the process, with probability $1 - O(1/n)$, and all leave such a state eventually with certainty.

In the presence of the source $X$, the analysis of the process can be coupled with a Markov chain on three states $L_{-1}$, $L_{+1}$, and $L_{\mathit on}$. In view of Lemma~\ref{lem:majority_any}, transitions from state $L_{-1}$ to $L_{+1}$ and from state $L_{+1}$ to to $L_{-1}$ occur with at least constant probability (except for an $O(1/n)$-fraction of all time steps), this 3-state chain is readily shown to be rapidly mixing. For a choice of $q(\eps) > 0$ depending only on $s, p, r, \eps$ sufficiently small, we can lower-bound the number of agents occupying state $L_{on}$ by $(1 - \eps) n$, with high probability.

Under the natural decoding of states as ``informed'' (having component $L_{\mathit{on}}$) or ``uninformed'' (having component $L_{-1}$ or $L_{+1}$), the proof of Theorem~\ref{thm:pr} is complete. We remark that it is also possible to design a related protocol in which exactly one state is recognized as ``informed'' and exactly one state is recognized as ``uninformed''; we omit the details of the construction.

\section{Proof of Impossibility Result}\label{sec:lb}\label{sec:lower}

This Section is devoted to the proof of Theorem~\ref{thm:lb}. First, we restate some notation. We recall that the vector $z = (z^{(1)}, \ldots, z^{(k)}) \in \{0,1,\ldots,n\}^k = Z$ describes the number of agents having particular states, and $\on{z}=n$. In this section we will identify the set of states with $\{1,\ldots, k\} = [1,k]$. It is now also more convenient for us to work with a scheduler which selects unordered (rather than ordered) pairs of interacting agents; we note that both models are completely equivalent in terms of computing power under a fair random scheduler, since selecting an ordered pair of agents can be seen as selecting an unordered pair, and then setting their orientation through a coin toss. Indexing with integers $\{1,2,\ldots,r\}$ the set of all distinct rules of the protocol, where $r \leq k^4$, for a rule $j \equiv ``\{i_1(j), i_2(j)\} \to \{o_1(j), o_2(j)\}''$, $1 \leq j \leq r$,  $i_1(j), i_2(j), o_1(j), o_2(j) \in \{1,\ldots,k\}$, we will denote by $q_j$, the probability (selected by the protocol designer) that rule $j$ is executed as the next interaction rule once the scheduler has selected $(i_1, i_2)$ as the interacting pair, and by $p_j(z)$ the probability that $j$ is the next rule chosen in configuration $z$ (we have $p_j(z) = q_j \frac{z^{(i_1(j))}z^{(i_2(j))}}{n^2} (1 - O(1/n))$, where the $O(1/n)$ factor compensates the property of a scheduler which always selects a distinct pair of elements).

For any configuration $z_0 \in Z$, we define the \emph{$d$-box} $B_d(z_0)$ around $z_0$ as the set of all states $z \in Z$ such that $ z_0^{(i)}/d \leq z^{(i)} \leq d \max\{1,z_0^{(i)}\}$, for all $1\leq i \leq k$. We start the proof with the following property of boxes.

\begin{lemma}\label{lem:epsilons}
Fix $k \in \Z^+$ and let $0 < \eps_1 < 0.001$ be arbitrarily fixed. There exists $\eps_0 = \eps_0(k, \eps_1)$, $0 < \eps_0 < \eps_1$, such that, for any interaction protocol $P$ with $k$ states and any configuration $z_0 \in Z$, there exists a value $\eps = \eps(P, z_0) \in [\eps_0, \eps_1]$ such that, for any rule $j$ of the protocol, $1\leq j \leq r$, exactly one of the following bounds holds:
\begin{enumerate}
  \item{$(i)$} for all $z \in B_{n^{\eps_0}}(z_0)$, $p_j(z) \leq n^{\eps-1}$,
  \item{$(ii)$} for all $z \in B_{n^{\eps_0}}(z_0)$, $p_j(z) \geq n^{24\eps-1}$.
\end{enumerate}
and for any state $i$, $1\leq i \leq k$, exactly one of the following bounds holds:
\begin{enumerate}
  \item{$(iii)$} for all $z \in B_{n^{\eps_0}}(z_0)$, $z^{(i)} \leq n^\eps$,
  \item{$(iv)$} for all $z \in B_{n^{\eps_0}}(z_0)$, $z^{(i)} \geq n^{24\eps}$.
\end{enumerate}
\end{lemma}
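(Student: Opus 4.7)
The plan is to reduce the statement to a pigeonhole argument on a geometric chain of threshold values. First I would parametrize each of the quantities of interest: let $\bar\alpha_i := \log_n \max_{z \in B_{n^{\eps_0}}(z_0)} z^{(i)}$ for every state $i$, and $\bar\beta_j := \log_n \max_{z \in B_{n^{\eps_0}}(z_0)} p_j(z)$ for every rule $j$ (using the convention $\log_n 0 = -\infty$). The key observation is that within the box each $z^{(i)}$ varies by a factor of at most $n^{2\eps_0}$, and consequently $p_j(z) = q_j z^{(i_1(j))} z^{(i_2(j))}/(n(n-1))$ varies by a factor of at most $n^{4\eps_0}$. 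Hence the dichotomy $(iii)$ vs.\ $(iv)$ for state $i$ holds at a given $\eps$ whenever $\bar\alpha_i \notin (\eps, 24\eps + 2\eps_0)$, and the dichotomy $(i)$ vs.\ $(ii)$ for rule $j$ holds whenever $\bar\beta_j + 1 \notin (\eps, 24\eps + 4\eps_0)$. Both forbidden windows are thus contained in $(\eps,\, 24\eps + 4\eps_0)$.

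Next I would design a geometric sequence of candidate thresholds. Set $\delta_m := 30^m \eps_0$ for $m = 0, 1, \ldots, N$, where $N := k + k^4 + 1$. Since $30\delta_m \geq 24\delta_m + 6\eps_0 \geq 24\delta_m + 4\eps_0$ for every $m \geq 0$, the forbidden interval $(\delta_m, 24\delta_m + 4\eps_0)$ is contained in the half-open interval $I_m := (\delta_m, \delta_{m+1}]$. The intervals $I_0, \ldots, I_{N-1}$ are pairwise disjoint, while we have only $k + r \leq k + k^4 = N - 1$ real values to place, namely $\bar\alpha_i$ for $i \in [1,k]$ and $\bar\beta_j + 1$ for $j \in [1,r]$. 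By pigeonhole, some interval $I_m$ contains none of them; taking $\eps := \delta_m$ for that $m$ yields the simultaneous dichotomies claimed in the lemma for all states and all rules.

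Finally, to guarantee $\eps \in [\eps_0, \eps_1]$, I would set $\eps_0(k,\eps_1) := \eps_1 / 30^{N-1}$, which forces $\delta_{N-1} \leq \eps_1$ and hence $\eps \leq \eps_1$; the lower bound $\eps \geq \delta_0 = \eps_0$ is automatic. The only small bookkeeping is the degenerate case $z_0^{(i)} = 0$: the definition of the box then forces $z^{(i)} \leq n^{\eps_0}$ throughout $B_{n^{\eps_0}}(z_0)$, so $(iii)$ holds trivially for every $\eps \geq \eps_0$, and the analogous remark covers rules whose input pair involves such a state (which then automatically fall in case $(i)$). I do not foresee any substantial obstacle: the argument is entirely combinatorial once the logarithmic parametrization is in place, and the constants work out because the multiplicative gap $24$ appearing in the lemma is comfortably absorbed by the ratio $30$ between successive $\delta_m$'s.
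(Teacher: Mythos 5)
Your proof is correct and follows essentially the same strategy as the paper's: assign to each state and each rule a logarithmic feature value, observe that variation over the $n^{\eps_0}$-box contributes only an $O(\eps_0)$ additive slack in the exponent, and apply pigeonhole over a geometric chain of $k+k^4+1$ candidate thresholds (you use ratio $30$ where the paper uses $96$, and you read off maxima over the box where the paper reads off the values at $z_0$, but these are only cosmetic differences). The bookkeeping for $z_0^{(i)}=0$ and for the constraint $\eps \in [\eps_0,\eps_1]$ is also handled correctly.
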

\begin{proof}
Let $k$ be fixed and let $\eps_0 = 96^{-(k + k^4 + f+1)} \leq 96^{-(k+r+f+1)}$, where $ f = \log_2 (1/\eps_1)$. Consider the (multi)set $M$ of real values $M := \{\log_n \max\{n^{\eps_0},z_0^{(i)}\} : i \in \{1,\ldots,r\}\} \cup \{\log_n \max\{n^{\eps_0},np_j(z_0)\} : j \in \{1,\ldots,r\}\}  \subseteq [0,1]$. Since $|M| = k+r$, by the pigeonhole principle, there must exist an interval $I_l = [96^{-l}, 96^{-l+1})$, for some $l \in \{f, \ldots, k+r+f\}$, such that $I_l \cap M = \emptyset$. Now, we set $\eps = 2 \cdot 96 ^{-l} > 96\eps_0$, we also have $\eps < 2 \cdot 96^{-f} < \eps_1$. We immediately obtain that for any state $i$, $1\leq i \leq k$, we either have $z_0^{(i)} \leq n^{\eps/2}$ or $z_0^{(i)} \geq n^{48\eps}$.
Recalling that for any $z \in B_{n^{\eps_0}}(z_0)$, $ z_0^{(i)}/n^{\eps_0} \leq z^{(i)} \leq n^{\eps_0} \max\{1,z_0^{(i)}\}$, claims $(iii)$ and $(iv)$ follow.

To show claims $(i)$ and $(ii)$, notice that if rule $j$, $j \in \{1,\ldots,r\}$ is such that $\min\{z_0^{(i_1(j))}, z_0^{(i_2(j))}\} \leq n^{\eps}$, then for all $z \in B_{n^{\eps_0}}(z_0)$ we have  $\min\{z^{(i_1(j))}, z^{(i_2(j))}\} \leq  n^{\eps}$ (by (iii) and (iv)), and so $p_j(z) \leq n^{\eps-1}$ by the properties of the random scheduler. Otherwise, we have $\min\{z_0^{(i_1(j))}, z_0^{(i_2(j))}\} \geq n^{24\eps}$, and so $\frac{1}{2n^{2\eps_0}} \leq p_j(z_0) / p_j(z) \leq 2n^{2\eps_0}$, where we recall that $\eps > 96\eps_0$. Since we have $p_j(z_0) \leq n^{\eps/2}$ or $p_j(z_0) \geq n^{48\eps}$, claims (i) and (ii) follow.
\end{proof}

Given any $k$-state protocol $P$, we will arbitrarily choose a value of $\eps$ for which the claim of the above Lemma holds (e.g., the smallest possible such value of $\eps$). Note that a similar analysis is also possible for protocols using a super-constant number of states in $n$, however, then the value of $\eps_0$ is dependent on $n$; retracing the arguments in the proof, we can choose appropriately $\eps_0 \geq n^{\exp[-O(k^4)]}$. (We make no effort to optimize the polynomial in $k$ in the exponent.)

In what follows, let $z_0$ be a fixed configuration of the protocol (admitting a certain property which we will define later). We will then consider a rule $j$ to be a \emph{low probability (LP) rule} (writing $j\in LP$) in box $B_{n^{\eps_0}}(z_0)$ if it satisfies condition $(i)$ of the Lemma, and a \emph{high probability (HP) rule} in this box (writing $j\in HP$) if it satisfies condition $(ii)$. Note that $LP \cup HP = \{1,\ldots,r\}$.

Likewise, for $1\leq i \leq k$, we will classify $i$ as a \emph{low-representation (LR) state} (writing $i\in LR$) in box $B_{n^{\eps_0}}(z_0)$ if $i$ satisfies condition $(iii)$ of the Lemma, and a \emph{high representation (HR) state} (writing $i\in HR$) in this box if $i$ satisfies condition $(iv)$. Note that $LR \cup HR = \{1,\ldots,k\}$. Moreover, we define a set of \emph{very high representation (VHR) states}, $VHR \subseteq HR$, as the set of all $i$ such that for all $z' \in B_{n^{\eps_0}}(z_0)$, $z'_i \geq n^{1-8\eps}$. Denoting $HR' = HR \setminus VHR$, we have by the definition of a box that for all $i' \in HR'$, for all $z' \in B_{n^{\eps_0}}(z_0)$: $z'_i \leq n^{1-8\eps}/O(n^{2\eps_0}) <  n^{1-6\eps}$.

From now on, we assume that configuration $z_0$ admits the following property: for $T = n^{1+2\eps}$, an execution of the protocol starting from configuration $z_0$ passes through a sequence of configurations $z_t$, $t=1,2,\ldots, T$, such that the configuration does not leave the box around $z_0$ in any step with sufficiently large probability, lower-bounded by some absolute constant $\Pi \in (0,1]$:
\begin{equation}
\label{eq:boxproperty}
\Pr[\forall_{t < T}\ z_t \in B \subseteq B_{n^{\eps_0}}(z_0)] \geq \Pi,
\end{equation}
where $B$ is an arbitrarily fixed subset of $B_{n^{\eps_0}}(z_0)$.

We now show that the above property has the following crucial implication: for an interacting pair involving selected high and very high representation states, a rule creating a low representation state can only be triggered with sufficiently small probability. Informally, it seldom happens that in the protocol a low representation state is created out of any high representation state.

\begin{lemma}\label{lem:fail}
For a protocol having the property given by~Eq.~\eqref{eq:boxproperty}, for $i_1\in HR$ and $i_2 \in VHR$, let $R_{i_1, i_2}$ be the set of rules of the form $\{i_1, i_2\} \to \{o_1, o_2\}$, taken over all $o_1 \in LR, o_2 \in [1,r]$. Then, $\sum_{j \in R_{i_1, i_2}}q_j = O(n^{-14\eps})$.
\end{lemma}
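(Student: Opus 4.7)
The plan is to prove the lemma by a double-counting argument on the number of activations of rules in $R_{i_1,i_2}$ within the time window $[1,T]$, with $T = n^{1+2\eps}$ as in~\eqref{eq:boxproperty}. Write $S := \sum_{j \in R_{i_1,i_2}} q_j$ and let $\tau$ denote the first step at which the execution leaves $B$ (or $T+1$ if it never does). Let $Z$ count the number of activations of a rule from $R_{i_1,i_2}$ during $[1,\min(\tau,T)]$.

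For the lower bound on $\E[Z]$: at every step before $\tau$, all state counts lie in the box, so $z^{(i_1)} \ge n^{24\eps}$ and $z^{(i_2)} \ge n^{1-8\eps}$. The probability that the scheduler picks the (unordered) pair $\{i_1,i_2\}$ is $\Omega\bigl(z^{(i_1)} z^{(i_2)}/n^2\bigr)$; the case $i_1 \neq i_2$ gives $\Omega(n^{16\eps-1})$, while the case $i_1=i_2 \in VHR$ gives $\Omega(n^{-16\eps})$, which is larger for $\eps < 1/32$. Conditional on picking this pair, a rule in $R_{i_1,i_2}$ fires with probability $S$. Using $\Pr[t \le \tau] \ge \Pr[E] \ge \Pi$ for every $t \le T$ (where $E$ is the event in~\eqref{eq:boxproperty}) and summing over $t$ yields
\[
\E[Z] \;\ge\; T \cdot \Pi \cdot \Omega(n^{16\eps-1}) \cdot S \;=\; \Omega\bigl(n^{18\eps} S\bigr).
\]

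For the upper bound on $\E[Z]$ I will use a conservation argument on the total $LR$ population $N_{LR}(t) := \sum_{i \in LR} z_t^{(i)}$. Every firing of a rule in $R_{i_1,i_2}$ produces an output agent in some state of $LR$ (since $o_1 \in LR$) out of a pair whose inputs $i_1,i_2 \in HR$ are not in $LR$; thus each such event contributes at least one to the count $N_{\text{in}}$ of agents entering an $LR$ state during $[1,\min(\tau,T)]$, giving $Z \le N_{\text{in}}$. Denote by $N_{\text{out}}$ the analogous count of agents leaving an $LR$ state. Since $N_{LR}(0), N_{LR}(\min(\tau,T)) \le k n^{\eps} + O(1)$ (the endpoints are in $B$ up to one transition), we get the telescoping bound
\[
N_{\text{in}} \;\le\; N_{\text{out}} \;+\; 2k n^{\eps} + O(1).
\]
At any step before $\tau$, the scheduler's pair contains some $LR$ agent with probability at most $2 N_{LR}(t)/n \le 2k n^{\eps-1}$, and each contribution to $N_{\text{out}}$ requires such a selection. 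Summing over $T$ steps,
\[
\E[N_{\text{out}}] \;\le\; T \cdot 2k n^{\eps-1} \;=\; O(n^{3\eps}),
\]
hence $\E[Z] \le \E[N_{\text{in}}] = O(n^{3\eps})$.

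Combining the two bounds gives $\Omega(n^{18\eps}) \cdot S \le O(n^{3\eps})$, i.e.\ $S = O(n^{-15\eps}) = O(n^{-14\eps})$, as claimed. The main technical nuance is handling the coupling between the event $\{t \le \tau\}$ and the in-box bounds used both for the scheduler's selection probability (lower bound) and for the $LR$ population bound (upper bound); once one works with the stopped process at $\min(\tau,T)$, both bounds hold pointwise on the event $\{t \le \tau\}$ and the only residual use of property~\eqref{eq:boxproperty} is the constant factor $\Pi$ entering the lower bound on $\E[Z]$. A minor point to verify is the $i_1=i_2$ case of the pair-selection lower bound, which I sketched above and which only strengthens the estimate.
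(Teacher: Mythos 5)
Your proof is correct and slightly strengthens the bound (to $O(n^{-15\eps})$). Conceptually it follows the same strategy as the paper's proof: lower-bound the number of activations of rules in $R_{i_1,i_2}$ using the in-box size bounds on $HR$/$VHR$ states, upper-bound how often $LR$ agents can disappear because $LR$ populations stay below $kn^{\eps}$ inside the box, and observe that the two are incompatible if $\sum q_j$ is too large. The paper executes this as a proof by contradiction with Chernoff concentration bounds on the counters $\sum J_t$ (activations of $R_{i_1,i_2}$) and $\sum L_t$ (interactions touching $LR$), paying an $o(1)$ loss from union bounds; your version works directly in expectation on the stopped process $[1,\min(\tau,T)]$, so concentration is unnecessary and the only residual use of property~\eqref{eq:boxproperty} is the factor $\Pi$ in the lower bound on $\E[Z]$. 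This is a genuine simplification in execution, though not in underlying idea. Two small points worth making explicit in a final write-up: (a) the event $\{t\le\tau\}$ is $\mathcal{F}_{t-1}$-measurable, which is what licenses pulling the pointwise per-step lower bound $\Omega(n^{16\eps-1})S$ through the conditional expectation before multiplying by $\Pr[t\le\tau]\ge\Pi$; and (b) a single step can contribute up to two to $N_{\text{out}}$, so $\E[N_{\text{out}}]\le 4kn^{3\eps}$ rather than $2kn^{3\eps}$, which does not affect the $O(n^{3\eps})$ conclusion.
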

\begin{proof}
Suppose, by contradiction, that $\sum_{j \in R_{i_1, i_2}} q_j > 3n^{-14\eps}$.

Associate with process $z_t$ a random variable $J_t\in \{0,1\}$, defined as follows. For all $t < t_e$, where $t_e$ is the first moment of time such that $z_{t_e} \notin B_{n^{\eps_0}}(z_0)$, we put $J_t = 1$ if a rule from $R_{i_1, i_2}$ is used for the interaction made by the protocol in process $z_t$ at time $t$, and set $J_t = 0$ otherwise. For all $t \geq t_e$, we set $J_t$ to $1$. We have $\E (J_t | z_1,\ldots,z_t) \geq 2n^{2\eps-1}$; indeed, for $t < t_e$, it holds that:
\begin{align*}
\E (J_t | z_1,\ldots,z_t)= \Pr[J_t = 1 | z_t]& = \sum_{j \in R_{i_1, i_2}}p_j(z_t) = \sum_{j\in R_{i_1, i_2}}q_j \frac{z^{(i_1(j))}z^{(i_2(j))}}{n^2} (1 - O(1/n)) \geq\\
&\geq 3n^{-14\eps}\frac{n^{24\eps} n^{1 - 8 \eps}}{n^2} (1-O(1/n)) > 2n^{2\eps - 1}.
\end{align*}
By a simple stochastic domination argument, $(J_t)$ can be lower-bounded by a sequence of independent binomial trials with success probability $n^{2\eps -1}$, hence by an application of a multiplicative Chernoff bound for $T = n^{1 + 2\eps}$:
$$
\Pr \left[\sum_{t=1}^{T} J_t > n^{4\eps} \right] = \Pr \left[\sum_{t=1}^{T} J_t > \frac{1}{2}2n^{2\eps - 1} T \right] = 1 - o(1),
$$
where the $o(1)$ factor is exponentially small in $n$.

We now show the following claim.

\emph{Claim.} With probability $\Pi - o(1)$, the following event holds: $z_t \in B$ for all $t\in [0,T)$ and the total number of rule activations in the time interval $[0,t)$ during which an agent changes state from a state in $LR$ to a different state is at most $O(k n^{3\eps}$).

\emph{Proof (of claim).} Acting similarly as before, we associate with process $z_t$ a random variable $L_t\in \{0,1\}$, defined as follows. For all $t < t_e$, we put $L_t = 1$ if a rule acting on at least one agent in a state from $LR$ is made by the protocol in process $z_t$ at time $t$, and set $L_t = 0$ otherwise. For all $t \geq t_e$, we set $L_t$ to a dummy variable set always to $0$, i.a.r. We observe that:
$$
\E (L_t | z_1,\ldots,z_t) \leq 2k \frac{n^{\eps}}{n},
$$
since $|LR| \leq k$, and for $t < t_e$, $z_i < n^{\eps}$ for any $i \in LR$, hence the scheduler selects an agent from a $LR$ state into an interacting pair with probability at most $2k \frac{n^{\eps}}{n}$. Applying an analogous argument as in the case of random variable $L_t$, this time for the upper tail, we obtain:
$$
\Pr \left[\sum_{t=1}^{T} L_t < 4kn^{3\eps} \right] = \Pr \left[\sum_{t=1}^{T} L_t > 2\cdot 2k n^{\eps} T \right] = 1 - o(1).
$$
The claim follows directly.

Now, by a union bound we obtain:
$$
\Pr \left[\sum_{t=1}^{T} J_t > n^{4\eps} \wedge \sum_{t=1}^{T} L_t < 4kn^{3\eps} \right] = 1 - o(1).
$$
Taking into account that $t_e>T$ holds with probability $\Pi=\Omega(1)$ by~\eqref{eq:boxproperty}, we have by a union bound that with probability at least $\Pi - o(1) = \Omega(1)$, the following event holds: $z_t \in B$ for all $t\in[0,T]$, $\sum_{t=1}^{T} J_t > n^{4\eps}$, and $L_t < 4kn^{3\eps}$. However then $\sum_{t=1}^{T} J_t - \sum_{t=1}^{T} L_t > n^{4\eps} - kn^{3\eps} > k n^{\eps}$, so there must exist at time $T$ a state $i \in LR$ such that $z^{(i)}_T > n^\eps$. This is a contradiction with $z_T \in B \subseteq B_{n^{\eps_0}}(z_0)$ by Lemma~\ref{lem:epsilons}(iii).
\end{proof}

In the rest of the proof, we consider the evolution of a protocol starting from configuration $z_0$ and having property~\eqref{eq:boxproperty}. We compare this evolution to the evolution of the same protocol, starting from a perturbed configuration $z^*_0$, such that:
\begin{enumerate}
\item[(C1)] $\on{z_0 - z^*_0} \leq n^{\eps}$.
\item[(C2)] for all low representation states $i \in LR$, we have $z^{*(i)}_0 \leq z^{(i)}_0$.
\end{enumerate}
Intuitively, the perturbed state $z^*_0$ may correspond to removing a small number of agents from $z_0$ (and replacing them by high representation states for the sake of normalization), e.g., as in the case of the disappearance of a rumor source from a system which has already performed a rumor-spreading process.

Our objective will be to show that, with probability at least $\Pi - o(1)$, after $T = n^{1+2\eps}$ the process $z^*_T$ is still not far from $z_0$, being constrained to a box in a similar way as process $z_t$. To achieve this, we define a coupling between processes $z_t$ and $z^*_t$ (knowing that process $z_t$ is constrained to a box around $z_0$ with probability $\Pi$). Informally, the analysis proceeds as follows. We run the processes together for $T = n^{1+2\eps}$ steps. In most steps, the 1-norm distance $\on{z_t - z^*_t}$ between the two processes remains unchanged, without exceeding $O(n^{3\eps})$. Otherwise, exactly one of the two processes executes a rule (and the other pauses). With a frequency of roughly $n^\eps/n$ steps (i.e., roughly $n^{3\eps}$ times in total during the process), an LP rule is executed which increases the distance between these two states. We think of this type of ``error'' as unfixable, contributing to the $O(n^{3\eps})$ distance of the processes; however, such errors are relatively uncommon. With a higher frequency of roughly $n^{3\eps}/n$ steps (i.e., roughly once every $n^{1-3\eps}$ steps), a less serious ``error'' occurs, when some HP rule $\iota$ increases the distance between the two states. The rate of such errors is too high to leave them unfixed, and we have a time window of about $n^{1-3\eps}$ steps to fix such an error (before the next such error occurs). We observe that since $\iota$ is an HP rule, which is activated with probability at least $n^{24\eps-1}$, rule $\iota$ will still be activated frequently during this time window. The coupling of transitions of states $z_t$ and $z^*_t$ is in this case performed so as to force the two processes to execute rule $\iota$ lazily, never at the same time. The number of executions of rule $\iota$ in the ensuing time window by each of the two processes follows the standard coupling pattern of a pair of lazy random walks on a line, initially located at distance $1$, until their next meeting (cf.~e.g.~\cite{aldous-book}). During this part of the coupling, we allow the distance $\on{z_t - z^*_t}$ to increase even up to $n^{6\eps}$ (as a result of executions of rule $\iota$), but the entire contribution to the distance related to rule $\iota$ is reduced to $0$ before the next HP rule ``error'' occurs, with sufficiently high probability (in this case, with probability $1-O(n^{-6\eps})$. Overall, the coupling is successful with probability $\Pi - O(n^{-\eps})$.

We remark that we use the bound on the number of states $k$ to enforce a sufficiently large polynomial separation between the frequencies of LR states and HR states, and likewise for LP rules and HP rules. We also implicitly assume that $k = n^{o(1)}$, throughout the process. The analysis also works for a choice of $k = O(\log \log n)$, with a sufficiently small hidden constant. The separation between LR/HR states and LP/HP rules is used in at least two places in the proof. First, it enforces that rules creating LR states from VHR states may appear in the definition of the protocol only with polynomially small probability (Lemma~\ref{lem:fail}), which helps to maintain over time the invariant  $z^{*(i)}_t \leq z^{(i)}_t$, for all LR states. Secondly, we use the separation of LP/HP rules in the analysis of the coupling to show that a fixable ``error'' caused by a HP rule can be sufficiently quickly repaired, before new errors occur.

In the formalization of the coupling, we make both processes $z_t$ and $z^*_t$ lazy, i.e., add to each process an additional independent coin-toss at each step, and enforce that with probability $1/2$ no rule is executed in a given step (i.e., the step is skipped by the protocol). We assume a random scheduler which picks uniformly a random pair of nodes at each step. Thus, if the scheduler picks a pair of agents in states $\{i_1, i_2\}$, and $j$ is a rule acting on this pair of states, the probability that the interaction corresponding to rule $j$ will be $q_j/2$. (The laziness of the process here is a purely technical assumption for the analysis, and corresponds to using a measure of time which is scaled by a factor of $2 \pm o(1)$ w.h.p.; this does not affect the asymptotic statement of the theorem.)

We will also find it convenient to apply an auxiliary notation for representing the evolution of a state. For process $z_t$ (resp., $z^*_t$, we define $\rho_t(j)$ (resp.~$\rho^*_t(j)$), for all $j\in [1,r]$, as the number of times rule $j$ has been executed since time $0$. Observe that the pair $(z_0, (\rho_t(j) : j \in [1,t])$ completely describes the evolution of a state (i.e., the order in which the rules were executed is irrelevant). Moreover, since each execution of a rule changes the states of at most $4$ agents, we have:
$$
\on{z_t - z^*_t} \leq 4\sum_{j=1}^r |\rho_t(j) - \rho^*_t(j)| + \on{z_0 - z^*_0} \leq + 4\sum_{j=1}^r |\rho_t(j) - \rho^*_t(j)| + n^\eps.
$$

\paragraph{Definition of the coupling.}

\begin{enumerate}
\item At each step $t$, we order the agents of configurations $z_t$ and $z^*_t$, so that $a_l(t)$ denotes the type of the $l$-th agent in $z_t$ and $a^*_l(t)$ is the type of the $l$-th agent in $z^*_t$. The orderings are such that $|\{l : a_l(t) = a^*_l(t)\}|$ is maximized; in particular, for any state $i$ such that $z^{(i)}(t) \leq z^{*(i)}(t)$ (respectively, $z^{*(i)}(t) \leq z^{(i)}(t)$) we have that if for some $l$, $a_l(t) = i$ (resp., $a_l^*(t) = i$), then $a_l^*(t) = i$ (resp., $a_l(t) = i$).

\item The scheduler then picks a pair of distinct indices $l_1, l_2 \in \{1,\ldots,n\}$ as the pair of interacting agents.
\begin{enumerate}
\item[2.1.] If $a_{l_1}{(t)} = a^*_{l_1}{(t)}$ and $a_{l_2}{(t)} = a^*_{l_2}{(t)}$, then the same rule $j=j^*$ acting on the pair of states $(a_{l_1}{(t)}, a_{l_2}{(t)})$ is chosen as the current interaction rule, with probability $q_j$.
\item[2.2.] Otherwise, a pair of (clearly distinct) rules $j$ and $j^*$ are picked independently at random for $z_t$ and $z^*_t$ from among the rules available for state pairs $(a_{l_1}{(t)}, a_{l_2}{(t)})$ and $(a_{l_1}^*{(t)}, a_{l_2}^*{(t)})$, with probabilities $q_j$ and $q_{j^*}$, respectively.
\end{enumerate}
\item The processes finally perform their coin tosses to decide which of the selected rules ($j$ for $z_t$ and $j^*$ for $z^*_t$) will be applied in the current step.
\begin{enumerate}
\item[3.1.] If $j = j^*$ and rule $j$ has been executed exactly the same number of times in the history of the two processes ($\rho_t(j) = \rho^*_t(j)$), then with probability $1/2$ both of the processes execute rule $j$, and with probability $1/2$ neither execute their rule.
\item[3.2.] If $j \neq j^*$, or if $j = j^*$ and rule $j$ has been executed a different number of times in the history of the two processes ($\rho_t(j) \neq \rho^*_t(j)$), then exactly one of the two processes performs its chosen rule and the other process waits, with the process performing the rule being chosen as $z_t$ or $z^*_t$, with probability $1/2$ each.
\end{enumerate}
\end{enumerate}

The correctness of the coupling (i.e., that the marginals $z_t$ and $z^*_t$ each correspond to a valid execution of the given protocol under a random scheduler) is immediate to verify.

\begin{lemma}
Let $z_t$ be a process satisfying property~\eqref{eq:boxproperty}, and let $z^*_0$ satisfy conditions (C1) and (C2). Then,  for $T = n^{1 + 2\eps}$, with probability $\Pi - O(n^{-\eps})$ we have $\on{z^*_T - z} = O(n^{6\eps})$, for some $z \in B$.
\end{lemma}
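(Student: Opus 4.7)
The plan is to analyze the coupling defined above, conditioning throughout on the $\Pi$-probability event $E_B := \{z_t \in B \text{ for all } t \le T\}$ guaranteed by \eqref{eq:boxproperty}. Since $z_T \in B$ on $E_B$, it suffices to show $\on{z_T - z^*_T} = O(n^{6\eps})$ with conditional probability $1 - O(n^{-\eps})$, and then take $z = z_T$. I would track the rule-wise signed discrepancies $d_t(j) := \rho_t(j) - \rho^*_t(j)$ together with their aggregate $D_t := \sum_j |d_t(j)|$, using the elementary identity $\on{z_t - z^*_t} \le 4 D_t + \on{z_0 - z^*_0} \le 4 D_t + n^\eps$. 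The argument then splits along the LP/HP classification of Lemma~\ref{lem:epsilons}, exploiting the polynomial gap between the per-step firing rates $n^{\eps-1}$ and $n^{24\eps-1}$.

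I would first dispose of the LP rules. Under $E_B$, any given LP rule fires in either process with per-step probability at most $n^{\eps - 1}$, so the expected total number of LP firings across both processes over $T = n^{1+2\eps}$ steps is $O(n^{3\eps})$; a Chernoff bound on a sum of independent indicators then gives $\sum_{j\in LP}|d_T(j)| = O(n^{3\eps})$ with failure probability $e^{-n^{\Omega(\eps)}}$. These contributions are ``unfixable'' and are charged directly into the final budget. Lemma~\ref{lem:fail} is invoked here to prevent the LP/HP classification from being silently broken: the aggregate probability that a rule produces an LR agent from a pair containing a VHR agent is $O(n^{-14\eps})$ per step, which by the same concentration argument used in the proof of Lemma~\ref{lem:fail} precludes either process from leaving the LP/HP labelling assigned at $z_0$ over $T$ steps.

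Next I would treat the HP discrepancies as a superposition of short-lived random walks. Let $\tau_1 < \tau_2 < \ldots$ be the moments at which a type-3.2 event with $j \ne j^*$ occurs; these are precisely the steps at which a new HP discrepancy is injected. Conditionally on $D_t \le C n^{6\eps}$ holding so far, the scheduler picks a pair touching one of the at most $2D_t$ mismatched positions with per-step probability $O(D_t/n)$, which in the typical regime $D_t = O(n^{3\eps})$ yields $O(n^{5\eps})$ injections across $T$ steps. For each injection into some HP rule $j$, the value $|d_t(j)|$ starts from $1$ and subsequently evolves as a symmetric lazy $\pm 1$ walk driven by type-3.2-with-$j = j^*$ steps, which by Lemma~\ref{lem:epsilons}(ii) occur at per-step rate at least $n^{24\eps - 1}$. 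Over the window of $n^{1-3\eps}$ actual steps before the next injection, the walk sees $\Omega(n^{21\eps})$ transitions, so by a gambler's-ruin estimate it returns to $0$ without ever exceeding $n^{6\eps}$ with probability $1 - O(n^{-6\eps})$. A union bound over the $O(n^{5\eps})$ injections then yields joint success with probability $1 - O(n^{-\eps})$.

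The main technical obstacle is circular: the hypothesis $D_t \le C n^{6\eps}$ is used to bound the injection rate, yet the $O(n^{6\eps})$ bound on $D_T$ is precisely what the argument is establishing. This is handled by a standard stopping-time device: define $\tau_{\mathrm{bad}}$ as the first $t$ at which $D_t > Cn^{6\eps}$, the LP Chernoff estimate fails, or some HP walk exits the corridor $|d_t(j)| \le n^{6\eps}$; verify on the filtered process that $\tau_{\mathrm{bad}} > T$ with conditional probability $1 - O(n^{-\eps})$ using the bounds above; and conclude $D_T = O(n^{6\eps})$. Combined with $\on{z_T - z^*_T} \le 4D_T + n^\eps$, this yields the claim.
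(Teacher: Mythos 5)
Your proposal follows the same high-level route as the paper: split rules into LP/HP per Lemma~\ref{lem:epsilons}, charge LP discrepancies directly to a Chernoff-controlled budget of $O(n^{3\eps})$, use Lemma~\ref{lem:fail} to preserve LR domination (and hence the LP/HP labelling) on the perturbed process, model each HP discrepancy as a symmetric lazy walk killed at $0$ or $n^{6\eps}$, and union-bound over $O(n^{5\eps})$ excursions. This is the paper's coupling in all but the bookkeeping, and the conclusion $\on{z^*_T-z_T}=O(n^{6\eps})$ with $z=z_T$ is also how the paper closes.

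The place where your argument as written has a genuine gap is exactly the circularity you flag. Your stopping time is defined at $D_t > Cn^{6\eps}$, and on the filtered process you then only know $D_t\le Cn^{6\eps}$; this yields an injection rate of $O(n^{6\eps-1})$ per step, so a priori $O(n^{8\eps})$ injections over $T=n^{1+2\eps}$ steps, not $O(n^{5\eps})$. With a per-injection failure probability of $O(n^{-6\eps})$, a union bound over $O(n^{8\eps})$ injections gives a failure bound of $O(n^{2\eps})\gg 1$, so the argument does not close on that basis. The $O(n^{5\eps})$ count needs the stronger invariant that the process typically sits at $D_t=O(n^{3\eps})$, i.e., that at most one HP walk is outstanding at any moment. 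The paper makes this explicit via the states $S_{0,d}$/$S_{w,d}$ and by simply declaring the coupling failed the moment a second decoupled step occurs during an excursion (the probability-$O(n^{6\eps-1})$ catch-all during $S_{w,d}$); an excursion lasts $O(n^{1-12\eps})$ steps, so this failure is $O(n^{-6\eps})$ per excursion, and only then does the $O(n^{5\eps})$ count of excursions (taken at the $S_{0,d}$ rate $O(n^{3\eps-1})$) become legitimate. You gesture at the same ``walk resolves before the next injection'' intuition, but the window should be $n^{1-6\eps}$ steps (not $n^{1-3\eps}$, since $D_t$ can be of order $n^{6\eps}$ during the excursion), and more importantly you need to state and maintain the one-walk-at-a-time invariant as part of the stopping-time set, not just cap $D_t$. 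Once you add ``a second HP discrepancy is injected while one is outstanding'' to the definition of $\tau_{\mathrm{bad}}$ and redo the arithmetic with the $n^{1-6\eps}$ window, your argument coincides with the paper's.

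A smaller point: your LP bound should be phrased in terms of decoupled LP firings (steps where clause 3.2 applies and an LP rule executes in exactly one process), not total LP firings; the numbers work out the same since decoupled firings are a subset, but the discrepancy $d_t(j)$ only changes at decoupled steps. Also, ``conditioning on $E_B$'' should really be ``working with the filtered processes $\z_t,\z^*_t$ that freeze on exit from the box,'' as the paper does; conditioning on a future event would change the one-step transition law.
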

\begin{proof}
To prove the claim, it suffices to show that with probability $\Pi - O(n^{-\eps})$ the provided coupling succeeds, i.e., it maintains a sufficiently small difference $z_T(i) - z^*_T(i)$ for all states $i$, with $z_T \in B$.

In the analysis of the provided coupling, we will assume that the box condition $z_t \in B$ holds always throughout the process (otherwise, we assume the coupling does not succeed). To state this formally, we work with auxiliary processes $\z_t$ and $\z_t^*$, given as $\z_t = z_t$ and $\z_t^* = z_t^*$ for all $t < t_e$, where $t_e$ is the first moment of time such that $z_t\notin B_{n^{\eps_0}}(z_0)$, and set to the dummy value $\z_t = \z_t^* = z_0$ for all $t\geq t_e$. At the end of the process, we will thus have $\z_T = z_T$ and $\z_T^* = z_T^*$ with probability at least $\Pi$. In the following, we silently assume that $t < t_e - 1$ (in particular, that $z_t \in B$ and $z_{t+1}\in B$), and we will simply show that the coupling of $\z_t$ and $\z^*_t$ is successful with probability $1 - n^{-\eps}$. The condition of $t \geq t_e-1$ is trivially handled.

In addition to the box condition (which is now enforced) we try to maintain, with sufficiently high probability, throughout the first $T$ steps of the process several invariants (all at a time), corresponding to the following events holding:
\begin{itemize}
\item $F_D(t)$: for all states $i\in LR$, $\z_t^{*(i)} \leq \z_t^{(i)}$. (LR domination condition)
\item $F_{LR}(t)$: for all states $i\in LR$, $\z_t^{*(i)} \leq \z_t^{(i)} \leq n^{\eps}$. (LR state condition)
\item $F_{LP}(t)$: for all rules $j\in LP$, $\max\{p_j(\z_t), p_j(\z^*_t)\} \leq 2 n^{\eps-1}$. (LP rule condition)
\item $F_{HR}(t)$: for all states $i\in HR$, $\min\{\z_t^{(i)}, \z_t^{*(i)}\} \geq n^{24\eps}/2$. (HR state condition)
\item $F_{HP}(t)$: for all rules $j\in HP$, $\min\{p_j(\z_t), p_j(\z^*_t)\} \geq n^{24\eps-1}/2$. (HP rule condition)
\item $F_{HR'}(t)$: for all states $i\in HR'$, $\max\{\z_t^{(i)}, \z_t^{*(i)}\} \leq 2n^{1-6\eps}$. (HR' state condition)
\item a family of possible events $S_{w,d}(t)$, for some $d\in \{0,\ldots,4n^{3\eps}\}$ and $w \in \{0,\ldots, n^{6\eps}\}$, with specific events defined as follows:
    \begin{itemize}
    \item $S_{0,d}(t)$ holds if for all rules $j \in HP$ we have $\rho_t(j) = \rho^*_t(j)$, and $\sum_{j \in LP} | \rho_t(j) - \rho^*_t(j) | = d$. This implies, in particular, $\on{\z_t^* - \z_t}\leq 4d + n^{\eps} \leq 5n^{3\eps}$. (identical rate of HP execution)
    \item $S_{w,d}(t)$ for $w>0$ holds if there exists a rule $\iota \in HP$ such that for all rules $j \in HP \setminus\{\iota\}$ we have $\rho_t(j) = \rho^*_t(j)$, $|\rho_t(\iota) - \rho^*_t(\iota)| = w$, and moreover $\sum_{i \in LP} | \rho_t(i) - \rho^*_t(i) | = d$. This implies, in particular, $\on{\z_t^* - \z_t}\leq 4d + 4w + n^{\eps}\leq 5n^{6\eps}$. (single HP execution difference)
    \end{itemize}
\end{itemize}
We will call the coupling \emph{successful} if for all $t\leq T$, all events $F_\cdot(t)$ and some event $S_{w,d}(t)$ holds, and we will say it is a \emph{failure} otherwise. (We remark that condition $F_D(t)$ is implied by condition $F_{LR}(t)$, but we retain both for convenience in discussion.)

The analysis of the coupled process is now the following. First, we remark that all of the given events $F_{\cdot}(t)$ and event $S_{0,0}(t)$ hold for $t=0$. 

If the process meets condition $S_{0,d}$ at time $t$ and all conditions $F_{\cdot}(t)$, then we have the following:
\begin{itemize}
\item With probability at least $1 - O(n^{3\eps -1})$, the coupling will follow clauses 2.1 and 3.1 of its definition, and the two processes $\z$ and $\z^*$ will execute the same rule $j$ (or both pause). Hence, we continue to step $t+1$ satisfying condition $S_{0,d}$ and all of the conditions $F_{\cdot}(t+1)$, making use of the box condition for process $\z_t$. (We note that, to show $F_{LP}(t)$, when considering the special case of a rule involving a state from $LR$, we can make use of $F_{LR}(t)$ and note that the activation probability of such a rule is bounded by $2 n^{\eps-1}$ due to the $n^{\eps}$ bound on the population of a LR state).
\item With probability at most $O(n^{3\eps -1})$, the coupling will, however, select distinct rules, $j$ for $\z_t$ and $j^*$ for $\z_t^*$, and will select exactly one of them to execute, say $j' \in \{j, j^*\}$.
    \begin{itemize}
    \item If $j' \in LP$, which happens in the current step of the process with probability at most $2n^{\eps-1}$ by $F_{LP}$, then the event $S_{0,d+1}(t+1)$ will hold in the next step (provided $d+1 \leq 4n^{3\eps}$; otherwise, if $d+1 > 4n^{3\eps}$, we will say that the coupling has failed).
    \item If $j' \in HP$, which happens in the coupling with probability $O(n^{3\eps -1})$ (as bounded due to clause 2.2), then the event $S_{1,d}(t+1)$ will hold in the next step. The condition $F_D(t+1)$ requires more careful consideration. Taking into account that $F_D(t)$ holds, we need to consider two cases: either $j'=j$ and the rule applied to $\z_t$ changed at least one of the two interacting states $\{i_1(j), i_2(j)\}$, say $i_1(j) \in LR$, so that $\z^{i_1(j)}(t) = \z^{*i_1(j)}(t)$ and $\z^{i_1(j)}(t+1) \leq \z^{*i_1(j)}(t+1)-1$, or $j'=j^*$ and the rule applied to $\z^*_t$ created a pair of states $\{o_1(j^*), o_2(j^*)\}$, say $o_1(j^*) \in LR$. In the first case, by the description of the ordering given in clause 1 of the definition of the coupling, the problem occurs only if one of the agents picked by the scheduler belongs to an $LR$ state, and the other agent is at a position in which the states of $\z$ and $\z^*$ differ in the ordering of the agents; hence, the probability that the coupling fails at this step is at most $O(\frac{k n^{\eps} \cdot n^{3\eps}}{n^2}) \leq O(n^{5\eps - 2})$. In the second case, we likewise analyze the ordering of the agents considered by the scheduler, and note that the interacting agent, which belongs to the part of the ordering in which $\z_t$ and $\z^*_t$ differ, must be in a HR state, since the agents in a LR state in $\z^*$ are matched by their counterparts in $\z$ (as noted in clause 1 of the discussion of the coupling). If the other interacting agent is in a state from $LR \cup HR'$, then such an event occurs with probability $O(\frac{n^{3\eps} \cdot k n^{1-6\eps}}{n^2}) \leq O(n^{-2.9\eps - 1})$, and we say that with this probability the coupling has failed. Finally, if the other interacting agent is in a state from $VHR$, then by Lemma~\ref{lem:fail}, we have that the probability of picking a rule under which the coupling fails is at most $O(n^{-14\eps})$, conditioned on the event $j \neq j^*$ holding, hence overall the probability of failure is $O(n^{-14\eps} n^{3\eps-1}) = O(n^{-11\eps -1})$. Overall, we obtain that $F_D(t+1)$ holds with probability $O(n^{-2.9\eps - 1})$. Given $F_D(t+1)$, $S_{1,d(t+1)}$, and the box condition, the remaining conditions $F_{\cdot}(t+1)$ follow directly.
    \end{itemize}
\end{itemize}
Overall, we obtain that following a time $t$ satisfying $S_{0,d}(t)$ and all conditions $F_{\cdot}(t)$, we reach the following successor state (see Fig.~\ref{fig:lb}):
\begin{figure}[t]
\includegraphics[width=0.8\textwidth]{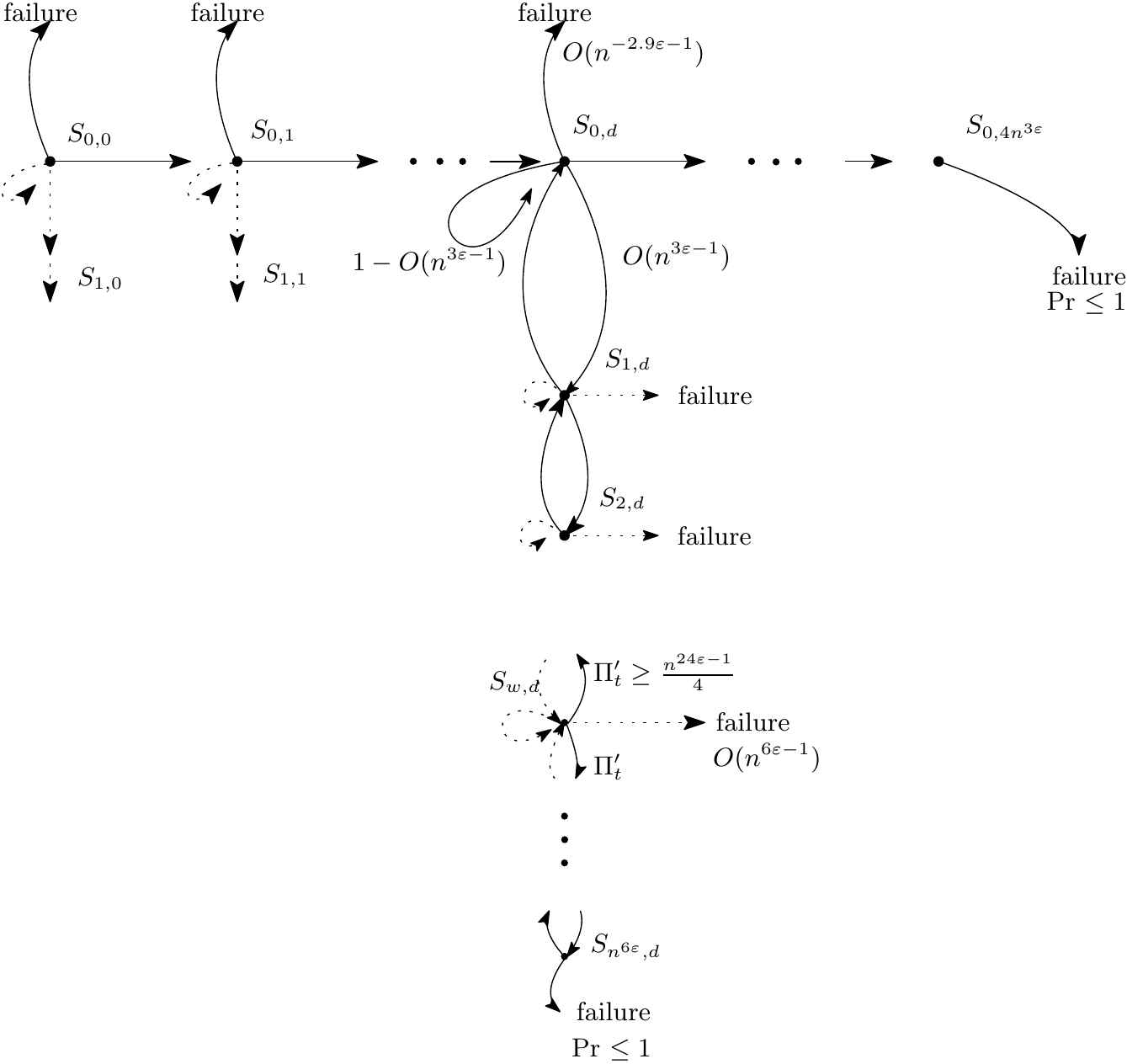}
\caption{Illustration of transitions between states $S_{w,d}$ for the coupling in the lower bound proof.}
\label{fig:lb}
\end{figure}
$$
\begin{cases}
S_{0,d}(t+1) \wedge F_{\cdot}(t+1), & \text{with probability $1 - O(n^{3\eps -1})$},\\
S_{0,d+1}(t+1)  \wedge F_{\cdot}(t+1), & \text{with probability $\leq 2n^{\eps -1}$, if $d+1 \leq 4n^{3\eps}$},\\
S_{1,d}(t+1) \wedge F_{\cdot}(t+1), & \text{with probability $O(n^{3\eps -1})$},\\
\text{failure:} & \begin{cases}
\text{with probability $O(n^{-2.9\eps - 1})$,} & \text{if $d+1 \leq 4n^{3\eps}$},\\
\text{with some probability $\leq 1$,} & \text {otherwise.}
\end{cases}
\end{cases}
$$
At this point, before proceeding further, we can provide some intuition on the meaning of the respective events $S$. The coupling process can be seen as a walk along the path $(S_{0,d} : d\leq 4n^{3\eps}$), starting from state $S_{0,0}$, and at each step, either staying in the current state $S_{0,d}$, moving on to the next state $S_{0,d+1}$, branching to a side branch $S_{1,d}$ (which we will analyze later), or failing. The process also fails if it reaches the endpoint of its path ($d = 4n^{3\eps}$). Since the process is run for $T = n^{1 + 2\eps}$ steps, the probability that failure will occur before the end of the path is reached is $O(n^{-0.9 \eps})$, and the probability of reaching the end of the path and failing is exponentially small in $n^\eps$ by a Chernoff bound (in expectation, the process will progress halfway along the path). Hence, we have that the process succeeds with probability $1- O(n^{-0.9 \eps})$, or otherwise may fail in a side branch $S_{\cdot,d}$.

A side branch is entered with probability $O(n^{\eps -1})$. To show that the coupling succeeds with the required probability, it suffices to show that we return from any state $S_{1,d}$ to state $S_{0,d}$ with probability at least $1 - O(n^{-6\eps})$; then, all (i.e., w.h.p. at most $O(n^{1+2\eps} n^{3\eps -1}) = O(n^{5\eps})$) excursions into side branches during the process will succeed with probability $1 - O(n^{-\eps})$.

Consider now an excursion into a side branch $S_{w,d}$ ($w\geq 1$) associated with a rule $\iota \in HP$, which has been executed a different number of times in $\z_t$ and $\z^*_t$.
Now, if the process meets condition $S_{w,d}$ at time $t$ and all conditions $F_{\cdot}(t)$, then we have the following:
\begin{itemize}
\item With probability at least $1 - O(n^{6\eps -1})$, the coupling will follow clause 2.1 of its definition, selecting a single rule $j$.

\begin{itemize}
\item If $j \neq \iota$, then clause 3.1 will follow, and the two processes $\z$ and $\z^*$ will execute the same rule $j$ (or both pause). Hence, at time $t+1$, all of the conditions $F_{\cdot}(t+1)$ and condition $S_{w,d}(t+1)$ is satisfied.
\item Else, the event $j = \iota$ occurs. The probability of such an event is denoted $\pi_{t} \in [p_\iota(\z_t) - O(n^{6\eps -1}), p_\iota(\z_t)]$ (due to the conditioning performed in the first clause of the coupling); since $p_\iota(\z_t) \geq n^{24\eps - 1}$ by the box condition for HP rules, it follows that $2\pi_{t} \geq n^{24\eps - 1} - O(n^{6\eps -1}) \geq n^{24\eps -1}/2$. Now, following clause 3.2 of the coupling, depending on which of the two processes $\z_t$, $\z^*_t$ is chosen to execute the rule, with probability $\pi_t/2 =: \pi'_t$ the system moves to $S_{w-1,d}(t+1)$, and with probability $\pi'_t$ the system moves to $S_{w+1,d}(t+1)$ (unless $w+1 > n^{6\eps}$, in which case the coupling has failed). As before, given there was no failure, all conditions $F_{\cdot}(t+1)$ are readily verified to be satisfied in the new time step.
\end{itemize}
\item With probability at most $O(n^{6\eps -1})$, for simplicity of analysis we assume the coupling has failed.
\end{itemize}
This time, for a time $t$ satisfying $S_{w,d}(t)$ for $w\geq 1$ and all conditions $F_{\cdot}(t)$, we obtain the following distribution of successor states:
$$
\begin{cases}
S_{w-1,d}(t+1) \wedge F_{\cdot}(t+1), & \text{with probability exactly $\pi'_t \geq n^{24\eps - 1}/4$},\\
S_{w+1,d}(t+1) \wedge F_{\cdot}(t+1), & \text{with probability exactly $\pi'_t$, if $w+1 \leq n^{6\eps}$},\\
\text{failure:} &\begin{cases}
\text{with probability $O(n^{6\eps -1})$}, & \text{if $w+1 \leq n^{6\eps}$,}\\
\text{with some probability $\leq 1$}, & \text{otherwise,}\\
\end{cases}\\
S_{w,d}(t+1)  \wedge F_{\cdot}(t+1), & \text {otherwise.}
\end{cases}
$$
The picture here corresponds to a lazy random walk along the side line $S_{w,d}$ for $w \in [0,n^{6\eps}]$, with an additional failure probability at each step. The walk starts at $w=1$ and ends with a return to the primary line $S_{0,d}$ if the endpoint $w=0$ is reached, or ends with failure if the other endpoint $w\geq n^{6\eps} =: w_{\max}$ is reached. At each step, the walk is lazy (with probability of transition depending on the current step), but unbiased with respect to transitions to the left or to the right. Assuming that failure does not occur sooner, with probability $1 - O(\frac{1}{w_{\max}}) = 1 - O(n^{-6\eps})$ the walk will reach point $w=0$ in $O(w_{\max}^2) = O(n^{12\eps})$ moves (transitions along the line), without reaching the other endpoint of the line sooner. Since a move is made in each step $t$ with probability $\pi'_t \geq n^{24\eps - 1}/4$, by a straightforward Chernoff bound, the number of steps spent on this line is given w.h.p. as at most $O(n^{12\eps} / n^{24\eps - 1}) = O(n^{1 - 12\eps})$. As the probability of failure in each of these steps is $O(n^{6\eps -1})$, the probability that the process fails during these steps is $O(n^{-6\eps})$. Overall, by a union bound, we obtain that the process successfully returns to $S_{0,d}$ with probability $1 - O(n^{-6\eps})$ (and within $O(n^{1 - 12\eps})$ steps). In view of the previous observations, we have that with probability $1 - O(n^{-\eps})$, all conditions $F_{\cdot}$ and some condition $S_{w,d}$ hold at time $T$. Thus, with probability $\Pi - n^{-\eps}$, process $z^*_T$ is sufficiently close to $B$, i.e., there exists a point $z \in B$ such that $\on{z^*_T - z} = O(n^{6\eps})$.
\end{proof}

\section{Proof of Proposition~\ref{pro:constbox}}\label{sec:constbox}

\begin{proof}
Fix protocol $P$ with set of states $K$, in which the minimum positive probability of executing some rule is $p$. Let $K' \subseteq K$, $K' \ni X$ be any minimal subset of the set of states such that no evolution of protocol $P$ starting in a configuration containing only states from set $K'$ will ever contain an agent in a state outside $K'$. Denote $\kappa = |K'|-1$. Consider an initialization of protocol $P$ at time $t_0 = 0$, at a configuration $z(0)$ with $x \in [c, 1/2]$ and with all other states from $K'$ represented by the same number of agents, i.e., for each $Q \in K'$, we have $q(0) = (1-x)/\kappa$.

Let $t \geq kn$ be an arbitrarily chosen time step. Let $t_1 = t-(\kappa-1) n$. Fix $Q_1 \in K' \setminus\{X\}$ as any state such that $q_1(t_1) \geq 1/2\kappa$ (we can, e.g., fix $q_1$ as the state from $K' \setminus\{X\}$ having the most agents at time $t_1$). Observe that from the minimality of $K'$ it follows that there must exist a sequence of states $(Q_1,\ldots, Q_{\kappa})$, with $\{Q_1,\ldots, Q_{\kappa} = K' \setminus \{X\}$, such that in the definition of protocol $P$, for all $i \in \{1,\ldots, \kappa-1\}$, some rule of protocol $P$ creates at least one agent (i.e., either 1 or 2 agents) in state $Q_{i+1}$ from an interaction of either the pair of agents in states $(Q_j, Q_i)$ or the pair of agents in states $(Q_i, Q_j)$, for some $j \in \{1,\ldots,i\}$. (Indeed, if for some $i$ there was no possibility of choosing $Q_{i+1}$ in any way, then $K'' = \{X, Q_1,\ldots Q_i\} \subseteq K'$ would be closed under agent creation, contradicting the minimality of the choice of $K'$.) Now, we consider intervals of time steps $[t_s, t_{s+1}]$, with $t_s = t_1 + (s-1) n$ for $s>1$. We make the following claims:
\begin{itemize}
\item[(1)] Fix $i \in \{1, 2, \ldots \kappa\}$. If $q_i (t_s) = \Omega (1)$, then $q_i (t) \geq 0.11 q_i (t_{s})$ for all $t \in [t_s, t_{s+1}$, with probability $1 - e^{-n^\Omega(1)}$. Indeed, in a sequence of $n$ steps, the expected number of agents which do not participate in any interaction in the time interval $[t_s, t_{s+1}]$ following the asynchronous scheduler is $(\frac{n-2}{n})^n n > 0.13 n$, and thus the number of non-interacting agents is at least $0.12n$, with probability $1 - e^{-n^\Omega(1)}$ following standard concentration bounds for the number of isolated vertices in a random graph on $n$ nodes with $n$ edges. Since the choice of agents by the scheduler is independent of their state, and the probability for a uniformly random agent to be in state $Q_i$ at time $t_s$ is $q_i(t_{s})$, a simple concentration bound shows that $0.11 q_i (t_{s}) n$ having state $Q_i$ at time $t_s$ do not participate in any interaction in the interval $[t_s, t_{s+1}]$.
\item[(2)] Fix $i \in \{1, 2, \ldots \kappa-1\}$.  Denote $m_i = \min_{j\leq i}q_j (t_i)$. If $m_i = \Omega (1)$, then $q_{i+1}(t_{i+1}) \geq 0.01 p m_i^2$, with probability $1 - e^{-n^\Omega(1)}$. Indeed, consider the value $j\leq i$ such that the interaction $(Q_j, Q_i)$ or $(Q_i, Q_j)$ creates an agent in state $Q_{i+1}$. At any time $t$ within the interval $[t_i, t_{i+1}]$, we have by Claim (1) that $q_j(t) \geq 0.11 m_i$ and $q_i(t) \geq 0.11 m_i$, with probability $1 - e^{-n^\Omega(1)}$. It follows that an interaction creating a new agent in state $Q_{i+1}$ is triggered with probability at least $p(0.11 m_i)^2$ at each step. The number of agents in state $Q_{i+1}$ at time step $t_{i+1}$ may thus be dominated from below by the number of successes in a sequence of $n$ Bernoulli trials with success probability $p(0.11 m_i)^2$, and the claim follows.
\end{itemize}
By applying Claim (2) iteratively for $i = \{1, 2, \ldots \kappa-1\}$, where we note that $m_1 \geq 1/2\kappa$, we have $q_{i+1}(t_{i+1}) \geq (0.01p / 2\kappa)^{2^i}$, with probability $1 - e^{-n^\Omega(1)}$ (through successive union bounds). Then, applying Claim (1) up to time $t = t_{\kappa-1}$, we have $q_{i+1}(t_{\kappa-1}) \geq 0.11^{\kappa-1-i-1} (0.01p / 2\kappa)^{2^i} \geq (0.01p / 2\kappa)^{2^\kappa}$, with probability $1 - e^{-n^\Omega(1)}$. Applying once again a union bound, we have shown that for all $q \in K'$, we have $q(t) \geq (0.01p / 2\kappa)^{2^\kappa} \equiv C_0$, with probability $1 - e^{-n^\Omega(1)}$. The claim of the lemma follows for a suitable choice of $\delta_0 > 0$.

\end{proof}

\paragraph*{Acknowledgment.}

We sincerely thank Dan Alistarh and Przemek Uzna\'nski for inspiring discussions, and Lucas Boczkowski for many detailed comments which helped to improve this manuscript.
\enlargethispage{5mm}

\bibliographystyle{abbrv}
\bibliography{biblio}

\newpage

\appendix

\begin{sidewaysfigure}
\noindent\textbf{\large APPENDIX: Simulation Examples for Oscillator Dynamics}\\
    \begin{subfigure}[b]{0.3\textwidth}
        \includegraphics[width=\textwidth]{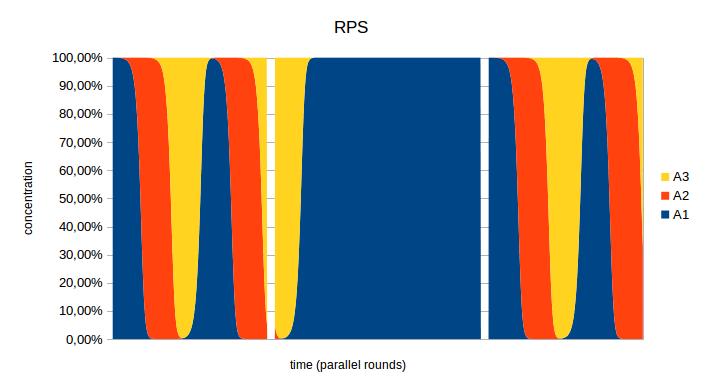}
    \end{subfigure}
    ~
    \begin{subfigure}[b]{0.3\textwidth}
        \includegraphics[width=\textwidth]{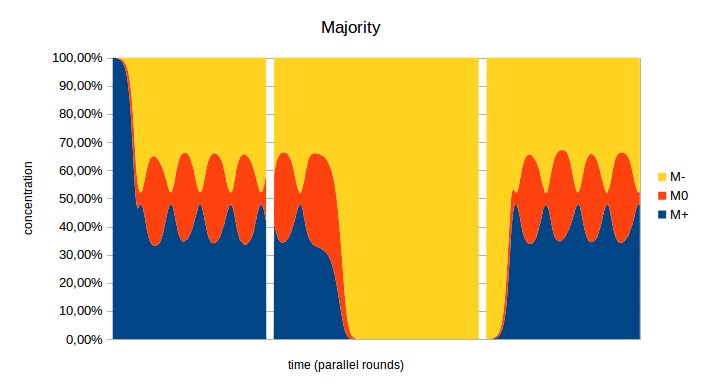}
    \end{subfigure}
    ~
    \begin{subfigure}[b]{0.3\textwidth}
        \includegraphics[width=\textwidth]{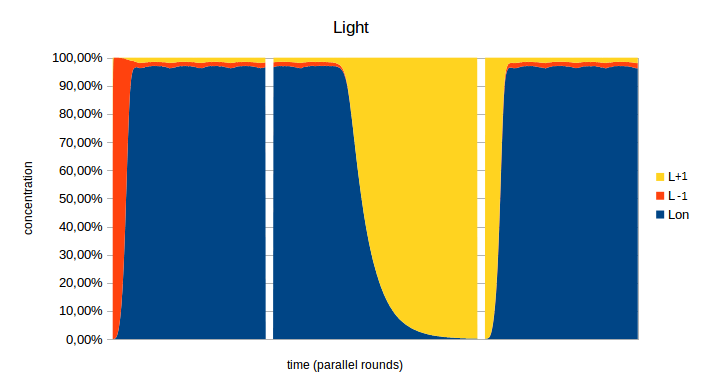}
    \end{subfigure}
    ~
    \begin{subfigure}[b]{0.3\textwidth}
        \includegraphics[width=\textwidth]{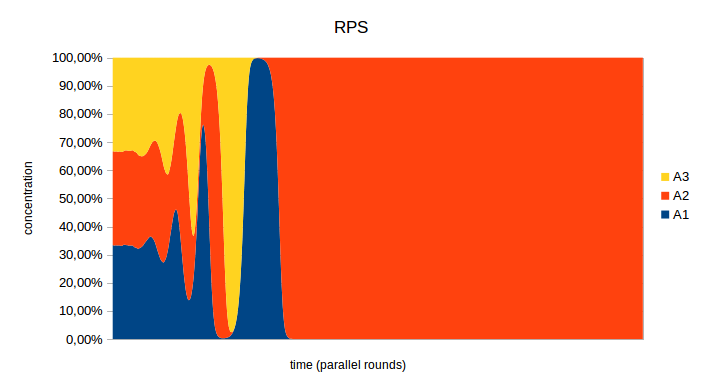}
    \end{subfigure}
    ~
    \begin{subfigure}[b]{0.3\textwidth}
        \includegraphics[width=\textwidth]{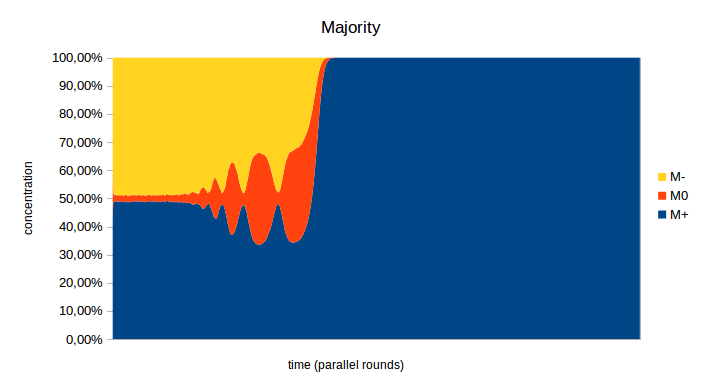}
    \end{subfigure}
    ~
    \begin{subfigure}[b]{0.3\textwidth}
        \includegraphics[width=\textwidth]{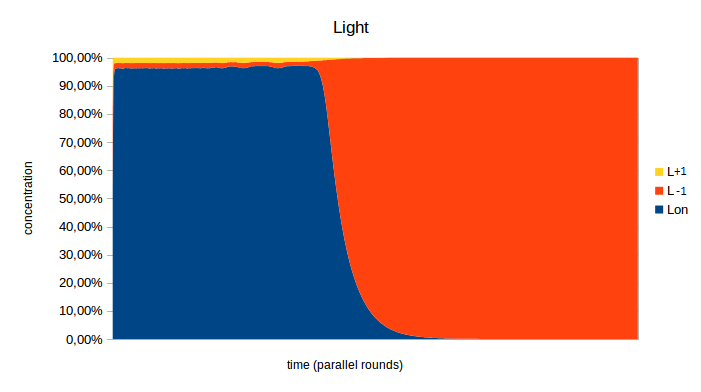}
    \end{subfigure}
    ~
    \begin{subfigure}[b]{0.3\textwidth}
        \includegraphics[width=\textwidth]{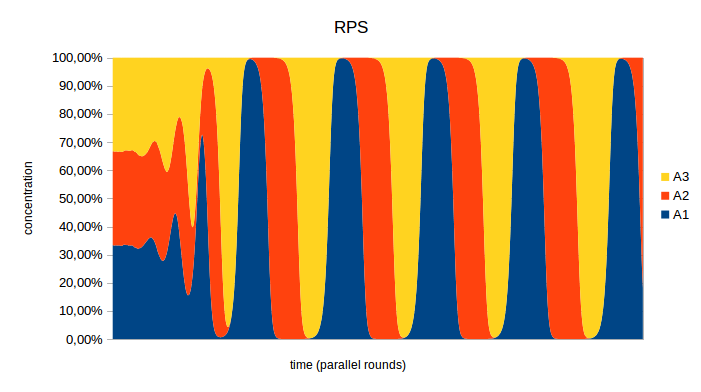}
    \end{subfigure}
    ~
    \begin{subfigure}[b]{0.3\textwidth}
        \includegraphics[width=\textwidth]{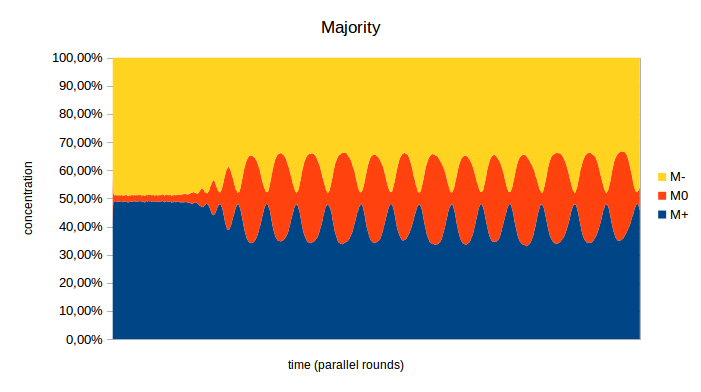}
    \end{subfigure}
    ~
    \begin{subfigure}[b]{0.3\textwidth}
        \includegraphics[width=\textwidth]{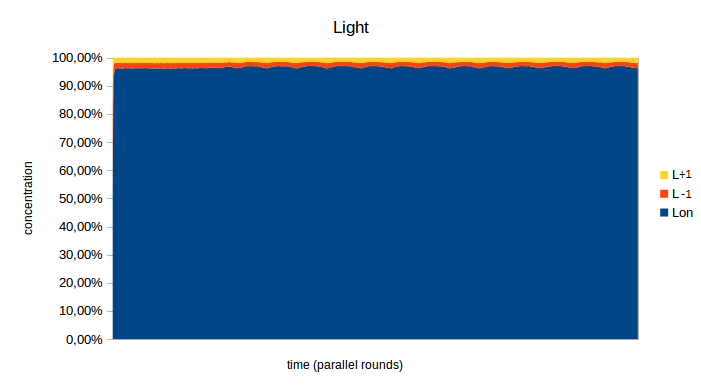}
    \end{subfigure}
    ~
    \begin{subfigure}[b]{0.1\textheight}
        \includegraphics[width=\textwidth]{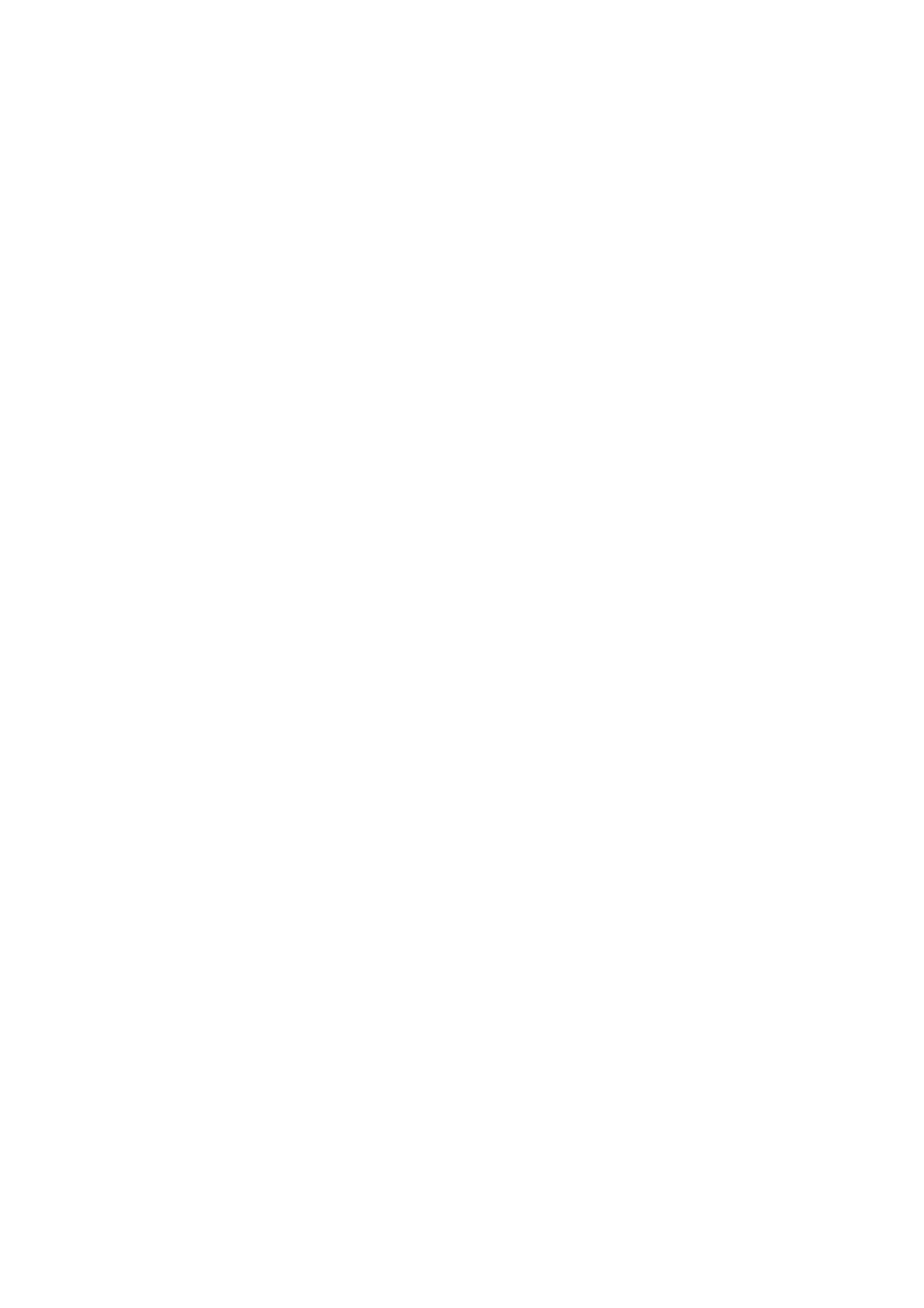}
    \end{subfigure}
    \vspace{-2.5cm}
    \caption{Illustration of concentration of various features as a function of time steps for a simulation of the protocol $P_o$ for $n = 10^6$, $p=6\cdot10^{-2},q=10^{-2},r=10^{-1}$, $s=1$ in various scenarios.
    Left column shows concentration of species $A_1$, $A_2$, $A_3$, middle one the majority layer and right one light.
    \textbf{First row}: initialization from a corner configuration ($\many X =1$, left panel), further dynamics of the protocol after rumor source is removed ($\many X =0$, middle panel), further dynamics of the protocol after rumor source is reinserted ($\many X =1$, right panel).
    \textbf{Middle row}: initialization from a configuration with $A_1 = A_2 = A_3 = 1/3$ and $\many X = 0$. \textbf{Bottom row}: as above, with $\many X = 1$. The range of the horizontal time scale corresponds to 2500 parallel rounds of the protocol.
}
\label{fig:osc_3x3}
\end{sidewaysfigure}

\end{document}